\numberwithin{equation}{section}
\def\f12{\frac 1 2}
\def\a{\alpha}
\def\f12{\frac 1 2}
\newcommand{\nabb}{\mbox{$\nabla \mkern-13mu /$\,}}
\newcommand{\lapp}{\mbox{$\triangle \mkern-13mu /$\,}}
\newtheorem{remark}{Remark}[section]
\newtheorem{lemma}{Lemma}[subsection]
\newtheorem{theorem}{Theorem}[section]
\newtheorem{proposition}{Proposition}[subsection]
\newtheorem{corollary}{Corollary}[section]
\newtheorem{mytheo}{Theorem}
\begin{document}

\title[Linear Stability and Instability Of Extreme Reissner-Nordstr\"{o}m I]{Stability and Instability Of Extreme Reissner-Nordstr\"{o}m Black Hole Spacetimes for Linear Scalar Perturbations I}
% Remove command to get current date 
\author[Stefanos Aretakis]{Stefanos Aretakis$^*$}\thanks{$^*$University of Cambridge,
Department of Pure Mathematics and Mathematical Statistics,
Wilberforce Road, Cambridge, CB3 0WB, United Kingdom}

\date{October 18, 2010}
\maketitle

\begin{abstract}
We study the problem of stability and instability of extreme Reissner-Nordstr\"{o}m spacetimes  for  linear scalar perturbations. Specifically, we consider solutions to the linear wave equation $\Box_{g}\psi=0$ on a suitable globally hyperbolic subset of such a  spacetime, arising from regular initial data prescribed on a Cauchy hypersurface $\Sigma_{0}$ crossing the future event horizon $\mathcal{H}^{+}$.  We obtain boundedness, decay and non-decay results. Our estimates hold up to and including the horizon $\mathcal{H}^{+}$. The fundamental new aspect of this problem is the degeneracy of the redshift on $\mathcal{H}^{+}$. Several new analytical features of degenerate horizons are also presented.
\end{abstract}

\section{Introduction}
\label{sec:Introduction}

\textit{Black holes} are one of the most celebrated predictions of General Relativity and one of the most intriguing objects of Mathematical Physics. A particularly interesting but  peculiar example of a  black hole spacetime is given by the so-called  \textit{extreme Reissner-Nordstr\"{o}m} metric, which in local coordinates $(t,r,\theta,\phi)$ takes the form 
\begin{equation}
g=-Ddt^{2}+\frac{1}{D}dr^{2}+r^{2}g_{\scriptstyle\mathbb{S}^{2}},
\label{1g1}
\end{equation}
where 
\begin{equation*}
D=D\left(r\right)=\left(1-\frac{M}{r}\right)^{2},
\end{equation*}
$g_{\scriptstyle\mathbb{S}^{2}}$ is the standard metric on $\mathbb{S}^{2}$ and $M>0$.   This spacetime has been the object of considerable study in the physics literature (see for instance the  recent \cite{marolf}).   In this series of papers, we investigate the linear stability and instability of extreme Reissner-Nordstr\"{o}m for scalar perturbations, that is to say we shall attempt a more or less complete treatment of  the wave equation
\begin{equation}
\Box_{g}\psi=0
\label{1eq}
\end{equation}
on  extreme Reissner-Nordstr\"{o}m exterior backgrounds.   This resolves Open Problem 4 (for  extreme Reissner-Nordstr\"{o}m) from Section 8 of \cite{md}. The fundamentally new aspect of this problem is the degeneracy of the redshift on the event horizon $\mathcal{H}^{+}$. Several new analytical features of degenerate event horizons are also presented.

 The main results (see Section \ref{sec:TheMainTheorems}) of the present paper include:
\begin{enumerate}
 \item Local integrated decay of energy, up to and including the event horizon $\mathcal{H}^{+}$ (Theorem \ref{th1}).
	\item Energy and pointwise uniform boundedness of solutions, up to and including  $\mathcal{H}^{+}$ (Theorems \ref{t2}, \ref{t5}).
	\item Sharp second order $L^{2}$ estimates, up to and including $\mathcal{H}^{+}$ (Theorem \ref{theorem3}).
	\item Non-Decay of higher order translation invariant quantities along $\mathcal{H}^{+}$ (Theorem \ref{nondecay}).
\end{enumerate}
Note that the last result is a statement of \textit{instability}. In the companion paper \cite{aretakis2}, where we shall provide the complete picture of the linear stability and instability of extreme Reissner-Nordstr\"{o}m spacetimes, we obtain energy and pointwise decay of solutions up to and including $\mathcal{H}^{+}$ and non-decay and blow-up results for higher order derivatives of solutions along $\mathcal{H}^{+}$. Note that the latter blow-up estimates are in sharp contrast with the non-extreme case, for which decay holds for all higher order derivatives of $\psi$ along $\mathcal{H}^{+}$. 

\subsection{Preliminaries}
\label{sec:Preliminaries}

Before we discuss in detail our results, let us present the distinguishing properties  of extreme Reissner-Nordstr\"{o}m and put this spacetime in the context of previous results.

\subsubsection{Extreme Black Holes}
\label{sec:ExtremeBlackHoles}

We  briefly describe here  the geometry of the horizon of extreme Reissner-Nordstr\"{o}m. (For a nice introduction to the relevant notions, we refer the reader to \cite{haw}). The event horizon $\mathcal{H}^{+}$ corresponds to $r=M$, where the $(t,r)$ coordinate system \eqref{1g1} breaks down. The coordinate vector field $\partial_{t}$, however, extends to a regular null Killing vector field $T$ on $\mathcal{H}^{+}$. The integral curves of $T$ on $\mathcal{H}^{+}$ are in fact affinely parametrised:
\begin{equation}
\nabla_{T}T=0.
\label{1t1}
\end{equation}
More generally, if an event horizon admits a Killing  tangent vector field $T$ for which \eqref{1t1} holds, then the horizon is called degenerate and the black hole \textit{extreme}. In other words, a black hole is called extreme if the surface gravity vanishes on $\mathcal{H}^{+}$(see Section \ref{sec:RedshiftEffectAndSurfaceGravityOfH}). Under suitable circumstances, the notion of extreme black holes can in fact be defined even  in case the spacetime does not admit a Killing vector field (see \cite{price}). There has also been rapid progress as regards the problem of uniqueness of extreme black hole spacetimes. We refer the reader to \cite{chrus} for the classification of static electro-vacuum solutions with degenerate components. 

The extreme  Reissner-Nordstr\"{o}m corresponds to the $M=e$ subfamily of the two parameter  Reissner-Nordstr\"{o}m family with parameters  mass $M>0$ and  charge $e>0$. It  sits between the non-extreme black hole case $e<M$ and the so-called naked singularity case $M<e$. Note that the physical relevance of the black hole notion rests in the expectation  that black holes are ``stable'' objects in the context of the dynamics of the Cauchy problem for the Einstein equations. On the other hand, the so-called weak cosmic censorship conjecture suggests that naked singularities are dynamically unstable (see the discussion in \cite{neo}). That is to say, \textit{extreme black holes are expected to have both stable and unstable properties}; this makes their analysis very interesting and challenging.

\subsubsection{Linear Scalar Perturbations}
\label{sec:LinearWaves}

 The first step in understanding the non-linear  stability (or instability) of a spacetime  is by considering the wave equation \eqref{1eq} (we refer the reader to \cite{christab} for the details of the proof of the stability of Minkowski space). This is precisely the motivation of the present paper. Indeed, to show stability one would have  to prove that solutions of the wave equation decay sufficiently fast. For potential future applications all methods should ideally be robust and the resulting estimates   quantitative. Robust means that the methods still apply when the background metric is replaced by a nearby metric, and quantitative means that any estimate for $\psi$ must be in terms of uniform constants and (weighted Sobolev) norms of the initial data.  Note also that it is essential to obtain non-degenerate estimates for $\psi$ on $\mathcal{H}^{+}$ and to consider initial data that  do not vanish on the horizon. As we shall see,  the issues at the horizon  turn out to be the most challenging part in understanding the evolution of waves on  extreme Reissner-Nordstr\"{o}m.

\subsubsection{Previous Results for Waves on Non-Extreme Black Holes}
\label{sec:PreviousResultsForNonExtremeBlackHoles}

The wave equation \eqref{1eq} on black hole spacetimes  has   been studied for a long time beginning with the pioneering work of Regge and Wheeler \cite{RW} for Schwarzschild. Subsequently, a series of heuristic and numerical arguments were put forth for obtaining decay results for $\psi$ (see \cite{other2, price72}). However, the first complete quantitative result (uniform boundedness) was obtained only in 1989 by Kay and Wald \cite{wa1}, extending the restricted result of  \cite{drimos}. Note that the proof of \cite{wa1} heavily depends on the exact symmetries of the Schwarzschild spacetime.

During the last decade, the wave equation on black hole spacetimes  has become a very active area in mathematical physics. As regards the Schwarzschild spacetime,   ``$X$ estimates" providing local integrated energy decay (see Section \ref{sec:MorawetzAndXEstimates} below) were derived   in \cite{blu0,blu3,dr3}. Note that \cite{dr3} introduced a vector field estimate which captures in a stable manner the so-called  \textit{redshift effect}, which allowed the authors to obtain quantitative pointwise estimates on the horizon $\mathcal{H}^{+}$.  Refinements for Schwarzschild were achieved in \cite{dr5} and \cite{tataru1}. Similar estimates to \cite{blu0} were derived in \cite{blu1} for the whole parameter range of Reissner-Nordstr\"{o}m including the extreme case. However, these estimates  degenerate on $\mathcal{H}^{+}$ and require  the initial data to be supported away from $\mathcal{H}^{+}$.  
 
The first boundedness result for solutions of the wave equation on slowly rotating Kerr ($\left|a\right|\ll M$) spacetimes was proved in \cite{dr7} and decay results  were derived in \cite{blukerr,md,tataru2}. Decay results for general subextreme Kerr  spacetimes ($\left|a\right|<M$) are proven in \cite{megalaa}. Two new methods were presented recently for obtaining sharp decay of energy flux and pointwise decay on black hole spacetimes; see \cite{new,tataru3}. For results on the coupled wave equation see \cite{price}. For other results see \cite{other1,finster1,kro}. For an exhaustive list of references, see \cite{md}.

Note that all previous arguments for obtaining boundedness and  decay results on non-extreme black hole  spacetimes near the horizon would break down in our case (see Sections \ref{sec:RedshiftEffectAndSurfaceGravityOfH}, \ref{sec:TheSpacetimeTermKN}). The reason for this is precisely the degeneracy of the redshift on $\mathcal{H}^{+}$.

\subsection{Overview of Results and Techniques}
\label{sec:OverviewOfMethodsAndResults}

We use the robust vector field method (see Section \ref{sec:TheVectorFieldMethod}). Our methods at various points rely on the spherical symmetry but not on other unstable properties of the spacetimes (such as the complete integrability of the geodesic flow, the separability of the wave equation, etc.)

\subsubsection{Zeroth Order Morawetz and $X$ Estimates}
\label{sec:MorawetzAndXEstimates}

\noindent Our analysis begins with local $L^{2}$ spacetime estimates. We refer to local spacetime estimates controlling the derivatives of $\psi$  as ``$X$ estimates" and $\psi$ itself  as ``zeroth order Morawetz estimate". Both these types of estimates have a long history (see \cite{md}) beginning with the seminal work of Morawetz \cite{mor2} for the wave equation on Minkowski spacetime. They arise from the spacetime term of energy currents $J^{X}_{\mu}$ associated to a vector field $X$ (see Section \ref{sec:TheVectorFieldMethod} for the definition of energy currents). For Schwarzschild,  such estimates appeared in \cite{blu0,blu3,blu2,dr3,dr5}  and for  Reissner-Nordstr\"{o}m in \cite{blu1}. The biggest difficulty in deriving an $X$ estimate for black hole  spacetimes has to do with the \textit{trapping effect}. Indeed, from a continuity argument one can infer the existence of null geodesics which neither cross $\mathcal{H}^{+}$ nor terminate at $\mathcal{I}^{+}$. In our case, a class of such geodesics lie  in a hypersurface of constant radius (see Section \ref{sec:PhotonSphereAndTrappingEffect}) known as the \textit{photon sphere}. From the analytical point of view, trapping affects the derivatives tangential to the photon sphere  and any non-degenerate spacetime estimate must lose (tangential) derivatives (i.e.~must require high regularity for $\psi$).
 
In this paper, we first (making minimal use of the spherical decomposition)  derive a zeroth order Morawetz estimate for $\psi$ which does not degenerate at the photon sphere.  For the case $l\geq 1$ (where $l$ is related to the eigenvalues of the spherical Laplacian, see Section \ref{sec:EllipticTheoryOnMathbbS2}) we use a suitable current which captures the trapping in the extreme case.  As regards the zeroth spherical harmonics ($l=0$), we present a method which is 
robust and uses only geometric properties of the domain of outer communications. Our argument (for $l=0$) applies for a wider class of black hole spacetimes and, in particular, it applies for Schwarzschild. Note that no unphysical conditions are imposed on the initial data  which, in particular, are not required to be compactly supported or supported away from $\mathcal{H}^{+}$. Once this Morawetz estimate is established, we then show how to derive a degenerate (at the photon sphere) X estimate which does not require higher regularity and  a non-degenerate X estimate (for which we need, however, to commute with the  Killing vector field $T$). These estimates, however, degenerate on $\mathcal{H}^{+}$; this degeneracy will be eliminated later (see Section \ref{sec:CommutingWithATransvrersalToMathcalHVectorField}).  See Theorem \ref{th1} of Section \ref{sec:TheMainTheorems}.

\subsubsection{Uniform Boundedness of Non-Degenerate Energy}
\label{sec:UniformBoundednessOfEnergy}

The vector field $T=\partial_{t}$ is causal and Killing and  the energy flux of the current $J_{\mu}^{T}$ is non-negative definite (and bounded) but degenerates on the horizon (see Section \ref{sec:TheVectorFieldTextbfM}). Moreover, in view of the lack of redshift along $\mathcal{H}^{+}$, the divergence of the energy current $J_{\mu}^{N}$ associated to the redshift vector field $N$, first introduced in \cite{dr3}, is not positive definite near $\mathcal{H}^{+}$ (see Section \ref{sec:TheVectorFieldN}). For this reason \textit{we appropriately modify $J_{\mu}^{N}$ so the new bulk term is non-negative definite near $\mathcal{H}^{+}$}. Note  that  the bulk term is not positive far away from $\mathcal{H}^{+}$ and so to control these terms we use the $X$ and zeroth order Morawetz estimates. The arising boundary terms can be bounded using  Hardy-like inequalities. It is important here to mention that \textit{a Hardy inequality (in the first form presented in Section \ref{sec:HardyInequalities}) allows us to bound the  local $L^{2}$ norm of $\psi$ on hypersurfaces crossing $\mathcal{H}^{+}$ using the (conserved) degenerate energy of $T$.}  See Theorem \ref{t2} of Section \ref{sec:TheMainTheorems}.

\subsubsection{Non-Decay along $\mathcal{H}^{+}$}
\label{sec:NonDecayAlongMathcalH}

We next show that the degeneracy of redshift  gives rise to a conservation law along $\mathcal{H}^{+}$ for the zeroth spherical harmonics. This result implies that \textit{for generic waves a higher order translation invariant geometric quantity does not decay along $\mathcal{H}^{+}$} (see Theorem \ref{nondecay} of Section \ref{sec:TheMainTheorems}). This low frequency obstruction will be crucial for obtaining the definitive statement of instability.

\subsubsection{Local Integrated Energy Decay}
\label{sec:CommutingWithATransvrersalToMathcalHVectorField}

We next return to the problem of retrieving the derivative transversal to $\mathcal{H}^{+}$  in the $X$ estimate in a neighbourhood of $\mathcal{H}^{+}$;  see Theorem \ref{theorem3} of Section \ref{sec:TheMainTheorems}. We first show that  on top of the above low frequency obstruction comes another new feature of degenerate event horizons. Indeed, \textit{to obtain non-degenerate spacetime estimates, one needs to require higher regularity for $\psi$ and commute  with the vector field transversal to $\mathcal{H}^{+}$. This shows that $\mathcal{H}^{+}$ exhibits phenomena characteristic of trapping} (see also the discussion in Section \ref{sec:TrappingEffectOnMathcalH}). Then by using  appropriate modifications of the redshift current and Hardy inequalities along $\mathcal{H}^{+}$ we obtain the sharpest possible result. See Section \ref{sec:CommutingWithAVectorFieldTransversalToMathcalH}. Note that although (an appropriate modification of) the redshift current can be used as a multiplier for all angular frequencies, the redshift vector field $N$ can only be used as a commutator\footnote{The redshift vector field was used as a commutator for the first time in \cite{dr7}.} for $\psi$ supported on the frequencies $l\geq 1$. These results will be further investigated in \cite{aretakis2}.

\subsubsection{Pointwise Uniform Boundedness}
\label{sec:PointwiseUniformBoundedness}

Using the above higher order energy estimates and appropriate Sobolev inequalities we finally obtain uniform pointwise boundedness of solutions up to and including $\mathcal{H}^{+}$ (see Theorem \ref{t5} of Section \ref{sec:TheMainTheorems}). We note that in \cite{aretakis2} we show that the argument of Kay and Wald \cite{wa1} cannot be applied in the extreme case for obtaining similar boundedness results, i.e. for generic $\psi$, there does not exist a Cauchy hypersurface $\Sigma$ crossing $\mathcal{H}^{+}$ and a solution $\tilde{\psi}$ such that \begin{equation*}
T\tilde{\psi}=\psi
\end{equation*}
in the causal future of $\Sigma$.

\subsection{Remarks on the Analysis of Extreme Black Holes}
\label{sec:EventHorizonVsPhotonSphere}

We conclude this introductory section by discussing several new features of degenerate event horizons. 

\subsubsection{Dispersion vs Redshift}
\label{sec:DispersionVsRedshift}

In \cite{md}, it was shown that for a wide variety of non-extreme black holes, the redshift on $\mathcal{H}^{+}$ suffices to yield uniform boundedness (up to and including the horizon) of  waves $\psi$ without any need of understanding the dispersion properties of $\psi$. However, in the extreme case, the degeneracy of the redshift makes the understanding of the dispersion of $\psi$ essential even for the problem of  boundedness. In particular, one has to derive spacetime integral estimates for $\psi$ and its derivatives. Moreover, we show that dispersion can be completely decoupled from the redshift effect.

\subsubsection{Trapping Effect on $\mathcal{H}^{+}$}
\label{sec:TrappingEffectOnMathcalH}

According to the results of Sections \ref{sec:TheCurrentJMuNDeltaFrac12AndEstimatesForItsDivergence} and \ref{sec:CommutingWithAVectorFieldTransversalToMathcalH}, in order to obtain $L^{2}$ estimates in neighbourhoods of $\mathcal{H}^{+}$ one must require higher regularity for $\psi$ and commute with the vector field transversal to $\mathcal{H}^{+}$ (which is not Killing). This loss of a derivative is characteristic  of trapping. Geometrically, this is related to the fact that the null generators of $\mathcal{H}^{+}$ viewed as integral curves of the Killing vector field $T$ are affinely parametrised. 

The trapping properties of the photon sphere have a different analytical flavour. Indeed, in order to obtain $L^{2}$ estimates in regions which include the photon sphere, one needs to commute with  either $T$ or the generators of the Lie algebra so(3) (note that all these vector fields are Killing). Only the high angular frequencies are trapped on the photon sphere (and for the low frequencies no commutation is required) while all the angular frequencies are trapped (in the above sense) on $\mathcal{H}^{+}$.

\section{Geometry of  Extreme Reissner-Nordstr\"{o}m  Spacetime}
\label{sec:GeometryOfExtremeReissnerNordstromSpacetime}

The unique family of spherically symmetric  asymptotically flat solutions of the coupled Einstein-Maxwell equations is the two parameter  \textit{Reissner-Nordstr\"{o}m} family of 4-dimensional Lorentzian manifolds $\left(\mathcal{N}_{M,e},g_{M,e}\right)$ where the parameters $M$ and $e$ are called mass and (electromagnetic) charge, respectively. 

The Reissner-Nordstr\"{o}m  metric  was first written in local coordinates $\left(t,r,\theta,\phi\right)$ in 1916 \cite{r} and 1918 \cite{n}. In these coordinates,  $g=g_{M,e}=-Ddt^{2}+\frac{1}{D}dr^{2}+r^{2}g_{\scriptstyle\mathbb{S}^{2}}$, where $D=D\left(r\right)=1-\frac{2M}{r}+\frac{e^{2}}{r^{2}}$
and $g_{\scriptstyle\mathbb{S}^{2}}$ is the standard metric on $\mathbb{S}^{2}$. The extreme case corresponds to $M=\left|e\right|$. From now on we restrict our attention to the extreme case.

Clearly,  SO(3) acts by isometry on these spacetimes. We will refer to the SO(3)-orbits as (symmetry) spheres. The coordinate $r$ is defined intrinsically such that the area of the spheres of symmetry is $4\pi r^{2}$ (and thus should be thought of as a purely geometric function of the spacetime).  

In view of the coordinate singularity  at $r=M$ (note that $M$ is the  double root of $D$ in the extreme case), we introduce  the so-called \textit{tortoise} coordinate $r^{*}$ given by $\frac{\partial r^{*}\left(r\right)}{\partial r}=\frac{1}{D}.$ Note that in the extreme case $r^{*}$ is inverse linear (instead of logarithmic in the non-extreme case). The metric with respect to the system $\left(t,r^{*}\right)$ then becomes $g=-Ddt^{2}+D\left(dr^{*}\right)^{2}+r^{2}g_{\scriptstyle\mathbb{S}^{2}}.$ 

The way to extend the metric beyond $r=M$ is by considering the \textit{ingoing Eddington-Finkelstein coordinates} $\left(v, r\right)$ where $v=t+r^{*}.$ In these coordinates the metric is given by
 \begin{equation}
g=-Ddv^{2}+2dvdr+r^{2}g_{\scriptstyle\mathbb{S}^{2}},
\label{RN1}
\end{equation}
where 
$D=D\left(r\right)=\left(1-\frac{M}{r}\right)^{2}$ and $g_{\scriptstyle\mathbb{S}^{2}}$ is the standard metric on $\mathbb{S}^{2}$. The radial curves  $v=c$, where c is a constant, are the ingoing radial null geodesics. This means that the null coordinate vector field $\partial_{r}$ differentiates with respect to $r$ on these null hypersurfaces. This geometric property of $\partial_{r}$ makes this vector field very useful for understanding the behaviour of solutions to the wave equation close to $\mathcal{H}^{+}$. 

The Penrose diagram  of the spacetime $\mathcal{N}$ covered by this coordinate system for $v\in\mathbb{R},r\in\mathbb{R}^{+}$ is
 \begin{figure}[H]
	\centering
		\includegraphics[scale=0.135]{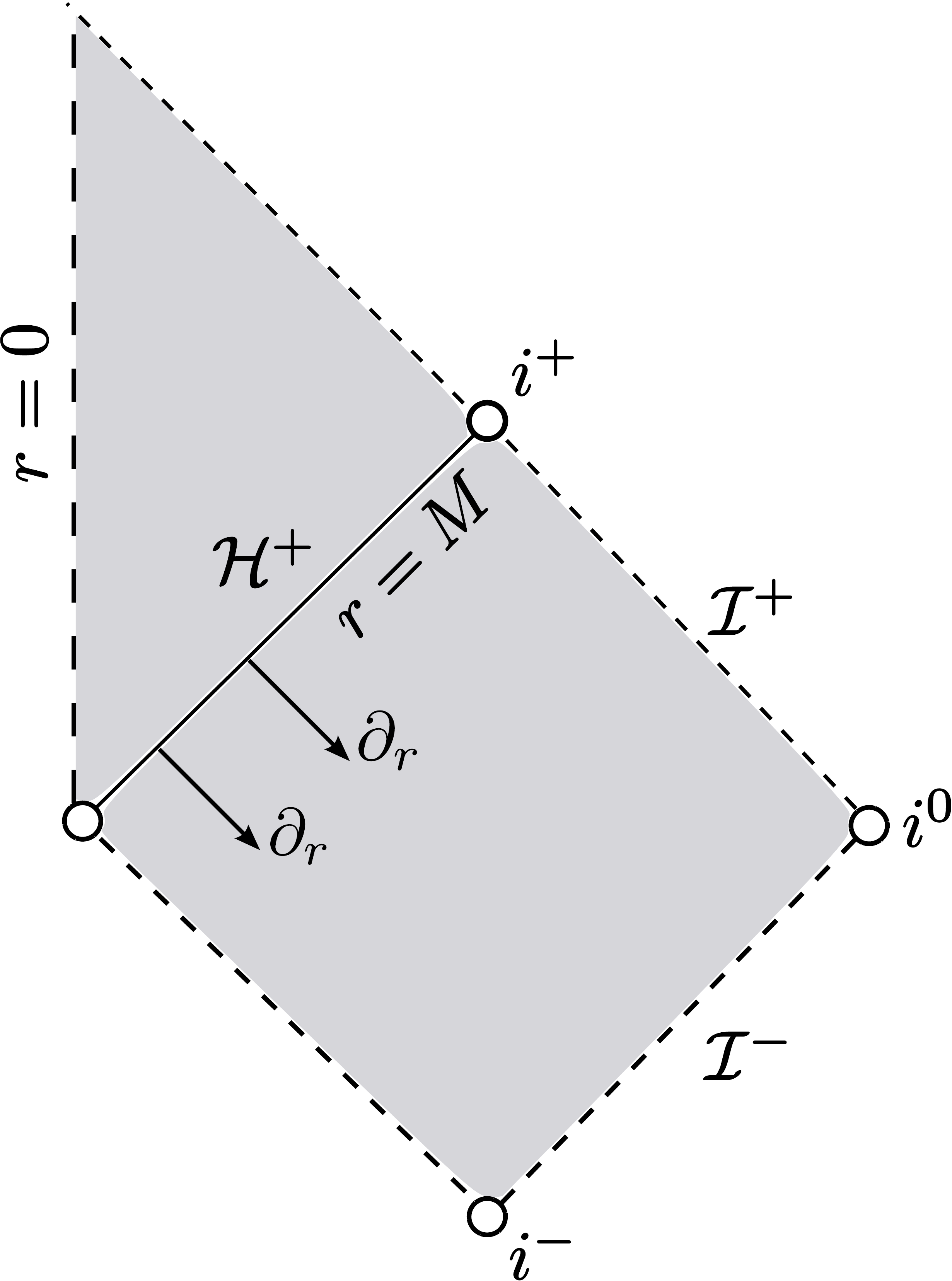}
	\label{fig:ern001}
\end{figure}
We will refer to the hypersurface  $r=M$ as the event horizon (and denote it by $\mathcal{H}^{+}$) and the region $r\leq M$ as the black hole region. The region where $M<r$ corresponds to  the domain of outer communications.

In view of the existence of the timelike curvature singularity $\left\{r=0\right\}$ `inside' the black hole (thought of here as a singular boundary of the black hole region) and its unstable behaviour,  we are only interested in studying the wave equation in the domain of outer communications \textbf{including the horizon} $\mathcal{H}^{+}$. Note that the study of the horizon is of fundamental importance since any attempt to prove the nonlinear stability or instability of the exterior of black holes must come to terms with the structure of the horizon.

We consider a connected asymptotically flat SO(3)-invariant spacelike hypersurface $\Sigma_{0}$ in $\mathcal{N}$ terminating at $i^{0}$ with boundary  such that $\partial\Sigma_{0}=\Sigma_{0}\cap\mathcal{H}^{+}$. We also assume that if $n$ is its future directed unit normal and $T=\partial_{v}$ then there exist positive constants $C_{1}<C_{2}$ such that 
\begin{equation*}
\begin{split}
&C_{1}<-g\left(n,n\right)<C_{2}, \ \ \  C_{1}<-g\left(n,T\right)<C_{2}.
\end{split}
\end{equation*}
Let $\mathcal{M}$ be the domain of dependence of $\Sigma_{0}$. Then, using the coordinate system $(v,r)$ we have
\begin{equation}
\mathcal{M}=\left(\left(-\infty,+\infty\right)\times\left[M\right.\!,+\infty\left.\right)\times\mathbb{S}^{2}\right)\cap J^{+}\left(\Sigma_{0}\right),
\label{Mern}
\end{equation}where  $J^{+}\left(\Sigma_{0}\right)$ is the causal future of $\Sigma_{0}$ (which by our convention includes $\Sigma_{0}$). Note  that $\mathcal{M}$ is a manifold with stratified (piecewise smooth) boundary  $\partial\mathcal{M}=(\mathcal{H}^{+}\cap\mathcal{M})\cup\Sigma_{0}$.
\begin{figure}[H]
	\centering
		\includegraphics[scale=0.11]{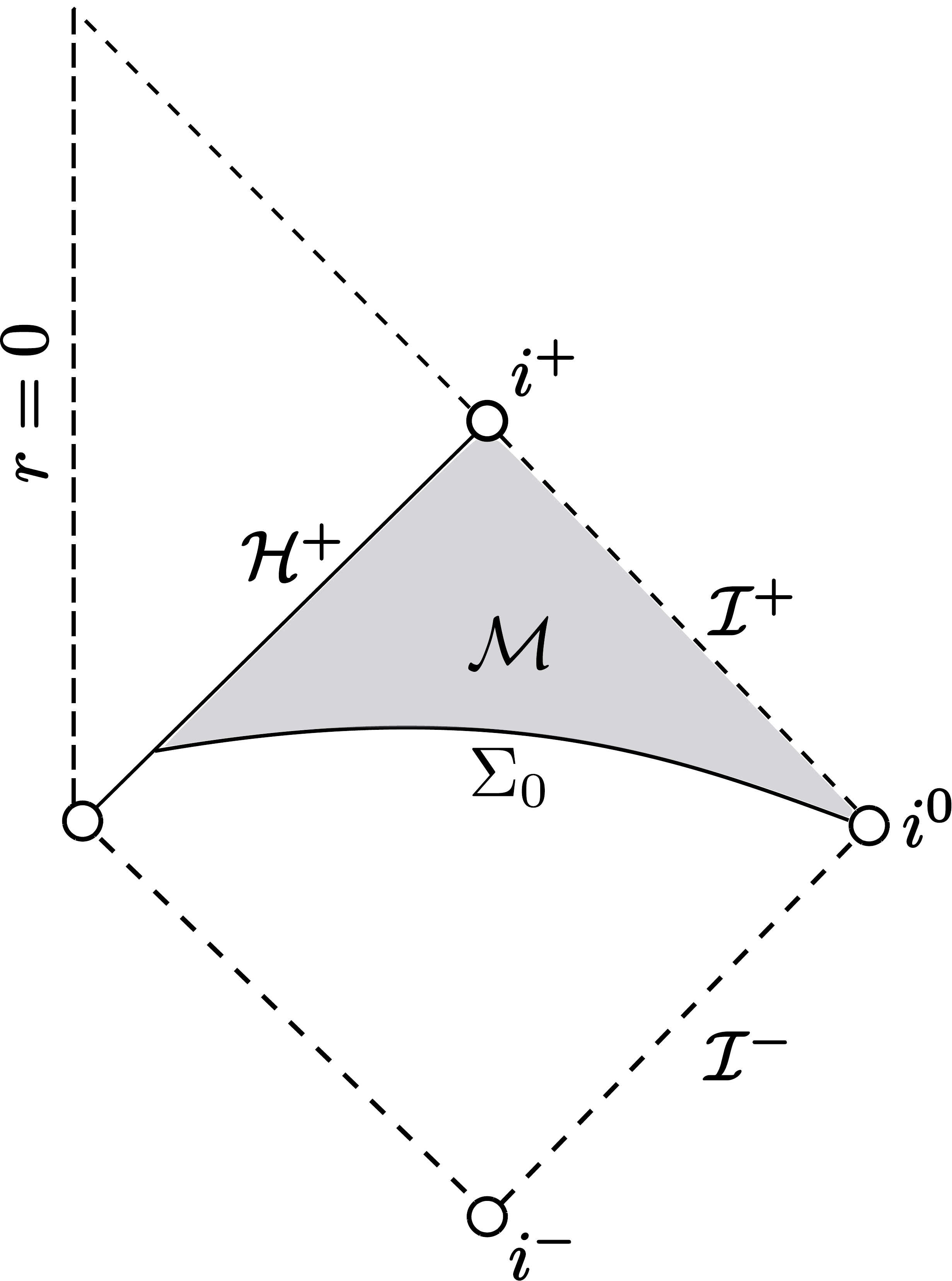}
	\label{fig:exrnw1}
\end{figure}

Another coordinate system that partially covers $\tilde{\mathcal{M}}$ is the null system $(u,v)$ where $u=t-r^{*},v=t+r^{*}$
and with respect to which the metric is $g=-Ddudv+r^{2}g_{\scriptstyle\mathbb{S}^{2}}.$
The hypersurfaces $v=c$ and $u=c$ are null and thus this system is useful for applying the method of characteristics.

\subsection{The Foliations $\Sigma_{\tau}$ and $\tilde{\Sigma}_{\tau}$}
\label{sec:TheFoliationsSigmaTauAndTildeSigmaTau}

We  consider the foliation $\Sigma_{\tau}=\varphi_{\tau}(\Sigma_{0})$, where $\varphi_{\tau}$ is the flow of $T=\partial_{v}$. Of course, since $T$ is Killing, the hypersurfaces $\Sigma_{\tau}$ are all isometric to $\Sigma_{0}$. 
 \begin{figure}[H]
	\centering
		\includegraphics[scale=0.11]{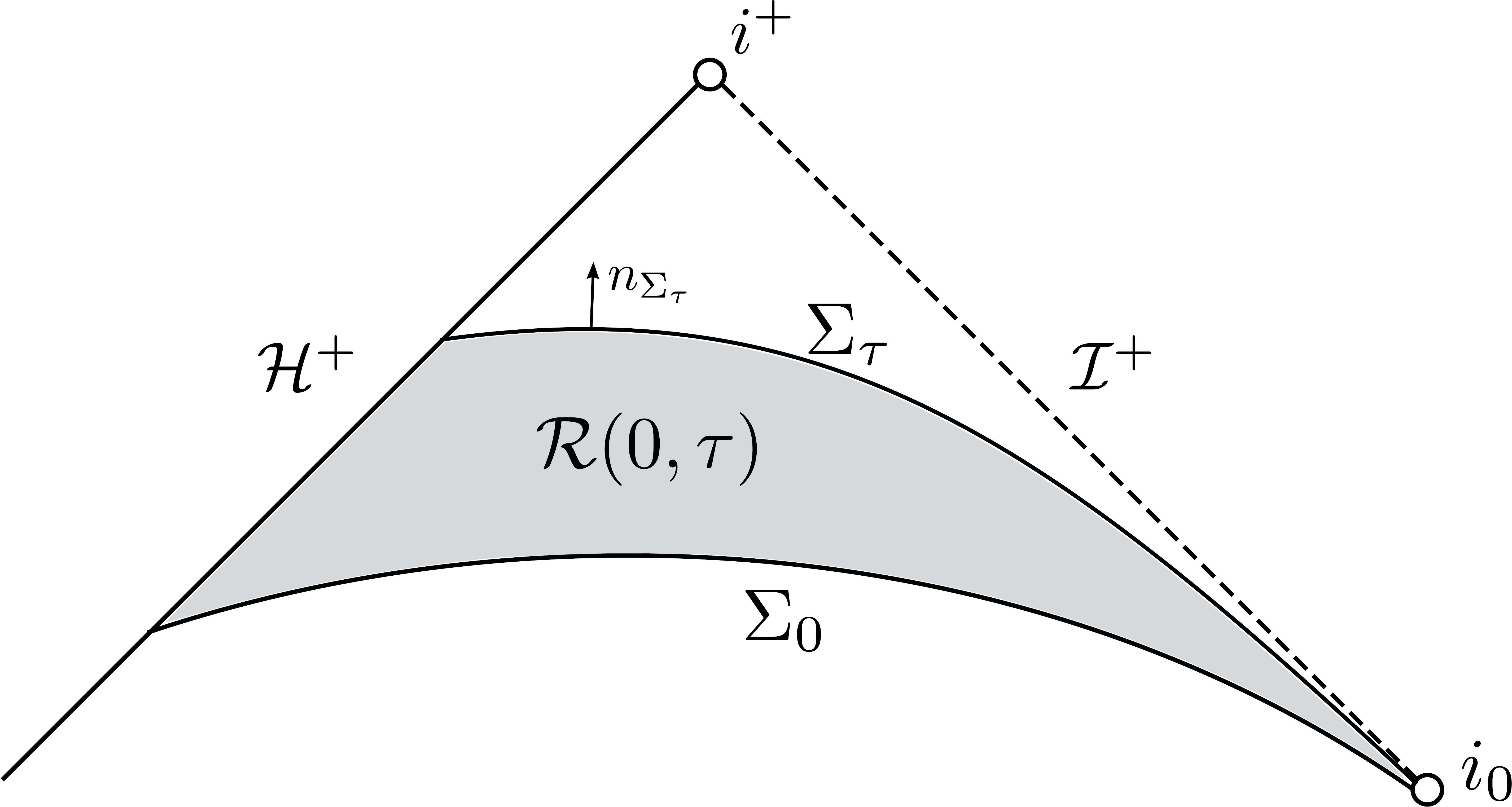}
	\label{fig:ernt1}
\end{figure}
We define the region
\begin{equation*} 
\mathcal{R}(0,\tau)=\cup_{0\leq \tilde{\tau}\leq \tau}\Sigma_{\tilde{\tau}}. 
\end{equation*}
On $\Sigma_{\tau}$ we have an induced Lie propagated coordinate system $(\rho,\omega)$ such that $\rho\in[\left.\right.\!\! M,+\infty)$ and $\omega\in\mathbb{S}^{2}$. These coordinates are defined such that if $Q\in\Sigma_{\tau}$ and $Q=(v_{Q},r_{Q},\omega_{Q})$  then $\rho=r_{Q}$ and $\omega=\omega_{Q}$. Our assumption for the normal $n_{\Sigma_{0}}$ (and thus for $n_{\Sigma_{\tau}})$ implies that there exists a bounded function $g_{1}$ such that 
\begin{equation}
\partial_{\rho}=g_{1}\partial_{v}+\partial_{r}.
\label{eq:rho}
\end{equation}
This defines a coordinate system since $[\partial_{\rho},\partial_{\theta}]=[\partial_{\rho},\partial_{\phi}]=0$. Moreover, the volume form of $\Sigma_{\tau}$ is
\begin{equation}
dg_{\scriptstyle\Sigma_{\tau}}=V\rho^{2}d\rho d\omega,
\label{volumeformfoliation}
\end{equation}
 where $V$ is a positive bounded function.
 \begin{figure}[H]
	\centering
		\includegraphics[scale=0.11]{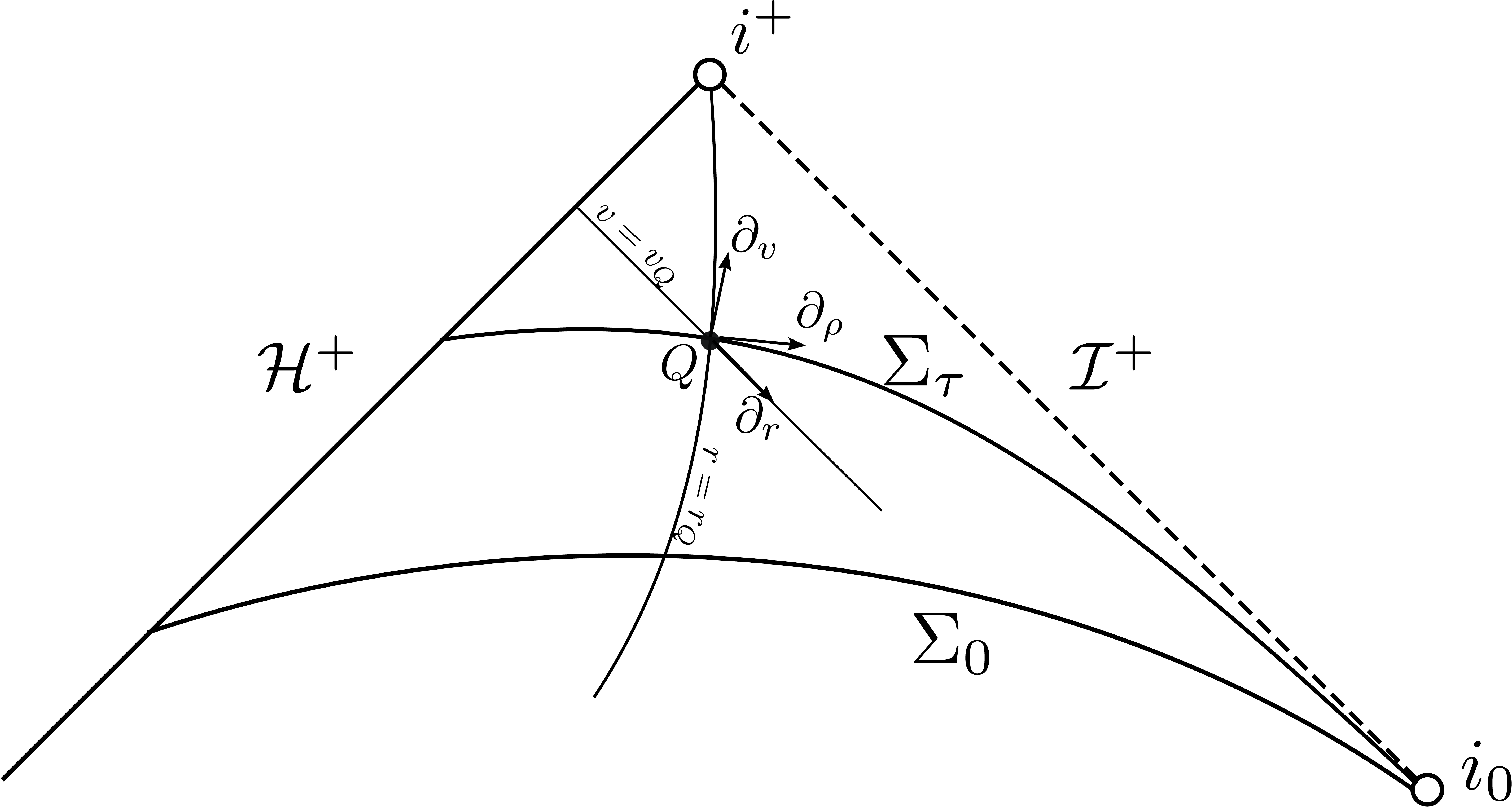}
	\label{fig:ernt2}
\end{figure}
In the companion paper \cite{aretakis2} we shall make use of another foliation denoted by $\tilde{\Sigma}_{\tau}$ whose leaves  terminate  at $\mathcal{I}^{+}$ and thus  ``follow" the waves to the future. 
 \begin{figure}[H]
	\centering
		\includegraphics[scale=0.1]{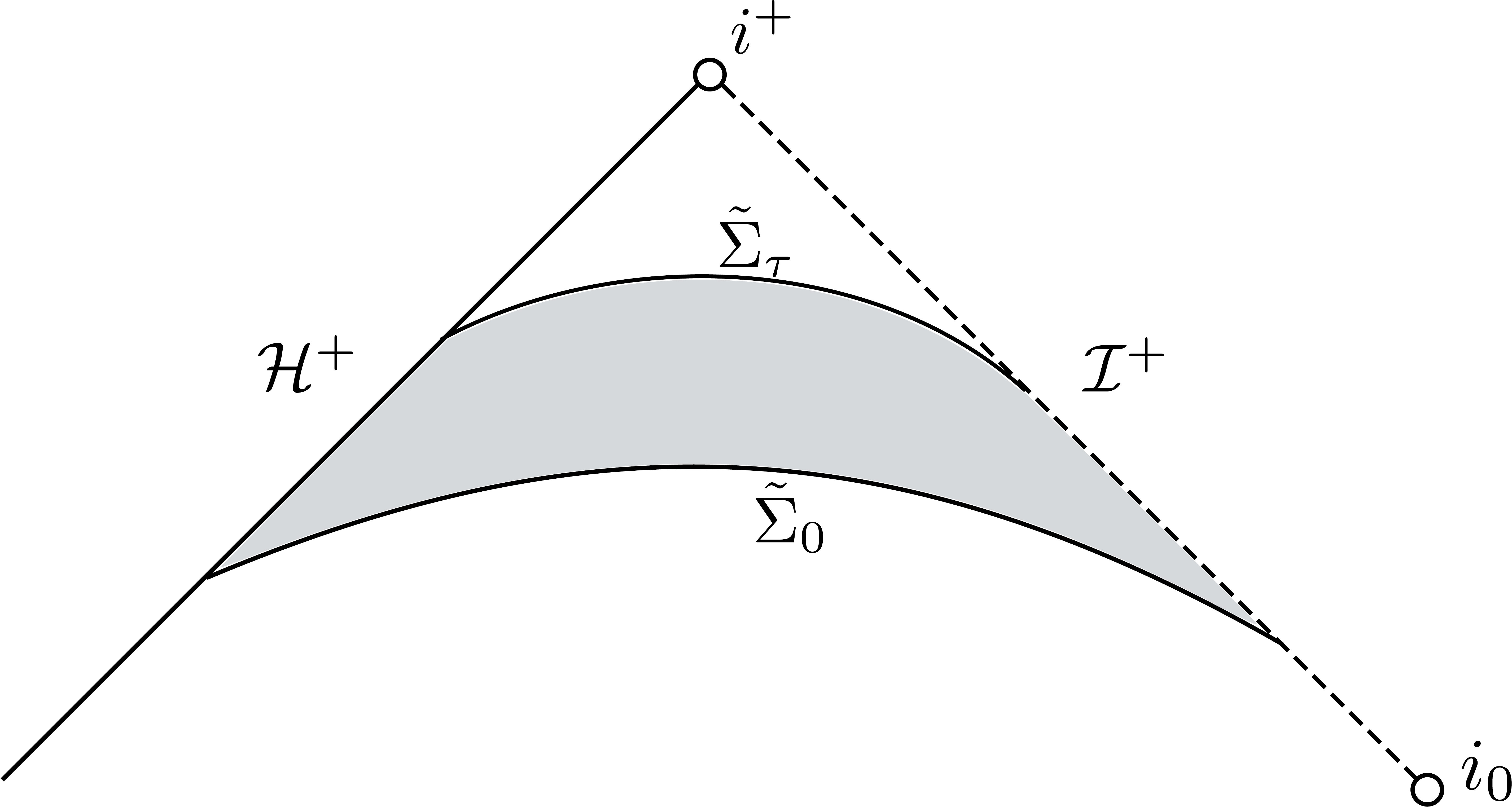}
	\label{fig:ernt3}
\end{figure}
One can similarly define an induced coordinate system on $\tilde{\Sigma}_{\tau}$. Note that only local elliptic estimates are to be applied on $\tilde{\Sigma}_{\tau}$.

\subsection{The Photon Sphere and Trapping Effect}
\label{sec:PhotonSphereAndTrappingEffect}

One can easily see that there exist orbiting future directed null geodesics, i.e.~null geodesics that neither cross the horizon $\mathcal{H}^{+}$ nor meet  null infinity $\mathcal{I}^{+}$. A class of such  geodesics $\gamma$ are of the form
\begin{equation*}
\begin{split}
\gamma:&\mathbb{R}\rightarrow\mathcal{M}\\
&\tau\mapsto \gamma\left(\tau\right)=\left(t\left(\tau\right),Q,\frac{\pi}{2},\phi\left(\tau\right)\right).
\end{split}
\end{equation*}
The conditions $\nabla_{\overset{.}{\gamma}}\overset{.}{\gamma}=0$ and $g\left(\overset{.}{\gamma},\overset{.}{\gamma}\right)=0$ imply that  $Q=2M,$ which is the radius of the so called \textit{photon sphere}. The $t,\phi$ depend linearly on $\tau$.
 \begin{figure}[H]
	\centering
		\includegraphics[scale=0.1]{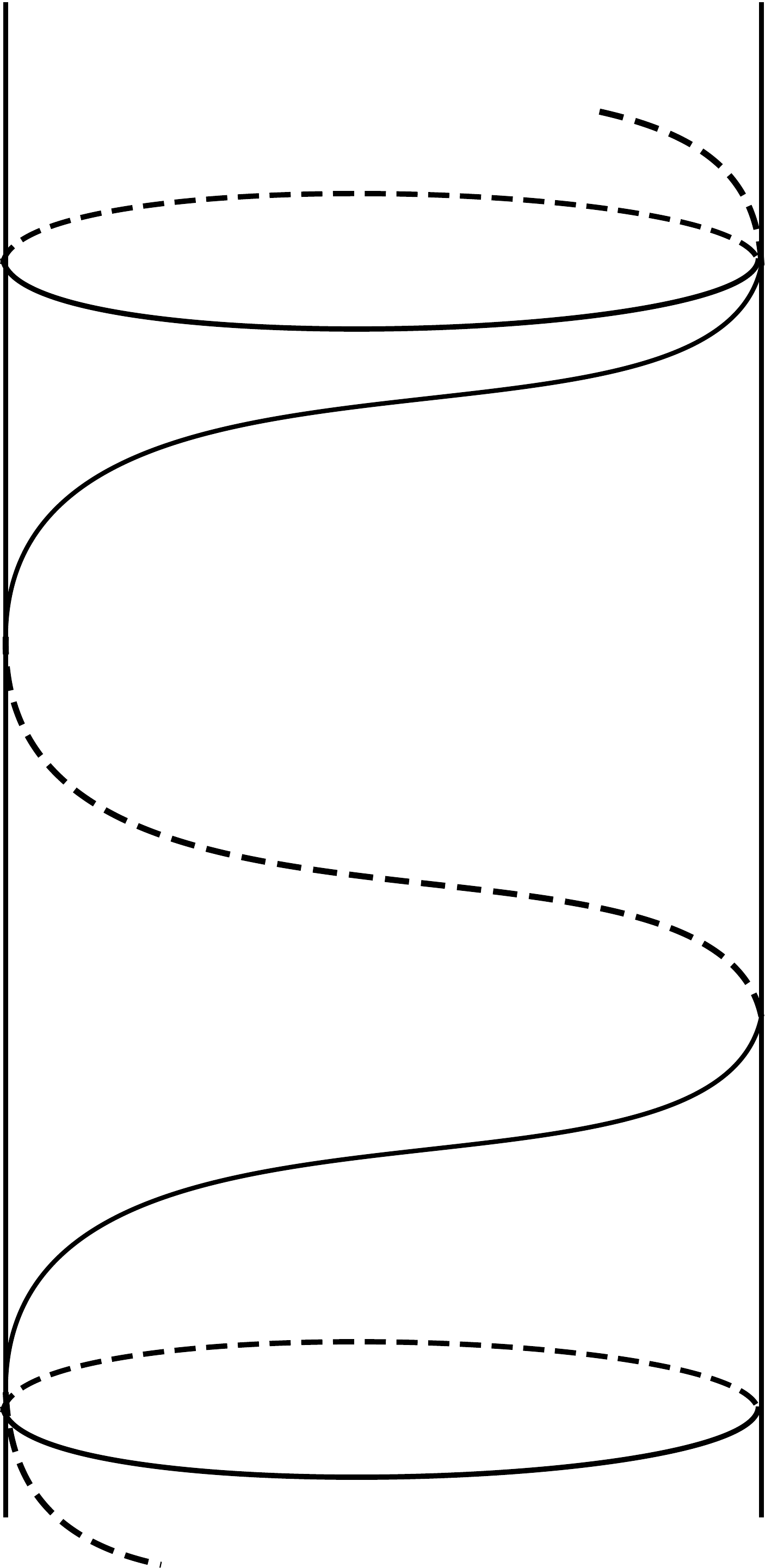}
	\label{fig:ps1}
\end{figure}
In fact, from any point  there is a codimension-one subset of future
directed null directions whose corresponding geodesics approach the photon sphere to the future.  The existence of this ``sphere'' (which is in fact a 3-dimensional timelike hypersurface) implies that the energy of some photons is not scattered to null  infinity or  the black hole region. This is the so called \textit{trapping effect}.  As we shall see, this effect forces us to require higher regularity for the waves in order to achieve decay results.

\subsection{The Redshift Effect and Surface Gravity of $\mathcal{H}^{+}$}
\label{sec:RedshiftEffectAndSurfaceGravityOfH}
The Killing vector field $\partial_{v}$ becomes null on the event horizon $\mathcal{H}^{+}$ and is also  tangent to it and thus  $\mathcal{H}^{+}$ is a Killing horizon. In general, if there exists a Killing vector field $V$ which is normal to a null hupersurface then
\begin{equation}
\nabla_{V}V=\kappa V
\label{sg}
\end{equation}
on the hypersurface. Since $V$ is Killing, the function $\kappa$ is constant along the integral curves of $V$. This can be seen by taking the pushforward of \eqref{sg} via the flow of $V$ and noting that since the flow of $V$ consists of isometries, the pushforward of the Levi-Civita connection is the same connection. The quantity $\kappa$ is called the \textit{surface gravity}\footnote{This plays a significant role in  black hole ``thermodynamics" (see also~\cite{wald} and~\cite{poisson}).} of the null hypersurface\footnote{Note that in Riemannian geometry, any Killing vector field that satisfies \eqref{sg} must have $\kappa=0$.}.  In the Reissner-Nordstr\"{o}m family, the surface gravities of the two horizons $\left\{r=r_{-}\right\}$ and $\left\{r=r_{+}\right\}$ are given by 
\begin{equation}
\kappa_{\pm}=\frac{r_{\pm}-r_{\mp}}{2r^{2}_{\pm}}=\frac{1}{2}\left.\frac{dD\left(r\right)}{dr}\right|_{r=r_{\pm}},
\label{sgrn}
\end{equation}
where $D$ is given in Section \ref{sec:GeometryOfExtremeReissnerNordstromSpacetime} (and $r_{+}, r_{-}$ are the roots of $D$). Note that in  extreme Reissner-Nordstr\"{o}m spacetime we have $D\left(r\right)=\left(1-\frac{M}{r}\right)^{2}$ and so $r_{+}=r_{-}=M$ which implies that the surface gravity vanishes. A horizon whose surface gravity vanishes is called \textit{degenerate}.

Physically, the surface gravity is related to the so-called \textit{redshift effect} that is observed along (and close to) $\mathcal{H}^{+}$. According to this effect, the wavelength of radiation close to $\mathcal{H}^{+}$ becomes longer as $v$ increases and thus the radiation gets less energetic.  This effect has a long history in the heuristic analysis of waves but only in the last decade has it  been used mathematically. For example,  Price's law (see \cite{price}) and the stability and instability of Cauchy horizons in an appropriate setting (see \cite{d1}) were proved using, in particular,  this effect. (Note that for the latter, one also needs to use the dual blueshift effect which is present in the interior of black holes.)

\section{The Cauchy Problem for the Wave Equation}
\label{sec:TheCauchyProblemForTheWaveEquation}

We consider solutions of the Cauchy problem of the wave equation \eqref{1eq} with initial data 
\begin{equation}
\left.\psi\right|_{\Sigma_{0}}=\psi_{0}\in H^{k}_{\operatorname{loc}}\left(\Sigma_{0}\right), \left.n_{\Sigma_{0}}\psi\right|_{\Sigma_{0}}=\psi_{1}\in H^{k-1}_{\operatorname{loc}}\left(\Sigma_{0}\right),
\label{cd}
\end{equation}
where the hypersurface $\Sigma_{0}$ is as defined in Section \ref{sec:GeometryOfExtremeReissnerNordstromSpacetime} and $n_{\Sigma_{0}}$ denotes the future unit normal of $\Sigma_{0}$.  In view of the global hyperbolicity of $\mathcal{M}$, there exists a unique solution to the above equation. Moreover, as long as $k\geq 1$, we have that for any spacelike hypersurface  $S$
\begin{equation*}
\left.\psi\right|_{S}\in H^{k}_{\operatorname{loc}}\left(S\right), \left.n_{S}\psi\right|_{S}\in H^{k-1}_{\operatorname{loc}}\left(S\right).
\end{equation*}
In this paper we will be interested in the case where $k\geq 2$.  Moreover, we assume that 
\begin{equation}
\lim_{x\rightarrow i^{0}}r\psi^{2}(x)=0.
\label{condition}
\end{equation}
 For simplicity, from now on, \textbf{when we say ``for all solutions $\psi$ of the wave equation" we will assume that $\psi$ satisfies the above conditions}. Note that for obtaining sharp decay results we will have to consider even higher regularity for $\psi$.

\section{The Main Theorems}
\label{sec:TheMainTheorems}

We consider the Cauchy problem for the wave equation (see Section \ref{sec:TheCauchyProblemForTheWaveEquation}) on the extreme Reissner-Nordstr\"{o}m spacetime. This spacetime is partially covered by the coordinate systems $(t,r)$, $(t,r^{*})$, $(v,r)$ and $(u,v)$  described in Section \ref{sec:GeometryOfExtremeReissnerNordstromSpacetime}. Recall that $M$ is a positive parameter and  $D=D(r)=\left(1-\frac{M}{r}\right)^{2}$. Recall also that the horizon $\mathcal{H}^{+}$ is located at $\left\{r=M\right\}$ and the photon sphere at $\left\{r=2M\right\}$.

We denote $T=\partial_{v}=\partial_{t}$, where $\partial_{v}$ corresponds to the system $(v,r)$ and $\partial_{t}$ corresponds to $(t,r)$.  From now on, \textbf{$\partial_{v},\partial_{r}$ are the coordinate vector fields corresponding to $(v,r)$}, unless otherwise stated. Note that $\partial_{r^{*}}=\partial_{v}=T$ on $\mathcal{H}^{+}$ and therefore it is not transversal to $\mathcal{H}^{+}$, whereas $\partial_{r}$ is transversal to $\mathcal{H}^{+}$.

The foliation $\Sigma_{\tau}$ is defined in Section \ref{sec:TheFoliationsSigmaTauAndTildeSigmaTau} and the current $J^{V}$ associated to the vector field $V$ is defined in Section \ref{sec:TheCurrentsJKAndMathcalE}. Note that every time we use such a current we refer to $V$ as a multiplier vectorfield. For reference, we mention that 
\begin{equation*}
J_{\mu}^{T}[\psi]n^{\mu}_{\Sigma}\sim \, (T\psi)^{2}+\left(1-\frac{M}{r}\right)^{2}(\partial_{r}\psi)^{2}+\left|\nabb\psi\right|^{2},
\end{equation*}
which degenerates on $\mathcal{H}^{+}$ whereas
\begin{equation*}
J_{\mu}^{n_{\Sigma}}[\psi]n^{\mu}_{\Sigma}\sim \, (T\psi)^{2}+(\partial_{r}\psi)^{2}+\left|\nabb\psi\right|^{2},
\end{equation*}
which does not degenerate on $\mathcal{H}^{+}$.
The Fourier decomposition of $\psi$ on $\mathbb{S}^{2}(r)$ is discussed in Section \ref{sec:EllipticTheoryOnMathbbS2}, where it is also defined what it means for a function to be supported on a given range of angular frequencies. Note that all the integrals are considered with respect to the volume form. The initial data are assumed to be as in Section \ref{sec:TheCauchyProblemForTheWaveEquation} and  sufficiently regular  such that the right hand side of the estimates below are all finite. Then we have the following

\begin{mytheo}(\textbf{Morawetz and $X$ Estimates})
Let $\delta >0$. There exists a  constant $C_{\delta}>0$ which depends  on $M$, $\delta$ and $\Sigma_{0}$ such that for all solutions $\psi$ of the wave equation the following estimates hold
\begin{enumerate}
	\item 
\textbf{Non-Degenerate Zeroth Order Morawetz Estimate:}
	 \begin{equation*}
	\begin{split}
	\displaystyle\int_{\mathcal{R}(0,\tau)}{\frac{1}{r^{3+\delta}}\psi^{2}}\leq C_{\delta}\displaystyle\int_{\Sigma_{0}}{J^{n_{\Sigma_{0}}}_{\mu}[\psi]n^{\mu}_{\Sigma_{0}}}.
		\end{split}
	\end{equation*}
	\item 
\textbf{$X$ Estimate with Degeneracy at $\mathcal{H}^{+}$ and Photon Sphere:}
	\begin{equation*}
	\begin{split}
\displaystyle\int_{\mathcal{R}(0,\tau)}\!\!\!{\left(\frac{(r-2M)^{2}\cdot \sqrt{D}}{r^{3+\delta}}\left((\partial_{v}\psi)^{2}+\left|\nabb\psi\right|^{2}+D^{2}(\partial_{r}\psi)^{2}\right)\right)+\chi_{\left[\frac{3M}{2},\frac{5M}{2}\right]}\cdot(\partial_{r^{*}}\psi)^{2}}\leq C_{\delta}\displaystyle\int_{\Sigma_{0}}{J_{\mu}^{T}[\psi]n^{\mu}_{\Sigma_{0}}},
	\end{split}
	\end{equation*}
	where $\chi_{\left[\frac{3M}{2},\frac{5M}{2}\right]}(r)$ is the indicator function of the interval $\left[\frac{3M}{2},\frac{5M}{2}\right]$.
		\item \textbf{$X$ Estimate with Degeneracy at $\mathcal{H}^{+}$:}
		\begin{equation*}
	\begin{split}
&\displaystyle\int_{\mathcal{R}(0,\tau)}\!\!\!{\left(\frac{\sqrt{D}}{r^{1+\delta}}\left(\partial_{v}\psi\right)^{2}+\frac{D^{2}\sqrt{D}}{r^{1+\delta}}\left(\partial_{r}\psi\right)^{2}+\frac{\sqrt{D}}{r}\left|\nabb\psi\right|^{2}\right)}
\leq C_{\delta}\displaystyle\int_{\Sigma_{0}}{\left(J_{\mu}^{T}[\psi]n^{\mu}_{\Sigma_{0}}+J_{\mu}^{T}[T\psi]n^{\mu}_{\Sigma_{0}}\right)}.
		\end{split}
	\end{equation*}
\end{enumerate}		
	\label{th1}
\end{mytheo}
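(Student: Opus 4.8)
The plan is to use the vector field (multiplier) method throughout. For a multiplier vector field $X$ and a weight function $w$, I would form the modified current
\[
J^{X,w}_\mu[\psi]=T_{\mu\nu}[\psi]X^\nu+\frac14 w\,\pa_\mu(\psi^2)-\frac14(\pa_\mu w)\psi^2,
\]
where $T_{\mu\nu}[\psi]=\pa_\mu\psi\,\pa_\nu\psi-\frac12 g_{\mu\nu}\,g^{\a\b}\pa_\a\psi\,\pa_\b\psi$ is the energy--momentum tensor. Since $\Box_g\psi=0$, the divergence is a pure bulk term,
\[
\nab^\mu J^{X,w}_\mu[\psi]=T_{\mu\nu}[\psi]\,\piX^{\mu\nu}+\frac12 w\,g^{\a\b}\pa_\a\psi\,\pa_\b\psi-\frac14(\Box_g w)\psi^2=:K^{X,w}[\psi],
\]
with $\piX=\frac12\Lie_X g$. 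Integrating over $\mathcal{R}(0,\tau)$ and applying the divergence theorem, $\int_{\mathcal{R}(0,\tau)}K^{X,w}$ equals the difference of the boundary fluxes of $J^{X,w}$ over $\Sigma_0$, $\Sigma_\tau$ and $\mathcal{H}^+\cap\mathcal{R}(0,\tau)$. The entire task is to choose $X$ and $w$ so that $K^{X,w}$ is coercive with the stated weights while these boundary fluxes are dominated by the energies on the right-hand sides.

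For the radial estimates I would take $X=f(r^*)\,\pa_{r^*}$. Working in the $(t,r^*)$ system, $T_{\mu\nu}[\psi]\,\piX^{\mu\nu}$ reduces to an expression of the schematic form $f'\,(\pa_{r^*}\psi)^2+(\cdots)(\pa_t\psi)^2+G\,|\nabb\psi|^2$, in which the coefficient $G$ multiplying the angular term changes sign exactly at the photon sphere $r=2M$; this is the analytic signature of trapping and is the reason one cannot obtain a non-degenerate $X$ estimate with a single multiplier. Accordingly I would split $\psi$ into its spherical harmonics and treat $l=0$ and $l\ge 1$ separately, as announced in the introduction. For $l\ge 1$ I would choose $f$ to vanish to first order and change sign at $r=2M$, so that $f'>0$ renders the radial term coercive while the angular term acquires the degenerate weight $(r-2M)^2$; the localized term $\chi_{[3M/2,5M/2]}(\pa_{r^*}\psi)^2$ in (2) records that at the photon sphere only the directions tangent to it ($\pa_t$ and angular) are trapped, the transversal $\pa_{r^*}\psi$ being controlled by $f'$.

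With $f$ and $w$ equipped with appropriate powers of $r$ (for decay at $i^0$) and of $\sqrt D$ (to accommodate the double zero of $D$ at the horizon, where moreover $r^*$ is inverse-linear rather than logarithmic), the bulk $K^{X,w}$ dominates the integrand of (2). I would control the $\Sigma_\tau$-flux using the conservation of the degenerate $T$-energy (whose $\mathcal{H}^+$-flux is non-negative), arrange the $\mathcal{H}^+$-flux of $J^{X,w}$ to have a favourable sign or to vanish through the weights, and bound the $\Sigma_0$-flux by $\int_{\Sigma_0}J^T_\mu[\psi]n^\mu_{\Sigma_0}$, which yields (2). For the zeroth order Morawetz estimate (1) I would exploit the term $-\frac14(\Box_g w)\psi^2$: choosing $w$ so that $-\Box_g w\gtrsim r^{-3-\delta}$ and does \emph{not} vanish at $r=2M$ produces the coercive lower-order bulk $\int r^{-3-\delta}\psi^2$ with no photon-sphere degeneracy. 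The mode $l=0$, carrying no angular term and hence no trapping, I would treat by the separate robust argument of the introduction using a radial multiplier on the domain of outer communications. The boundary $\psi^2$ terms generated by $w$ I would absorb using the Hardy inequalities of the later section, which is precisely why the \emph{non-degenerate} energy $\int_{\Sigma_0}J^{n_{\Sigma_0}}_\mu[\psi]n^\mu_{\Sigma_0}$ appears on the right of (1).

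Finally, to remove the photon-sphere degeneracy and obtain (3), I would commute with the Killing field $T=\pa_v$: since $[T,\Box_g]=0$, the function $T\psi$ again solves the wave equation, so (2) applies verbatim to $T\psi$. As the trapped directions at $r=2M$ are exactly $\pa_t$ and the angular ones, adding (2) for $\psi$ to (2) for $T\psi$ lets the degenerate factor $(r-2M)^2$ be traded for a non-degenerate weight at the expense of one $T$-derivative of the data, producing (3) with its horizon-only degeneracy and right-hand side $\int_{\Sigma_0}(J^T_\mu[\psi]+J^T_\mu[T\psi])n^\mu_{\Sigma_0}$. I expect the principal obstacle to be the simultaneous construction of $f$ and $w$: their radial profiles must reconcile the decay demanded at $i^0$, coercivity both at and away from the photon sphere, and---most delicately---the quadratic degeneracy of $D$ together with the inverse-linear behaviour of $r^*$ near $\mathcal{H}^+$, a feature absent from all earlier non-extreme treatments that forces the specific $\sqrt D$ and $D^2$ weights in the statement. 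Securing the non-degeneracy of (1) at the photon sphere, where the natural radial multiplier wants to vanish, is the second delicate point.
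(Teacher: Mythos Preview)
Your high-level plan is sound and matches the paper's architecture (spherical decomposition, radial multiplier $X=f(r^*)\pa_{r^*}$, commutation with $T$ for (3)), but there are two genuine gaps at the technical core.

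\textbf{The single modification $J^{X,w}$ is not enough for the $l\ge 1$ Morawetz term.} You propose to choose $w$ so that $-\Box_g w\gtrsim r^{-3-\delta}$ without vanishing at the photon sphere. In practice, to avoid an uncontrollable $(\pa_t\psi)^2$ term one is forced to take $w$ proportional to $G=\tfrac{f'}{4}+\tfrac{fD}{2r}$; then $-\Box_g G$ contains the contribution $-\tfrac{1}{4D}f'''$, and since $f$ is bounded with $f'>0$, $f'''$ must become positive somewhere, so $-\Box_g G$ acquires the \emph{wrong sign}. The paper resolves this by a second, essential modification: it adds $\frac{f'}{Df}\beta\psi^2 X_\mu$ with a specific $\beta(r^*)=\frac{D}{r}-\frac{x}{\a^2+x^2}$ (in a shifted tortoise variable $x$) which completes the square $\frac{f'}{D}(\pa_{r^*}\psi+\beta\psi)^2$ and effectively borrows positivity from the radial coefficient. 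The residual zeroth-order coefficient then splits as $F+I$, and positivity for $x\in[-\a,\a]$ is obtained only after invoking the Poincar\'e inequality on $\mathbb{S}^2$ (here $l\ge 1$ enters) together with a delicate quantitative lemma comparing $f^\a$ to explicit rational functions. Your proposal does not identify this obstruction or its cure; without the second modification the bulk is not non-negative.

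\textbf{For (3), adding (2) for $\psi$ and $T\psi$ does not by itself remove the $(r-2M)^2$ degeneracy on the angular term.} Estimate (2) applied to $T\psi$ controls $(r-2M)^2|\nabb T\psi|^2$, not $|\nabb\psi|^2$. What the paper does is insert a \emph{Lagrangian current} $L^f_\mu=f\,\psi\,\nabla_\mu\psi$ with $f=r^{-3}D^{3/2}$; its divergence yields a positive $\frac{D^{3/2}}{r^3}|\nabb\psi|^2$ at the cost of a negative $\frac{\sqrt D}{r^3}(\pa_t\psi)^2$. That negative term is then absorbed using the non-degenerate zeroth-order bound on $\psi^2$ from the $J^{X^\a,2}$ current \emph{applied to $T\psi$} (since $(T\psi)^2=(\pa_t\psi)^2$). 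So the commutation with $T$ enters through the zeroth-order Morawetz bound, not by literally summing two copies of (2).

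Finally, for the horizon non-degeneracy in (1) you should be aware that Hardy inequalities alone do not suffice: the paper first establishes a degenerate Morawetz with weight $D/r^4$, then upgrades it near $\mathcal{H}^+$ only after constructing the timelike $N$-multiplier and its modified current $J^{N,-1/2}$ (Section~\ref{sec:TheVectorFieldN}) and invoking the \emph{third} Hardy inequality; this is why $J^{n_{\Sigma_0}}$, not $J^T$, appears on the right of (1).
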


For proving statement (1), we first derive  in Section \ref{sec:SpacetimeL2EstimateForPsi} an estimate which degenerates on $\mathcal{H}^{+}$. Note that the estimate of Section \ref{sec:SpacetimeL2EstimateForPsi} requires applying only the Killing vectorfield $T$ as a multiplier. Later in Section \ref{sec:UniformPointwiseBoundedness} we construct a timelike multiplier and we use an appropriate version of Hardy's inequality to eliminate this degeneracy (see Theorem \ref{theorem3}). 

Note that estimate (2)  degenerates on the photon sphere  with respect to all derivatives except  precisely $\partial_{r^{*}}$. Note moreover, that the same estimate degenerates on $\mathcal{H}^{+}$ with respect to all derivatives.   As we shall see, unlike the subextreme case,  the degeneracy on $\mathcal{H}^{+}$ with respect to the transversal derivative $\partial_{r}$ can not be eliminated even if we apply a timelike multiplier (see also Theorem \ref{t2}).

In Section \ref{sec:ANonDegenerateXEstimate} we commute with the vectorfield $T$ to  retrieve the tangential to the photon sphere derivatives thus obtaining  statement (3).

\begin{mytheo}(\textbf{Uniform Boundedness of Non-Degenerate Energy}) 
There exists $r_{0}$ such that $M<r_{0}<2M$ and a  constant $C>0$ which depends on $M$, and $\Sigma_{0}$ such that if $\mathcal{A}=\mathcal{R}(0,\tau)\cap\left\{M\leq r\leq r_{0}\right\}$ then for all solutions $\psi$ of the wave equation we have
\begin{equation*}
\int_{\mathcal{A}}{\left(\sqrt{D}(\partial_{r}\psi)^{2}+(\partial_{v}\psi)^{2}+\left|\nabb\psi\right|^{2}\right)}+
\displaystyle\int_{\Sigma_{\tau}}{J_{\mu}^{n_{\Sigma_{\tau}}}[\psi]n^{\mu}_{\Sigma_{\tau}}}\leq C\displaystyle\int_{\Sigma_{0}}{J_{\mu}^{n_{\Sigma_{0}}}[\psi]n^{\mu}_{\Sigma_{0}}}.
\end{equation*}
\label{t2}
\end{mytheo}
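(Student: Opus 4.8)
The plan is to run a \emph{modified redshift} vector field estimate localized near $\mathcal{H}^{+}$, using the local integrated decay estimates of Theorem \ref{th1} to absorb the error terms generated away from the horizon. First I would fix radii $M<r_{0}<r_{1}<2M$ and construct a future-directed timelike vector field $N$, together with a zeroth-order modification of its current $J^{N}$, having the following three properties. (i) The flux is non-degenerate, $J^{N}_{\mu}[\psi]n^{\mu}_{\Sigma}\sim J^{n_{\Sigma}}_{\mu}[\psi]n^{\mu}_{\Sigma}$, uniformly on every $\Sigma_{\tau}$: near $\mathcal{H}^{+}$ this is forced by the transversal, strictly timelike character of $N$, while for $r\geq r_{1}$ I would simply set $N=T$, for which $J^{T}$ is already non-degenerate since $D$ is bounded above and below there. (ii) For $r\geq r_{1}$ the bulk vanishes, $K^{N}=0$, because $T$ is Killing. (iii) On $\{M\leq r\leq r_{0}\}$ the modified bulk is non-negative and coercive,
\begin{equation*}
K^{N}\geq c\left(\sqrt{D}(\partial_{r}\psi)^{2}+(\partial_{v}\psi)^{2}+\left|\nabb\psi\right|^{2}\right),
\end{equation*}
for some $c>0$, while on the intermediate region $\{r_{0}\leq r\leq r_{1}\}$ it is merely bounded below, $K^{N}\geq -C\left(|\partial\psi|^{2}+\psi^{2}\right)$.

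With such an $N$ in hand I would apply the divergence identity for $J^{N}$ on $\mathcal{R}(0,\tau)$, whose boundary consists of $\Sigma_{0}$, $\Sigma_{\tau}$ and $\mathcal{H}^{+}\cap\mathcal{R}(0,\tau)$:
\begin{equation*}
\int_{\Sigma_{\tau}}J^{N}_{\mu}n^{\mu}_{\Sigma_{\tau}}+\int_{\mathcal{H}^{+}\cap\mathcal{R}(0,\tau)}J^{N}_{\mu}n^{\mu}_{\mathcal{H}^{+}}+\int_{\mathcal{R}(0,\tau)}K^{N}=\int_{\Sigma_{0}}J^{N}_{\mu}n^{\mu}_{\Sigma_{0}}.
\end{equation*}
Since $N$ is future causal the horizon flux is non-negative and may be discarded. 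Splitting the bulk integral over the three $r$-ranges and invoking (iii), the coercive near-horizon part contributes exactly $\int_{\mathcal{A}}\bigl(\sqrt{D}(\partial_{r}\psi)^{2}+(\partial_{v}\psi)^{2}+|\nabb\psi|^{2}\bigr)$ to the left-hand side, at the price of the error integral $\int_{\{r_{0}\leq r\leq r_{1}\}}\bigl(|\partial\psi|^{2}+\psi^{2}\bigr)$ on the right.

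It then remains to absorb this error and to identify the boundary fluxes. On the compact region $\{r_{0}\leq r\leq r_{1}\}$, which is bounded away from both $\mathcal{H}^{+}$ and the photon sphere, every weight appearing in statement (2) of Theorem \ref{th1} is bounded below, so the derivative part of the error is controlled by the degenerate $X$ estimate, while the $\psi^{2}$ part is controlled by the non-degenerate zeroth order Morawetz estimate of statement (1); both are bounded by $\int_{\Sigma_{0}}J^{n_{\Sigma_{0}}}_{\mu}n^{\mu}_{\Sigma_{0}}$. Any zeroth-order term on $\Sigma_{\tau}$ produced by the modification of $J^{N}$ I would control by a Hardy inequality, which bounds the local $L^{2}$ norm of $\psi$ near $\mathcal{H}^{+}$ by the degenerate $T$-energy; the latter is conserved, $\int_{\Sigma_{\tau}}J^{T}_{\mu}n^{\mu}\leq\int_{\Sigma_{0}}J^{T}_{\mu}n^{\mu}\leq\int_{\Sigma_{0}}J^{n_{\Sigma_{0}}}_{\mu}n^{\mu}$, because $T$ is Killing with non-negative horizon flux. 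Finally the comparability (i) converts $\int_{\Sigma_{\tau}}J^{N}_{\mu}n^{\mu}$ and $\int_{\Sigma_{0}}J^{N}_{\mu}n^{\mu}$ into the non-degenerate energies, closing the estimate.

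The hard part will be step (iii): constructing $N$ and its modification so that $K^{N}$ is genuinely non-negative and coercive on $\{M\leq r\leq r_{0}\}$. In the subextreme case the positivity of $K^{N}$ near the horizon is supplied by the strictly positive surface gravity, but here $\kappa=0$, so the usual redshift contribution degenerates and one must both exploit the precise form of the extreme metric and accept the degenerate weight $\sqrt{D}$ on $(\partial_{r}\psi)^{2}$. I expect this $\sqrt{D}$ degeneracy to be genuinely unavoidable, consistent with the remark following Theorem \ref{th1} that the transversal degeneracy on $\mathcal{H}^{+}$ cannot be removed even by a timelike multiplier.
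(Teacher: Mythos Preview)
Your plan is essentially the paper's own strategy: a zeroth-order modification of the redshift current so that the bulk is coercive near $\mathcal{H}^{+}$ with the degenerate $\sqrt{D}$ weight on $(\partial_{r}\psi)^{2}$, transition to $N=T$ away from the horizon, and absorption of the intermediate-region error by the $X$ and Morawetz estimates, with Hardy inequalities handling the zeroth-order boundary contributions.

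Two small points deserve care. First, when you say ``since $N$ is future causal the horizon flux is non-negative and may be discarded,'' this is true for the \emph{unmodified} $J^{N}$ but not for the modified current you actually apply Stokes to: the modification contributes a term of the form $h\,\psi\,\partial_{v}\psi$ on $\mathcal{H}^{+}$, which has no sign. The paper handles this (Proposition \ref{nhb}) by integrating $\psi\,\partial_{v}\psi$ by parts along $\mathcal{H}^{+}$, producing boundary terms $\int_{\mathcal{H}^{+}\cap\Sigma}\psi^{2}$ which are then controlled by the second Hardy inequality; you should add this step. Second, invoking statement~(1) of Theorem~\ref{th1} (the \emph{non-degenerate} Morawetz) is circular in the paper's logic, since its non-degeneracy at $\mathcal{H}^{+}$ is itself derived from Theorem~\ref{t2} via Proposition~\ref{moranondeg}. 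What you actually need in $\{r_{0}\leq r\leq r_{1}\}$ is only the degenerate Morawetz estimate \eqref{moraw} with weight $D/r^{4}$ and right-hand side the $T$-flux; since $D$ is bounded below there, this suffices and avoids the circularity.
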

The above theorem is proved in Section \ref{sec:UniformBoundednessOfLocalObserverSEnergy} using  a novel redshift current constructed in Section \ref{sec:TheVectorFieldN} and a new version of Hardy's inequality (see Section \ref{sec:HardyInequalities}). Note that this theorem shows that we can not eliminate the degeneracy of $\partial_{r}$ in spacetime neighbourhoods of $\mathcal{H}^{+}$ even if we apply timelike multipliers. To eliminate this degeneracy we need to reveal a new feature of degenerate horizons captured in the next theorem.

\begin{mytheo}(\textbf{Trapping Effect on the Event Horizon $\mathcal{H}^{+}$}) Let $\mathcal{A}$ be the spacetime region defined in Theorem \ref{t2} and $\delta>0$.  Then  there exists a   constant $C>0$ (and $C_{\delta}>0$) which depends on $M$ and $\Sigma_{0}$ (and $\delta$) such that    for all solutions $\psi$ with vanishing spherical mean (i.e. for all $\psi$   supported on the angular frequencies $l\geq 1$), the following hold
\begin{enumerate}
	\item \textbf{Sharp Second Order $L^{2}$ Estimates:}
	\begin{equation*}
\begin{split}
&\int_{\Sigma_{\tau}\cap\mathcal{A}}{\left(\partial_{v}\partial_{r}\psi\right)^{2}+\left(\partial_{r}\partial_{r}\psi\right)^{2}+\left|\nabb\partial_{r}\psi\right|^{2}}+\int_{\mathcal{H}^{+}}{\left(\partial_{v}\partial_{r}\psi\right)^{2}+\chi_{1}\left|\nabb\partial_{r}\psi\right|^{2}}\\&+\int_{\mathcal{A}}{\left(\partial_{v}\partial_{r}\psi\right)^{2}+\sqrt{D}\left(\partial_{r}\partial_{r}\psi\right)^{2}+\left|\nabb\partial_{r}\psi\right|^{2}}\\\leq &\, C\int_{\Sigma_{0}}{J_{\mu}^{n_{\Sigma_{0}}}[\psi]n^{\mu}_{\Sigma_{0}}}+C\int_{\Sigma_{0}}{J_{\mu}^{n_{\Sigma_{0}}}[T\psi]n^{\mu}_{\Sigma_{0}}}+C\int_{\Sigma_{0}\cap\mathcal{A}}{J_{\mu}^{n_{\Sigma_{0}}}[\partial_{r}\psi]n^{\mu}_{\Sigma_{0}}},
\end{split}
\end{equation*}
where $\chi_{1}=0$ if $\psi$ is supported on $l=1$ and $\chi_{1}=1$ if $\psi$ is supported on $l\geq 2$. 
		\item \textbf{Local Integrated Energy Decay:}
\begin{equation*}
	\begin{split}
\displaystyle\int_{\mathcal{R}(0,\tau)}{\frac{1}{r^{1+\delta}}J_{\mu}^{n_{\Sigma_{\tau}}}[\psi]n^{\mu}_{\Sigma_{\tau}}}\leq C_{\delta}\displaystyle\int_{\Sigma_{0}}{J_{\mu}^{n_{\Sigma_{0}}}[\psi]n^{\mu}_{\Sigma_{0}}}+C_{\delta}\int_{\Sigma_{0}}{J_{\mu}^{n_{\Sigma_{0}}}[T\psi]n^{\mu}_{\Sigma_{0}}}+C_{\delta}\int_{\Sigma_{0}\cap\mathcal{A}}{J_{\mu}^{n_{\Sigma_{0}}}[\partial_{r}\psi]n^{\mu}_{\Sigma_{0}}}.
	\end{split}
	\end{equation*}	
\end{enumerate}
\label{theorem3}
\end{mytheo}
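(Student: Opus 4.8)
The plan is to commute the wave equation with the transversal, non-Killing vector field $\partial_r$ and then run a redshift-type energy estimate on $\psi_1:=\partial_r\psi$. First I would compute the commuted equation: writing $\Box_g$ in the $(v,r)$ coordinates of \eqref{RN1} and using $\Box_g\psi=0$, a direct calculation yields a wave-type equation for $\psi_1$ of the schematic form
\[
\Box_g\psi_1 = -D'\,\partial_r\psi_1 - c(r)\,\psi_1 + \frac{2}{r^2}\,\partial_v\psi + \frac{2}{r}\,\lapp\psi, \qquad c(r)=(D'+2D/r)',
\]
whose right-hand side contains only $\psi$ and its first derivatives. Hence it is controlled by the first-order (Morawetz, $X$ and energy) estimates of Theorems \ref{th1} and \ref{t2} applied to $\psi$ and to the Killing-commuted solution $T\psi$ (which again solves $\Box_g(T\psi)=0$); together with the initial flux of $\psi_1$ itself these generate the three data terms on the right of statement (1). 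The appearance of genuine, non-absorbable source terms is exactly the loss of one derivative announced in the theorem: because $\partial_r$ is not Killing, second-order control of $\psi$ is tied to first-order control of $\partial_r\psi$, which is the analytic signature of trapping on $\mathcal{H}^+$.

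The heart of the argument is the energy estimate for $\psi_1$. I would use an appropriate modification of the redshift current $J^{N}$ of Section \ref{sec:TheVectorFieldN} as a multiplier for $\psi_1$: its flux on $\Sigma_\tau$ reproduces $J^{n_{\Sigma_\tau}}_\mu[\psi_1]n^\mu$, and its flux on $\mathcal{H}^+$ controls $(\partial_v\partial_r\psi)^2+\chi_1\,|\nabb\partial_r\psi|^2$. In the subextreme case the bulk of $\operatorname{div}J^{N}[\psi_1]$ is coercive near the horizon because the first-order redshift coefficient is comparable to the surface gravity $\kappa>0$; in the extreme case $D$ has a \emph{double} root at $r=M$, so this coefficient vanishes to second order and the redshift bulk is no longer positive on its own. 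The replacement coercivity must come from the angular operator $\lapp$ at $r=M$: decomposing the commuted equation restricted to $\mathcal{H}^+$ into spherical harmonics, the contribution $\sim l(l+1)/M^2$ of $\lapp\psi_1$ must dominate the surviving zeroth-order term. This dominance is exactly why the hypothesis $l\geq1$ (vanishing spherical mean) is indispensable — it removes the $l=0$ mode whose conservation law (Theorem \ref{nondecay}) forbids decay — and why the angular flux $|\nabb\partial_r\psi|^2$ on $\mathcal{H}^+$ is recovered only for $l\geq2$, the relevant coefficient degenerating at $l=1$ (hence the cutoff $\chi_1$).

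Granting this, I would prove statement (1) by the usual energy-identity bookkeeping: integrate $\operatorname{div}J^{N}[\psi_1]$ over $\mathcal{R}(0,\tau)$, place the positive near-horizon bulk together with the $\mathcal{H}^+$- and $\Sigma_\tau$-fluxes on the left, and move to the right the source terms (controlled as above), the part of the bulk that is not positive away from $\mathcal{H}^+$ (controlled by the $X$ and zeroth-order Morawetz estimates of Theorem \ref{th1}), and the $\Sigma_0$ flux. Hardy-type inequalities along $\mathcal{H}^+$ (Section \ref{sec:HardyInequalities}) are then used to trade the boundary contributions, yielding the three space-time integrals and the two boundary integrals, with right-hand side the fluxes $J^{n_{\Sigma_0}}[\psi]$, $J^{n_{\Sigma_0}}[T\psi]$ and $J^{n_{\Sigma_0}}[\partial_r\psi]$ (the last only on $\Sigma_0\cap\mathcal{A}$, since $\partial_r$ is commuted only near the horizon).

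For statement (2) the task is to remove, near $\mathcal{H}^+$, the $\partial_r$-degeneracy of the $X$-estimate. Theorem \ref{th1}(3) already gives non-degenerate local integrated decay away from $\mathcal{H}^+$ (where $D$ is bounded below and the photon-sphere derivative has been recovered by the $T$-commutation), while Theorem \ref{t2} supplies in $\mathcal{A}$ every component of $J^{n_{\Sigma}}_\mu[\psi]n^\mu$ except a non-degenerate $(\partial_r\psi)^2$. I would close this last gap with a Hardy inequality in the $r$-direction across $\mathcal{A}$: the second-order control of $\partial_r\partial_r\psi$ and $\partial_v\partial_r\psi$ from statement (1) upgrades the $\sqrt{D}$-weighted $(\partial_r\psi)^2$ bound of Theorem \ref{t2} to a genuinely non-degenerate one, and patching the near- and far-horizon estimates gives statement (2). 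The main obstacle throughout is the second paragraph: manufacturing a positive near-horizon bulk for $\psi_1$ out of a redshift that has degenerated to zero, which simultaneously forces the restriction to $l\geq1$, produces the $l=1$ versus $l\geq2$ dichotomy through $\chi_1$, and is where all the genuinely new features of the degenerate horizon enter.
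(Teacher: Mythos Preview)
Your plan matches the paper's approach: commute with $\partial_r$, apply a fresh timelike multiplier $L$ near $\mathcal{H}^+$ to $\partial_r\psi$, control the bulk and boundary terms, and then deduce statement (2) from statement (1) via the third Hardy inequality applied to $\partial_r\psi$. Two points where your sketch is imprecise, though neither is fatal. First, the right-hand side of the commuted equation is \emph{not} only first order in $\psi$: the term $\tfrac{2}{r}\lapp\psi$ is second order and, when paired with the multiplier, produces bulk cross terms such as $\tfrac{2L^r}{r}(\partial_r\partial_r\psi)\lapp\psi$ that cannot simply be thrown onto the first-order estimates for $\psi$ and $T\psi$; in the paper these are integrated by parts on the spheres and partly absorbed into the good angular bulk $H_3|\nabb\partial_r\psi|^2$, which in turn constrains the choice of $L$. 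Second, the $l\geq 1$ restriction does not arise from bulk coercivity of $\lapp\psi_1$ but from a \emph{boundary} term on $\mathcal{H}^+$: the bulk cross term $H_{10}(\partial_r\partial_r\psi)(\partial_r\psi)=-L^r R'\,(\partial_r\partial_r\psi)(\partial_r\psi)$, after integration by parts in $r$, leaves $-\tfrac{H_{10}(M)}{2}\int_{\mathcal{H}^+}(\partial_r\psi)^2$ with the wrong sign, and Poincar\'e absorbs this into the good horizon-flux term $-\tfrac{L^r(M)}{2}\int_{\mathcal{H}^+}|\nabb\partial_r\psi|^2$ exactly when $l\geq 1$; at $l=1$ the inequality is saturated so the entire angular horizon flux is consumed, which is why $\chi_1=0$ there.
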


Statement (1) of Theorem \ref{theorem3} is proved in Section \ref{sec:CommutingWithAVectorFieldTransversalToMathcalH} where we construct a new current and derive appropriate Hardy inequalities along $\mathcal{H}^{+}$. Note that the restriction on frequencies $l\geq 1$ is required in view of a new low-frequency phenomenon of degenerate horizons (see also Theorem \ref{nondecay}). In particular, we note that in \cite{aretakis2} we show that there is no constant $C$ such that statement (1) holds for all solutions $\psi$ of the wave equation. Therefore, the assumption on the frequency range is sharp.

In statement (2) we have eliminated the degeneracy of $\partial_{r}$ in neighbourhoods of $\mathcal{H}^{+}$ at the expense, however, of commuting the wave equation with $\partial_{r}$ and thus requiring higher regularity for $\psi$. This allows us to conclude that the event horizon $\mathcal{H}^{+}$ exhibits trapping.

\begin{mytheo}(\textbf{Pointwise Boundedness})
There exists a constant $C$ which depends on $M$ and $\Sigma_{0}$ such that for all solutions $\psi$ of the wave equation we have
\begin{equation*}
\left|\psi\right|\leq C\sqrt{\tilde{E_{4}}},
\end{equation*}
everywhere in $\mathcal{R}$,
where
\begin{equation*}
\tilde{E_{4}}=\!\!\int_{\Sigma_{0}}{J_{\mu}^{n_{{\Sigma}_{0}}}[\psi]n^{\mu}_{\Sigma_{0}}}+C\!\!\displaystyle\int_{\Sigma_{0}}{J_{\mu}^{n_{{\Sigma}_{0}}}[n_{{\Sigma}_{0}}\psi]n^{\mu}_{\Sigma_{0}}}.
\end{equation*}
\label{t5}
\end{mytheo}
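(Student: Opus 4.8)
The plan is to derive the pointwise bound from the higher-order energy estimates of Theorems \ref{t2} and \ref{theorem3} via a Sobolev inequality adapted to the foliation $\Sigma_{\tau}$. The natural strategy is to decompose $\psi$ into its spherical mean $\psi_{0}$ (supported on $l=0$) and its higher part $\psi_{\geq 1}$ (supported on $l\geq 1$), since Theorem \ref{theorem3} only applies to the latter. For $\psi_{\geq 1}$, I would first establish uniform boundedness of the non-degenerate energy of both $\psi$ and $T\psi$ using Theorem \ref{t2} applied to $\psi$ and to the commuted field $T\psi$ (legitimate since $T$ is Killing, so $T\psi$ again solves $\Box_g(T\psi)=0$). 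On a compact-in-$r$ neighbourhood of $\mathcal{H}^{+}$ the transversal derivative is recovered through statement (1) of Theorem \ref{theorem3}; away from $\mathcal{H}^{+}$, where $D$ is bounded below, the degenerate $J^{T}$ energy is already comparable to the full $J^{n_{\Sigma_{\tau}}}$ energy. Combining these gives uniform control of $\int_{\Sigma_{\tau}} J^{n_{\Sigma_{\tau}}}_{\mu}[\psi]n^{\mu} + J^{n_{\Sigma_{\tau}}}_{\mu}[T\psi]n^{\mu}$ plus the second-order quantities by $\tilde E_{4}$.

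The key step is then a Sobolev embedding on each leaf $\Sigma_{\tau}$. Since $\Sigma_{\tau}\cong\Sigma_{0}$ is a (non-compact) asymptotically flat $3$-manifold with boundary on $\mathcal{H}^{+}$, I would split into a near region $M\leq r\leq r_{0}$ and a far region $r\geq r_{0}$. On the far region standard weighted Sobolev inequalities, together with the decay condition \eqref{condition}, control $|\psi|$ by the non-degenerate energy of $\psi$ and its $\partial_{r}$-derivative, and the latter is bounded because $D$ is nondegenerate there. On the near region I would commute the Sobolev estimate with $T$ and with $\partial_{r}$: controlling $|\psi|^2$ pointwise on $\Sigma_{\tau}$ requires controlling $\psi$ together with one full set of tangential and transversal derivatives in $L^2(\Sigma_{\tau})$, i.e. roughly the $H^2$ norm on the leaf. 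The tangential derivatives come from the angular momentum operators (using SO(3) and the elliptic theory on $\mathbb{S}^2(r)$), the $T$-derivative from the boundedness of the $T$-commuted energy, and the transversal $\partial_r\partial_r\psi$ term precisely from the first line of Theorem \ref{theorem3}(1). This is why the data norm $\tilde E_{4}$ must incorporate $J^{n_{\Sigma_0}}_{\mu}[n_{\Sigma_0}\psi]n^{\mu}$.

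For the spherical mean $\psi_0$, which is excluded from Theorem \ref{theorem3}, a separate argument is needed. Here I would use that $\psi_0$ satisfies a $1+1$-dimensional reduced wave equation and that no transversal commutation is required away from the low-frequency obstruction; the boundedness of $|\psi_0|$ should follow from Theorem \ref{t2} alone, via a one-dimensional Sobolev (fundamental theorem of calculus) estimate integrating from the asymptotically flat end, using \eqref{condition} to fix the constant of integration, combined with the uniform energy bound and a Hardy inequality to control the zeroth-order term. Adding the two contributions and absorbing all constants into $C$ yields $|\psi|\leq C\sqrt{\tilde E_4}$ everywhere in $\mathcal{R}$.

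The main obstacle I expect is the near-horizon transversal control for $\psi_{\geq 1}$: because the redshift degenerates, one cannot close a pointwise bound by the classical redshift-commutation argument, and one is forced to feed in the sharp second-order estimate of Theorem \ref{theorem3}(1), which itself only holds for $l\geq 1$. Matching the two regions across $r=r_0$ and ensuring the angular (tangential-to-photon-sphere) derivatives are controlled uniformly up to $\mathcal{H}^{+}$ — where ordinary elliptic estimates on $\Sigma_\tau$ must be used carefully since $\partial_{r^*}=T$ on the horizon — is the delicate bookkeeping, but the genuine difficulty is the degenerate-horizon transversal derivative, resolved only by the trapping estimate established earlier.
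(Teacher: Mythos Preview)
Your proposal is correct and follows essentially the same route as the paper: decompose $\psi=\psi_{0}+\psi_{\geq 1}$, handle $\psi_{0}$ by a one-dimensional Sobolev inequality along $\Sigma_{\tau}$ using only the $N$-flux (Theorem~\ref{t2}), and handle $\psi_{\geq 1}$ by a Sobolev embedding on $\Sigma_{\tau}$ whose $\dot{H}^{2}$ input is supplied by the second-order estimates of Theorem~\ref{theorem3}(1). The only notable streamlining in the paper is that, instead of your near/far splitting, it observes that $\partial_{v}-\partial_{r}$ is \emph{globally} timelike (since $g(\partial_{v}-\partial_{r},\partial_{v}-\partial_{r})=-D-2$), so a single elliptic estimate on each $\Sigma_{\tau}$ converts the $\dot{H}^{1}$ and $\dot{H}^{2}$ norms of $\psi_{\geq 1}$ directly into the fluxes $\int_{\Sigma_{\tau}}J^{N}_{\mu}[\psi_{\geq 1}]n^{\mu}+J^{N}_{\mu}[T\psi_{\geq 1}]n^{\mu}+J^{N}_{\mu}[\partial_{r}\psi_{\geq 1}]n^{\mu}$, which are then bounded by $\tilde{E}_{4}$ via Theorems~\ref{t2} and~\ref{theorem3}.
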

This theorem is proved in Section \ref{sec:UniformPointwiseBoundedness} using Theorem \ref{t2} and Sobolev inequalities.

\begin{mytheo}(\textbf{Non-Decay})
For generic initial data which give rise to solutions $\psi$ of the wave equation the quantity
\begin{equation*}
\psi^{2}+(\partial_{r}\psi)^{2}
\end{equation*}
does not decay along $\mathcal{H}^{+}$.
\label{nondecay}
\end{mytheo}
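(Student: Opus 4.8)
The plan is to show that the degeneracy of the redshift on $\mathcal{H}^{+}$ produces a \emph{conserved quantity} along the horizon for the $l=0$ spherical harmonic, and that this conservation law forces $\psi^{2}+(\partial_{r}\psi)^{2}$ not to decay for generic data. First I would restrict attention to the spherical mean $\psi_{0}$ of $\psi$, i.e.~the projection onto $l=0$. On $\mathcal{H}^{+}=\{r=M\}$ the metric \eqref{RN1} has $D=0$, and the wave operator $\Box_{g}\psi=0$ restricted to the horizon simplifies dramatically because the transversal terms involving $D$ and $D'$ drop out or become controlled. Using the ingoing Eddington-Finkelstein form \eqref{RN1}, where $T=\partial_{v}$ is the (affinely parametrised) null generator and $\partial_{r}$ is transversal, I would write out $\Box_{g}\psi=0$ in $(v,r)$ coordinates, project onto $l=0$, and then evaluate the resulting expression on $\{r=M\}$. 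The key structural point is that, because $\kappa=0$ here, the coefficient of the ``good'' transversal first-derivative term vanishes (this is exactly the absence of the redshift), so the equation on $\mathcal{H}^{+}$ collapses to a transport equation in $v$ for a specific combination of $\psi_{0}$ and $\partial_{r}\psi_{0}$.

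The central step is to identify the explicit conserved quantity. Evaluating the $l=0$ reduced equation at $r=M$, I expect to obtain a relation of the schematic form $\partial_{v}(\partial_{r}\psi_{0}) + (\text{const})\cdot \partial_{v}\psi_{0} = 0$ along $\mathcal{H}^{+}$, so that a quantity of the form
\begin{equation*}
H_{0}[\psi] = \partial_{r}\psi_{0} + c\,\psi_{0}
\end{equation*}
(for an explicit constant $c$ depending on $M$, arising from the $r$-derivative of $D$ and the $r^{2}$ weight in the area element) is \emph{independent of $v$} on the horizon. I would verify this by a direct computation: differentiate $H_{0}$ in $v$ along $\mathcal{H}^{+}$ and substitute the reduced wave equation to see the right-hand side vanish. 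This is the Aretakis conservation law, and it is the genuine new phenomenon here; the vanishing surface gravity is precisely what makes this combination conserved rather than decaying.

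Once the conservation law is established, the non-decay statement follows by a soft argument. If $\psi^{2}+(\partial_{r}\psi)^{2}$ decayed to zero along $\mathcal{H}^{+}$ as $v\to\infty$, then in particular its $l=0$ part would decay, forcing $\psi_{0}\to 0$ and $\partial_{r}\psi_{0}\to 0$, hence $H_{0}[\psi]\to 0$. But $H_{0}[\psi]$ is constant in $v$, so it would have to vanish identically, i.e.~$H_{0}[\psi]=0$. This is a single linear constraint on the initial data (the restriction of $\partial_{r}\psi_{0}+c\psi_{0}$ to $\partial\Sigma_{0}=\Sigma_{0}\cap\mathcal{H}^{+}$), which fails for an open dense (generic) set of data in the relevant topology. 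I would phrase ``generic'' as: the data for which $H_{0}[\psi]\neq 0$ on $\partial\Sigma_{0}$ form a dense set whose complement is a codimension-one (closed, nowhere dense) subspace. For such data, $\partial_{r}\psi_{0}+c\psi_{0}$ is a nonzero constant along $\mathcal{H}^{+}$, so $\psi^{2}+(\partial_{r}\psi)^{2}$ is bounded below along $\mathcal{H}^{+}$ and cannot decay.

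The main obstacle I anticipate is the bookkeeping in deriving the precise reduced equation on $\mathcal{H}^{+}$ and pinning down the exact constant $c$: one must carefully track the coefficients coming from $D(r)=(1-M/r)^{2}$, its derivative $D'(M)=0$, the $2\,dv\,dr$ cross term, and the $r^{2}$ area weight, and confirm that the transversal second-derivative $\partial_{r}\partial_{r}\psi$ term indeed drops out (or is multiplied by $D$ and so vanishes on the horizon) so that one is left with a clean first-order transport law rather than something requiring further boundedness input. A secondary technical point is justifying that $\psi_{0}$ and $\partial_{r}\psi_{0}$ are well-defined and regular up to and including $\mathcal{H}^{+}$, which is exactly what the boundedness results (Theorems \ref{t2}, \ref{t5}) supply, so I would invoke those to legitimise evaluating the relevant quantities on the horizon.
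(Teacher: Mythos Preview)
Your proposal is correct and follows essentially the same approach as the paper: project onto $l=0$, evaluate the wave equation on $\mathcal{H}^{+}$ to obtain $\partial_{v}\partial_{r}\psi_{0}+\frac{1}{M}\partial_{v}\psi_{0}=0$ (so the constant is $c=\frac{1}{M}$), deduce that $\partial_{r}\psi_{0}+\frac{1}{M}\psi_{0}$ is conserved along $\mathcal{H}^{+}$, and conclude non-decay for generic data. Your write-up in fact spells out the genericity argument more explicitly than the paper does; the only unnecessary step is invoking Theorems~\ref{t2} and~\ref{t5} for regularity at the horizon, since the assumed $H^{k}_{\operatorname{loc}}$ regularity with $k\geq 2$ from Section~\ref{sec:TheCauchyProblemForTheWaveEquation} already suffices to make sense of the traces.
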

This theorem arises from a new low-frequency phenomenon of degenerate horizons. In particular,  in Section \ref{sec:TheSphericallySymmetricCase} we show that an appropriate expression which corresponds to the spherical mean of $\psi$ is conserved along $\mathcal{H}^{+}$ under the evolution. This low-frequency ``instability'' will be extensively investigated in the companion papers \cite{aretakis2, aretakis3}. 

\section{The Vector Field Method}
\label{sec:TheVectorFieldMethod}

For understanding the evolution of waves we will use the so-called vector field method. This is a geometric and robust method and involves mainly $L^{2}$ estimates. The main idea is to construct appropriate (0,1) currents and use Stokes' theorem (see Appendix \ref{sec:StokesTheoremOnLorentzianManifolds}) in appropriate regions. For a nice recent exposition see \cite{ali}. We briefly recall here the method to set notation.

Given a (0,1) current $P_{\mu}$ we have the continuity equation
\begin{equation}
\int_{\Sigma_{0}}{P_{\mu}n^{\mu}_{\Sigma_{0}}}=\int_{\Sigma_{\tau}}{P_{\mu}n^{\mu}_{\Sigma_{\tau}}}+\int_{\mathcal{H}^{+}\left(0,\tau\right)}{P_{\mu}n^{\mu}_{\mathcal{H}^{+}}}+\int_{\mathcal{R}\left(0,\tau\right)}{\nabla^{\mu}P_{\mu}}
\label{div}
\end{equation}
where all the integrals are with respect to the \textit{induced volume form} and the unit normals $n_{\scriptstyle\Sigma_{\tau}}$ are future directed. All the integrals  are to be considered  with respect to the induced volume form and thus we omit writing the measure. Our choice is $n_{\mathcal{H}^{+}}=\partial_{v}=T$ (and thus the volume element on $\mathcal{H}^{+}$ is chosen so \eqref{div} holds). 

\subsection{The Compatible Currents $J$, $K$ and the Current $\mathcal{E}$}
\label{sec:TheCurrentsJKAndMathcalE}

We usually consider currents $P_{\mu}$ that depend on the geometry of $\left(\mathcal{M},g\right)$ and are such that both $P_{\mu}$ and $\nabla^{\mu}P_{\mu}$ depend only on the 1-jet of $\psi$. This can be achieved by using the wave equation to make all second order derivatives disappear.  There is a general method for producing such currents using the energy momentum tensor \textbf{T}. Indeed, the Lagrangian structure of the wave equation gives us the following energy momentum tensor 
\begin{equation}
\textbf{T}_{\mu\nu}\left[\psi\right]=\partial_{\mu}\psi\partial_{\nu}\psi-\frac{1}{2}g_{\mu\nu}\partial^{a}\psi\partial_{a}\psi,
\label{tem}
\end{equation}
which is a symmetric divergence free (0,2) tensor. We will in fact consider this tensor for general functions $\psi:\mathcal{M}\rightarrow\mathbb{R}$ in which case we have the identity
\begin{equation}
\text{Div}\textbf{T}\left[\psi\right]=\left(\Box_{g}\psi\right)d\psi.
\label{divT}
\end{equation}
Since  \textbf{T} is a $\left(0,2\right)$ tensor we need to  contract it with vector fields of $\mathcal{M}$. It is here where the geometry of $\mathcal{M}$ makes its appearance. Given a vector field $V$  we define the $J^{V}$ current by
\begin{equation}
J^{V}_{\mu}[\psi]=\textbf{T}_{\mu\nu}[\psi]V^{\nu}.
\label{jcurrent}
\end{equation}
We say that we use the vector field $V$ as a \textit{multiplier}\footnote{the name comes from the fact that the tensor \textbf{T} is multiplied by \textit{V}.} if we apply \eqref{div} for the current $J^{V}_{\mu}$. The divergence of this current is $\operatorname{Div}(J)=\operatorname{Div}\left(\textbf{T}\right)V+\textbf{T}\left(\nabla V\right),$
where $\left(\nabla V\right)^{ij}=\left(g^{ki}\nabla _{k}V\right)^{j}=\left(\nabla ^{i}V\right)^{j}$. If $\psi$ is a solution of the wave equation then $\operatorname{Div}\left(\textbf{T}\right)=0$ and, therefore, $\nabla^{\mu}J_{\mu}^{V}$ is an expression of the 1-jet of $\psi$.
Given a vector field $V$  the scalar  current $K^{V}$ is defined by
\begin{equation}
K^{V}\left[\psi\right]=\textbf{T}\left[\psi\right]\left(\nabla V\right)=\textbf{T}_{ij}\left[\psi\right]\left(\nabla ^{i}V\right)^{j}.
\label{K}
\end{equation}
Note that from the symmetry of  $\textbf{T}$ we have
$K^{V}\left[\psi\right]=\textbf{T}_{\mu\nu}\left[\psi\right]\pi^{\mu\nu}_{V},$
where $\pi^{\mu\nu}_{V}=(\mathcal{L}_{V}g)^{\mu\nu}$ is the deformation tensor of $V$. Clearly if $\psi$ satisfies the wave equation then
$K^{V}[\psi]=\nabla^{\mu}J^{V}_{\mu}[\psi].$
Thus if we use Killing vector fields as multipliers then the divergence vanishes and so we obtain a conservation law. This is partly the content of a deep theorem of Noether\footnote{According to this theorem, any continuous family of isometries gives rise to a conservation law.}. In general we define the scalar  current $\mathcal{E}^{V}$ by
\begin{equation}
\mathcal{E}^{V}[\psi]=\text{Div}\left(\textbf{T}\right)V=\left(\Box_{g}\psi\right)d\psi\left(V\right)=\left(\Box_{g}\psi\right)\left(V\psi\right).
\label{E}
\end{equation}

\subsection{The Hyperbolicity of the Wave Equation}
\label{sec:TheHyperbolicityOfTheWaveEquation}
The hyperbolicity of the wave equation is captured by the following proposition
\begin{proposition}
Let $V_{1},V_{2}$ be two future directed timelike vectors. Then the quadratic expression $\textbf{T}\left(V_{1},V_{2}\right)$ is positive definite in $d\psi$. By continuity, if one of these vectors is null then $\textbf{T}\left(V_{1},V_{2}\right)$ is non-negative definite in $d\psi$.
\label{hyperb}
\end{proposition}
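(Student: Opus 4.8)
The statement is pointwise and purely algebraic, so the plan is to fix a point $p\in\mathcal{M}$ and work in $(T_p\mathcal{M},g_p)$, regarding $\textbf{T}(V_1,V_2)$ as a quadratic form in the components of the covector $d\psi$. Since $V_1$ is future directed timelike, I would choose a $g$-orthonormal frame $e_0,e_1,e_2,e_3$ at $p$, with $e_0$ future timelike ($g(e_0,e_0)=-1$, $g(e_i,e_i)=1$), adapted so that $V_1=a\,e_0$ with $a>0$. Expanding the second vector as $V_2=b\,e_0+\sum_{i=1}^{3}c_i e_i$, the hypotheses that $V_2$ be future directed and timelike become exactly $b>0$ and $b^2>|c|^2$, where $|c|^2=\sum_i c_i^2$ (these encode $g(V_2,e_0)<0$ and $g(V_2,V_2)<0$).

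With $x=e_0\psi$ and $y_i=e_i\psi$ denoting the frame components of $d\psi$, I would substitute $g(V_1,V_2)=-ab$ and $\pa^a\psi\pa_a\psi=-x^2+|y|^2$ into \eqref{tem} to obtain
\[
\textbf{T}(V_1,V_2)=\f12 ab\,x^2+a\,x\sum_{i=1}^{3}c_i y_i+\f12 ab\,|y|^2 .
\]
The whole problem then reduces to showing that this explicit quadratic form in $(x,y_1,y_2,y_3)$ is positive definite.

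The heart of the argument is this last positivity, which I would establish by completing the square in $x$ (equivalently, by computing the Schur complement of the $x$–block of the associated $4\times4$ symmetric matrix). Because the coefficient $\f12 ab$ of $x^2$ is strictly positive, positive definiteness of the full form is equivalent to positive definiteness of the reduced form $\f12 ab\,|y|^2-\frac{a}{2b}\bigl(\sum_i c_i y_i\bigr)^2$ in the $y$–variables, and by Cauchy--Schwarz this holds if and only if $b^2>|c|^2$. Thus the strict future-timelike condition on $V_2$ is \emph{precisely} what guarantees definiteness; this is the dominant energy condition for the scalar field. I expect this algebraic verification --- matching the timelike inequality to the definiteness of the form --- to be the only real obstacle, elementary though it is.

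For the case in which one of the vectors is null, I would argue by continuity as the statement indicates. Approximating a future null vector by future timelike vectors $V^{(n)}\to V$, each form $\textbf{T}(V^{(n)},V_2)$ is positive definite by the above, and $\textbf{T}(\,\cdot\,,\,\cdot\,)$ depends continuously on its vector arguments; hence the limiting form $\textbf{T}(V,V_2)$ is a pointwise limit of positive-definite quadratic forms in $d\psi$ and is therefore non-negative definite. The strict inequality generally degenerates in the limit (along the null direction), which is exactly why only non-negativity survives.
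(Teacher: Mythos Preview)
Your proposal is correct and follows essentially the same approach as the paper: work pointwise in an orthonormal frame adapted so that one of the timelike vectors is (a multiple of) $e_0$, and reduce positive definiteness to the Cauchy--Schwarz inequality. The paper normalizes $V_2$ rather than $V_1$, but this is immaterial; your completion-of-the-square computation simply unpacks what the paper summarizes as ``an application of the Cauchy--Schwarz inequality,'' and your continuity argument for the null case is exactly what the paper invokes.
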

\begin{proof}
Consider a point $p\in\mathcal{M}$ and the normal coordinates around this point and that without loss of generality $V_{2}=(1,0,0,...,0)$. Then the proposition is an application of the Cauchy-Schwarz inequality.
\end{proof}
An important application of the vector field method and the hyperbolicity of the wave equation is the domain of dependence property. 

As we shall see, the exact dependence of $\textbf{T}$ on the derivatives of $\psi$ will be crucial later. For a general computation see Appendix \ref{sec:TheHyperbolicityOfTheWaveEquation1}.

\section{Hardy Inequalities}
\label{sec:HardyInequalities}

In this section we establish three Hardy inequalities which as we shall see will be very crucial for obtaining sharp estimates.  They do not require $\psi$ to satisfy the wave equation and so we only assume that $\psi$ satifies the regularity assumptions described in Section \ref{sec:TheCauchyProblemForTheWaveEquation} and \eqref{condition}. From now on,  $\Sigma_{\tau}$  is the foliation introduced  in Section \ref{sec:TheFoliationsSigmaTauAndTildeSigmaTau} (however, these inequalities hold also for the foliation $\tilde{\Sigma}_{\tau}$).  

\begin{proposition}(\textbf{First Hardy Inequality}) For all functions $\psi$ which satisfy the regularity assumptions of Section \ref{sec:TheCauchyProblemForTheWaveEquation}  we have
\begin{equation*}
\int_{\Sigma_{\tau}}{\frac{1}{r^{2}}\psi^{2}}\leq C\int_{\Sigma_{\tau}}{D[(\partial_{v}\psi)^{2}+(\partial_{r}\psi)^{2}]},
\end{equation*}
where the constant  $C$ depends only on $M$ and $\Sigma_{0}$.
\label{firsthardy}
\end{proposition}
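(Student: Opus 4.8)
The plan is to prove the first Hardy inequality

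$$\int_{\Sigma_{\tau}}\frac{1}{r^{2}}\psi^{2}\leq C\int_{\Sigma_{\tau}}D\left[(\partial_{v}\psi)^{2}+(\partial_{r}\psi)^{2}\right]$$

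by a one-dimensional integration-by-parts argument carried out radially on $\Sigma_{\tau}$, with the angular variables $\omega\in\mathbb{S}^2$ and the parameter $\tau$ treated as spectators. The key structural fact I would exploit is that in the extreme case $D=(1-M/r)^2$ vanishes to second order at $r=M$, so $\sqrt{D}=1-M/r$ is linear in $1/r$ and its derivative $\partial_r(\sqrt D)=M/r^2$ is bounded and integrable. This is what makes the weight $1/r^2$ (rather than something degenerating at the horizon) the correct left-hand side.

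The plan let me sketch the core computation. On $\Sigma_\tau$ with the induced coordinate system $(\rho,\omega)$ from Section \ref{sec:TheFoliationsSigmaTauAndTildeSigmaTau} and volume form $dg_{\Sigma_\tau}=V\rho^2\,d\rho\,d\omega$, it suffices to prove the pointwise-in-$\omega$ estimate
$$\int_M^{\infty}\psi^2\,d\rho\leq C\int_M^{\infty}D\left(\partial_\rho\psi\right)^2\rho^2\,d\rho,$$
since $1/r^2\cdot V\rho^2$ and $V\rho^2$ are comparable up to bounded factors, and $\partial_\rho\psi=g_1\partial_v\psi+\partial_r\psi$ via \eqref{eq:rho} lets me pass from $\partial_\rho$ to the pair $(\partial_v\psi,\partial_r\psi)$ at the cost of the bounded factor $g_1$. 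To get the displayed one-dimensional bound I would write $\psi^2 = -\int_\rho^\infty \partial_s(\psi^2)\,ds = -2\int_\rho^\infty \psi\,\partial_s\psi\,ds$, using the decay hypothesis \eqref{condition} that $r\psi^2\to0$ at $i^0$ to kill the boundary term at infinity. Inserting a weight and applying Cauchy–Schwarz,
$$\psi^2(\rho)\leq 2\left(\int_\rho^\infty \frac{\psi^2}{s^{1+\delta}}\,ds\right)^{1/2}\left(\int_\rho^\infty s^{1+\delta}(\partial_s\psi)^2\,ds\right)^{1/2},$$
one feeds this into the $\rho$-integral and absorbs the self-referential $\psi^2$ term on the left. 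Alternatively, and more cleanly, I would test against the multiplier built from $\sqrt D$: integrating $\partial_\rho\big((1-M/\rho)\psi^2\big)$ produces a good positive term $\frac{M}{\rho^2}\psi^2$ together with a cross term $2(1-M/\rho)\psi\,\partial_\rho\psi$ that is controlled by Cauchy–Schwarz against $D(\partial_\rho\psi)^2$ and reabsorbed.

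The main obstacle, and the point requiring genuine care, is the behavior at the horizon $r=M$ rather than at infinity. Since $D$ degenerates at $r=M$, the right-hand side puts no weight there, so any boundary contribution at $\rho=M$ from the integration by parts must either vanish or be shown to have a favorable sign. The weight $\sqrt D=1-M/\rho$ is engineered precisely so that the boundary term $\big[(1-M/\rho)\psi^2\big]_{\rho=M}=0$ automatically, which is why the extreme geometry cooperates; I would make sure to track this cancellation explicitly and confirm that no hidden $1/D$ singularity enters when converting $\partial_\rho$ back to $\partial_r$. The remaining steps — restoring the angular integration, absorbing the bounded functions $V$ and $g_1$, and collecting everything into the single constant $C=C(M,\Sigma_0)$ — are routine. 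The whole argument is purely functional-analytic and, as the statement emphasizes, never uses the wave equation.
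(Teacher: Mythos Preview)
Your overall strategy --- a one-dimensional integration by parts in $\rho$, Cauchy--Schwarz, and reabsorption, with the angular variables and the bounded factors $V,g_{1}$ along for the ride --- is exactly the paper's approach. The gap is in your choice of multiplier.

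Your ``cleaner'' option tests against $h=\sqrt{D}=1-M/\rho$. Integrating $\partial_{\rho}\big((1-M/\rho)\psi^{2}\big)$ indeed kills the boundary term at $\rho=M$, but the resulting good bulk term is $\int_{M}^{\infty}\frac{M}{\rho^{2}}\psi^{2}\,d\rho$, which after restoring the volume form corresponds to $\int_{\Sigma_{\tau}}r^{-4}\psi^{2}$, not the required $\int_{\Sigma_{\tau}}r^{-2}\psi^{2}$. The weight degenerates at infinity and you cannot recover the missing $\rho^{2}$. Your first sketch has the opposite defect: the weight $s^{1+\delta}$ you introduce does not vanish at $s=M$, so after Fubini the right-hand side contains $\int s^{1+\delta}(s-M)(\partial_{s}\psi)^{2}\,ds$, whose ratio to the target $\int(s-M)^{2}(\partial_{s}\psi)^{2}\,ds$ blows up at the horizon.

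The fix is immediate: take $h=\rho-M$ (equivalently $\rho\sqrt{D}$) rather than $\sqrt{D}$. Then $\partial_{\rho}h=1$ gives the non-degenerate good term $\int_{M}^{\infty}\psi^{2}\,d\rho$, the boundary term at $\rho=M$ still vanishes since $h(M)=0$, the boundary term at infinity vanishes by \eqref{condition}, and the cross term $2(\rho-M)\psi\,\partial_{\rho}\psi$ is controlled by Cauchy--Schwarz against $\int\psi^{2}$ and $\int(\rho-M)^{2}(\partial_{\rho}\psi)^{2}=\int\rho^{2}D(\partial_{\rho}\psi)^{2}$, which is precisely the right-hand side. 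This is what the paper does.
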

\begin{proof}
Consider the induced coordinate system $(\rho,\omega)$ introduced in Section \ref{sec:TheFoliationsSigmaTauAndTildeSigmaTau}. We use the 1-dimensional identity
\begin{equation}
\int_{M}^{+\infty}(\partial_{\rho}h)\psi^{2}d\rho=\left[h\psi^{2}\right]_{r=M}^{+\infty}-2\int_{M}^{+\infty}h\psi(\partial_{\rho}\psi) d\rho
\label{111}
\end{equation}
with $h=r-M$. In view of the assumptions on $\psi$, the boundary terms vanish. Thus, Cauchy-Schwarz gives
\begin{equation*}
\begin{split}
\int_{\left\{r\geq M\right\}}{\psi^{2}d\rho}=&-2\int_{\left\{r\geq M\right\}}{\!\!(r-M)\psi(\partial_{\rho}\psi) d\rho}
\leq 2\left(\int_{\left\{r\geq M\right\}}{\!\!\psi^{2}d\rho}\right)^{\frac{1}{2}}\!\!\left(\int_{\left\{r\geq M\right\}}{\!(r-M)^{2}(\partial_{\rho}\psi)^{2}}d\rho\right)^{\frac{1}{2}}\!\!.
\end{split}
\end{equation*}
Therefore,
\begin{equation*}
\begin{split}
\int_{\left\{r\geq M\right\}}{\psi^{2}d\rho}\leq 4\int_{\left\{r\geq M\right\}}{(r-M)^{2}(\partial_{\rho}\psi)^{2}d\rho}.
\end{split}
\end{equation*}
Integrating this inequality over $\mathbb{S}^{2}$ gives us
\begin{equation*}
\begin{split}
\int_{\mathbb{S}^{2}}\int_{\left\{r\geq M\right\}}{\frac{1}{\rho^{2}}\psi^{2}}\rho^{2}d\rho d\omega\leq 4\int_{\mathbb{S}^{2}}\int_{\left\{r\geq M\right\}}{D(\partial_{\rho}\psi)^{2}\rho^{2}d\rho d\omega}.
\end{split}
\end{equation*}
The result follows from the fact that each $\Sigma_{\tau}$ is diffeomorphic to $\mathbb{S}^{2}\times [M,\left.\right.\!\! +\infty)$, from \eqref{eq:rho} and the boundedness of the factor $V$ in the volume form \eqref{volumeformfoliation} of $\Sigma_{\tau}$.
\end{proof}
The importance of the above inequality lies in the weights. The weight of $(\partial_{\rho}\psi)^{2}$ vanishes to second order on $\mathcal{H}^{+}$  but does not degenerate at infinity\footnote{Note that if we replace $r-M$ with another function $h$ then we will  not be able to make this weight degenerate fast enough without obtaining non-trinial boundary terms.} whereas the weight of $\psi$ degenerate at infinity but not at $r=M$. Similarly, one might derive estimates for the non-extreme case\footnote{These estimates turn out to be stronger since the weight of $\psi$ may diverge at $r=M$ in an integrable manner.}. We also mention that the right hand side is bounded by the (conserved) flux of $T$ through $\Sigma_{\tau}$ (see Section \ref{sec:TheVectorFieldTextbfM}).

\begin{proposition}(\textbf{Second Hardy Inequality})  Let  $r_{0}\in (M,2M)$. Then for all functions $\psi$ which satisfy the regularity assumptions of Section \ref{sec:TheCauchyProblemForTheWaveEquation}  and any positive number $\epsilon$ we have
\begin{equation*}
\int_{\mathcal{H}^{+}\cap\Sigma_{\tau}}{\psi^{2}}\leq \epsilon\int_{\Sigma_{\tau}\cap\left\{r\leq r_{0}\right\}}{\!\!(\partial_{v}\psi)^{2}+(\partial_{r}\psi)^{2}}\ \,+C_{\epsilon}\int_{\Sigma_{\tau}\cap\left\{r\leq r_{0}\right\}}{\psi^{2}},
\end{equation*}
where the constant $C_{\epsilon}$ depends on $M$, $\epsilon$, $r_{0}$ and $\Sigma_{0}$. 
\label{secondhardy}
\end{proposition}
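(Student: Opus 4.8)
The plan is to run a one-dimensional integration-by-parts argument in the radial variable $\rho$ that transfers the boundary value of $\psi^{2}$ at the horizon $r=M$ into bulk integrals over the region $\{M\le r\le r_{0}\}$. Unlike the First Hardy Inequality, where a global weight $h=r-M$ was used and the boundary terms at infinity vanished, here we want to \emph{capture} the horizon boundary term rather than kill it, so I would choose a weight function $h=h(\rho)$ that is normalised at $r=M$ and cut off before reaching $r_{0}$. Concretely, fix a smooth function $h$ on $[M,r_{0}]$ with $h(M)=1$ and $h(r_{0})=0$ (for instance a fixed smooth interpolation, or simply $h(\rho)=\frac{r_{0}-\rho}{r_{0}-M}$ after a smoothing near the endpoints), so that $h$ and $\partial_{\rho}h$ are bounded by constants depending only on $M$ and $r_{0}$.

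Working on a fixed leaf $\Sigma_{\tau}$ in the induced coordinate system $(\rho,\omega)$ of Section \ref{sec:TheFoliationsSigmaTauAndTildeSigmaTau}, I would start from the identity
\begin{equation*}
\int_{M}^{r_{0}}(\partial_{\rho}h)\,\psi^{2}\,d\rho=\left[h\psi^{2}\right]_{r=M}^{r=r_{0}}-2\int_{M}^{r_{0}}h\,\psi\,(\partial_{\rho}\psi)\,d\rho,
\end{equation*}
which is the analogue of \eqref{111} restricted to the slab. Since $h(r_{0})=0$ and $h(M)=1$, the boundary contribution is exactly $-\psi^{2}|_{r=M}$, so rearranging isolates the horizon value $\psi^{2}|_{r=M}$ in terms of the two bulk integrals on the right. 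The cross term $\int h\,\psi\,(\partial_{\rho}\psi)\,d\rho$ is then split by a weighted Cauchy–Schwarz (Young's inequality) of the form $2|\psi\,\partial_{\rho}\psi|\le \epsilon'(\partial_{\rho}\psi)^{2}+\epsilon'^{-1}\psi^{2}$, with $\epsilon'$ chosen proportional to the target $\epsilon$. This produces precisely the two terms $\epsilon\int(\partial_{\rho}\psi)^{2}$ and $C_{\epsilon}\int\psi^{2}$ over the slab, with $C_{\epsilon}$ absorbing the sup of $h$, the sup of $\partial_{\rho}h$, and the factor $\epsilon'^{-1}$.

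To finish, I would integrate the one-dimensional inequality over $\mathbb{S}^{2}$ and reinsert the volume forms: on the left the measure on $\mathcal{H}^{+}\cap\Sigma_{\tau}$ is comparable to $M^{2}d\omega$, so $\int_{\mathcal{H}^{+}\cap\Sigma_{\tau}}\psi^{2}$ differs from $\int_{\mathbb{S}^{2}}\psi^{2}|_{r=M}\,d\omega$ only by a bounded factor; on the right, $\rho^{2}$ is bounded above and below on the compact interval $[M,r_{0}]$, and by \eqref{volumeformfoliation} the factor $V$ is bounded, so the flat measure $d\rho\,d\omega$ is comparable to the induced volume form $dg_{\Sigma_{\tau}}$. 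Finally \eqref{eq:rho} lets me replace $\partial_{\rho}\psi$ by a bounded combination of $\partial_{v}\psi$ and $\partial_{r}\psi$, which introduces at most a bounded multiple and yields the stated $(\partial_{v}\psi)^{2}+(\partial_{r}\psi)^{2}$ on the right (after renaming $\epsilon$). The only point requiring mild care — and the step I would watch most closely — is the bookkeeping of constants in the Young splitting: the coefficient multiplying the derivative term must remain genuinely \emph{small} (an arbitrary $\epsilon$) after all the volume-form comparisons and the $\partial_{\rho}\mapsto(\partial_{v},\partial_{r})$ conversion, so I would track that the bounded geometric factors are absorbed into $C_{\epsilon}$ and not into the $\epsilon$-term. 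Because everything takes place on the compact slab $\{M\le r\le r_{0}\}$, no decay assumption at infinity or use of condition \eqref{condition} is needed; the argument is purely local.
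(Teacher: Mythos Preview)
Your proposal is correct and follows essentially the same route as the paper. The paper's proof uses the linear weight $h=\rho-r_{0}$ (so $\partial_{\rho}h=1$, $h(r_{0})=0$, $h(M)=M-r_{0}$), which is just a constant multiple of your suggested $h(\rho)=\tfrac{r_{0}-\rho}{r_{0}-M}$; no smoothing is needed since integration by parts only requires $h\in C^{1}$, and the remaining steps (Young's inequality on the cross term, integration over $\mathbb{S}^{2}$, comparison of volume forms via \eqref{volumeformfoliation} and \eqref{eq:rho}) are identical to what you outline.
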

\begin{proof}
We use as before the 1-dimensional identity \eqref{111} with $h=r-r_{0}$. Then 
\begin{equation*}
\begin{split}
(r_{0}-M)\psi^{2}(M)=\int_{M}^{r_{0}}{\psi^{2}+2g\psi(\partial_{\rho}\psi) d\rho}
\leq  \epsilon\int_{M}^{r_{0}}{(\partial_{\rho}\psi)^{2}}d\rho +\int_{M}^{r_{0}}{\left(1+\frac{g^{2}}{\epsilon}\right)\psi^{2}d\rho},
\end{split}
\end{equation*}
for any $\epsilon >0$. By integrating over $\mathbb{S}^{2}$, using \eqref{eq:rho} and noting that the $\rho^{2}$ factor that appears in the volume form of $\Sigma_{\tau}\cap\left\{r\leq r_{0}\right\}$ is bounded we obtain the required result.
\end{proof}
The previous two inequalities concern the hypersurfaces $\Sigma_{\tau}$ that cross $\mathcal{H}^{+}$. The next estimate concerns spacetime neighbourhoods of $\mathcal{H}^{+}$.

\begin{proposition} (\textbf{Third Hardy Inequality})  Let  $r_{0},r_{1}$ be such that  $M<r_{0}<r_{1}$. We define the regions
$\mathcal{A}=\mathcal{R}(0,\tau)\cap\left\{M\leq r\leq r_{0}\right\},\ \mathcal{B}=\mathcal{R}(0,\tau)\cap\left\{r_{0}\leq r\leq r_{1}\right\}.$ Then for all functions $\psi$ which satisfy the regularity assumptions of Section \ref{sec:TheCauchyProblemForTheWaveEquation}   we have
\begin{equation*}
\int_{\mathcal{A}}{\psi^{2}}\leq C\int_{\mathcal{B}}{\psi^{2}}\ \,+C\int_{\mathcal{A}\cup\mathcal{B}}{D[(\partial_{v}\psi)^{2}+(\partial_{r}\psi)^{2}]},
\end{equation*}
where the constant $C$ depends on $M$, $r_{0}$, $r_{1}$  and $\Sigma_{0}$.
\label{thirdhardy}
\end{proposition}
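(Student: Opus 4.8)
The plan is to prove the Third Hardy Inequality by the same one-dimensional integration-by-parts technique used in the first two Hardy inequalities, but now applied on a spacetime region rather than on a single slice, and with a cutoff that transfers mass from the region $\mathcal{A}$ (adjacent to $\mathcal{H}^{+}$) out to the buffer region $\mathcal{B}$. First I would fix a radial coordinate: since $\mathcal{R}(0,\tau)=\cup_{0\le\tilde\tau\le\tau}\Sigma_{\tilde\tau}$ is foliated by the $\Sigma_{\tilde\tau}$, and each leaf carries the induced coordinate $(\rho,\omega)$ with volume form $V\rho^{2}d\rho\,d\omega$ from \eqref{volumeformfoliation}, it suffices to prove the estimate on a generic leaf $\Sigma_{\tilde\tau}$ and then integrate in $\tilde\tau$ (all constants being uniform because the leaves are isometric). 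On such a leaf the $\partial_{\rho}$ derivative is, by \eqref{eq:rho}, $\partial_{\rho}=g_{1}\partial_{v}+\partial_{r}$, so a bound involving $(\partial_{\rho}\psi)^{2}$ is controlled by $(\partial_{v}\psi)^{2}+(\partial_{r}\psi)^{2}$ with a bounded factor.

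The key step is the choice of the weight in the one-dimensional identity \eqref{111}. I would pick a function $h=h(\rho)$ that equals a negative constant (or simply vanishes together with the relevant multiple) at $\rho=M$ and transitions to the value needed to produce the $\mathcal{B}$-integral near $\rho=r_{0}$; concretely one wants $\partial_{\rho}h\ge 1$ on $[M,r_{0}]$ so that $\int_{M}^{r_{0}}\psi^{2}\,d\rho\le\int_{M}^{r_{0}}(\partial_{\rho}h)\psi^{2}\,d\rho$, while arranging the boundary term at $\rho=r_{0}$ to generate a contribution controlled by $\int_{r_{0}}^{r_{1}}\psi^{2}\,d\rho$ after a further averaging. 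The clean way to produce that right endpoint term is to apply \eqref{111} on the interval $[M,r_{1}]$ with $h$ chosen so that the boundary datum at $M$ drops out (by the regularity assumptions, as in Proposition \ref{firsthardy}) and the boundary datum at $r_{1}$ is harmless, and then to absorb the genuinely dangerous near-horizon part using Cauchy-Schwarz exactly as in the proof of the First Hardy Inequality: the cross term $2h\psi\,\partial_{\rho}\psi$ is split by $2ab\le\varepsilon^{-1}a^{2}+\varepsilon b^{2}$ so that the $\psi^{2}$ piece over $[M,r_{0}]$ is reabsorbed into the left-hand side while the derivative piece is kept with the weight $h^{2}$, which vanishes to the right order at $\mathcal{H}^{+}$ to reproduce the factor $D=(1-M/r)^{2}$ on $\partial_{\rho}\psi$.

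The single subtlety — and the step I expect to be the real work — is manufacturing the $\int_{\mathcal{B}}\psi^{2}$ term on the right-hand side, since the naive integration by parts gives pointwise boundary values $\psi^{2}(r_{0})$ rather than an integral over $\mathcal{B}$. The plan to handle this is a standard averaging trick: rather than evaluating at the single radius $r_{0}$, I would integrate the pointwise identity in the endpoint over $r_{0}\in[r_{0},r_{1}]$ (equivalently choose $h$ supported so that its derivative lives on the whole buffer), converting $\psi^{2}(r_{0})$ into $(r_{1}-r_{0})^{-1}\int_{r_{0}}^{r_{1}}\psi^{2}\,d\rho$ plus a controllable derivative term. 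After integrating over $\mathbb{S}^{2}$, using that the $\rho^{2}$ and $V$ factors in the volume form \eqref{volumeformfoliation} are bounded above and below on the compact radial range $[M,r_{1}]$, and finally integrating in $\tilde\tau$ over $[0,\tau]$, one obtains precisely
\[
\int_{\mathcal{A}}\psi^{2}\le C\int_{\mathcal{B}}\psi^{2}+C\int_{\mathcal{A}\cup\mathcal{B}}D\big[(\partial_{v}\psi)^{2}+(\partial_{r}\psi)^{2}\big],
\]
with $C$ depending only on $M,r_{0},r_{1},\Sigma_{0}$ through the Cauchy-Schwarz constant and the bounds on $V,g_{1}$. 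No new analytic input beyond \eqref{111}, \eqref{eq:rho}, \eqref{volumeformfoliation} and the regularity assumptions of Section \ref{sec:TheCauchyProblemForTheWaveEquation} is needed.
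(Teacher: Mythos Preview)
Your approach is correct and matches the paper's: reduce to a single leaf $\Sigma_{\tilde\tau}$, apply the one-dimensional identity \eqref{111} on $[M,r_{1}]$ with a weight $h$ vanishing at $\rho=M$ and behaving like $\rho-M$ there (so that $h^{2}\lesssim D$), use Cauchy--Schwarz on the cross term, and then integrate over $\mathbb{S}^{2}$ and over $\tilde\tau\in[0,\tau]$ via the coarea formula.

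The one point where you make your life harder than necessary is the ``averaging trick''. There is no boundary value $\psi^{2}(r_{0})$ to deal with: since you apply \eqref{111} on the full interval $[M,r_{1}]$, the only boundary contributions are at $M$ and $r_{1}$, and both vanish if you simply choose $h(r_{1})=0$ as well. The paper takes $h=2(\rho-M)$ on $[M,r_{0}]$ and extends it smoothly to satisfy $h(r_{1})=0$. Then Cauchy--Schwarz with $\varepsilon=1$ gives
\[
\int_{M}^{r_{1}}(\partial_{\rho}h)\psi^{2}\,d\rho \;\le\; \int_{M}^{r_{1}}\psi^{2}\,d\rho+\int_{M}^{r_{1}}h^{2}(\partial_{\rho}\psi)^{2}\,d\rho,
\]
and since $\partial_{\rho}h=2$ on $[M,r_{0}]$, subtracting $\int_{M}^{r_{0}}\psi^{2}$ from both sides leaves $\int_{M}^{r_{0}}\psi^{2}$ on the left and the term $\int_{r_{0}}^{r_{1}}(1-\partial_{\rho}h)\psi^{2}$ on the right. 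That \emph{is} the $\int_{\mathcal{B}}\psi^{2}$ contribution---it falls out automatically, with no averaging needed.
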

\begin{proof}
 \begin{figure}[H]
	\centering
		\includegraphics[scale=0.11]{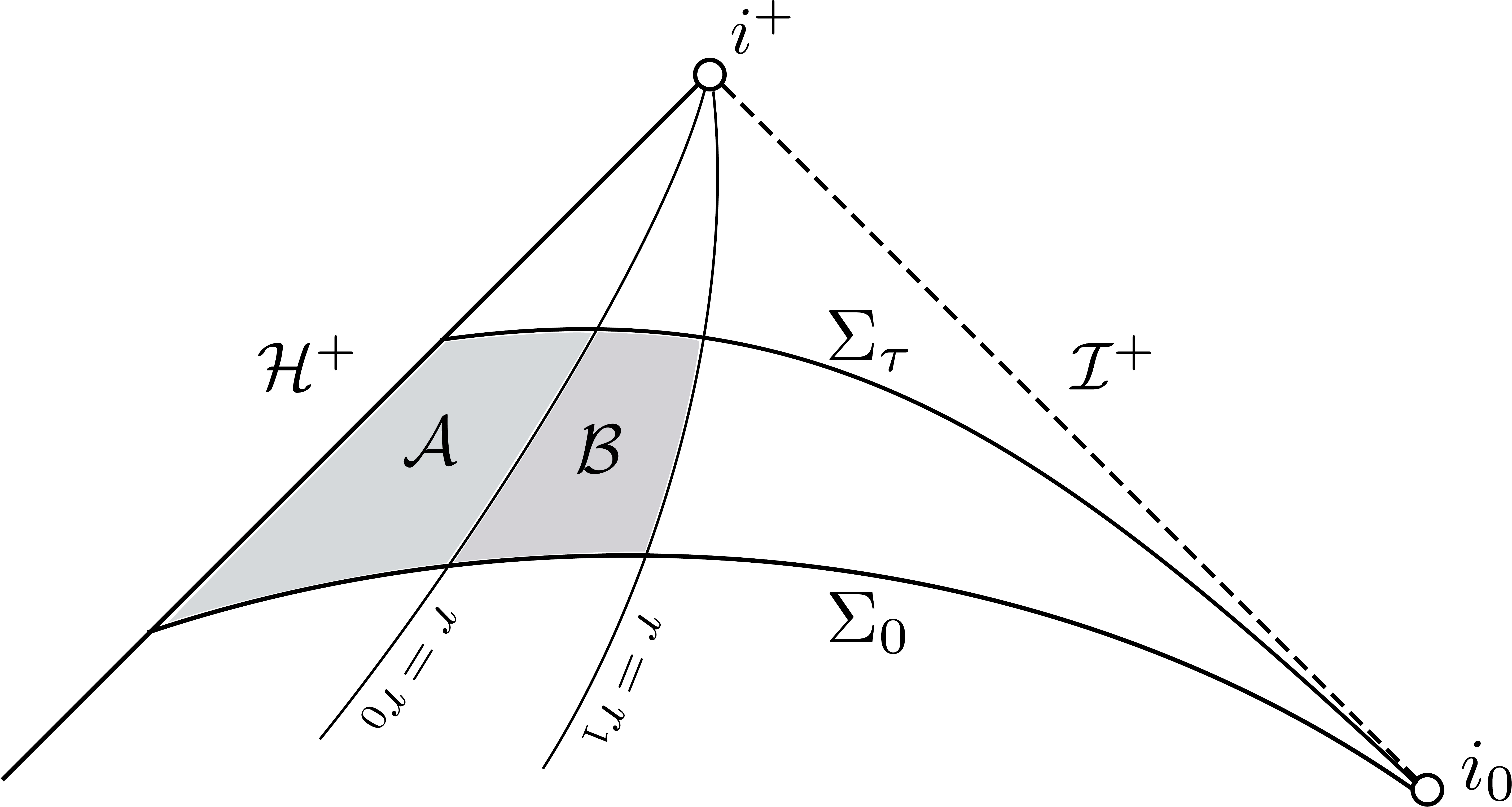}
	\label{fig:ernt4}
\end{figure}
We use again \eqref{111}
with $h$ such that $h=2(\rho-M)$ in  $[M,r_{0}]$ and $h(r_{1})=0$.   Then
\begin{equation*}
\begin{split}
2\int_{M}^{r_{0}}{\psi^{2}d\rho}\leq \int_{M}^{r_{0}}{\psi^{2}d\rho}+\int_{r_{0}}^{r_{1}}{\left(1-(\partial_{\rho}g)\right)\psi^{2} d\rho}+\int_{M}^{r_{1}}{g^{2}(\partial_{\rho}\psi)^{2} d\rho}.
\end{split}
\end{equation*}
Thus, integrating over $\mathbb{S}^{2}$ we obtain
\begin{equation*}
\begin{split}
\int_{\Sigma_{\tilde{\tau}}\cap\mathcal{A}}{\psi^{2}}\leq C \int_{\Sigma_{\tilde{\tau}}\cap\left(\mathcal{A}\cup\mathcal{B}\right)}{D[(\partial_{v}\psi)^{2}+(\partial_{r}\psi)^{2}]}+C\int_{\Sigma_{\tilde{\tau}}\cap\mathcal{B}}{\psi^{2}},
\end{split}
\end{equation*}
for all $\tilde{\tau}\geq 0$, where the constant $C$ depends on $M$, $r_{0}$, $r_{1}$ and $\Sigma_{0}$. Therefore, integrating over $\tilde{\tau}\in [0,\tau]$ and using the coarea formula
\begin{equation*}
\begin{split}
\int_{0}^{\tau}\left(\int_{\Sigma_{\tilde{\tau}}}{f}\right)d\tilde{\tau}\ \sim\,\int_{\bigcup_{0\leq\tilde{\tau}\leq\tau}\Sigma_{\tilde{\tau}}}{f}
\end{split}
\end{equation*}
completes the proof of the proposition.

\end{proof}

\section{Elliptic Theory on $\mathbb{S}^{2}(r), r>0$}
\label{sec:EllipticTheoryOnMathbbS2}
																									
In view of the symmetries of the spacetime, it is important to understand the behaviour of functions on the orbits of the action of SO(3), which are isometric to $\mathbb{S}^{2}(r)$ for  $r>0$. Recall that the eigenvalues of the spherical Laplacian $\lapp$ are equal to $\frac{-l(l+1)}{r^{2}}, l\in\mathbb{N}$. The dimension of the eigenspaces $E^{l}$ is equal to $2l+1$ 
and the corresponding eigenvectors are denoted by $Y^{m,l}, -l\leq m\leq l$ and called \textit{spherical harmonics}. We have $L^{2}(\mathbb{S}^{2}(r))=\displaystyle\oplus_{l\geq 0}E^{l}$ and, therefore, any function $\psi\in L^{2}(\mathbb{S}^{2}(r))$ can be written as
\begin{equation}
\psi=\sum_{l=0}^{\infty}\sum_{m=-l}^{l}\psi_{m,l}Y^{m,l}.
\label{sd}
\end{equation}
The right hand side converges to $\psi$ in  $L^{2}$ of the spheres and under stronger regularity assumptions the convergence is pointwise.  Let us denote by $\psi_{l}$ the projection of $\psi$ onto  $E^{l}$, i.e. 
\begin{equation*}
\psi_{l}=\sum_{m=-l}^{l}\psi_{m,l}Y^{m,l}.
\end{equation*}
Since  each eigenspace $E^{l}$ is finite dimensional (and so complete) and so closed we have $L^{2}(\mathbb{S}^{2}(r))=E^{l}\oplus \left(E^{l}\right)^{\bot}$. Therefore, $\psi$ can be written uniquely as
$\psi=\psi_{0}+\psi_{\geq 1},$
where $\psi_{\geq 1}\in\left(E^{1}\right)^{\bot}=\displaystyle\oplus_{l\geq 1}E^{l}$. We also have the following
\begin{proposition}(\textbf{Poincar\'e inequality}) If $\psi\in L^{2}\left(\mathbb{S}^{2}(r)\right)$ and $\psi_{l}=0$ for all $l\leq L-1$ for some finite natural number $L$ then we have
\begin{equation*}
\frac{L\left(L+1\right)}{r^{2}}\int_{\mathbb{S}^{2}(r)}{\psi^{2}}\leq \int_{\mathbb{S}^{2}(r)}\left|\nabb\psi\right|^{2}
\label{poincare}
\end{equation*}
and equality holds if and only if $\psi_{l}=0$ for all $l\neq L$.
\label{poin}
\end{proposition}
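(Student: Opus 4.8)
The plan is to reduce the inequality to the spectral decomposition of the spherical Laplacian $\lapp$ on $\mathbb{S}^2(r)$ that has just been recorded in \eqref{sd}, together with the standard integration-by-parts identity relating $\int |\nabb\psi|^2$ to $-\int \psi\,\lapp\psi$. First I would write $\psi = \sum_{l\geq L}\sum_{m=-l}^{l}\psi_{m,l}Y^{m,l}$, using the hypothesis that all projections $\psi_l$ vanish for $l\leq L-1$. Since the spherical harmonics $Y^{m,l}$ are mutually orthogonal in $L^2(\mathbb{S}^2(r))$ and are eigenfunctions of $\lapp$ with eigenvalue $-l(l+1)/r^2$, Parseval's identity gives
\begin{equation*}
\int_{\mathbb{S}^{2}(r)}\psi^{2}=\sum_{l\geq L}\sum_{m=-l}^{l}\left|\psi_{m,l}\right|^{2}\left\|Y^{m,l}\right\|_{L^{2}}^{2}.
\end{equation*}

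Next I would compute the gradient term. Integrating by parts on the closed manifold $\mathbb{S}^2(r)$ (no boundary terms) yields $\int_{\mathbb{S}^2(r)}|\nabb\psi|^2 = -\int_{\mathbb{S}^2(r)}\psi\,\lapp\psi$. Applying this to the expansion and using $\lapp Y^{m,l} = -\frac{l(l+1)}{r^2}Y^{m,l}$, the orthogonality of distinct eigenspaces again diagonalizes the sum:
\begin{equation*}
\int_{\mathbb{S}^{2}(r)}\left|\nabb\psi\right|^{2}=\sum_{l\geq L}\frac{l(l+1)}{r^{2}}\sum_{m=-l}^{l}\left|\psi_{m,l}\right|^{2}\left\|Y^{m,l}\right\|_{L^{2}}^{2}.
\end{equation*}

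The inequality then follows termwise: for every $l\geq L$ one has $\frac{l(l+1)}{r^2}\geq \frac{L(L+1)}{r^2}$, since $t\mapsto t(t+1)$ is increasing for $t\geq 0$. Summing these lower bounds against the nonnegative coefficients $\sum_m |\psi_{m,l}|^2\|Y^{m,l}\|_{L^2}^2$ gives exactly $\frac{L(L+1)}{r^2}\int \psi^2 \leq \int |\nabb\psi|^2$. For the equality case, I would observe that the gap $\frac{l(l+1)}{r^2}-\frac{L(L+1)}{r^2}$ is strictly positive for every $l>L$; hence equality in the summed estimate forces all coefficients with $l>L$ to vanish, i.e. $\psi_l=0$ for all $l\neq L$, and conversely if $\psi$ is supported purely on $E^L$ both sides coincide.

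There is no serious obstacle here — the argument is a clean consequence of the spectral theory already set up in this section. The only point requiring minor care is the justification of the termwise integration by parts and the interchange of summation with integration; this is where the regularity assumptions on $\psi$ (ensuring convergence of \eqref{sd} in a strong enough topology, as noted after that equation) are used, but for $\psi\in L^2(\mathbb{S}^2(r))$ one can argue by first establishing the identities for finite truncations of the series and then passing to the limit using orthogonality and monotone/dominated convergence.
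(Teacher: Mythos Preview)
Your proposal is correct and follows essentially the same approach as the paper: both expand $\psi$ in spherical harmonics, use the integration-by-parts identity $\int_{\mathbb{S}^{2}(r)}|\nabb\psi|^{2}=-\int_{\mathbb{S}^{2}(r)}\psi\,\lapp\psi$ together with orthogonality of the eigenspaces, and then apply the monotonicity $l(l+1)\geq L(L+1)$ termwise. Your version is simply a bit more explicit about the coefficients and the equality case.
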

\begin{proof} 
We have
$\psi=\sum_{l\geq L}{\psi_{l}}.$
Therefore, 
\begin{equation*}
\begin{split}
\int_{\mathbb{S}^{2}(r)}{\left|\nabb\psi\right|^{2}}&=-\displaystyle\int_{\mathbb{S}^{2}(r)}{\psi\cdot\lapp\psi}=\int_{\mathbb{S}^{2}(r)}{\left(\sum_{l'\geq L}{\psi_{l'}}\right)\left(\sum_{l\geq L}{\frac{l\left(l+1\right)}{r^{2}}\psi_{l}}\right)}=\int_{\mathbb{S}^{2}(r)}{\sum_{l\geq L}{\frac{l\left(l+1\right)}{r^{2}}\psi_{l}^{2}}}\\
&\geq\frac{L\left(L+1\right)}{r^{2}}\int_{\mathbb{S}^{2}(r)}{\sum_{l\geq L}{\psi_{l}^{2}}}=\frac{L\left(L+1\right)}{r^{2}}\int_{\mathbb{S}^{2}(r)}{\psi^{2}},
\end{split}
\end{equation*}
where we have used the orthogonality of distinct eigenspaces.
\end{proof}

Returning to our 4-dimensional problem, if $\psi$ is  sufficiently regular in $\mathcal{R}$ then its restriction at each sphere can be written as in \eqref{sd}, where $\psi_{m,l}=\psi_{m,l}(v,r)$ and $Y^{m,l}=Y^{m,l}(\theta,\phi)$. We will not worry about the convergence of the series since we may assume that $\psi$ is sufficiently regular\footnote{Indeed, we may work only with functions which are smooth and such that  the non-zero terms in \eqref{sd} are finitely many. Then, since all of our results are quantitative and all the constants involved do not depend on $\psi$, by a density argument we may lower the regularity of $\psi$ requiring only certain norms depending on the initial data of $\psi$ to be finite.}. First observe that for each summand in \eqref{sd} we have
\begin{equation*}
\Box_{g}\psi_{m,l}Y^{m,l}=\left(S\psi_{m,l}-\frac{l(l+1)}{r^{2}}\psi_{m,l}\right)Y^{m,l},
\end{equation*}
where $S$ is an operator on the quotient $\mathcal{M}/\text{SO(3)}$. Therefore, if $\psi$ satisfies the wave equation then in view of the linear independence of $Y^{m,l}$'s the terms $\psi_{m,l}$ satisfy $S\psi_{m,l}=\frac{l(l+1)}{r^{2}}\psi_{m,l}$ and, therefore, each summand also satisfies the wave equation. This implies that if $\psi_{m,l}=0$ initially then $\psi_{m,l}=0$ everywhere. From now on, \textbf{we will say that the wave $\psi$ is supported on the angular frequencies  $l\geq L$ if $\psi_{i}=0,i=0,...,L-1$ initially (and thus everywhere). Similarly, we will also say that $\psi$ is supported on the angular frequency  $l=L$ if $\psi\in E^{L}$. }

Another important observation is that the operators $\partial_{v}$ and $\partial_{r}$ are endomorphisms of $E^{l}$ for all $l$. Indeed, $\partial_{v}$ commutes with $\lapp$ and if $\psi\in E^{l}$ then
\begin{equation*}
\begin{split}
\lapp\partial_{r}\psi &=\partial_{r}\lapp\psi+\frac{2}{r}\lapp\psi=-\frac{l(l+1)}{r^{2}}\partial_{r}\psi
\end{split}
\end{equation*}
and thus $\partial_{r}\psi\in E^{l}$. Therefore, if $\psi$ is supported on frequencies $l\geq L$ then the same holds for $\partial_{r}\psi$.

\section{The Vector Field $T$}
\label{sec:TheVectorFieldTextbfM}
One can easily see that $\partial_{v}=\partial_{t}=\partial_{t^{*}}$ in the intersection of the corresponding coordinate systems. Here $\partial_{t}$ corresponds to the coordinate basis vector field of either  $\left(t,r\right)$ or $\left(t,r^{*}\right)$. Recall that the region $\mathcal{M}$ where we want to understand the behaviour of waves is covered by the system $\left(v,r,\theta,\phi\right)$. Therefore we define $T=\partial_{v}$. It can be easily seen from \eqref{RN1} that $T$ is Killing vectorfield and timelike everywhere\footnote{Note that in the subextreme range $T$ becomes spacelike in the region bounded by the two horizons, which however coincide in the extreme case.} except on the horizon where it is null.
\subsection{Uniform Boundedness of Degenerate Energy}
\label{sec:TheCurrentKT}
Recall that 
$K^{T}=\textbf{T}_{\mu\nu}\pi_{T}^{\mu\nu},$
where $\pi_{T}^{\mu\nu}$ is the deformation tensor of $T$. Since $T$ is Killing vectorfield, its deformation tensor is zero and so $K^{T}=0$.
Therefore, the  divergence identity in the region $\mathcal{R}(0,\tau)$ gives us the following conservation law
\begin{equation}
\int_{\Sigma_{\tau}}{J^{T}_{\mu}[\psi]n^{\mu}_{\Sigma_{\tau}}}+\int_{\mathcal{H}^{+}}{J^{T}_{\mu}[\psi]n^{\mu}_{{\mathcal{H}^{+}}}}=\int_{\Sigma_{0}}{J^{T}_{\mu}[\psi]n^{\mu}_{\Sigma_{0}}}.
\label{t1}
\end{equation}
Since $T$ is null on the horizon $\mathcal{H}^{+}$ we have that $J_{\mu}^{T}n^{\mu}_{{\mathcal{H}^{+}}}\geq 0$. More presicely, since $n^{\mu}_{{\mathcal{H}^{+}}}=T$, from \eqref{GENERALT} (See Appendix \ref{sec:TheHyperbolicityOfTheWaveEquation1}) we have
$J_{\mu}^{T}n^{\mu}_{{\mathcal{H}^{+}}}=(\partial_{v}\psi)^{2},$
thus proving the following proposition:
\begin{proposition}
For all solutions $\psi$ of the wave equation we have
\begin{equation}
\int_{\Sigma_{\tau}}{J^{T}_{\mu}[\psi]n^{\mu}_{\Sigma_{\tau}}}\leq \int_{\Sigma_{0}}{J^{T}_{\mu}[\psi]n^{\mu}_{\Sigma_{0}}}.
\label{boundT}
\end{equation}
\label{ubdenergy}
\end{proposition}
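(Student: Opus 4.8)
The plan is to exploit the fact that $T=\partial_v$ is Killing in order to turn $J^T$ into an exactly conserved current, and then to read off the inequality by discarding a boundary term on $\mathcal{H}^+$ that has a favourable sign. First I would recall that the deformation tensor $\pi_T=\mathcal{L}_T g$ vanishes because $T$ is Killing, so the scalar current $K^T=\textbf{T}_{\mu\nu}\pi_T^{\mu\nu}$ is identically zero. Since $\psi$ solves $\Box_g\psi=0$, the divergence of $J^T$ equals $K^T$, and hence $\nabla^\mu J^T_\mu[\psi]=0$: the current is divergence free.

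Next I would apply the continuity equation \eqref{div} with $P_\mu=J^T_\mu[\psi]$ over the region $\mathcal{R}(0,\tau)$, whose stratified boundary consists of $\Sigma_0$, $\Sigma_\tau$ and the horizon segment $\mathcal{H}^+(0,\tau)$. The spacetime bulk integral $\int_{\mathcal{R}(0,\tau)}\nabla^\mu J^T_\mu$ vanishes by the previous step, and this produces the exact conservation law \eqref{t1} relating the fluxes through $\Sigma_\tau$, $\Sigma_0$ and $\mathcal{H}^+$.

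The only remaining point---and the single place where anything needs to be checked---is the sign of the horizon flux. With the normalisation $n_{\mathcal{H}^+}=\partial_v=T$ fixed in Section \ref{sec:TheVectorFieldMethod}, the integrand is $J^T_\mu n^\mu_{\mathcal{H}^+}=\textbf{T}(T,T)$. On $\mathcal{H}^+$ the Killing field $T$ is null, so Proposition \ref{hyperb} gives $\textbf{T}(T,T)\geq 0$; in fact a direct evaluation of the energy-momentum tensor in $(v,r)$ coordinates on $r=M$ collapses it to $(\partial_v\psi)^2$, which is manifestly non-negative. Discarding this non-negative term from \eqref{t1} then yields \eqref{boundT}, completing the argument.

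Because $T$ is Killing the bulk term is free, so there is no genuine obstacle here; the proof is essentially a bookkeeping of the divergence theorem. The only substantive ingredient is the positivity of the horizon flux, which rests on $\mathcal{H}^+$ being a Killing horizon (so that $T$ is null there) together with the non-negativity clause of Proposition \ref{hyperb}. It is worth stressing that this argument only controls the \emph{degenerate} $T$-energy, whose flux on $\mathcal{H}^+$ sees no $\partial_r$ contribution; upgrading to a non-degenerate energy up to and including $\mathcal{H}^+$ is precisely what the later redshift and Hardy estimates must address.
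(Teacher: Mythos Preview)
Your proposal is correct and follows essentially the same argument as the paper: both use that $T$ is Killing to kill the bulk term $K^{T}$, apply the divergence identity \eqref{div} over $\mathcal{R}(0,\tau)$ to obtain \eqref{t1}, and then drop the horizon flux after observing that $J^{T}_{\mu}n^{\mu}_{\mathcal{H}^{+}}=\textbf{T}(T,T)=(\partial_{v}\psi)^{2}\geq 0$. Your closing remarks on the degeneracy of the $T$-energy are accurate and anticipate exactly the issue addressed later via the redshift current and Hardy inequalities.
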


We know by Proposition \ref{hyperb} that $J^{T}_{\mu}n^{\mu}_{\Sigma_{\tau}}$ is non-negative definite. However, we need to know the exact way $J^{T}_{\mu}n^{\mu}_{\Sigma_{\tau}}$ depends on $d\psi$.  Note  that   $\xi_{n}$ (see Appendix \ref{sec:TheHyperbolicityOfTheWaveEquation1} for the definition of $\xi$) is strictly positive and  uniformly bounded. However
$\xi_{T}=-\frac{1}{2}g(T,T)=\frac{1}{2}D$,
which clearly vanishes (to second order) at the horizon. Therefore, \eqref{GENERALT} implies 
\begin{equation*}
\begin{split}
J^{T}_{\mu}n^{\mu}\, \sim\ \left(\partial_{v}\psi\right)^{2}+D\left(\partial_{r}\psi\right)^{2}+\left|\nabb\psi\right|^{2},
\end{split}
\end{equation*}
where the constants in $\sim$ depend on the mass $M$ and $\Sigma_{0}$. Note that all these relations are invariant under the flow of $T$.

On $\mathcal{H}^{+}$,  the energy estimate \eqref{boundT} degenerates with respect to the transversal derivative $\partial_{r}\psi$. It is exactly this that does not allow us to use estimate $\eqref{boundT}$ to obtain the boundedness result for the waves in the whole region $\mathcal{R}$. However, if we restrict our attention to the region where $r\geq r_{0}>M$  then commuting the wave equation with $T$ and estimate $\eqref{boundT}$ in conjunction with elliptic and Sobolev estimates  give us the boundedness of $\psi$ in this region. This result is not satisfactory since it provides no information about the behaviour of waves on the horizon  and so it is not sufficient for non-linear stability problems.

\section{Morawetz and $X$ Estimates}
\label{sec:TheVectorFieldTextbfX}

In view of the absence of the redshift along $\mathcal{H}^{+}$, any proof of uniform boundedness of solutions to the wave equation essentially relies on the dispersion of waves. Therefore, we must derive estimates which capture these dispersive properties. The first result we prove in this direction is that there exists a constant $C$ which depends on $M$ and $\Sigma_{0}$ such that
\begin{equation*}
\int_{\mathcal{R}}\chi\cdot \left(J_{\mu}^{T}[\psi]n^{\mu}_{\Sigma}+\psi^{2}\right) \leq C\int_{\Sigma_{0}}J_{\mu}^{T}[\psi]n^{\mu}_{\Sigma_{0}},
\end{equation*}
where the weight $\chi=\chi (r)$ degenerates only at $\mathcal{H}^{+}$, at $r=2M$ (photon sphere) and at infinity. The degeneracy at the photon sphere is expected in view of  trapping. In particular, as we shall see, such an estimate degenerates only with respect to the derivatives tangential to the photon sphere. This degeneracy can be overcome at the expense of commuting with $T$. The degeneracy at $\mathcal{H}^{+}$ will be overcome in Section \ref{sec:CommutingWithAVectorFieldTransversalToMathcalH} by imposing necessary conditions on the spherical decomposition of $\psi$.

\subsection{The Vector Field $X$}
\label{sec:TheVectorFieldX}

We are looking for a vector field which gives rise to a current whose divergence is non-negative (upon integration on the spheres of symmetry).  We will be working with vector fields of the form $X=f\left(r^{*}\right)\partial_{r^{*}}$ (for the coordinate system $(t,r^{*})$ see Section \ref{sec:GeometryOfExtremeReissnerNordstromSpacetime}).

\subsubsection{The Spacetime Term $K^{X}$}
\label{sec:TheSpacetimeTermKX}
 We compute
\begin{equation*}
K^{X}=\left(\frac{f'}{2D}+\frac{f}{r}\right)\left(\partial_{t}\psi\right)^{2}+\left(\frac{f'}{2D}-\frac{f}{r}\right)\left(\partial_{r^{*}}\psi\right)^{2}+\left(-\frac{f'+2fH}{2}\right)\left|\nabb\psi\right|^{2},
\end{equation*}
where $f'=\frac{df\left(r^{*}\right)}{dr^{*}}$ and $H=\frac{1}{2}\frac{dD\left(r\right)}{dr}$. Note that all the derivatives in this section will be considered with respect to  $r^{*}$  unless otherwise stated.

Unfortunately, the trapping obstruction does not allow us to obtain a positive definite current $K^{X}$ so easily. Indeed, if we assume that all these coefficients are positive then we have 
\begin{equation*}
\begin{split}
f'>0,\, f<0,\, -\frac{fH}{D}>\frac{f'}{2D}>-\frac{f}{r}\Rightarrow \, \left(-\frac{H}{D}+\frac{1}{r}\right)f>0\Rightarrow\, \left(\frac{H}{D}-\frac{1}{r}\right)<0.
\end{split}
\end{equation*}
However, the quantity $\left(\frac{H}{D}-\frac{1}{r}\right)$ changes sign exactly at the radius $Q$ of the photon sphere. Therefore, there is no way to make all the above four coefficients positive. For simplicity, let us define
\begin{equation*}
\frac{P\left(r\right)}{r^{2}}=-H\cdot r+D=\frac{(r-M)(r-2M)}{r^{2}}.
\end{equation*}

\subsection{The Case $l\geq 1$}
\label{sec:TheCaseL0}
We first consider the case where $\psi$ is supported on the frequencies $l\geq 1$.

\subsubsection{The Currents $J^{X,1}_{\mu}$ and $K^{X,1}$}
\label{sec:TheCurrentJXGMuAndEstimatesForKXG}

To assist in overcoming the obstacle of the photon sphere we shall introduce zeroth order terms in order to modify the coef{}ficient of $\left(\partial_{t}\psi\right)^{2}$ and  create another which is more flexible. Let us consider the current
\begin{equation*}
J_{\mu}^{X,h_{1},h_{2},w}=J_{\mu}^{X}+h_{1}\left(r\right)\psi\nabla _{\mu}\psi+h_{2}\left(r\right)\psi^{2}\left(\nabla_{\mu}w\right),
\end{equation*}
where $h_{1},h_{2},w$ are functions on $\mathcal{M}$. Then we have
\begin{equation*}
\begin{split}
\tilde{K}^{X}&\!=\!\nabla^{\mu}J_{\mu}^{X,h_{1},h_{2},w}=K^{X}+\nabla^{\mu}\left(h_{1}\psi\nabla _{\mu}\psi\right)+\nabla^{\mu}\left(h_{2}\psi^{2}\left(\nabla_{\mu}w\right)\right)\\
&\!=\!K^{X}\!+h_{1}\left(\nabla^{a}\psi\nabla_{a}\psi\right)\!+\left(\nabla^{\mu}h_{1}+2h_{2}\nabla^{\mu}w\right)\psi\nabla_{\mu}\psi+\left(\nabla_{\mu}h_{2}\nabla^{\mu}w+h_{2}\left(\Box_{g} w\right)\right)\psi^{2}.\\
\end{split}
\end{equation*}
By taking $h_{2}=1,h_{1}=2G,w=-G$ we make the coefficient of $\psi\nabla_{\mu}\psi$ vanish. Therefore, let us define
\begin{equation}
J_{\mu}^{X,1}\overset{.}{=}J^{X}_{\mu}+2G\psi\left(\nabla_{\mu}\psi\right)-\left(\nabla_{\mu}G\right)\psi^{2}.
\label{modG}
\end{equation}
Then
\begin{equation*}
\begin{split}
K^{X,1}&\overset{.}{=}\nabla^{\mu}J_{\mu}^{X,1}=K^{X}+2G\left(\left(-\frac{1}{D}\right)\left(\partial_{t}\psi\right)^{2}+\left(\frac{1}{D}\right)\left(\partial_{r^{*}}\psi\right)^{2}+\left|\nabb\psi\right|^{2}\right)-\left(\Box_{g} G\right)\psi^{2}.\\
\end{split}
\end{equation*}
Therefore, if we take $G$ such that 
\begin{equation}
\begin{split}
\left(2G\right)\left(-\frac{1}{D}\right)=-\left(\frac{f'}{2D}+\frac{f}{r}\right)\Rightarrow G=\frac{f'}{4}+\frac{f\cdot D}{2r}
\label{G}
\end{split}
\end{equation}
then
\begin{equation*}
\begin{split}
K^{X,1}&=\frac{f'}{D}\left(\partial_{r^{*}}\psi\right)^{2}+\frac{f\cdot P}{r^{3}}\left|\nabb\psi\right|^{2}-\left(\Box_{g} G\right)\psi^{2}\\
&=\frac{f'}{D}\left(\partial_{r^{*}}\psi\right)^{2}+\frac{f\cdot P}{r^{3}}\left|\nabb\psi\right|^{2}-\left(\frac{1}{4D}f'''+\frac{1}{r}f''+\frac{D'}{D\cdot r}f'+\left(\frac{D''}{2D\cdot r}-\frac{D'}{2r^{2}}\right)f\right)\psi^{2}
\end{split}
\end{equation*}
Note that, since $f$ must be bounded and $f'$ positive, $-f'''$ must become negative and therefore has the wrong sign. Note also the factor $D$ at the denominator which degenerates at the horizon. The way that turns out to work is by borrowing from the coefficient of $\left(\partial_{r^{*}}\psi\right)^{2}$ which is accomplished by introducing a third current. Note that this current exploits an algebraic property of the trapping in the extreme case (see the computation \eqref{icomp} of the term $I$ below).

\subsubsection{The Current $J^{X,2}_{\mu}$ and Estimates for $K^{X,2}$}
\label{sec:TheCurrentJX2MuAndEstimatesForKX2}

We define
\begin{equation*}
\begin{split}
&J_{\mu}^{X,2}[\psi]=J_{\mu}^{X,1}[\psi]+\frac{f'}{D\cdot f}\beta \psi^{2} X_{\mu},
\end{split}
\end{equation*}
where $\beta$ will be a function of $r^{*}$ to be defined below.  Since
$\operatorname{Div}\left(\partial_{r^{*}}\right)=\frac{D'}{D}+\frac{2D}{r},
$ we have
\begin{equation*}
\begin{split}
K^{X,2}=&K^{X,1}+Div\left(\frac{f'}{D}\beta\psi^{2}\partial_{r^{*}}\right)\\
=&\frac{f'}{D}\left(\partial_{r^{*}}\psi+\beta\psi\right)^{2}+\frac{f\cdot P}{r^{3}}\left|\nabb{\psi}\right|^{2}+\\
&+\left[-\frac{1}{4}\frac{f'''}{D}+\left(\beta-\frac{D}{r}\right)\frac{f''}{D}+\left(\beta'-\beta^{2}+\frac{2D}{r}\beta-\frac{D'}{r}\right)\frac{f'}{D}+\left(\frac{D'}{2r^{2}}-\frac{D''}{2D\cdot r}\right)f\right]\psi^{2}.
\end{split}
\end{equation*}
Note that the coefficient of $f\psi^{2}$ is independent of the choice of the function $\beta$. Let us now take $\beta=\frac{D}{r}-\frac{x}{\a^{2}+x^{2}},$
where $x=r^{*}-\a-\sqrt{\a}\ $ and $\a>0$ a sufficiently large number to be chosen appropriately. As we shall see, the reason for introducing this shifted coordinate $x$ is that we want the origin $x=0$ to be far away from the photon sphere. 
We clearly need to choose a function $f$ that is stricly increasing and changes sign at the photon sphere. So, if we choose this function (which we call $f^{\a}$) such that $\left(f^{\a}\right)'=\frac{1}{\a^{2}+x^{2}}, \ f^{\a}\left(r^{*}=0\right)=0,$
then
\begin{equation*}
\begin{split}
F:=-\frac{1}{4}\frac{\left(f^{\a}\right)'''}{D}-\frac{x}{\a^{2}+x^{2}}\frac{\left(f^{\a}\right)''}{D}-\frac{\a^{2}}{\left(\a^{2}+x^{2}\right)^{2}}\frac{\left(f^{\a}\right)'}{D}=\frac{1}{2D}\frac{x^{2}-\a^{2}}{\left(x^{2}+\a^{2}\right)^{3}}.
\end{split}
\end{equation*}
If we define\footnote{Note that the role of the parameter $\a$ as an index in our notation in this section is to emphasise that the corresponding tensors (i.e. functions, vector fields, etc.) depend on the choice of $\a$. } $X^{\a}=f^{\a}\partial_{r^{*}}$ and $I=\left(\frac{D'}{2r^{2}}-\frac{D''}{2D\cdot r}\right)f^{\a}$ then
\begin{equation*}
\begin{split}
K^{X^{\a},2}[\psi]=\frac{\left(f^{\a}\right)'}{D}\left(\partial_{r^{*}}\psi+\beta\psi\right)^{2}+\frac{f^{\a}\cdot P}{r^{3}}\left|\nabb{\psi}\right|^{2}+\left(F+I\right)\psi^{2}.
\end{split}
\end{equation*}

\subsubsection{Nonnegativity of $K^{X^{\a},2}$}
\label{sec:NonnegativityOfK2}
 We have the following important proposition
\begin{proposition}
There exists a constant $C$ which depends only on $M$ such that for all solutions $\psi$ to the wave equation which are supported on the frequencies  $l\geq 1$  we have
\begin{equation}
\begin{split}
\int_{\mathbb{S}^{2}}{\left(\frac{(r-M)\left(r-2M\right)^{2}}{r^{4}}\left|\nabb\psi\right|^{2}+\frac{1}{D}\frac{1}{(\left(r^{*}\right)^{2}+1)^{2}}\psi^{2}\right)}\leq C\int_{\mathbb{S}^{2}}{K^{X^{\a},2}[\psi]}.
\label{Xa2psi}
\end{split}
\end{equation}
\label{xa2prop}
\end{proposition}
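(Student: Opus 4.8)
The plan is to integrate the pointwise expression for $K^{X^\a,2}[\psi]$ over each symmetry sphere and to trade the non-negative angular term against the indefinite zeroth order term by means of the Poincar\'e inequality (Proposition \ref{poin}); this is precisely the step at which the hypothesis $l\geq 1$ is used. Writing
\[
\int_{\mathbb{S}^2}K^{X^\a,2}[\psi]=\int_{\mathbb{S}^2}\frac{(f^\a)'}{D}\big(\partial_{r^*}\psi+\b\psi\big)^2+\int_{\mathbb{S}^2}\frac{f^\a P}{r^3}\,|\nabb\psi|^2+\int_{\mathbb{S}^2}(F+I)\,\psi^2,
\]
I would first discard the leading integral, which is manifestly non-negative since $(f^\a)'=(\a^2+x^2)^{-1}>0$. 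The second observation is that the angular weight is non-negative throughout the domain of outer communications: $f^\a$ is increasing and vanishes exactly at the photon sphere $r=2M$ (normalised to $r^*=0$), so it carries the same sign as $P=(r-M)(r-2M)$, whence $\tfrac{f^\a P}{r^3}\geq 0$.

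With the first term gone, the claim reduces at each fixed sphere to two pointwise-in-$r$ inequalities, one for the coefficient of $\int_{\mathbb{S}^2}|\nabb\psi|^2$ and one for the coefficient of $\int_{\mathbb{S}^2}\psi^2$, after splitting $|\nabb\psi|^2$ with a weight $\la(r)\in[0,1]$. For the angular coefficient I would prove
\[
\frac{f^\a P}{r^3}\gtrsim \frac{(r-M)(r-2M)^2}{r^4},
\]
which, since $P=(r-M)(r-2M)$, is equivalent to the ratio $\tfrac{f^\a r}{\,r-2M\,}$ being bounded below by a positive constant; this holds because numerator and denominator vanish to first order and with matching sign at $r=2M$, while both stay bounded away from $0$ as $r\to M$ and $r\to\infty$ (using that $f^\a$ has finite non-zero limits at $r^*=\pm\infty$). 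For the zeroth order coefficient I would invoke $\int_{\mathbb{S}^2}|\nabb\psi|^2\geq \tfrac{2}{r^2}\int_{\mathbb{S}^2}\psi^2$ to convert the unused fraction $(1-\la)$ of the angular term into a positive multiple of $\int_{\mathbb{S}^2}\psi^2$, reducing matters to
\[
(1-\la)\,\frac{2f^\a P}{r^5}+F+I\;\gtrsim\;\frac{1}{D\big((r^*)^2+1\big)^2}.
\]

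The core of the argument, and the step I expect to be hardest, is this last inequality, since $F=\tfrac{1}{2D}\tfrac{x^2-\a^2}{(x^2+\a^2)^3}$ is negative precisely on the band $|x|<\a$. For $|r^*|$ large (hence $|x|$ large) one has $F>0$ and, because $x\sim r^*$ there, $F$ reproduces the target weight $\tfrac{1}{D((r^*)^2+1)^2}$ up to a constant near both ends $r\to M$ and $r\to\infty$, where one may simply take $\la=1$. The delicate region is a neighbourhood of the band, which the shift $x=r^*-\a-\sqrt{\a}$ deliberately pushes out to $r^*\sim\a$, far from the photon sphere, so that the negative part of $F$ is only of size $\sim(D\a^4)^{-1}$, comparable to both the target weight and the Poincar\'e gain $\tfrac{2f^\a P}{r^5}$ available there. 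I would therefore fix $\a=\a(M)$ sufficiently large, choose $\la$ of order one across the band, and verify by tracking leading constants that the recovered Poincar\'e gain strictly dominates the negative contribution of $F$ together with the target weight. The essential subtlety is that the lower order term $I$ (whose explicit form encodes the extreme-case algebra of $D$ and $P$) must be retained and shown not to destroy this balance, in particular in the transition region around and just inside the photon sphere, where $I$ and the Poincar\'e gain are of the same order; this is the point where the precise structure of the currents, rather than soft positivity, does the work. Taking $C$ to be the maximum of the finitely many constants produced in these comparisons — all depending only on $M$ once $\a$ is fixed — then completes the proof.
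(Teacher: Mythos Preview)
Your strategy is the paper's: drop the non-negative term $\tfrac{(f^\a)'}{D}(\partial_{r^*}\psi+\b\psi)^2$, observe $\tfrac{f^\a P}{r^3}\geq 0$, and in the region where $F<0$ borrow from the angular term via Poincar\'e (Proposition~\ref{poin} with $L=1$). Your bound $\tfrac{f^\a P}{r^3}\gtrsim\tfrac{(r-M)(r-2M)^2}{r^4}$ is obtained exactly as you indicate.

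Two confusions, however, should be corrected. First, $I$ is not a subtlety at all: the paper computes (equation~\eqref{icomp})
\[
I=\Big(1-\frac{M}{r}\Big)\frac{3M}{r^6}\,f^\a P\;\geq\; 0
\]
everywhere in $r\geq M$, so $I$ can only help and never ``destroys the balance''; the point is merely that $I\sim r^{-4}$ is too weak in the band to absorb $F$ by itself, which is why Poincar\'e is needed. Second, the dangerous band $|x|<\a$ corresponds to $r^*\in[\sqrt\a,\,2\a+\sqrt\a]$, which for large $\a$ lies far into the asymptotically flat region, not ``around and just inside the photon sphere''; near the photon sphere one has $r^*\leq 0$, hence $x<-\a$, hence $F>0$ and nothing needs to be done there. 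The genuine work you leave to ``tracking leading constants'' is precisely the content of Lemma~\ref{lemmamefinetunning}: explicit lower bounds on $f^\a$ across the band which, after comparing $\tfrac{2f^\a P}{r^5}$ with $-F$, give a ratio $\Pi$ bounded strictly below $1$ (the paper obtains $\Pi<\tfrac{9}{10}$). That margin is what delivers the target weight $\tfrac{1}{D((r^*)^2+1)^2}$ inside the band, while outside the band $F$ itself already dominates it.
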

\begin{proof}
We have
\begin{equation*}
\begin{split}
\frac{f^{\a}\cdot P}{r^{3}}\left|\nabb{\psi}\right|^{2}+\left(F+I\right)\psi^{2}\leq K^{X^{\a},2}.
\end{split}
\end{equation*}
We compute the term $I$ 
\begin{equation}
\begin{split}
I=\left(\frac{D'}{2r^{2}}-\frac{D''}{2D\cdot r}\right)f^{\a}=\left(1-\frac{M}{r}\right)\frac{3M}{r^{6}}f^{\a}\cdot P.
\end{split}
\label{icomp}
\end{equation}
Therefore, $I\geq 0$ and vanishes (to second order) at the horizon and the photon sphere. Note now that the term $F$ is positive whenever  $x$ is not in the interval $\left[-\a, \a\right]$. On the other hand, the photon sphere is far away from the region where $x\in\left[-\a, \a\right]$, so $I$ is positive there. However, $I$ behaves like $\frac{1}{r^{4}}$ and so it is not sufficient to compensate the negativity of $F$. That is why we need to borrow from the coefficient of $\nabb\psi$ which behaves like $\frac{1}{r}$. Indeed, the Poincar\'e inequality gives us
$\int_{\mathbb{S}^{2}}{\frac{2}{r^{2}}\psi^{2}}\leq\int_{\mathbb{S}^{2}}{\left|\nabb\psi\right|^{2}}$ and, therefore, it suffices to prove that $
2\frac{f^{\a}\cdot P}{r^{5}}+F> 0$
for all $x\in\left[-\a,\a\right]$ or, equivalently, $r^{*}\in\left[\sqrt{\a}, 2\a+\sqrt{\a}\right]$. But
\begin{equation*}
\begin{split}
F=\frac{1}{D}\frac{x^{2}-\a^{2}}{2\left(x^{2}+\a^{2}\right)^{3}}=\Delta_{\a}\frac{x^{2}-\a^{2}}{2\left(x^{2}+\a^{2}\right)^{3}}
\end{split}
\end{equation*}
with $1\leq\Delta_{\a}\rightarrow 1$ as $\a\rightarrow +\infty$. Moreover, since $r^{*}\geq r$ for big $r$, by taking $\a$ sufficiently big we have
\begin{equation*}
\begin{split}
2\frac{f^{\a}\cdot P}{r^{5}}\geq 2\delta_{\a}\frac{f^{\a}}{\left(r^{*}\right)^{3}}
\end{split}
\end{equation*}
with $1\geq\delta_{\a}\rightarrow 1$ as $\a\rightarrow +\infty$. Therefore, we need to establish that 
\begin{equation*}
\begin{split}
\Pi :=\frac{\left(\a^{2}-x^{2}\right)\left(x+\a+\sqrt{\a}\right)^{3}}{4\left(x^{2}+\a^{2}\right)^{3}f^{\a}}<\frac{\delta_{\a}}{\Delta_{\a}}.
\end{split}
\end{equation*}
for all $x\in\left[-\a,\a\right]$. In view of the asymptotic behaviour of the constants $\Delta_{\a},\delta_{\a}$ it suffices to prove that the right hand side of the above inequality is strictly less than 1. 
\begin{lemma}
Given the function $f^{\a}\left(r^{*}\right)=f^{\a}\left(r^{*}\left(x\right)\right)$ defined above, we have the following: If $\ -\a\leq x\leq 0$ then
$f^{\a}>\frac{x+\a}{2\a^{2}}.$ Also, if $\ 0\leq x \leq \a$ then 
$f^{\a}>\frac{x+\a}{x^{2}+\a^{2}}-\frac{1}{2\a}.$
\label{lemmamefinetunning}
\end{lemma}
\begin{proof}
For $\ -\a\leq x\leq 0$ we have
\begin{equation*}
\begin{split}
f^{\a}\left(x\right)=\int_{-\a-\sqrt{\a}}^{x}{\frac{1}{\tilde{x}^{2}+\a^{2}}d\tilde{x}}>\int_{-\a}^{x}{\frac{1}{\tilde{x}^{2}+\a^{2}}d\tilde{x}}>\int_{-\a}^{x}{\frac{1}{2\a^{2}}d\tilde{x}}=\frac{x+\a}{2\a^{2}}.
\end{split}
\end{equation*}
Now for $\ 0\leq x \leq \a$ we have
\begin{equation*}
\begin{split}
f^{\a}\left(x\right)&>\int_{-\a}^{x}{\frac{1}{\tilde{x}^{2}+\a^{2}}d\tilde{x}}>\int_{-\a}^{-x}{\frac{1}{\tilde{x}^{2}+\a^{2}}d\tilde{x}}+\int_{-x}^{x}{\frac{1}{x^{2}+\a^{2}}d\tilde{x}}\\&=\int_{-\a}^{x}{\frac{1}{x^{2}+\a^{2}}d\tilde{x}}-\int_{-\a}^{-x}{\left(\frac{1}{x^{2}+\a^{2}}-\frac{1}{\tilde{x}^{2}+\a^{2}}\right)d\tilde{x}}>\frac{x+a}{x^{2}+\a^{2}}-\frac{1}{2a}.
\end{split}
\end{equation*}
\end{proof}
Consequently, if $\ -\a\leq x\leq 0$ and $x=-\lambda\a$ then $\ 0\leq\lambda\leq 1$ and
\begin{equation*}
\begin{split}
\Pi<&\frac{\left(\a-x\right)\left(x+\a+\sqrt{\a}\right)^{3}\a^{2}}{2\left(x^{2}+\a^{2}\right)^{3}}=\frac{\left(1-\lambda\right)\left(1+\lambda+\a^{-\frac{1}{2}}\right)^{3}}{2\left(\lambda^{2}+1\right)^{3}}=d_{\a}\frac{\left(1-\lambda\right)\left(1+\lambda\right)^{3}}{2\left(\lambda^{2}+1\right)^{3}}<d_{\a}\frac{2}{3}<\frac{9}{10},
\end{split}
\end{equation*}
since $\ d_{\a}\rightarrow 1$ as $\a\rightarrow +\infty$. Similarly, if $\ 0\leq x\leq \a$ and $x=\lambda\a$ then $\ 0\leq\lambda\leq 1$ and
\begin{equation*}
\begin{split}
\Pi&<\frac{\a\left(\a^{2}-x^{2}\right)\left(x+\a+\sqrt{\a}\right)^{3}}{2\left(x^{2}+\a^{2}\right)^{2}\left(\a^{2}+2\a x-x^{2}\right)}<\frac{\a\left(\a^{2}-x^{2}\right)\left(x+\a+\sqrt{\a}\right)^{3}}{2\left(x^{2}+\a^{2}\right)^{3}}=\tilde{d_{\a}}\frac{\left(1-\lambda^{2}\right)\left(1+\lambda\right)^{3}}{2\left(1+\lambda^{2}\right)^{3}}<\frac{9}{10},
\end{split}
\end{equation*}
since $\tilde{d_{\a}}\rightarrow 1$ as $\a\rightarrow +\infty$.
\end{proof}
Note, however, that although the coef{}ficient of $\psi$ does not degenerate on the photon sphere, the coef{}ficient of the angular derivatives vanishes at the photon sphere to second order. Having this estimate for $\psi$, we  obtain estimates for its derivatives (in Section \ref{sec:SpacetimeL2EstimateForPsi} we derive a similar estimate for $\psi$ for the case $l=0$).

\begin{proposition}
There exists a positive constant $C$ which depends only on $M$ such that for all solutions $\psi$ of the wave equation which are supported on the frequencies $l\geq 1$ we have
\begin{equation*}
\begin{split}
\int_{\mathbb{S}^{2}}{\!\left(\frac{(r-M)}{r^{4}}(\partial_{t}\psi)^{2}+\frac{(r-M)}{r^{4}}\psi^{2}\right)}\leq C\int_{\mathbb{S}^{2}}{\sum_{i=0}^{1}{K^{X^{\a},2}\left[T^{i}\psi\right]}}.
\end{split}
\end{equation*}
\label{mesaio}
\end{proposition}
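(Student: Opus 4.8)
The plan is to deduce the estimate directly from Proposition \ref{xa2prop}, applied to both $\psi$ and $T\psi$, by comparing weights sphere by sphere. Since the factor $\frac{r-M}{r^4}$ comes out of the $\mathbb{S}^2$-integral, the asserted inequality is really a family of inequalities on a single fixed sphere $\mathbb{S}^2(r)$ whose constant must be uniform in $r\in(M,+\infty)$. First I would observe that $T\psi=\partial_t\psi$ is again a solution of the wave equation (as $T$ is Killing) which is again supported on $l\geq 1$, since $\partial_v$ is an endomorphism of each $E^l$ (Section \ref{sec:EllipticTheoryOnMathbbS2}); hence Proposition \ref{xa2prop} applies verbatim to $T^i\psi$ and gives
\begin{equation*}
\int_{\mathbb{S}^2}\Big(\tfrac{(r-M)(r-2M)^2}{r^4}|\nabb T^i\psi|^2+\tfrac{1}{D((r^*)^2+1)^2}(T^i\psi)^2\Big)\le C\int_{\mathbb{S}^2}K^{X^{\a},2}[T^i\psi],\qquad i=0,1.
\end{equation*}
After invoking the Poincar\'e inequality of Proposition \ref{poin} with $L=1$ (which is exactly where $l\geq 1$ enters) to bound $\tfrac{2}{r^2}(T^i\psi)^2\le|\nabb T^i\psi|^2$, it then suffices to establish the single scalar weight comparison
\begin{equation*}
\frac{r-M}{r^4}\le C\Big(\frac{2(r-M)(r-2M)^2}{r^6}+\frac{1}{D((r^*)^2+1)^2}\Big)\qquad\text{for all }r\in(M,+\infty),
\end{equation*}
and apply it to each of $\psi$ and $T\psi$.

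The whole content is therefore this scalar comparison, which I would verify by showing that the ratio $g(r)$ of its two sides is continuous and positive on $(M,+\infty)$ with finite limits at the distinguished radii. I expect the event horizon $r\to M^+$ to be the main obstacle: there the pure $\psi^2$ weight $\frac{1}{D((r^*)^2+1)^2}$ of Proposition \ref{xa2prop} vanishes to \emph{second} order in $(r-M)$ (since $D\sim(r-M)^2$ while $r^*\sim -M^2/(r-M)$ forces $((r^*)^2+1)^{-2}\sim(r-M)^4$), whereas the target $\frac{r-M}{r^4}$ vanishes only to first order, so the $\psi^2$ term alone is too weak near $\mathcal{H}^+$. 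The resolution is that the angular weight $\frac{(r-M)(r-2M)^2}{r^4}$ degenerates only to first order at $r=M$, and after Poincar\'e it produces a $\psi^2$-weight of order $\frac{r-M}{M^4}$ matching the target; thus $g\to\tfrac12$ as $r\to M$. At infinity the same angular-plus-Poincar\'e mechanism works, both weights being of order $r^{-3}$, so again $g$ stays finite. At the photon sphere $r=2M$ the angular weight vanishes, but there the pure $\psi^2$ weight is strictly positive while the target is finite, so $g(2M)<\infty$; this is precisely why Proposition \ref{xa2prop} was formulated with both a $\psi^2$ term and an angular term.

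Being continuous and positive on $(M,+\infty)$ with finite limits as $r\to M^+$ and $r\to+\infty$, the ratio $g$ is bounded by a constant $C$ depending only on $M$ (the profiles $r^*$, $D$ depending only on $M$, with $\a$ fixed in terms of $M$). Combining the weight bound with the two instances of Proposition \ref{xa2prop} and summing over $i=0,1$ yields
\begin{equation*}
\int_{\mathbb{S}^2}\frac{r-M}{r^4}\big((\partial_t\psi)^2+\psi^2\big)\le C\int_{\mathbb{S}^2}\sum_{i=0}^{1}K^{X^{\a},2}[T^i\psi],
\end{equation*}
which is the claim. The only genuinely new ingredient beyond Proposition \ref{xa2prop} is the observation that trading angular derivatives for $\psi^2$ through Poincar\'e simultaneously repairs the over-degeneracy of the $\psi^2$ weight both at $\mathcal{H}^+$ and at infinity, while the complementary $\psi^2$ term covers the photon sphere; no further use of the wave equation beyond what is already encoded in $K^{X^{\a},2}$ is required.
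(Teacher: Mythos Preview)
Your proof is correct and follows essentially the same approach as the paper: apply Proposition~\ref{xa2prop} to $\psi$ and $T\psi$, use the $\frac{1}{D((r^*)^2+1)^2}$ weight to handle the photon sphere, and use Poincar\'e on the angular term to obtain the correct $\psi^2$-weights at $\mathcal{H}^+$ and at infinity. The paper's own proof is a terse three-sentence version of exactly this reasoning; your explicit verification of the scalar weight comparison (via the ratio $g(r)$ and its limits at $r\to M^+$, $r=2M$, $r\to\infty$) is a welcome elaboration but not a different argument.
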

\begin{proof}
From \eqref{Xa2psi} we have that the coef{}ficient of $\psi^{2}$ does not degenerate at the photon sphere. The weights at $\mathcal{H}^{+}$ and infinity are given by the Poincar\'e inequality. Commuting the wave equation with $T$ completes the proof of the proposition.
\end{proof}

\subsubsection{The Lagrangian Current $L_{\mu}^{f}$}
\label{sec:TheAuxiliaryCurrentsAnd}
In order to retrieve the remaining derivatives we consider the Lagrangian\footnote{The name Lagrangian comes from the fact that if $\psi$ satisfies the wave equation and $\mathcal{L}$ denotes the Lagrangian that corresponds to the wave equation, then $\mathcal{L}(\psi,d\psi,g^{-1})=g^{\mu\nu}\partial_{\mu}\psi\partial_{\nu}\psi=\operatorname{Div}(\psi\nabla_{\mu}\psi)$} current
\begin{equation*}
\begin{split}
L_{\mu}^{f}=f\psi\nabla_{\mu}\psi,
\end{split}
\end{equation*}
where $f=\frac{1}{r^{3}}D^{\frac{3}{2}}.$
\begin{proposition}
There exists a positive constant $C$ which depends only on $M$ such that for all solutions $\psi$ of the wave equation which are supported on the frequencies $l\geq 1$ we have
\begin{equation*}
\begin{split}
\int_{\mathbb{S}^{2}}{\left(\frac{\sqrt{D}}{r^{3}}\left(\partial_{t}\psi\right)^{2}+\frac{\sqrt{D}}{2r^{3}}\left(\partial_{r^{*}}\psi\right)^{2}+\frac{D^{3/2}}{r^{3}}\left|\nabb\psi\right|^{2}\right)}
\leq \int_{\mathbb{S}^{2}}{\left(\operatorname{Div}(L_{\mu}^{f})+C\sum_{i=0}^{1}{K^{X^{\a},2}\left[T^{i}\psi\right]}\right)}.
\label{rparagogos}
\end{split}
\end{equation*}
\label{rparagogosprop}
\end{proposition}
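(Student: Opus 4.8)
The plan is to compute $\operatorname{Div}(L^{f}_{\mu})$ for $L^{f}_{\mu}=f\psi\nabla_{\mu}\psi$ directly, and then dispose of the terms of the wrong sign by feeding them into the two estimates already established for $K^{X^{\a},2}$ (Propositions \ref{xa2prop} and \ref{mesaio}). Using the product rule, $\nabla_{\mu}\psi=\partial_{\mu}\psi$ for the scalar $\psi$, and $\Box_{g}\psi=0$, I would first establish
\[
\operatorname{Div}(L^{f}_{\mu})=(\nabla^{\mu}f)\,\psi\,\partial_{\mu}\psi+f\,\nabla^{\mu}\psi\,\nabla_{\mu}\psi ,
\]
the term $f\psi\,\Box_{g}\psi$ dropping out. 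In the $(t,r^{*})$ system one has $\nabla^{\mu}\psi\nabla_{\mu}\psi=-\frac{1}{D}(\partial_{t}\psi)^{2}+\frac{1}{D}(\partial_{r^{*}}\psi)^{2}+|\nabb\psi|^{2}$, so with $f=D^{3/2}/r^{3}$ the middle term equals exactly $-\frac{\sqrt{D}}{r^{3}}(\partial_{t}\psi)^{2}+\frac{\sqrt{D}}{r^{3}}(\partial_{r^{*}}\psi)^{2}+\frac{D^{3/2}}{r^{3}}|\nabb\psi|^{2}$.

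Two observations make this favourable. The angular term matches the left-hand side of the claim exactly, and the coefficient of $(\partial_{r^{*}}\psi)^{2}$ produced here, $\frac{\sqrt{D}}{r^{3}}$, is \emph{twice} the one demanded, leaving a surplus $\frac{\sqrt{D}}{2r^{3}}(\partial_{r^{*}}\psi)^{2}$ to play with. The genuine obstructions are therefore only: (i) the coefficient of $(\partial_{t}\psi)^{2}$ has the \emph{wrong} sign, namely $-\frac{\sqrt{D}}{r^{3}}$ against the required $+\frac{\sqrt{D}}{r^{3}}$; and (ii) the cross term $(\nabla^{\mu}f)\psi\partial_{\mu}\psi$. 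For (i), after transposing, one must control $\frac{2\sqrt{D}}{r^{3}}(\partial_{t}\psi)^{2}=\frac{2(r-M)}{r^{4}}(\partial_{t}\psi)^{2}$, using $\sqrt{D}=1-M/r$. This is precisely the quantity bounded by Proposition \ref{mesaio}, and it is here that commuting with $T$, hence the $i=1$ summand in $\sum_{i=0}^{1}K^{X^{\a},2}[T^{i}\psi]$, becomes unavoidable.

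For (ii), I would evaluate $\nabla^{r^{*}}f=\frac{1}{D}\frac{df}{dr^{*}}$ explicitly; using $D'=\frac{2M}{r^{2}}\sqrt{D}$ one finds the clean expression $(\nabla^{\mu}f)\psi\partial_{\mu}\psi=\frac{3D(2M-r)}{r^{5}}\psi\,\partial_{r^{*}}\psi$, which conveniently vanishes at the photon sphere. A weighted Cauchy--Schwarz (with parameter chosen so the derivative part costs $\frac{\sqrt{D}}{4r^{3}}(\partial_{r^{*}}\psi)^{2}$, absorbed into the surplus above) splits this into a harmless $(\partial_{r^{*}}\psi)^{2}$ piece and a zeroth-order piece with weight $\sim\frac{D^{3/2}(2M-r)^{2}}{r^{7}}\psi^{2}$.

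The point that requires the most care, and which I expect to be the only delicate step, is showing that this leftover $\psi^{2}$ weight is dominated by a weight already controlled: since $\frac{D^{3/2}(2M-r)^{2}/r^{7}}{\sqrt{D}/r^{3}}=\frac{D(2M-r)^{2}}{r^{4}}$ is uniformly bounded on $[M,+\infty)$ (it vanishes both at $r=M$, where the horizon degeneracy of $D$ wins, and as $r\to+\infty$), this term is $\lesssim\frac{\sqrt{D}}{r^{3}}\psi^{2}$, which is again supplied by Proposition \ref{mesaio} (or, alternatively, by the $\frac{1}{D((r^{*})^{2}+1)^{2}}\psi^{2}$ bound of Proposition \ref{xa2prop}). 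Collecting the estimates, absorbing all constants into a single $C$, and integrating over $\mathbb{S}^{2}$ then yields the stated inequality; everything beyond the boundedness check of the comparison weight is routine bookkeeping.
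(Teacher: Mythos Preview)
Your proposal is correct and follows essentially the same route as the paper's proof: compute $\operatorname{Div}(L^{f}_{\mu})=f\nabla^{\mu}\psi\nabla_{\mu}\psi+(\nabla^{\mu}f)\psi\partial_{\mu}\psi$, note the first piece supplies the angular and $\partial_{r^{*}}$ terms with a surplus while producing a wrong-sign $(\partial_{t}\psi)^{2}$, handle the cross term by Cauchy--Schwarz, and feed both the $(\partial_{t}\psi)^{2}$ and $\psi^{2}$ leftovers into Proposition~\ref{mesaio}. The paper uses a slightly cruder Cauchy--Schwarz splitting (with weights $\frac{D}{r^{4}}$ on both pieces, choosing $\epsilon$ so that $\epsilon\sqrt{D}<r/2$) rather than your exact computation of $\nabla^{\mu}f$, but the logic is identical.
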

\begin{proof}
We have
\begin{equation*}
\begin{split}
\operatorname{Div}(L_{\mu}^{f})=&f\nabla^{\mu}\psi\nabla_{\mu}\psi +\nabla^{\mu}f\psi\nabla_{\mu}\psi \\
\geq &\frac{\sqrt{D}}{r^{3}}\left(\partial_{r^{*}}\psi\right)^{2}+\frac{D^{3/2}}{r^{3}}\left|\nabb\psi\right|^{2}-\frac{\sqrt{D}}{r^{3}}\left(\partial_{t}\psi\right)^{2}-\frac{1}{\epsilon}\frac{D}{r^{4}}\psi^{2}-\epsilon\frac{D}{r^{4}}(\partial_{r^{*}}\psi)^{2},
\end{split}
\end{equation*}
where $\epsilon >0$ is such that $\epsilon\sqrt{D}<\frac{r}{2}$. Therefore, in view of Proposition \ref{mesaio}, we have the required result.
\end{proof}

\begin{proposition}
There exists a positive constant $C$ which depends only on $M$ such that for all solutions $\psi$ of the wave equation which are supported on the frequencies $l\geq 1$ we have
\begin{equation*}
\begin{split}
&\int_{\mathbb{S}^{2}}{\left(\frac{\sqrt{D}}{r^{3}}\left(\partial_{t}\psi\right)^{2}+\frac{\sqrt{D}}{2r^{3}}\left(\partial_{r^{*}}\psi\right)^{2}+\frac{\sqrt{D}}{r}\left|\nabb\psi\right|^{2}+\frac{\sqrt{D}}{r^{3}}\psi^{2}\right)}
\leq \int_{\mathbb{S}^{2}}{\left(\operatorname{Div}(L_{\mu}^{f})+C\sum_{i=0}^{1}{K^{X^{\a},2}\left[T^{i}\psi\right]}\right)}.
\label{olaparagogos}
\end{split}
\end{equation*}
\label{finspacetimeprop}
\end{proposition}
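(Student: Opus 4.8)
The plan is to upgrade the angular weight of Proposition \ref{rparagogosprop} from $\frac{D^{3/2}}{r^{3}}$ to the non-degenerate $\frac{\sqrt{D}}{r}$ by feeding in the angular control already produced by $K^{X^{\a},2}$ in Proposition \ref{xa2prop}, and then to read off the new zeroth order term $\frac{\sqrt{D}}{r^{3}}\psi^{2}$ for free from the Poincar\'e inequality, using that $\psi$ is supported on $l\geq 1$. No new multiplier is needed: everything is assembled from Propositions \ref{rparagogosprop}, \ref{xa2prop} and \ref{poin}.

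The first step is a pointwise comparison of weights on $[M,+\infty)$. Since $\sqrt{D}=1-\frac{M}{r}$, we have $r-M=r\sqrt{D}$, so the angular weight of Proposition \ref{xa2prop} is $\frac{(r-M)(r-2M)^{2}}{r^{4}}=\frac{\sqrt{D}(r-2M)^{2}}{r^{3}}$. I would establish that there is $C=C(M)$ with
\begin{equation*}
\frac{\sqrt{D}}{r}\leq C\left(\frac{D^{3/2}}{r^{3}}+\frac{\sqrt{D}(r-2M)^{2}}{r^{3}}\right),
\end{equation*}
which after dividing by $\frac{\sqrt{D}}{r^{3}}$ is the elementary inequality $r^{2}\leq C\big(D+(r-2M)^{2}\big)$. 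The two terms on the right vanish only at $r=M$ and at $r=2M$ respectively, never simultaneously on $(M,+\infty)$, and the quotient $r^{2}/\big(D+(r-2M)^{2}\big)$ tends to $1$ as $r\to M^{+}$ and as $r\to+\infty$; being continuous with finite endpoint limits it is bounded, with its extremum attained at the photon sphere (where the constant is $16M^{2}$). This records the genuine complementarity of the two currents: at $r=2M$ the factor $(r-2M)^{2}$ is useless but $D$ is bounded away from zero, so the $\frac{D^{3/2}}{r^{3}}$ term carries the estimate, whereas near $\mathcal{H}^{+}$ and near infinity that term is too weak and the $K^{X^{\a},2}$ term (which there behaves like $\frac{\sqrt{D}}{r}$) takes over.

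Multiplying the comparison by $|\nabb\psi|^{2}$ and integrating over $\mathbb{S}^{2}$ bounds $\int_{\mathbb{S}^{2}}\frac{\sqrt{D}}{r}|\nabb\psi|^{2}$ by a constant times $\int_{\mathbb{S}^{2}}\frac{D^{3/2}}{r^{3}}|\nabb\psi|^{2}$ plus a constant times $\int_{\mathbb{S}^{2}}\frac{\sqrt{D}(r-2M)^{2}}{r^{3}}|\nabb\psi|^{2}$. The first of these is one of the (non-negative) terms on the left of Proposition \ref{rparagogosprop}, hence controlled by $\int_{\mathbb{S}^{2}}\big(\operatorname{Div}(L_{\mu}^{f})+C\sum_{i}K^{X^{\a},2}[T^{i}\psi]\big)$; the second is exactly the quantity estimated in Proposition \ref{xa2prop}, hence controlled by $C\int_{\mathbb{S}^{2}}K^{X^{\a},2}[\psi]$. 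The two derivative terms $\frac{\sqrt{D}}{r^{3}}(\partial_{t}\psi)^{2}$ and $\frac{\sqrt{D}}{2r^{3}}(\partial_{r^{*}}\psi)^{2}$ are taken directly from Proposition \ref{rparagogosprop}. Finally, since $\psi$ is supported on $l\geq 1$, Proposition \ref{poin} with $L=1$ gives $\frac{2}{r^{2}}\int_{\mathbb{S}^{2}}\psi^{2}\leq\int_{\mathbb{S}^{2}}|\nabb\psi|^{2}$, so that $\int_{\mathbb{S}^{2}}\frac{\sqrt{D}}{r^{3}}\psi^{2}\leq\f12\int_{\mathbb{S}^{2}}\frac{\sqrt{D}}{r}|\nabb\psi|^{2}$; the zeroth order term is therefore absorbed into the angular term already bounded, and collecting everything gives the asserted estimate.

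The main obstacle is precisely this non-uniformity across the two distinguished radii: neither current alone controls $\frac{\sqrt{D}}{r}|\nabb\psi|^{2}$ everywhere, since the $K^{X^{\a},2}$ weight degenerates at the photon sphere while, relative to $\frac{\sqrt{D}}{r}$, the Lagrangian weight $\frac{D^{3/2}}{r^{3}}$ degenerates at $\mathcal{H}^{+}$ and at infinity, so the verification that $r^{2}/\big(D+(r-2M)^{2}\big)$ is genuinely bounded is the crux. Once the weights are matched the remaining assembly is routine. I note that this procedure produces a constant in front of $\operatorname{Div}(L_{\mu}^{f})$, which is harmless: $C\operatorname{Div}(L_{\mu}^{f})=\operatorname{Div}(CL_{\mu}^{f})$ is a Lagrangian current of the same form, and upon integration over $\mathcal{R}(0,\tau)$ it contributes only boundary terms, controlled by the conserved flux of $T$.
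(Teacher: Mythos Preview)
Your proof is correct and is precisely the argument the paper has in mind when it writes ``Immediate from Propositions \ref{xa2prop} and \ref{rparagogosprop}.'' You have supplied the details: the elementary weight comparison $r^{2}\leq C\bigl(D+(r-2M)^{2}\bigr)$ is exactly what is needed to merge the angular control from the two propositions, and your analysis of where each piece is effective (Lagrangian current at the photon sphere, $K^{X^{\a},2}$ near $\mathcal{H}^{+}$ and at infinity) is accurate. One small remark: your Poincar\'e step for the zeroth order term is fine, but note that $\frac{\sqrt{D}}{r^{3}}=\frac{r-M}{r^{4}}$, so the $\psi^{2}$ term is already recorded in Proposition \ref{mesaio} (used inside the proof of Proposition \ref{rparagogosprop}) and could be read off directly; your route via the angular term is an equally valid shortcut. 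Your observation about the constant in front of $\operatorname{Div}(L_{\mu}^{f})$ is also correct and harmless, since $f$ may be rescaled.
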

\begin{proof}
Immediate from Propositions \ref{xa2prop} and \ref{rparagogosprop}.
\end{proof}

\subsubsection{The Current $J_{\mu}^{X^{d},1}$}
\label{sec:TheCurrentJMuXD1}

In case we allow some degeneracy at the photon sphere  we can  obtain similar estimates without commuting the wave equation with $T$. This will be very useful for estimating error terms in spacetime regions which do not contain the photon sphere. We define the function $f^{d}$ such that 
\begin{equation*}
\begin{split}
\left(f^{d}\right)'=\frac{1}{(r^{*})^{2}+1},\ \ \ \ f^{d}(r^{*}=0)=0.
\end{split}
\end{equation*}
\begin{proposition}
There exists a positive constant $C$ which depends only on $M$ such that for all solutions $\psi$ of the wave equation which are supported on the frequencies $l\geq 1$ we have
\begin{equation*}
\begin{split}
\frac{1}{C}\int_{\mathbb{S}^{2}}{\left(\frac{1}{r^{2}}(\partial_{r^{*}}\psi)^{2}+\frac{P\cdot (r-2M)}{r^{4}}\left|\nabb\psi\right|^{2}\right)}\leq \int_{\mathbb{S}^{2}}{\left(K^{X^{d},1}[\psi]+CK^{X^{\a},2}[\psi]\right)},
\end{split}
\end{equation*}
where $X^{d}=f^{d}\partial_{r^{*}}$ and the current $J_{\mu}^{X,1}$ is as defined in Section \ref{sec:TheCurrentJXGMuAndEstimatesForKXG}.
\label{nocomm1prop}
\end{proposition}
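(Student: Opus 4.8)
The plan is to substitute $f=f^{d}$ into the general formula for $K^{X,1}$ derived in Section \ref{sec:TheCurrentJXGMuAndEstimatesForKXG} and read off its three terms. Writing $G^{d}$ for the function given by \eqref{G} with $f=f^{d}$, we have
\begin{equation*}
K^{X^{d},1}[\psi]=\frac{(f^{d})'}{D}(\pa_{r^{*}}\psi)^{2}+\frac{f^{d}\cdot P}{r^{3}}\left|\nabb\psi\right|^{2}-(\Box_{g}G^{d})\psi^{2}.
\end{equation*}
First I would observe that the first two terms already carry the correct sign: since $(f^{d})'=\frac{1}{(r^{*})^{2}+1}>0$ the coefficient of $(\pa_{r^{*}}\psi)^{2}$ is positive, and since $f^{d}$ is strictly increasing and vanishes at $r^{*}=0$ (the photon sphere $r=2M$), while $P=(r-M)(r-2M)$ also changes sign there, the product $f^{d}\cdot P$ is non-negative on all of $\{r\geq M\}$ and vanishes to second order at $r=2M$. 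Thus both terms are manifestly non-negative.

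Next I would establish the two pointwise comparisons that convert these terms into the left-hand side of the claim. For the derivative term the key point is $\frac{(f^{d})'}{D}=\frac{1}{D\left((r^{*})^{2}+1\right)}\gtrsim\frac{1}{r^{2}}$: at infinity $D\to 1$ and $r^{*}\sim r$, while near $\mathcal{H}^{+}$ the inverse-linear behaviour of $r^{*}$ in the \emph{extreme} case yields $D(r^{*})^{2}\sim\text{const}$, so the weight stays comparable to $r^{-2}$ (this is precisely where extremality enters, in contrast to the logarithmic subextreme case). For the angular term one checks $|f^{d}|\gtrsim |r-2M|/r$ — near the photon sphere $f^{d}\sim c\,(r-2M)$ and towards the two ends $f^{d}\to\pm\frac{\pi}{2}$ — so that $\frac{f^{d}P}{r^{3}}\gtrsim\frac{P\,(r-2M)}{r^{4}}$, both sides sharing the sign of $P$.

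The remaining, and main, difficulty is the indefinite zeroth-order term $-(\Box_{g}G^{d})\psi^{2}$, which carries a factor $\frac{1}{D}$ that is singular at $\mathcal{H}^{+}$. The plan is to control it by the $\psi^{2}$ bound already furnished by $K^{X^{\a},2}$. Using the identity \eqref{icomp} with $f^{d}$ in place of $f^{\a}$, the contribution $\left(\frac{D'}{2r^{2}}-\frac{D''}{2D r}\right)f^{d}=\left(1-\frac{M}{r}\right)\frac{3M}{r^{6}}f^{d}P\geq 0$ has a favourable sign and may be discarded. The dominant remaining piece is governed by $(f^{d})'''$, and the crucial observation is that the choice of $f^{d}$ forces the fast decay $|(f^{d})'''|=\frac{|6(r^{*})^{2}-2|}{\left((r^{*})^{2}+1\right)^{3}}\leq\frac{6}{\left((r^{*})^{2}+1\right)^{2}}$, while the two lower-order terms $\frac{1}{r}(f^{d})''$ and $\frac{D'}{D r}(f^{d})'$ are dominated by the same weight. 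Hence $|\Box_{g}G^{d}|\leq\frac{C}{D\left((r^{*})^{2}+1\right)^{2}}$, which is \emph{exactly} the weight appearing in Proposition \ref{xa2prop}. Integrating over $\mathbb{S}^{2}$ and applying that proposition gives $-\int_{\mathbb{S}^{2}}(\Box_{g}G^{d})\psi^{2}\geq -C\int_{\mathbb{S}^{2}}K^{X^{\a},2}[\psi]$; combined with the two comparisons above, a rearrangement yields the stated estimate. The heart of the argument, and the step I expect to require the most care, is this precise matching of the horizon-singular zeroth-order weight with the $\psi^{2}$-weight that $K^{X^{\a},2}$ is able to absorb.
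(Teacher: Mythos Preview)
Your treatment of the $(\partial_{r^{*}}\psi)^{2}$ and $\left|\nabb\psi\right|^{2}$ coefficients is fine and matches the paper. The gap is in the zeroth-order term: the pointwise bound
\[
\left|\Box_{g}G^{d}\right|\leq \frac{C}{D\left((r^{*})^{2}+1\right)^{2}}
\]
is \emph{false} near $\mathcal{H}^{+}$. The offending term is $\frac{D'}{D\, r}(f^{d})'$. Since $D=(1-M/r)^{2}$ vanishes to second order while $D'$ vanishes only to first order, one has $\frac{D'}{D}=\frac{2M}{r(r-M)}$, so
\[
\frac{D'}{D\, r}(f^{d})'=\frac{2M}{r^{2}(r-M)\left((r^{*})^{2}+1\right)}.
\]
Near the horizon $r^{*}\sim -\frac{M^{2}}{r-M}$ gives $(r^{*})^{2}+1\sim\frac{M^{4}}{(r-M)^{2}}$, hence this term is $\sim\frac{2(r-M)}{M^{5}}$, i.e.\ vanishes to \emph{first} order. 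Your claimed upper bound, however, satisfies $\frac{1}{D\left((r^{*})^{2}+1\right)^{2}}\sim\frac{(r-M)^{2}}{M^{6}}$, i.e.\ vanishes to \emph{second} order. So this ``lower-order term'' is not dominated by that weight, and the $\psi^{2}$-part of Proposition~\ref{xa2prop} alone cannot absorb it.

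The repair is to use the \emph{other} term that Proposition~\ref{xa2prop} supplies: the angular weight $\frac{(r-M)(r-2M)^{2}}{r^{4}}\left|\nabb\psi\right|^{2}$. For $l\geq 1$ the Poincar\'e inequality converts this into a $\psi^{2}$-weight $\frac{2(r-M)(r-2M)^{2}}{r^{6}}$, which vanishes to first order at $\mathcal{H}^{+}$ and is therefore large enough to absorb $\frac{D'}{D\, r}(f^{d})'$ there. One then combines this with the direct $\psi^{2}$-weight $\frac{1}{D\left((r^{*})^{2}+1\right)^{2}}$ (needed near the photon sphere, where the Poincar\'e weight degenerates). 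This is precisely why the paper's proof records that the coefficient of $\psi^{2}$ in $K^{X^{d},1}$ ``vanishes to first order on $\mathcal{H}^{+}$'' before invoking Proposition~\ref{xa2prop}: matching first-order against first-order is the point, and it requires both pieces of that proposition, not just the $\psi^{2}$ piece.
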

\begin{proof}
Note that the coef{}ficient of $\psi^{2}$ in $K^{X^{d},1}$ vanishes to first order on $\mathcal{H}^{+}$ (see also Lemma \ref{rnrstar}) and behaves like $\frac{1}{r^{4}}$ for large $r$. Note also that the coefficient of $(\partial_{r^{*}}\psi)^{2}$ converges to $M^{2}$ (see again Lemma \ref{rnrstar}). The result now follows from Proposition \ref{xa2prop}.
\end{proof}
In order to retrieve the $\partial_{t}$-derivative we introduce the current
$L_{\mu}^{h^{d}}=h^{d}\psi\nabla_{\mu}\psi,$ where $h^{d}$ is such that:
\begin{equation*}
\begin{split}
& h^{d}=-\frac{1}{(r^{*})^{2}+1} \text{ for } M\leq r\leq r_{0}<2M,\ h^{d}< 0 \text{ for } r_{0}< r <2M,\\
&h^{d}=0 \text{ for } r=2M \text{ to second order},\  h^{d}<0 \text{ for } 2M<r\leq r_{1},\ h^{d}=-\frac{1}{r^{2}} \text{ for } r_{1}\leq r.\\
\end{split}
\end{equation*}
\begin{proposition}
There exists a positive constant $C$ which depends only on $M$ such that for all solutions $\psi$ of the wave equation which are supported on the frequencies $l\geq 1$ we have
\begin{equation*}
\begin{split}
&\frac{1}{C}\int_{\mathbb{S}^{2}}{\left(\frac{P\cdot (r-2M)}{r^{5}}(\partial_{t}\psi)^{2}+\frac{1}{r^{2}}(\partial_{r^{*}}\psi)^{2}+\frac{P\cdot (r-2M)}{r^{4}}\left|\nabb\psi\right|^{2}\right)}\\&\ \ \ \ \ \ \ \ \ \ \ \ \ \ \leq \int_{\mathbb{S}^{2}}{\left(\operatorname{Div}(L_{\mu}^{h^{d}})+K^{X^{d},1}[\psi]+CK^{X^{\a},2}[\psi]\right)}.
\end{split}
\end{equation*}
\label{nocomm2prop}
\end{proposition}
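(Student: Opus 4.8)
The plan is to mirror the proof of Proposition \ref{rparagogosprop}: compute the divergence of the Lagrangian current and exploit the prescribed sign of $h^{d}$ to manufacture the missing $\partial_{t}$-derivative. Since $\psi$ solves the wave equation the second order term disappears and
\begin{equation*}
\operatorname{Div}(L_{\mu}^{h^{d}})=h^{d}\nabla^{\mu}\psi\nabla_{\mu}\psi+(\nabla^{\mu}h^{d})\psi\nabla_{\mu}\psi .
\end{equation*}
Working in the $(t,r^{*})$ system, where $\nabla^{\mu}\psi\nabla_{\mu}\psi=-\frac{1}{D}(\partial_{t}\psi)^{2}+\frac{1}{D}(\partial_{r^{*}}\psi)^{2}+\left|\nabb\psi\right|^{2}$ and, since $h^{d}=h^{d}(r^{*})$, $\nabla^{\mu}h^{d}\nabla_{\mu}\psi=\frac{(h^{d})'}{D}\partial_{r^{*}}\psi$, I would rearrange using $h^{d}\leq 0$ into
\begin{equation*}
\frac{|h^{d}|}{D}(\partial_{t}\psi)^{2}=\operatorname{Div}(L_{\mu}^{h^{d}})+\frac{|h^{d}|}{D}(\partial_{r^{*}}\psi)^{2}+|h^{d}|\left|\nabb\psi\right|^{2}-\frac{(h^{d})'}{D}\psi\,\partial_{r^{*}}\psi .
\end{equation*}
This is the whole point of choosing $h^{d}<0$: the otherwise indefinite $(\partial_{t}\psi)^{2}$ term now carries a positive coefficient, and the remaining three terms on the right are to be bounded by the quantities already controlled in Propositions \ref{nocomm1prop} and \ref{xa2prop}.

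The next step is to establish a two-sided weight comparison for $h^{d}$. For the left hand side I would verify the lower bound $\frac{|h^{d}|}{D}\geq\frac{1}{C}\frac{P\cdot(r-2M)}{r^{5}}$, which produces the desired $\partial_{t}$-weight after integration over $\mathbb{S}^{2}$. For the two genuinely positive error terms I would use the upper bounds $\frac{|h^{d}|}{D}\lesssim\frac{1}{r^{2}}$ and $|h^{d}|\lesssim\frac{P\cdot(r-2M)}{r^{4}}$, so that $\int_{\mathbb{S}^{2}}\big(\frac{|h^{d}|}{D}(\partial_{r^{*}}\psi)^{2}+|h^{d}|\left|\nabb\psi\right|^{2}\big)$ is dominated by the left hand side of Proposition \ref{nocomm1prop} and hence by $\int_{\mathbb{S}^{2}}(K^{X^{d},1}[\psi]+CK^{X^{\a},2}[\psi])$. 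The cross term I would split by Young's inequality,
\begin{equation*}
\left|\frac{(h^{d})'}{D}\psi\,\partial_{r^{*}}\psi\right|\leq\epsilon\,\frac{1}{r^{2}}(\partial_{r^{*}}\psi)^{2}+C_{\epsilon}\,\frac{r^{2}\big((h^{d})'\big)^{2}}{D^{2}}\,\psi^{2},
\end{equation*}
absorbing the first piece into Proposition \ref{nocomm1prop} and matching the coefficient of $\psi^{2}$ against the weight $\frac{1}{D}\frac{1}{((r^{*})^{2}+1)^{2}}$ of Proposition \ref{xa2prop}. Finally, adding the $(\partial_{r^{*}}\psi)^{2}$ and $\left|\nabb\psi\right|^{2}$ contributions to the target directly from Proposition \ref{nocomm1prop} completes the inequality.

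The delicate part is entirely contained in the weight comparisons, and the two regions to watch are $\mathcal{H}^{+}$ and the photon sphere. Near $\mathcal{H}^{+}$ the coordinate $r^{*}$ is inverse-linear, $r^{*}\sim-\frac{M^{2}}{r-M}$, so $|h^{d}|=\frac{1}{(r^{*})^{2}+1}\sim\frac{(r-M)^{2}}{M^{4}}$ and the blow-up of $\frac{1}{D}=\frac{r^{2}}{(r-M)^{2}}$ cancels exactly, giving $\frac{|h^{d}|}{D}\to\frac{1}{M^{2}}$; since the target weight vanishes to first order in $(r-M)$ there, the lower bound holds trivially while the upper bound $\frac{|h^{d}|}{D}\lesssim\frac{1}{r^{2}}$ is saturated but valid, and one checks that $\frac{r^{2}((h^{d})')^{2}}{D^{2}}\sim\frac{1}{((r^{*})^{2}+1)^{2}}$ so that the Young error sits exactly at the level handled by Proposition \ref{xa2prop}. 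At the photon sphere both $h^{d}$ and the target weights vanish to exactly second order, so the prescribed second-order vanishing of $h^{d}$ at $r=2M$ is precisely what keeps all comparisons finite; here the only thing to confirm is that the leading coefficients line up so that $\frac{|h^{d}|}{D}$ does not degenerate faster than $\frac{P\cdot(r-2M)}{r^{5}}$. On the intervals $(r_{0},2M)$ and $(2M,r_{1})$ the function $h^{d}$ is bounded away from the degeneracies and the comparisons are elementary, while for $r\geq r_{1}$ the explicit form $h^{d}=-\frac{1}{r^{2}}$ gives $\frac{|h^{d}|}{D}=\frac{1}{(r-M)^{2}}\sim\frac{1}{r^{2}}$, matching both $\frac{P\cdot(r-2M)}{r^{5}}$ and $\frac{1}{r^{2}}$ at infinity.
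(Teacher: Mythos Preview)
Your approach is correct and is essentially the paper's own argument written out in full: compute $\operatorname{Div}(L_{\mu}^{h^{d}})$, use $h^{d}\leq 0$ to isolate $(\partial_{t}\psi)^{2}$ with a good sign, and absorb the $(\partial_{r^{*}}\psi)^{2}$, $|\nabb\psi|^{2}$ and cross terms by Cauchy--Schwarz together with Proposition~\ref{nocomm1prop} (the paper compresses all of this into two lines, noting only that $(h^{d})'/D$ vanishes to first order on $\mathcal{H}^{+}$). One small slip in your asymptotics: near $\mathcal{H}^{+}$ the Young-error coefficient $\frac{r^{2}((h^{d})')^{2}}{D^{2}}$ behaves like $(r^{*})^{-2}$, not $((r^{*})^{2}+1)^{-2}$; what you actually need---and what does hold---is the comparison $\frac{r^{2}((h^{d})')^{2}}{D^{2}}\lesssim\frac{1}{D}\frac{1}{((r^{*})^{2}+1)^{2}}$, since the $\psi^{2}$-weight supplied by Proposition~\ref{xa2prop} carries the extra $1/D$ factor.
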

\begin{proof}
We have as before
\begin{equation*}
\begin{split}
\operatorname{Div}(L_{\mu}^{h^{d}})=-\frac{h^{d}}{D}(\partial_{t}\psi)^{2}+\frac{h^{d}}{D}(\partial_{r^{*}}\psi)^{2}+h^{d}\left|\nabb\psi\right|^{2}+\frac{(h^{d})'}{D}\psi(\partial_{r^{*}}\psi).\\
\end{split}
\end{equation*}
Since the coef{}ficient of $\psi(\partial_{r^{*}}\psi)$ vanishes to first on the horizon, the Cauchy-Schwarz inequality and Proposition \ref{nocomm1prop} imply the result.
\end{proof}
Finally, we obtain:
\begin{proposition}
There exists a positive constant $C$ which depends only on $M$ such that for all solutions $\psi$ of the wave equation which are supported on the frequencies $l\geq 1$ we have
\begin{equation*}
\begin{split}
&\frac{1}{C}\int_{\mathbb{S}^{2}}{\left(\frac{P\cdot (r-2M)}{r^{5}}(\partial_{t}\psi)^{2}+\frac{1}{r^{2}}(\partial_{r^{*}}\psi)^{2}+\frac{P\cdot (r-2M)}{r^{4}}\left|\nabb\psi\right|^{2}+\frac{(r-M)}{r^{4}}\psi^{2}\right)}\\&\ \ \ \ \ \ \ \ \ \ \ \ \ \ \leq \int_{\mathbb{S}^{2}}{\left(\operatorname{Div}(L_{\mu}^{g^{d}})+K^{X^{d},1}[\psi]+CK^{X^{\a},2}[\psi]\right)}.
\end{split}
\end{equation*}
\label{nocomm3prop}
\end{proposition}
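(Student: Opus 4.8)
The plan is to take Proposition \ref{nocomm2prop} as the point of departure and simply adjoin the missing zeroth-order term $\frac{(r-M)}{r^{4}}\psi^{2}$ to its left-hand side. Proposition \ref{nocomm2prop} already delivers the three quadratic-in-derivative quantities of \ref{nocomm3prop} from $\operatorname{Div}(L_{\mu}^{h^{d}})+K^{X^{d},1}[\psi]+CK^{X^{\a},2}[\psi]$, and crucially does so \emph{without} commuting with $T$, so that only the degenerate photon-sphere weights appear. Hence the entire new content of the present proposition is the extraction of a non-degenerate $\psi^{2}$ estimate from the currents already standing on the right.

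First I would identify the source of the $\psi^{2}$ control as Proposition \ref{xa2prop}, whose right-hand side is a multiple of $\int_{\mathbb{S}^{2}}K^{X^{\a},2}[\psi]$ and whose left-hand side carries two independent positive pieces: the direct term $\frac{1}{D}\frac{1}{((r^{*})^{2}+1)^{2}}\psi^{2}$, which is non-degenerate at the photon sphere but vanishes to \emph{second} order at $\mathcal{H}^{+}$ and decays like $r^{-4}$; and the angular term $\frac{(r-M)(r-2M)^{2}}{r^{4}}\left|\nabb\psi\right|^{2}$. Since $\psi$ is supported on $l\geq 1$, I would apply the Poincar\'e inequality (Proposition \ref{poin}) to the latter, converting it into a $\psi^{2}$ term with weight comparable to $\frac{(r-M)(r-2M)^{2}}{r^{6}}$, which vanishes only to \emph{first} order at $\mathcal{H}^{+}$ and decays like $r^{-3}$ but degenerates at $r=2M$. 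Adding the two pieces produces a $\psi^{2}$ weight bounded below by $\frac{(r-M)}{r^{4}}$ on all of $[M,+\infty)$: the angular-Poincar\'e contribution supplies the correct first-order behaviour at the horizon and the sharp $r^{-3}$ decay at infinity, while the direct term rescues the photon sphere where the former degenerates.

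With this in hand the proof closes by bookkeeping. I would enlarge the constant in front of $K^{X^{\a},2}$ so that, beyond the portion needed in \ref{nocomm2prop} to control the derivative terms, a further independent copy of $\int_{\mathbb{S}^{2}}K^{X^{\a},2}$ is available; by the previous paragraph this copy dominates $\int_{\mathbb{S}^{2}}\frac{(r-M)}{r^{4}}\psi^{2}$. It remains to account for the indefinite zeroth-order terms carried by the currents themselves: the coefficient $-(\Box_{g}G^{d})\psi^{2}$ inside $K^{X^{d},1}$ (which, as recorded after \ref{nocomm1prop}, vanishes to first order at $\mathcal{H}^{+}$ and decays like $r^{-4}$ but has uncontrolled sign near $r=2M$), and the $\psi^{2}$ error produced by Cauchy-Schwarz on the cross term of $\operatorname{Div}(L_{\mu}^{g^{d}})$. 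The only reason to replace $L_{\mu}^{h^{d}}$ by the variant $L_{\mu}^{g^{d}}$ is to make room for a retained \emph{positive} $\psi^{2}$: its divergence is $g^{d}\big(-\frac{1}{D}(\partial_{t}\psi)^{2}+\frac{1}{D}(\partial_{r^{*}}\psi)^{2}+\left|\nabb\psi\right|^{2}\big)+\frac{(g^{d})'}{D}\psi(\partial_{r^{*}}\psi)$, so $-g^{d}/D>0$ supplies the degenerate $(\partial_{t}\psi)^{2}$ weight exactly as before, while $g^{d}$ is tuned near $r=2M$ and at infinity so that after Cauchy-Schwarz the cross-term error, of size $O((r-M)^{2})$ at $\mathcal{H}^{+}$ and $O(r^{-6})$ at infinity, is negligible against $\frac{(r-M)}{r^{4}}$ and is absorbed into the enlarged $K^{X^{\a},2}$.

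The main obstacle is the photon sphere. There the angular-Poincar\'e weight degenerates to second order and the $K^{X^{d},1}$ contribution $-(\Box_{g}G^{d})\psi^{2}$ may be negative, so the required lower bound $\frac{(r-M)}{r^{4}}\psi^{2}$, which is strictly positive at $r=2M$, can only be furnished by the direct term $\frac{1}{D}\frac{1}{((r^{*})^{2}+1)^{2}}\psi^{2}$ of Proposition \ref{xa2prop}. Checking that, once the constant is large enough, this single non-degenerate source outweighs all negative contributions in a fixed neighbourhood of $r=2M$, while the first-order vanishing at the horizon remains governed by the angular-Poincar\'e term rather than by this direct term, is the one place where the precise structure of extreme-case trapping, the algebraic identity recorded in \eqref{icomp}, is genuinely used.
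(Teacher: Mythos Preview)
Your core approach is correct and matches the paper's: the proof is essentially the one-liner ``Immediate from Propositions \ref{xa2prop} and \ref{nocomm2prop}.'' You correctly identify that the three derivative terms come wholesale from \ref{nocomm2prop}, and that the missing $\frac{(r-M)}{r^{4}}\psi^{2}$ is extracted from an additional copy of $K^{X^{\a},2}$ via \ref{xa2prop}; your weight analysis (Poincar\'e on the angular piece for the first-order horizon behaviour and $r^{-3}$ decay, the direct $\frac{1}{D((r^{*})^{2}+1)^{2}}$ piece for the photon sphere) is exactly the mechanism behind the proof of Proposition \ref{mesaio}, which already records this conclusion.

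Where you over-elaborate is in the discussion of $L_{\mu}^{g^{d}}$ versus $L_{\mu}^{h^{d}}$. This is simply a notational inconsistency in the paper, not a new current: no separate $g^{d}$ is ever defined, and the proof of Theorem \ref{degXprop} uses the same object. There is no need to ``make room for a retained positive $\psi^{2}$'' in the Lagrangian current, because the $\psi^{2}$ control comes entirely from $K^{X^{\a},2}$. Likewise, the indefinite $-(\Box_{g}G^{d})\psi^{2}$ inside $K^{X^{d},1}$ and the Cauchy--Schwarz error from the Lagrangian cross term are already absorbed in the proofs of Propositions \ref{nocomm1prop} and \ref{nocomm2prop}; you do not need to revisit them here. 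Once you cite those two propositions, enlarging the constant in front of $K^{X^{\a},2}$ is all that remains.
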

\begin{proof}
Immediate from Propositions \ref{xa2prop} and \ref{nocomm2prop}.
\end{proof}

\subsection{The Case $l=0$}
\label{sec:TheCaseLO}

In case $l=0$ the wave is not trapped. Indeed, if we consider the vector field $X^{0}=f^{0}\partial_{r^{*}}$ with $f^{0}=-\frac{1}{r^{3}}$  then we have: 
\begin{proposition}
For all spherically symmetric solutions $\psi$ of the wave equation  we have 
\begin{equation*}
\frac{1}{r^{4}}(\partial_{t}\psi)^{2}+\frac{5}{r^{4}}(\partial_{r^{*}}\psi)^{2}=K^{X^{0}}.
\end{equation*}
\label{1l0}
\end{proposition}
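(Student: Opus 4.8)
The plan is to compute $K^{X^0}$ directly from the general formula for $K^X$ stated in Section \ref{sec:TheSpacetimeTermKX}, specialised to $f = f^0 = -\frac{1}{r^3}$ and to spherically symmetric $\psi$. Recall that for $X = f(r^*)\partial_{r^*}$ we have
\begin{equation*}
K^{X}=\left(\frac{f'}{2D}+\frac{f}{r}\right)\left(\partial_{t}\psi\right)^{2}+\left(\frac{f'}{2D}-\frac{f}{r}\right)\left(\partial_{r^{*}}\psi\right)^{2}+\left(-\frac{f'+2fH}{2}\right)\left|\nabb\psi\right|^{2},
\end{equation*}
so the first thing I would do is observe that for $l=0$ (i.e.\ $\psi$ spherically symmetric) the angular term $\left|\nabb\psi\right|^2$ vanishes identically, so the coefficient $-\frac{f'+2fH}{2}$ plays no role whatsoever. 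This is precisely the sense in which ``$l=0$ is not trapped'': the troublesome angular contribution that obstructed positivity in the $l\geq 1$ analysis simply is not present, and there is no need to modify the current with zeroth-order terms as in Sections \ref{sec:TheCurrentJXGMuAndEstimatesForKXG}--\ref{sec:TheCurrentJX2MuAndEstimatesForKX2}.

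The remaining work is then a routine computation of the two surviving coefficients. I would first convert the $r$-derivatives to $r^*$-derivatives: since $\frac{\partial r}{\partial r^*} = D$, for $f^0 = -r^{-3}$ we get $f' = \frac{df^0}{dr^*} = D \cdot \frac{3}{r^4} = \frac{3D}{r^4}$. Substituting $f = -r^{-3}$ and $f' = 3Dr^{-4}$ into the coefficient of $(\partial_t\psi)^2$ gives
\begin{equation*}
\frac{f'}{2D}+\frac{f}{r} = \frac{3}{2r^4} - \frac{1}{r^4} = \frac{1}{2r^4},
\end{equation*}
which does not match the claimed coefficient $\frac{1}{r^4}$; so I would recheck the normalisation of $f'$ (plausibly $f'$ here denotes $\frac{df}{dr}$ rather than $\frac{df}{dr^*}$, or there is a factor-of-two convention in the definition of $K^X$), and similarly for the coefficient of $(\partial_{r^*}\psi)^2$, namely $\frac{f'}{2D}-\frac{f}{r}$, which I expect to reduce to $\frac{5}{2r^4}$ under the same convention that yields $\frac{1}{2r^4}$ above, and to $\frac{5}{r^4}$ once the convention is fixed to match the first coefficient.

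The main obstacle, then, is not conceptual but bookkeeping: getting the convention for $f'$ (whether differentiation is with respect to $r$ or $r^*$) and any overall factor in the definition of $K^X$ consistent so that the two coefficients come out as exactly $\frac{1}{r^4}$ and $\frac{5}{r^4}$. Once the convention is pinned down by demanding consistency across both terms, the identity follows by direct substitution, and the key structural point to emphasise is that equality (rather than merely inequality) holds because the angular term drops out and $f^0 = -r^{-3}$ has been chosen so that both surviving coefficients are manifestly positive for all $r \geq M$. This positivity is exactly what makes $K^{X^0}$ a usable, sign-definite bulk term without any commutation or modification, reflecting the absence of trapping in the spherically symmetric sector.
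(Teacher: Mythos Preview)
Your approach is exactly the paper's: its proof reads in full ``Immediate from the expression of $K^{X}$ and the above choice of $f=f^{0}$,'' and you have simply carried this out explicitly. The factor-of-two discrepancy you flagged is not a convention error on your part but a genuine inconsistency in the paper---with the stated convention $f'=\tfrac{df}{dr^{*}}$ and $f^{0}=-r^{-3}$ one indeed obtains coefficients $\tfrac{1}{2r^{4}}$ and $\tfrac{5}{2r^{4}}$, which is harmless since only the positivity and the $r^{-4}$ weight are used downstream.
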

\begin{proof}
Immediate from the expression of $K^{X}$ and the above choice of $f=f^{0}$.
\end{proof}

\subsection{The Boundary Terms}
\label{sec:TheBoundaryTerms}

The positivity of the bulk terms shown so far will be useful only if we can control the arising boundary terms.
\subsubsection{Estimates for $J_{\mu}^{X}n^{\mu}_{S}$}
\label{sec:EstimatesForJMuXNMuSigma}

\begin{proposition}
Let $X=f\partial_{r^{*}}$ where $f=f(r^{*})$ is bounded and  $S$ be either a $SO(3)$ invariant spacelike (that may cross $\mathcal{H}^{+}$) or a $SO(3)$ invariant null hypersurface. Then there exists a uniform constant $C$ that depends  on $M$, $S$ and the function $f$ such that for all $\psi$ we have
\begin{equation*}
\begin{split}
\left|\int_{S}{J^{X^{i}}_{\mu}[\psi]n^{\mu}_{S}}\right|\leq C\int_{S}{J_{\mu}^{T}[\psi]n^{\mu}_{S}}.
\end{split}
\end{equation*}
\label{boun1}
\end{proposition}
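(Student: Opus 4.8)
The plan is to reduce the bound on $J^{X}_{\mu}[\psi]n^{\mu}_{S}$ to a pointwise algebraic comparison between the integrand of $J^{X}$ and that of $J^{T}$, exploiting that $X = f\partial_{r^{*}}$ is tangential to the spheres of symmetry and that $f$ is bounded. The key observation is that on the horizon the degenerate energy $J^{T}_{\mu}n^{\mu}_{S}$ controls $(\partial_v\psi)^2$, $|\nabb\psi|^2$ and only the \emph{degenerate} combination $D(\partial_r\psi)^2$; the issue is therefore whether $J^{X}_{\mu}n^{\mu}_{S}$ contains a genuinely transversal derivative $(\partial_r\psi)^2$ with a weight that fails to vanish on $\mathcal{H}^{+}$.

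First I would compute $J^{X}_{\mu}n^{\mu}_{S} = \textbf{T}_{\mu\nu}[\psi]X^{\nu}n^{\mu}_S$ explicitly, using the expression for $\textbf{T}$ in terms of the derivatives of $\psi$ (the general formula from Appendix \ref{sec:TheHyperbolicityOfTheWaveEquation1}). Since $X = f\partial_{r^{*}}$ and $n_S$ is future-directed and uniformly timelike-or-null with bounded components (in the $(v,r)$ frame this is guaranteed by the assumptions on $\Sigma_0$ and hence on $S$, together with the relation \eqref{eq:rho} expressing $\partial_\rho$ in terms of $\partial_v,\partial_r$), each term in $\textbf{T}_{\mu\nu}X^{\nu}n^{\mu}_S$ is a product of a bounded geometric coefficient with a quadratic expression in $\partial_v\psi,\partial_{r^{*}}\psi$ and $|\nabb\psi|^2$. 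The crucial algebraic point is that, writing $\partial_{r^{*}} = D\,\partial_r$ away from $\mathcal{H}^{+}$ and noting $\partial_{r^{*}} = T$ on $\mathcal{H}^{+}$, the coefficient of any transversal contribution carries a factor $D$ (or $\sqrt{D}$ in half-powers). Consequently the transversal derivative never appears with a non-degenerate weight, and every term is pointwise controlled by the integrand $J^{T}_{\mu}n^{\mu}_S \sim (\partial_v\psi)^2 + D(\partial_r\psi)^2 + |\nabb\psi|^2$ up to a constant depending on $\sup|f|$, $M$ and $S$.

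After establishing the pointwise bound $|J^{X}_{\mu}n^{\mu}_S| \leq C\, J^{T}_{\mu}n^{\mu}_S$ (with $C$ absorbing the boundedness of $f$ and of the frame coefficients of $n_S$), the stated inequality follows by integrating over $S$ with respect to its induced volume form. The null case is obtained by continuity, exactly as in Proposition \ref{hyperb}: taking $n_S$ null is the boundary of the spacelike family, and since all bounds are uniform in the timelike parameter they pass to the limit. The main obstacle I anticipate is purely bookkeeping rather than conceptual: one must verify carefully that on a hypersurface $S$ \emph{crossing} $\mathcal{H}^{+}$ the normal $n_S$ has bounded components in the regular $(v,r)$ frame and that the contraction $\textbf{T}_{\mu\nu}X^{\nu}n^{\mu}_S$ produces no spurious $\frac{1}{D}$ singularity; this is where the tangency of $X$ to the symmetry spheres and the factor $D$ intrinsic to $\partial_{r^{*}}$ at the horizon do the essential work, ensuring that the apparent coordinate singularity of the $(t,r^{*})$ system is harmless for this quantity.
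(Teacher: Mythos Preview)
Your proposal is correct and follows essentially the same route as the paper: compute $J^{X}_{\mu}n^{\mu}_{S}$ explicitly in the regular $(v,r)$ frame, observe that every occurrence of the transversal derivative $\partial_{r}\psi$ carries at least one factor of $D$, and conclude the pointwise bound by $J^{T}_{\mu}n^{\mu}_{S}\sim(\partial_{v}\psi)^{2}+D(\partial_{r}\psi)^{2}+|\nabb\psi|^{2}$. One small point of care: rather than writing ``$\partial_{r^{*}}=D\partial_{r}$ away from $\mathcal{H}^{+}$'' (which mixes coordinate systems), it is cleaner to use the identity $\partial_{r^{*}}=\partial_{v}+D\partial_{r}$ in the $(v,r)$ frame, valid and regular everywhere including $\mathcal{H}^{+}$; this makes the required $D$--weights manifest in a single computation and obviates the separate continuity argument for the null case.
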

\begin{proof}

We work using the coordinate system $\left(v,r,\theta,\phi\right)$. First note that 
$X=f\partial_{r^{*}}=f\partial_{v}+f\cdot D\partial_{r}.$
Now
\begin{equation*}
\begin{split} 
J^{X}_{\mu}n^{\mu}_{S}&=\textbf{T}_{\mu\nu}(X)^{\nu}n^{\mu}_{S}=\textbf{T}_{\mu v}fn^{\mu}_{S}+\textbf{T}_{\mu r}f D n^{\mu}_{S}\\
&=\left(fn^{v}\right)\left(\partial_{v}\psi\right)^{2}+D f n^{v}\left(\partial_{v}\psi\right)\left(\partial_{r}\psi\right)+D\left[\frac{1}{2}Dfn^{v}-\frac{1}{2}fn^{r}-\frac{1}{2}Dfn^{v}+fn^{r}\right]\left(\partial_{r}\psi\right)^{2}\\
&\ \ \ \ +\left[\frac{1}{2}Dfn^{v}-\frac{1}{2}Dfn^{v}-\frac{1}{2}fn^{r}\right]\left|\nabb\psi\right|^{2}.
\end{split}
\end{equation*}
The result now follows from the boundedness of $f$.
\end{proof}

\subsubsection{Estimates for $J_{\mu}^{X^{\a},i}\,n_{S}^{\mu},\, i=1,2$}
\label{sec:EstimatesForJMuXA2NSigmaMu}

\begin{proposition}
There exists a uniform constant $C$ that depends  on $M$ and $S$ such that
\begin{equation*}
\begin{split}
\left|\int_{S}{J^{X^{\a},i}_{\mu}[\psi]n^{\mu}_{S}}\right|\leq C\int_{S}{J_{\mu}^{T}[\psi]n^{\mu}_{S}},\, i=1,2,
\end{split}
\end{equation*}
where $S$ is as in Proposition \ref{boun1}.
\label{boun2}
\end{proposition}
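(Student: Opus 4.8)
The plan is to split each current into its principal part $J^{X^\a}_\mu$ and the zeroth-order modifications, disposing of the former by the estimate already in hand and the latter by Cauchy--Schwarz combined with the First Hardy inequality. Recall from \eqref{modG} that $J^{X^\a,1}_\mu = J^{X^\a}_\mu + 2G\psi\nabla_\mu\psi - (\nabla_\mu G)\psi^2$, with $G$ given by \eqref{G}, and that $J^{X^\a,2}_\mu$ adds the further term $\frac{(f^\a)'}{D f^\a}\beta\psi^2 X^\a_\mu$. Since $f^\a$ is bounded, Proposition \ref{boun1} already yields $\left|\int_S J^{X^\a}_\mu n^\mu_S\right| \le C\int_S J^T_\mu n^\mu_S$, so the whole task reduces to controlling the contractions of the three zeroth-order terms with $n_S$.

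First I would record these contractions. Writing $n_S = n_S^v\partial_v + n_S^r\partial_r$ in the $(v,r)$ system, where $X^\a = f^\a\partial_v + f^\a D\partial_r$, one finds $\nabla_\mu\psi\, n^\mu_S = n_S^v\partial_v\psi + n_S^r\partial_r\psi$, and $(\nabla_\mu G)n^\mu_S = n_S^r\partial_r G$ because $G = G(r)$. Crucially, after the $D$-factors cancel in the metric pairing, $X^\a_\mu n^\mu_S = f^\a n_S^r$, so the $J^{X^\a,2}$-correction reduces to $\frac{(f^\a)'}{D}\beta\, n_S^r\psi^2$. Thus every extra term is a multiple of $\psi\partial_v\psi$, $\psi\partial_r\psi$ or $\psi^2$, with coefficients built from $G$, $\partial_r G$ and $\frac{(f^\a)'}{D}\beta$.

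The heart of the matter is the weight analysis of these coefficients, which relies on $(f^\a)' = (\a^2+x^2)^{-1}$ together with the inverse-linear behaviour of $r^*$ in the extreme case. Near $\mathcal{H}^+$ one has $r^*\to -\infty$ with $(r^*)^2 \sim M^4(r-M)^{-2}$, hence $(f^\a)' \sim (r^*)^{-2} \sim M^{-2}D$; so $G = \tfrac14(f^\a)' + \tfrac{f^\a D}{2r}$ vanishes like $D$ at the horizon, while for large $r$ it decays like $1/r$. Consequently $G^2$ and $G^2/D$ are both $\les r^{-2}$ everywhere (bounded, indeed $O(D)$, at $\mathcal{H}^+$ and $O(r^{-2})$ at infinity), and a parallel computation gives that the $\psi^2$-coefficients coming from $\partial_r G$ and from $\frac{(f^\a)'}{D}\beta$ are likewise $\les r^{-2}$, using $\beta\to 0$ at $\mathcal{H}^+$ and $\beta\sim D/r$ at infinity. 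With these in place I would apply Cauchy--Schwarz to the cross terms, $|2G n_S^v\psi\partial_v\psi| \le \ep(\partial_v\psi)^2 + C_\ep G^2\psi^2$ and $|2G n_S^r\psi\partial_r\psi| \le \ep D(\partial_r\psi)^2 + C_\ep \frac{G^2}{D}\psi^2$, the second being exactly where the vanishing of $G$ like $D$ keeps the factor $1/D$ harmless. The derivative factors $\ep[(\partial_v\psi)^2 + D(\partial_r\psi)^2]$ are absorbed into $J^T_\mu n^\mu_S$, and the surviving remainder is a multiple of $\frac{1}{r^2}\psi^2$.

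To close, I would invoke the First Hardy inequality (Proposition \ref{firsthardy}, whose proof adapts verbatim to any $S$ satisfying the assumed normal bounds): $\int_S \frac{1}{r^2}\psi^2 \le C\int_S D[(\partial_v\psi)^2 + (\partial_r\psi)^2] \le C\int_S J^T_\mu n^\mu_S$, the last step because $D\le 1$. On the null hypersurface $\mathcal{H}^+$ the argument collapses: there $n^r = 0$, so the $J^{X^\a,2}$-correction and the $(\nabla_\mu G)\psi^2$ term drop out while $G$ and $\beta$ vanish, leaving only the principal part governed by Proposition \ref{boun1}. The main obstacle is the weight verification of the preceding paragraph --- namely $(f^\a)'\sim D$ near $\mathcal{H}^+$ and the ensuing bounds $G^2,\, G^2/D \les r^{-2}$ and $\le r^{-2}$ for the remaining $\psi^2$-coefficients --- since it is precisely this structural feature of the extreme geometry that allows the degenerate $T$-flux to absorb the zeroth-order modifications.
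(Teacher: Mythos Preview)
Your argument is correct and follows essentially the same route as the paper: reduce to the zeroth-order modifications via Proposition~\ref{boun1}, establish that the relevant coefficients $rG/D$, $(f^\a)'/D$, and $r^2\partial_r G$ are bounded (the paper packages the key step $(f^\a)'\sim D$ near $\mathcal{H}^+$ as a separate Lemma~\ref{rnrstar}), and close with Cauchy--Schwarz and the First Hardy inequality. Your computation $X^\a_\mu n^\mu_S = f^\a n_S^r$ and the explicit null-case discussion are spelled out in more detail than the paper gives, but the substance is the same.
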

\begin{proof}
It suffices to prove
\begin{equation*}
\begin{split}
\left|\int_{S}{\left(2G^{\a}\psi\left(\nabla_{\mu}\psi\right)-\left(\nabla_{\mu}G^{\a}\right)\psi^{2}+\left(\frac{(f^{\a})'}{D}\beta (\partial_{r^{*}})_{\mu}\right)\psi^{2}\right)n^{\mu}_{S}}\,\right|\leq B\int_{S}{J_{\mu}^{T}[\psi]n^{\mu}_{S}}.
\end{split}
\end{equation*}
For we first prove the following lemma that is true only in the case of  extreme Reissner-Nordstr\"{o}m (and not in the subextreme range).
\begin{lemma}
The function
\begin{equation*}
\begin{split}
F=\frac{1}{D}\frac{1}{((r^{*})^{2}+1)}
\end{split}
\end{equation*}
is bounded in $\mathcal{R}\cup\mathcal{H}^{+}$.
\label{rnrstar}
\end{lemma}
\begin{proof}
For the tortoise coordinate $r^{*}$ we have
$r^{*}(r)=r+2M\ln(r-M)-\frac{M^{2}}{r-M}+C.$
Clearly,  $F\rightarrow 0$ as $r\rightarrow +\infty$. Moreover, in a neighbourhood of $\mathcal{H}^{+}$
\begin{equation*}
\begin{split}
F\sim\frac{r^{2}}{(r-M)^{2}}\frac{1}{\left[\frac{M^{4}}{(r-M)^{2}}+r^{2}+4M^{2}(\ln(r-M))^{2}\right]}\rightarrow M^{2}<\infty.
\end{split}
\end{equation*}
This implies the required result.
\end{proof}
An immediate collorary of this lemma is that the functions
$G_{1}^{\a}=r\frac{G^{\a}}{D} \text{  and  } \frac{(f^{\a})'}{D}$
are bounded in $\mathcal{R}\cup\mathcal{H}^{+}$.
Furthermore, for $M\ll r$ we have $(f^{\a})'\sim\frac{1}{r^{2}}$, $D\sim1$ and $\partial_{r}D\sim\frac{1}{r^{2}}$.
Therefore,
$G_{2}^{\a}=r^{2}\nabla_{r}G^{\a}$
is also bounded. Finally, $\nabla_{v}G^{\a}=0$.
The above bounds and the first Hardy inequality complete the proof of the proposition. 
\end{proof}

\subsection{A Degenerate $X$ Estimate}
\label{sec:ADegenateXEstimate}

We first obtain an estimate which does not lose derivatives but degenerates at the photon sphere.

\begin{theorem}
There exists a constant $C$ which depends  on $M$ and $\Sigma_{0}$ such that for all solutions $\psi$ of the wave equation we have
\begin{equation}
\begin{split}
\int_{\mathcal{R}(0,\tau)}\!\!\!{\left(\frac{1}{r^{4}}(\partial_{r^{*}}\psi)^{2}+\frac{(r-M) (r-2M)^{2}}{r^{7}}\left((\partial_{t}\psi)^{2}+\left|\nabb\psi\right|^{2}\right)\right)} \leq C\int_{\Sigma_{0}}{J_{\mu}^{T}[\psi]n^{\mu}_{\Sigma_{0}}}.
\end{split}
\label{degX}
\end{equation}
\label{degXprop}
\end{theorem}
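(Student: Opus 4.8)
The plan is to reduce \eqref{degX} to the spacetime propositions already established for the two frequency ranges, and then convert the resulting divergence identities into boundary terms controlled by the conserved $T$-flux. First I would decompose $\psi=\psi_0+\psi_{\geq 1}$ into its spherical mean $\psi_0$ (the $l=0$ part) and its higher-frequency part $\psi_{\geq 1}$ (supported on $l\geq 1$). Since the weights in \eqref{degX} depend only on $r$, orthogonality of the spherical harmonics splits the left-hand side as the sum of the corresponding integrals for $\psi_0$ and $\psi_{\geq 1}$, and likewise $\int_{\Sigma_0}J^T_\mu[\psi_0]n^\mu_{\Sigma_0}+\int_{\Sigma_0}J^T_\mu[\psi_{\geq 1}]n^\mu_{\Sigma_0}=\int_{\Sigma_0}J^T_\mu[\psi]n^\mu_{\Sigma_0}$. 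It therefore suffices to prove the estimate separately for each piece.

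For the $l\geq 1$ part I would invoke Proposition~\ref{nocomm3prop}, whose left-hand integrand dominates, term by term and up to a constant depending only on $M$, the integrand of \eqref{degX}. Indeed $P=(r-M)(r-2M)$, and for $r\geq M$ one has $\frac{1}{r^4}\leq M^{-2}\frac{1}{r^2}$, together with $\frac{(r-M)(r-2M)^2}{r^7}\leq M^{-2}\frac{P\cdot(r-2M)}{r^5}$ and $\frac{(r-M)(r-2M)^2}{r^7}\leq M^{-3}\frac{P\cdot(r-2M)}{r^4}$, so the desired (weaker) weights on $(\partial_t\psi)^2$, $(\partial_{r^*}\psi)^2$ and $|\nabb\psi|^2$ are all controlled by those of Proposition~\ref{nocomm3prop}. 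Integrating that pointwise-on-spheres inequality over $\mathcal{R}(0,\tau)$ bounds the left-hand side of \eqref{degX} for $\psi_{\geq1}$ by $\int_{\mathcal{R}(0,\tau)}\nabla^\mu\bigl(L^{g^d}_\mu+J^{X^d,1}_\mu+C\,J^{X^\alpha,2}_\mu\bigr)$, and since this is a pure divergence the continuity identity \eqref{div} turns it into boundary integrals over $\Sigma_0$, $\Sigma_\tau$ and $\mathcal{H}^+(0,\tau)$. The $l=0$ part is handled the same way but more cheaply using Proposition~\ref{1l0}: since $K^{X^0}=\frac{1}{r^4}(\partial_t\psi)^2+\frac{5}{r^4}(\partial_{r^*}\psi)^2$ and the $l=0$ integrand of \eqref{degX} (its angular term drops for a spherically symmetric function) is pointwise $\leq C\,K^{X^0}$ for $r\geq M$ — using $\frac{(r-M)(r-2M)^2}{r^7}\leq C r^{-4}$ — integrating $K^{X^0}=\nabla^\mu J^{X^0}_\mu$ over $\mathcal{R}(0,\tau)$ and applying \eqref{div} again reduces matters to the single flux of $J^{X^0}$.

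It remains to control all these boundary terms by $\int_{\Sigma_0}J^T_\mu n^\mu_{\Sigma_0}$, which I expect to be the main technical point. The $\Sigma_0$-fluxes of $J^{X^\alpha,i}$, $J^{X^d,1}$ and $J^{X^0}$ are bounded by $\int_{\Sigma_0}J^T_\mu n^\mu_{\Sigma_0}$ directly via Propositions~\ref{boun1} and \ref{boun2} (note $f^0=-r^{-3}$ is bounded for $r\geq M$, so Proposition~\ref{boun1} applies to $X^0$). For the fluxes over $\Sigma_\tau$ and over the null hypersurface $\mathcal{H}^+$ I would apply the same two propositions — which are stated both for spacelike hypersurfaces crossing $\mathcal{H}^+$ and for $SO(3)$-invariant null hypersurfaces — to obtain bounds by $\int_{\Sigma_\tau}J^T_\mu n^\mu_{\Sigma_\tau}$ and $\int_{\mathcal{H}^+}J^T_\mu n^\mu_{\mathcal{H}^+}$, and then absorb these into the initial flux through the $T$-conservation law, namely \eqref{boundT} and \eqref{t1}.

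The genuinely delicate contributions are the zeroth-order boundary terms from the Lagrangian currents appearing in Proposition~\ref{nocomm3prop}, of the form $L^{g^d}_\mu=g^d\psi\nabla_\mu\psi$. On each hypersurface $S$ one has $\int_S L^{g^d}_\mu n^\mu_S=\int_S g^d\,\psi\,(n_S\psi)$, which I would estimate by Cauchy--Schwarz as the product of a weighted energy term and a weighted $\int_S\psi^2$ term, and then control the latter by the energy using the First Hardy Inequality, Proposition~\ref{firsthardy}; the boundedness of $g^d$ (as well as of $(f^\alpha)'/D$ and the functions $G_1^\alpha,G_2^\alpha$ already exploited in the proof of Proposition~\ref{boun2}) is exactly what makes these estimates close, with no contribution surviving at $\mathcal{H}^+$ or at $i^0$ thanks to the decay assumption \eqref{condition}. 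Summing the $l=0$ and $l\geq 1$ contributions via orthogonality then yields \eqref{degX} with a constant $C$ depending only on $M$ and $\Sigma_0$.
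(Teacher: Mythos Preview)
Your proposal is correct and follows essentially the same approach as the paper: decompose $\psi=\psi_{0}+\psi_{\geq 1}$, apply Stokes' theorem to the combined current $J_{\mu}^{X^{d},1}+J_{\mu}^{X^{\a},2}+L_{\mu}^{g^{d}}$ for $\psi_{\geq 1}$ (invoking Proposition~\ref{nocomm3prop} for the bulk and Proposition~\ref{boun2} for the boundary), apply Stokes' theorem to $J_{\mu}^{X^{0}}$ for $\psi_{0}$ (invoking Propositions~\ref{1l0} and~\ref{boun1}), and add. Your explicit weight comparisons and your discussion of the Lagrangian boundary term via Cauchy--Schwarz and the first Hardy inequality are in fact more detailed than the paper's own proof, which simply cites Proposition~\ref{boun2} to cover all boundary contributions at once.
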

\begin{proof}
We first decompose $\psi$  as 
\begin{equation*}
\psi=\psi_{\geq 1}+\psi_{0}.
\end{equation*}
We apply Stokes' theorem for the current 
\begin{equation*}
J_{\mu}^{d}[\psi_{\geq 1}]=J_{\mu}^{X^{d},1}[\psi_{\geq 1}]+J_{\mu}^{X^{\a},2}[\psi_{\geq 1}]+L_{\mu}^{g^{d}}[\psi_{\geq 1}]
\end{equation*}
in the spacetime region $\mathcal{R}(0,\tau)$ and use Propositions \ref{nocomm3prop} and \ref{boun2}. We also apply Stokes' theorem in $\mathcal{R}(0,\tau)$ for the current $J_{\mu}^{X^{0}}[\psi_{0}]$ and use Proposition \ref{boun1} and by adding these two estimates we obtain the required result.

\end{proof}
\begin{remark} One can further improve the weights at infinity. Indeed, if we apply Stokes' theorem in the region $\left\{r\geq R\right\}$ for sufficiently large $R$ for the current
\begin{equation*}
J_{\mu}=J_{\mu}^{X^{f_{1}},1}+J_{\mu}^{X^{f_{{2}}}},
\end{equation*}
where $f_{1}=1-\frac{1}{r^{\delta}},\ f_{2}=\frac{\delta}{2+\delta}\frac{1}{r^{\delta}} \text{ and } \delta >0,$
then an easy calculation shows that there exists a constant $C_{\delta}>0$ such that for $r\geq R$
\begin{equation*}
\nabla^{\mu}J_{\mu}\geq C_{\delta}\left(r^{-1-\delta}(\partial_{t}\psi)^{2}+r^{-1-\delta}(\partial_{r^{*}}\psi)^{2}+r^{-1}\left|\nabb\psi\right|^{2}+r^{-3-\delta}\psi^{2}\right).
\end{equation*}
\label{infinity}
\end{remark}
The above remark and theorem and  the identity $\partial_{r^{*}}=\partial_{v}+D\partial_{r}$ imply statement (2) of Theorem \ref{th1} of Section \ref{sec:TheMainTheorems},

\subsection{A non-Degenerate $X$ Estimate}
\label{sec:ANonDegenerateXEstimate}

We  derive an $L^{2}$ estimate which does not degenerate at the photon sphere but requires higher regularity for $\psi$.
\begin{theorem}
There exists a constant $C$ which depends  on $M$ and $\Sigma_{0}$ such that for all solutions $\psi$ of the wave equation we have
\begin{equation}
\begin{split}
&\int_{\mathcal{R}(0,\tau)}\!\!\!{\left(\frac{\sqrt{D}}{r^{4}}\left(\partial_{t}\psi\right)^{2}+\frac{\sqrt{D}}{2r^{4}}\left(\partial_{r^{*}}\psi\right)^{2}+\frac{\sqrt{D}}{r}\left|\nabb\psi\right|^{2}\right)}
\leq C\int_{\Sigma_{0}}{\left(J_{\mu}^{T}[\psi]n^{\mu}_{\Sigma_{0}}+J_{\mu}^{T}[T\psi]n^{\mu}_{\Sigma_{0}}\right)}.
\label{x}
\end{split}
\end{equation}
\label{xtheo}
\end{theorem}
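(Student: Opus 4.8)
The plan is to integrate over the slab $\mathcal{R}(0,\tau)$ the pointwise-on-spheres estimates already at our disposal — Proposition \ref{finspacetimeprop} for the part of $\psi$ supported on $l\geq 1$ and Proposition \ref{1l0} for the spherically symmetric part — and then to bound the resulting boundary terms by the initial $T$-fluxes of $\psi$ and $T\psi$. First I would decompose $\psi=\psi_{0}+\psi_{\geq 1}$; since $T=\partial_{v}$ is an endomorphism of each eigenspace $E^{l}$ (Section \ref{sec:EllipticTheoryOnMathbbS2}) and distinct eigenspaces are $L^{2}$-orthogonal, both pieces solve the wave equation separately, and the $T$-fluxes of $\psi_{0}$, $\psi_{\geq 1}$, $T\psi_{0}$, $T\psi_{\geq 1}$ are each dominated by the corresponding fluxes of $\psi$ and $T\psi$. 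This orthogonality is what lets us recombine everything into the single right-hand side of \eqref{x} at the very end.

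For $\psi_{\geq 1}$, the left-hand side of \eqref{x} is pointwise dominated (using $r\geq M$, which converts the $r^{-3}$ weights into $r^{-4}$ ones up to the factor $M^{-1}$) by the left-hand side of Proposition \ref{finspacetimeprop}, so it suffices to integrate that proposition over $\mathcal{R}(0,\tau)$. Its right-hand side is $\operatorname{Div}(L_{\mu}^{f})$ with $f=D^{3/2}/r^{3}$ together with $C\sum_{i=0}^{1}K^{X^{\a},2}[T^{i}\psi_{\geq 1}]=C\sum_{i=0}^{1}\nabla^{\mu}J_{\mu}^{X^{\a},2}[T^{i}\psi_{\geq 1}]$. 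Applying the divergence identity \eqref{div} converts each of these into boundary integrals over $\Sigma_{0}$, $\Sigma_{\tau}$ and $\mathcal{H}^{+}$. The $K^{X^{\a},2}$ boundary terms are handled directly by Proposition \ref{boun2}, which bounds $\big|\int_{S}J^{X^{\a},2}_{\mu}[T^{i}\psi]n^{\mu}_{S}\big|$ by $\int_{S}J^{T}_{\mu}[T^{i}\psi]n^{\mu}_{S}$ on each of the three boundary pieces; the conservation law of Proposition \ref{ubdenergy}, applied to the solution $T^{i}\psi_{\geq 1}$, then bounds the $\Sigma_{\tau}$ and $\mathcal{H}^{+}$ contributions by the $\Sigma_{0}$ one.

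The step I expect to be the main obstacle is the Lagrangian boundary term $\int_{\mathcal{R}}\operatorname{Div}(L_{\mu}^{f})$, since $L_{\mu}^{f}=f\psi\nabla_{\mu}\psi$ is not sign-definite. Here the weight $f=D^{3/2}/r^{3}$ is engineered to cooperate: it vanishes to order $D^{3/2}$ on $\mathcal{H}^{+}$, so that the horizon integral $\int_{\mathcal{H}^{+}}f\psi(\partial_{v}\psi)$ drops out entirely. On $\Sigma_{0}$ and $\Sigma_{\tau}$ I would apply Cauchy--Schwarz to split $|f\,\psi\,(n\psi)|$ into a $\tfrac{1}{r^{2}}\psi^{2}$ piece and a derivative piece; the former is absorbed by the First Hardy inequality (Proposition \ref{firsthardy}) into $\int D[(\partial_{v}\psi)^{2}+(\partial_{r}\psi)^{2}]$, which is itself bounded by the $T$-flux, while for the latter the extra factor $D^{3/2}\leq D$ supplies exactly the degeneracy in $(\partial_{r}\psi)^{2}$ that matches the degenerate flux $J^{T}_{\mu}n^{\mu}\sim(\partial_{v}\psi)^{2}+D(\partial_{r}\psi)^{2}+|\nabb\psi|^{2}$. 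One final application of Proposition \ref{ubdenergy} bounds the $\Sigma_{\tau}$ contribution by the $\Sigma_{0}$ one, so that all boundary terms are controlled by $\int_{\Sigma_{0}}\big(J^{T}_{\mu}[\psi]n^{\mu}_{\Sigma_{0}}+J^{T}_{\mu}[T\psi]n^{\mu}_{\Sigma_{0}}\big)$.

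Finally, for $\psi_{0}$ I would integrate the exact identity of Proposition \ref{1l0} over $\mathcal{R}(0,\tau)$; since $|\nabb\psi_{0}|=0$ and $\sqrt D\leq 1$, the restriction of the left-hand side of \eqref{x} to $\psi_{0}$ is bounded by $K^{X^{0}}[\psi_{0}]$, whose spacetime integral equals the boundary terms of $J^{X^{0}}_{\mu}[\psi_{0}]$. These are controlled by Proposition \ref{boun1} (the multiplier $f^{0}=-1/r^{3}$ is bounded for $r\geq M$) together with the $T$-conservation law, and they require only the $T$-flux of $\psi_{0}$, with no commutation. Adding the $l=0$ and $l\geq 1$ estimates and using the $L^{2}$-orthogonality of the spherical harmonics to recombine the fluxes of $\psi_{0}$ and $\psi_{\geq 1}$ into those of $\psi$ and $T\psi$ yields \eqref{x}, exactly along the lines of the proof of Theorem \ref{degXprop}.
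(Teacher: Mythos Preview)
Your proposal is correct and follows essentially the same approach as the paper: decompose $\psi=\psi_{0}+\psi_{\geq 1}$, apply Stokes' theorem to the combined current $L_{\mu}^{f}+C\sum_{i=0}^{1}J_{\mu}^{X^{\a},2}[T^{i}\cdot]$ for $\psi_{\geq 1}$ (invoking Propositions \ref{finspacetimeprop} and \ref{boun2}), and to $J_{\mu}^{X^{0}}$ for $\psi_{0}$ (invoking Propositions \ref{1l0} and \ref{boun1}). Your treatment of the Lagrangian boundary term $L_{\mu}^{f}n^{\mu}$ via the vanishing of $f=D^{3/2}/r^{3}$ on $\mathcal{H}^{+}$ together with Cauchy--Schwarz and the First Hardy inequality on $\Sigma_{\tau}$ is exactly the detail the paper leaves implicit when it cites Proposition \ref{boun2}.
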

\begin{proof}
We again decompose  $\psi$ as 
$\psi=\psi_{\geq 1}+\psi_{0}$
and apply Stokes' theorem for the current 
\begin{equation*}
J_{\mu}[\psi_{\geq 1}]=L_{\mu}^{f}[\psi_{\geq 1}]+CJ_{\mu}^{X^{\a},2}[\psi_{\geq 1}]+CJ_{\mu}^{X^{\a},2}[T\psi_{\geq 1}]
\end{equation*}
in the spacetime region $\mathcal{R}(0,\tau)$, where $C$ is the constant of Proposition \ref{finspacetimeprop}, and use Propositions \ref{finspacetimeprop} and \ref{boun2}. Finally, we  apply Stokes' theorem in $\mathcal{R}(0,\tau)$ for the current $J_{\mu}^{X^{0}}[\psi_{0}]$ and use Proposition \ref{boun1}. Adding these two estimates completes the proof.
\end{proof}
The proof of statement (3) of Theorem \ref{th1} of Section \ref{sec:TheMainTheorems} is immediate from the above proposition, Theorem \ref{degXprop} and Remark \ref{infinity}.  
\begin{remark}
All the above estimates hold if $\Sigma_{\tau}$ is replaced by $\tilde{\Sigma}_{\tau}$. Indeed, the results of Section \ref{sec:TheBoundaryTerms}  allow us to bound the corresponding boundary terms in this case. Moreover, the additional boundary term on $\mathcal{I}^{+}$ can be bounded by the conserved $T$-flux.
\label{2ndkindx}
\end{remark}

\subsection{Zeroth Order Morawetz Estimate for $\psi$}
\label{sec:SpacetimeL2EstimateForPsi}

We now prove weighted $L^{2}$ estimates of the wave $\psi$ itself. In Section \ref{sec:NonnegativityOfK2}, we obtained such an estimate for $\psi_{\geq 1}$. Next we derive a similar  estimate for the zeroth spherical harmonic $\psi_{0}$.  We first prove the following lemma

\begin{lemma}
Fix $R>2M$. There exists a constant $C$ which depends  on $M$, $\Sigma_{0}$ and $R$ such that  for all spherically symmetric solutions $\psi$ of the wave equation 
\begin{equation*}
\int_{\left\{r=R\right\}\cap\mathcal{R}(0,\tau)}{(\partial_{t}\psi)^{2}+(\partial_{r^{*}}\psi)^{2}}\leq C_{R}\int_{\Sigma_{0}}{J^{T}_{\mu}[\psi]n^{\mu}_{\Sigma_{0}}}.
\end{equation*}
\label{lemmal0}
\end{lemma}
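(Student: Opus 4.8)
The plan is to run the spherically symmetric multiplier $X^{0}=f^{0}\partial_{r^{*}}$ with $f^{0}=-\frac{1}{r^{3}}$ of Proposition \ref{1l0} on a region truncated at $\{r=R\}$, and to read off the desired flux through the timelike cylinder $\{r=R\}$ as a boundary term in the divergence identity. The whole point is that, because $f^{0}<0$ and the spacetime term $K^{X^{0}}$ is non-negative (and, for $l=0$, non-degenerate), this flux appears with exactly the sign that lets us move it to the favourable side, while all remaining contributions are bounded by the conserved $T$-flux.

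Concretely, I would introduce the region $\mathcal{U}_{R}=\mathcal{R}(0,\tau)\cap\{M\leq r\leq R\}$, whose stratified boundary consists of the spacelike pieces $\Sigma_{0}\cap\{r\leq R\}$ and $\Sigma_{\tau}\cap\{r\leq R\}$, the null piece $\mathcal{H}^{+}\cap\mathcal{R}(0,\tau)$, and the timelike hypersurface $C_{R}=\{r=R\}\cap\mathcal{R}(0,\tau)$, with outward (increasing $r$) unit normal $\nu=\frac{1}{\sqrt{D}}\partial_{r^{*}}$. Applying the divergence identity \eqref{div}, adapted to $\mathcal{U}_{R}$ with the extra boundary $C_{R}$, to the current $J^{X^{0}}_{\mu}$ gives
\begin{equation*}
\int_{\Sigma_{0}\cap\{r\leq R\}}{J^{X^{0}}_{\mu}n^{\mu}}=\int_{\Sigma_{\tau}\cap\{r\leq R\}}{J^{X^{0}}_{\mu}n^{\mu}}+\int_{\mathcal{H}^{+}}{J^{X^{0}}_{\mu}n^{\mu}}+\int_{C_{R}}{J^{X^{0}}_{\mu}\nu^{\mu}}+\int_{\mathcal{U}_{R}}{K^{X^{0}}}.
\end{equation*}
The key step is the flux through $C_{R}$: for spherically symmetric $\psi$ one computes $\textbf{T}(\partial_{r^{*}},\partial_{r^{*}})=\frac{1}{2}\left((\partial_{t}\psi)^{2}+(\partial_{r^{*}}\psi)^{2}\right)$, whence
\begin{equation*}
J^{X^{0}}_{\mu}\nu^{\mu}=\frac{f^{0}}{\sqrt{D}}\,\textbf{T}(\partial_{r^{*}},\partial_{r^{*}})=-\frac{1}{2R^{3}\sqrt{D}}\left((\partial_{t}\psi)^{2}+(\partial_{r^{*}}\psi)^{2}\right)\leq 0 .
\end{equation*}

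It then remains to bound all the other terms, for which the precise signs in the identity above are immaterial: every term except the $C_{R}$-flux is controlled by $C\int_{\Sigma_{0}}J^{T}_{\mu}n^{\mu}_{\Sigma_{0}}$. Indeed, since $f^{0}$ is bounded on $\{r\geq M\}$, Proposition \ref{boun1} yields $\left|\int_{S}{J^{X^{0}}_{\mu}n^{\mu}}\right|\leq C\int_{S}{J^{T}_{\mu}n^{\mu}}$ on each of $\Sigma_{0}\cap\{r\leq R\}$, $\Sigma_{\tau}\cap\{r\leq R\}$ and $\mathcal{H}^{+}$; combined with the conservation law \eqref{t1} and Proposition \ref{ubdenergy}, each of these is $\leq C\int_{\Sigma_{0}}J^{T}_{\mu}n^{\mu}_{\Sigma_{0}}$. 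For the bulk term I would apply \eqref{div} once more to $J^{X^{0}}_{\mu}$ over all of $\mathcal{R}(0,\tau)$, obtaining $\int_{\mathcal{R}(0,\tau)}{K^{X^{0}}}=\int_{\Sigma_{0}}J^{X^{0}}_{\mu}n^{\mu}-\int_{\Sigma_{\tau}}J^{X^{0}}_{\mu}n^{\mu}-\int_{\mathcal{H}^{+}}J^{X^{0}}_{\mu}n^{\mu}\leq C\int_{\Sigma_{0}}J^{T}_{\mu}n^{\mu}_{\Sigma_{0}}$ by the same applications of Proposition \ref{boun1}; since $K^{X^{0}}\geq 0$ this also controls $\int_{\mathcal{U}_{R}}K^{X^{0}}$. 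Collecting these bounds, the non-negative quantity $-\int_{C_{R}}J^{X^{0}}_{\mu}\nu^{\mu}=\frac{1}{2R^{3}\sqrt{D(R)}}\int_{C_{R}}\left((\partial_{t}\psi)^{2}+(\partial_{r^{*}}\psi)^{2}\right)$ is $\leq C\int_{\Sigma_{0}}J^{T}_{\mu}n^{\mu}_{\Sigma_{0}}$; absorbing the fixed constant $2R^{3}\sqrt{D(R)}$ into $C_{R}$ finishes the proof.

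The only genuinely delicate point is the sign bookkeeping: one must verify that the already-fixed choice $f^{0}=-1/r^{3}$ renders the $C_{R}$-flux non-positive, so that it can be isolated as a controllable positive quantity. Everything else is routine, and crucially no trapping obstruction intervenes — for $l=0$ the current $K^{X^{0}}$ is positive definite everywhere (including the photon sphere), so its spacetime integral is bounded directly by the initial $T$-energy without any loss of derivatives.
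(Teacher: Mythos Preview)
Your proof is correct and follows essentially the same strategy as the paper: apply a radial multiplier on the truncated region $\mathcal{R}(0,\tau)\cap\{r\le R\}$ and read off the flux through $\{r=R\}$ as a boundary term, bounding all remaining contributions via Proposition~\ref{boun1} and the conserved $T$-flux. The only difference is cosmetic: the paper uses $X=\partial_{r^{*}}$ (i.e.\ $f\equiv 1$), for which $K^{X}$ is sign-indefinite and must be controlled on the spatially compact region via Proposition~\ref{1l0}, whereas you use $X^{0}=-r^{-3}\partial_{r^{*}}$ directly, so that $K^{X^{0}}\ge 0$ and the bulk is bounded by the global $X^{0}$ identity. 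Both routes invoke the same ingredients and the same boundary computation on $\{r=R\}$; your choice is arguably slightly cleaner.
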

\begin{proof}
Consider the region $\mathcal{F}(0,\tau)=\left\{\mathcal{R}(0,\tau)\cap\left\{M\leq r\leq R\right\}\right\}.$
By applying the vector field $X=\partial_{r^{*}}$ as a multiplier in the region $\mathcal{F}(0,\tau)$ we obtain
\begin{equation*}
\begin{split}
\int_{\mathcal{F}\cap\mathcal{H}^{+}}{J_{\mu}^{X}[\psi]n^{\mu}_{\mathcal{H}^{+}}}+\int_{\mathcal{F}}{K^{X}[\psi]}+\int_{\Sigma_{\tau}\cap\mathcal{F}}{J_{\mu}^{X}[\psi]n^{\mu}_{\Sigma_{\tau}}}+\int_{\left\{r=R\right\}\cap\mathcal{R}(0,\tau)}{J_{\mu}^{X}[\psi]n^{\mu}_{\mathcal{F}}}=\int_{\Sigma_{0}\cap\mathcal{F}}{J_{\mu}^{X}[\psi]n^{\mu}_{\Sigma_{0}}},
\end{split}
\end{equation*}
where $n^{\mu}_{\mathcal{F}}$ denotes the unit normal vector to $\left\{r=R\right\}$ pointing in the interior of $\mathcal{F}$. 
 \begin{figure}[H]
	\centering
		\includegraphics[scale=0.10]{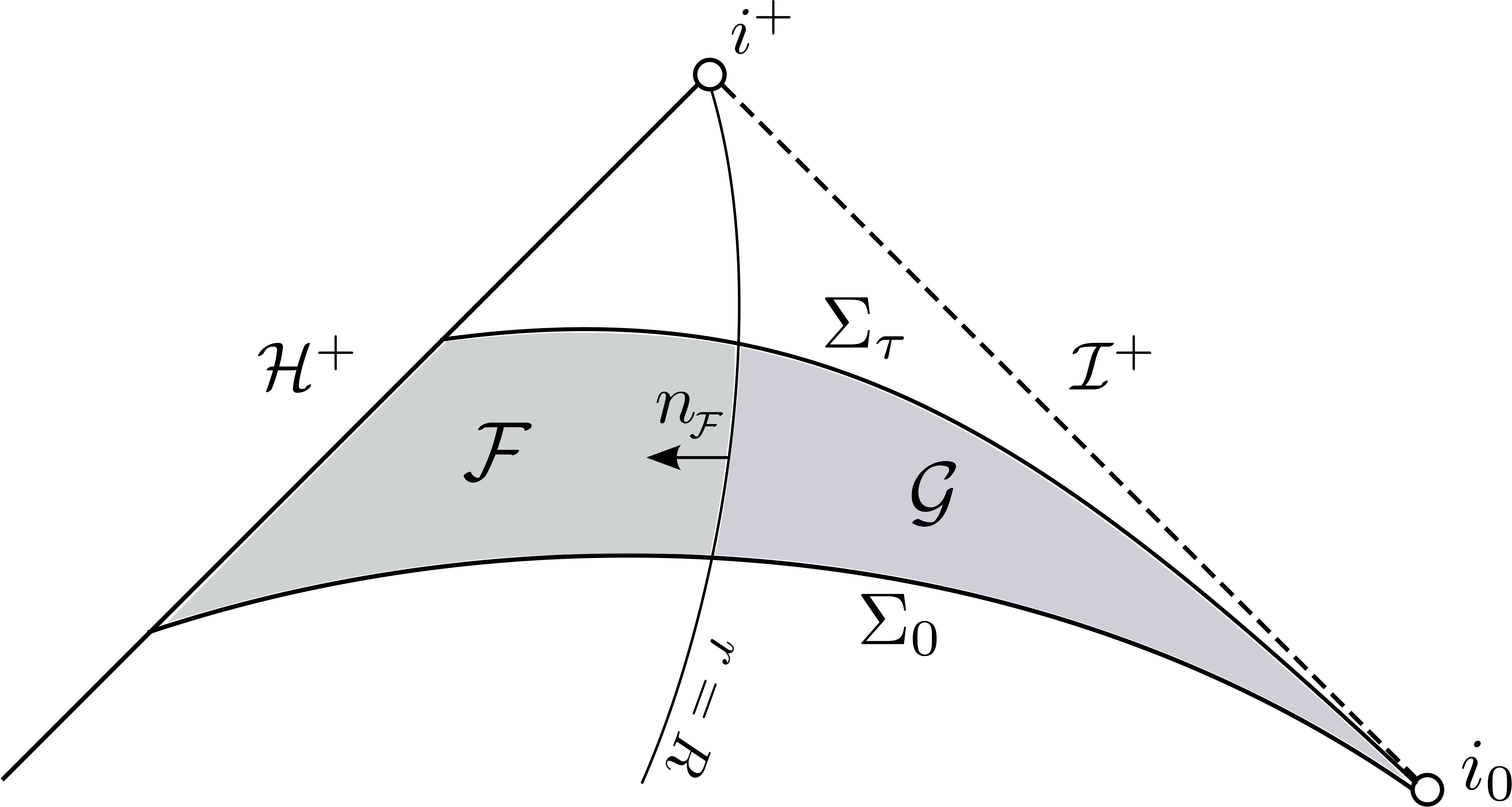}
	\label{fig:ernt5}
\end{figure}
In view of the spatial compactness of the region $\mathcal{F}$, the corresponding spacetime integral can be estimated using Propositions \ref{1l0} and \ref{boun1}. The boundary integrals over $\Sigma_{0}$ and $\Sigma_{\tau}$ can be estimated using Proposition \ref{boun1}. Moreover, for spherically symmetric waves $\psi$ we have $J_{\mu}^{X}[\psi]n^{\mu}_{\mathcal{F}}=-\frac{1}{2\sqrt{D}}\left((\partial_{t}\psi)^{2}+(\partial_{r^{*}}\psi)^{2}\right),$
which completes the proof.
\end{proof}
Consider now the region
$\mathcal{G}=\mathcal{R}\cap\left\{R\leq r\right\}.$
\begin{proposition}
Fix $R>2M$. There exists a constant $C$ which depends  on $M$, $\Sigma_{0}$ and $R$ such that  for all spherically symmetric solutions $\psi$ of the wave equation  
\begin{equation*}
\int_{\mathcal{G}}{\frac{1}{r^{4}}\psi^{2}} +\int_{\left\{r=R\right\}}{\psi^{2}}\leq C_{R}\int_{\Sigma_{0}}{J^{T}_{\mu}[\psi]n^{\mu}_{\Sigma_{0}}}.
\end{equation*}
\label{fl0}
\end{proposition}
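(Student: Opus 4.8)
The plan is to run the vector-field (Morawetz) method for the spherically symmetric solution $\psi$ in the far region $\mathcal{G}=\mathcal{R}\cap\{r\geq R\}$, using as current exactly the improved far-region current of Remark \ref{infinity}, supplemented by one localized zeroth-order correction tailored to the timelike boundary $\{r=R\}$. Since $R>2M$, the region $\mathcal{G}$ contains neither $\mathcal{H}^{+}$ nor the photon sphere, so for $l=0$ there is no trapping obstruction and one expects a fully non-degenerate positive bulk that already includes a genuine $\psi^{2}$ term. Note also that $\nabb\psi=0$ throughout, which simplifies every expression below.

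First I would apply the divergence identity \eqref{div} in $\mathcal{G}$ to the current $J_{\mu}=J_{\mu}^{X^{f_{1}},1}[\psi]+J_{\mu}^{X^{f_{2}}}[\psi]$ of Remark \ref{infinity} with the choice $\delta=1$. Dropping the angular term, the pointwise bound of that remark reduces to $\nabla^{\mu}J_{\mu}\geq C\big(r^{-2}(\partial_{t}\psi)^{2}+r^{-2}(\partial_{r^{*}}\psi)^{2}+r^{-4}\psi^{2}\big)$ on all of $\{r\geq R\}$, so that integration against the spacetime volume form directly produces the desired bulk quantity $\int_{\mathcal{G}}r^{-4}\psi^{2}$ (together with weighted derivative control consistent with Proposition \ref{1l0}). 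The boundary of $\mathcal{G}$ consists of $\Sigma_{0}\cap\mathcal{G}$, $\Sigma_{\tau}\cap\mathcal{G}$, the timelike cylinder $\{r=R\}$, and the part accumulating at $i^{0}$. The two spacelike pieces are handled by Propositions \ref{boun1} and \ref{boun2}: the flux of the $X$-type and modified currents through a spacelike hypersurface is bounded by the corresponding $T$-flux (the zeroth-order pieces being absorbed via the first Hardy inequality, Proposition \ref{firsthardy}), after which Proposition \ref{ubdenergy} bounds $\int_{\Sigma_{\tau}}J^{T}$ and $\int_{\Sigma_{0}}J^{T}$ by $\int_{\Sigma_{0}}J_{\mu}^{T}[\psi]n^{\mu}_{\Sigma_{0}}$. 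Because the leaves $\Sigma_{\tilde\tau}$ terminate at $i^{0}$, there is no genuine boundary integral at spatial infinity; the only requirement is convergence of the slice integrals, which is guaranteed by \eqref{condition} (forcing $r\psi^{2}\to0$) together with the assumed finiteness of the initial $T$-energy.

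The genuinely delicate point is the timelike inner boundary $\{r=R\}$, since on a timelike hypersurface $\psi$ is controlled neither by a conservation law nor by the decay at $i^{0}$; this is precisely why the statement carries $\int_{\{r=R\}}\psi^{2}$ on the left. I would split the cylinder flux of $J_{\mu}$ into a derivative part and a $\psi^{2}$ part. By the explicit form of the energy--momentum tensor and the boundedness of the coefficients $f_{1},f_{2}$, the derivative part of $J_{\mu}n^{\mu}$ on $\{r=R\}$ is pointwise $\lesssim(\partial_{t}\psi)^{2}+(\partial_{r^{*}}\psi)^{2}$ with an $R$-dependent constant, hence controlled by exactly the quantity estimated in Lemma \ref{lemmal0} by $C_{R}\int_{\Sigma_{0}}J_{\mu}^{T}[\psi]n^{\mu}_{\Sigma_{0}}$. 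To produce $\int_{\{r=R\}}\psi^{2}$ with a favorable sign I would add to $J_{\mu}$ a localized zeroth-order current $\Pi_{\mu}=\lambda(r)\psi^{2}(\partial_{r^{*}})_{\mu}$, with $\lambda$ supported in a bounded collar $\{R\leq r\leq R+1\}$ and $\lambda(R)$ of the sign that renders $\Pi_{\mu}n^{\mu}|_{\{r=R\}}$ a positive multiple of $\psi^{2}$. Its divergence $\nabla^{\mu}\Pi_{\mu}$ is then supported in this bounded collar, where all weights are comparable to constants, so the arising cross term $\psi\,\partial_{r^{*}}\psi$ is absorbed by Cauchy--Schwarz into the already positive bulk $\int r^{-4}\psi^{2}$ and into the derivative bulk of Proposition \ref{1l0}.

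Rearranging the divergence identity so that the two non-negative quantities $\int_{\mathcal{G}}r^{-4}\psi^{2}$ and $\int_{\{r=R\}}\psi^{2}$ stand on the left, and bounding every remaining term by $\int_{\Sigma_{0}}J_{\mu}^{T}[\psi]n^{\mu}_{\Sigma_{0}}$ as above, yields the asserted inequality. The main obstacle, and the only place where something beyond bookkeeping is required, is the treatment of the timelike boundary $\{r=R\}$: one must simultaneously fix the correct sign of the boundary $\psi^{2}$-flux (through the choice of $\Pi$) and invoke the nontrivial flux bound of Lemma \ref{lemmal0} for the derivative part, neither of which is supplied by the conservation law or by the condition at $i^{0}$.
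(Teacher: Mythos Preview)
Your overall strategy---run a Morawetz current in $\mathcal{G}$, control the spacelike boundary fluxes by the $T$-flux via Propositions \ref{boun1}--\ref{boun2}, and feed the derivative part of the timelike flux at $\{r=R\}$ into Lemma \ref{lemmal0}---is exactly the paper's. The difference is in the choice of current. The paper uses the single current $J_{\mu}^{X,1}$ with the constant multiplier $f\equiv 1$; for this choice, and since $\nabb\psi=0$, the bulk reduces to $K^{X,1}=I\psi^{2}$ with $I\sim r^{-4}$, and the built-in modification term $-(\nabla_{\mu}G)\psi^{2}n^{\mu}_{\mathcal{G}}$ (with $G=\tfrac{D}{2r}$) already lands on the boundary $\{r=R\}$ with the \emph{favorable} sign for $R$ large. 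Thus the paper gets the term $\int_{\{r=R\}}\psi^{2}$ for free and needs no auxiliary current at all; Cauchy--Schwarz on the cross term $2G\psi\partial_{r^{*}}\psi$ and Lemma \ref{lemmal0} then finish the argument in two lines. Your route via the Remark \ref{infinity} current plus an additional correction $\Pi_{\mu}=\lambda(r)\psi^{2}(\partial_{r^{*}})_{\mu}$ works as well (note that for large $R$ the $-(\nabla_{\mu}G_{1})\psi^{2}$ term of $J^{X^{f_{1}},1}$ already has the right sign, so $\Pi$ is in fact redundant there), but it is less economical: you carry unnecessary derivative bulk, and you must also absorb the \emph{zeroth-order} pieces of $\nabla^{\mu}\Pi_{\mu}$, namely $\lambda'\psi^{2}$ and $\lambda\psi^{2}\operatorname{Div}(\partial_{r^{*}})$, not only the cross term you mention. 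These are harmless once you take $\lambda$ small enough relative to the positive $r^{-4}\psi^{2}$ bulk, but you should say so explicitly; otherwise the absorption step is incomplete. Finally, note (as the paper's own proof and the subsequent Proposition \ref{el0} do) that both arguments really use $R$ sufficiently large, not merely $R>2M$: this is needed for the sign of $\partial_{r^{*}}G$ and, in your version, for the positivity region of Remark \ref{infinity}.
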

\begin{proof}
By applying Stokes' theorem for the current $J_{\mu}^{X,1}[\psi]$ in the region $\mathcal{G}(0,\tau)$, where again $X=\partial_{r^{*}}$ (and, therefore, $f=1$), we obtain
\begin{equation*}
\int_{\mathcal{G}(0,\tau)}{K^{X,1}[\psi]}+\int_{\left\{r=R\right\}\cap\mathcal{G}(0,\tau)}{J_{\mu}^{X,1}[\psi]n^{\mu}_{\mathcal{G}}}\leq C\int_{\Sigma_{0}}{J^{T}_{\mu}[\psi]n^{\mu}_{\Sigma_{0}}}
\end{equation*}
where we have used Proposition \ref{boun2} to estimate the boundary integral over $\Sigma_{\tau}\cap\mathcal{G}$. Note that again $n^{\mu}_{\mathcal{G}}$ denotes the unit normal vector to $\left\{r=R\right\}$ pointing in the interior of $\mathcal{G}$. For $f=1$ we have
$K^{X,1}[\psi]=I\psi^{2},$
where $I>0$ and $I\sim \frac{1}{r^{4}}$ for $r\geq R>2M$. If  $G$ is the function defined in Section \ref{sec:TheCurrentJXGMuAndEstimatesForKXG} then $G=\frac{D}{2r}$ and therefore for sufficiently large $R$ we have $\partial_{r^{*}}G<0$. Then,
\begin{equation*}
J_{\mu}^{X,1}[\psi]n^{\mu}_{\mathcal{G}}=J_{\mu}^{X}n^{\mu}_{\mathcal{G}}+2G\psi(\nabla_{\mu}\psi)n^{\mu}_{\mathcal{G}}-(\nabla_{\mu}G)\psi^{2}n^{\mu}_{\mathcal{G}}.
\end{equation*}
Since $n_{\mathcal{G}}=\frac{1}{\sqrt{D}}\partial_{r^{*}}$, by applying Cauchy-Schwarz for the second term on the right hand side (and since the third term is positive) we obtain
\begin{equation*}
J_{\mu}^{X,1}[\psi]n^{\mu}_{\mathcal{F}}\sim\  \psi^{2}-\frac{1}{\epsilon}((\partial_{t}\psi)^{2}+(\partial_{r^{*}}\psi)^{2}),
\end{equation*}
for a sufficiently small $\epsilon$. Lemma \ref{lemmal0} completes the proof.
\end{proof}
It remains to obtain a (weighted) $L^{2}$ estimate for $\psi$ in the region $\mathcal{F}$.
\begin{proposition}Fix sufficiently large $R>2M$.  Then, there exists a constant $C$ which depends  on $M$, $\Sigma_{0}$ and $R$ such that  for all spherically symmetric solutions $\psi$ of the wave equation  
\begin{equation*}
\int_{\mathcal{F}}{D\psi^{2}}\leq C_{R}\int_{\Sigma_{0}}{J^{T}_{\mu}[\psi]n^{\mu}_{\Sigma_{0}}}.
\end{equation*}
\label{el0}
\end{proposition}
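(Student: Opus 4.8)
The plan is to prove the estimate by a weighted Hardy inequality performed leaf by leaf on the foliation $\Sigma_{\tilde\tau}$: on each slice I trade the weighted zeroth-order quantity $D\psi^{2}$ against the first-order quantities $(\partial_{t}\psi)^{2}+(\partial_{r^{*}}\psi)^{2}$ together with a boundary term on $\{r=R\}$, and I then integrate in $\tilde\tau$ and feed in the two ingredients already available for spherically symmetric waves, namely the control of $\int_{\{r=R\}}\psi^{2}$ from Proposition \ref{fl0} and the (untrapped) $l=0$ spacetime derivative estimate coming from Proposition \ref{1l0}.

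First I would fix a slice $\Sigma_{\tilde\tau}$, use the induced coordinate $\rho\in[M,R]$ of Section \ref{sec:TheFoliationsSigmaTauAndTildeSigmaTau}, and apply the one–dimensional identity \eqref{111} on $[M,R]$ with the weight $h(\rho)=\int_{M}^{\rho}D(\tilde\rho)\,\tilde\rho^{2}\,d\tilde\rho$, so that $\partial_{\rho}h=D\rho^{2}$ and, crucially, $h(M)=0$. Because a spherically symmetric $\psi$ does not decay at $\mathcal{H}^{+}$, the vanishing of $h$ at $r=M$ is exactly what annihilates the horizon boundary term; the only surviving boundary contribution is $h(R)\psi^{2}(R)$. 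After a Cauchy–Schwarz applied to the cross term $-2\int h\psi\,\partial_{\rho}\psi$ and absorption of an $\varepsilon$-multiple of the left-hand side, this produces
\[
\int_{\Sigma_{\tilde\tau}\cap\mathcal{F}}D\psi^{2}\ \lesssim\ \int_{\{r=R\}\cap\Sigma_{\tilde\tau}}\psi^{2}\ +\ \int_{\Sigma_{\tilde\tau}\cap\mathcal{F}}\frac{h^{2}}{D\rho^{2}}(\partial_{\rho}\psi)^{2}.
\]

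The crux is the control of the last weight. Using \eqref{eq:rho} together with the identity $\partial_{r^{*}}=\partial_{v}+D\partial_{r}$ one has $\partial_{\rho}=(g_{1}-\tfrac1D)\partial_{v}+\tfrac1D\partial_{r^{*}}$, whence $(\partial_{\rho}\psi)^{2}\lesssim \tfrac{1}{D^{2}}\big((\partial_{v}\psi)^{2}+(\partial_{r^{*}}\psi)^{2}\big)$ near $\mathcal{H}^{+}$ (and plainly $(\partial_{\rho}\psi)^{2}\lesssim(\partial_{v}\psi)^{2}+(\partial_{r^{*}}\psi)^{2}$ away from it). Hence the effective weight is $\tfrac{h^{2}}{D^{3}\rho^{2}}$, and this is precisely where the extreme geometry enters decisively: since $r^{*}$ is inverse-linear, $D\sim(r-M)^{2}$ near the horizon while $h\sim(r-M)^{3}$, so the powers of $(r-M)$ cancel and $\tfrac{h^{2}}{D^{3}\rho^{2}}=O(1)$. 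Therefore (using $\partial_{v}=\partial_{t}$) one gets $\int_{\Sigma_{\tilde\tau}\cap\mathcal{F}}\tfrac{h^{2}}{D\rho^{2}}(\partial_{\rho}\psi)^{2}\lesssim\int_{\Sigma_{\tilde\tau}\cap\mathcal{F}}\big((\partial_{t}\psi)^{2}+(\partial_{r^{*}}\psi)^{2}\big)$. I expect this weight-matching to be the only genuinely delicate point; in the subextreme range the different rate $D\sim(r-r_{+})$ would destroy the cancellation.

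Finally I would integrate the slice inequality over $\tilde\tau\in[0,\tau]$. Since $\mathcal{F}$ is bounded in $r$, the factor $\rho^{2}V$ in the volume form \eqref{volumeformfoliation} is bounded above and below, so the coarea formula turns the slice integrals into the spacetime integrals $\int_{\mathcal{F}}D\psi^{2}$, $\int_{\{r=R\}}\psi^{2}$ and $\int_{\mathcal{F}}\big((\partial_{t}\psi)^{2}+(\partial_{r^{*}}\psi)^{2}\big)$. The boundary term $\int_{\{r=R\}}\psi^{2}$ is bounded by $\int_{\Sigma_{0}}J^{T}_{\mu}[\psi]n^{\mu}_{\Sigma_{0}}$ by Proposition \ref{fl0}. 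For the first-order spacetime term I would use the vector field $X^{0}=-r^{-3}\partial_{r^{*}}$: by Proposition \ref{1l0} the bulk $K^{X^{0}}=\tfrac1{r^{4}}(\partial_{t}\psi)^{2}+\tfrac5{r^{4}}(\partial_{r^{*}}\psi)^{2}$ is nonnegative and, the spherically symmetric wave being untrapped, has no degeneracy at the photon sphere; applying Stokes' theorem on $\mathcal{R}(0,\tau)$ and bounding the boundary terms on $\Sigma_{0},\Sigma_{\tau},\mathcal{H}^{+}$ by Proposition \ref{boun1} and the conservation law \eqref{t1} yields $\int_{\mathcal{F}}\big((\partial_{t}\psi)^{2}+(\partial_{r^{*}}\psi)^{2}\big)\lesssim\int_{\mathcal{R}}r^{-4}\big((\partial_{t}\psi)^{2}+(\partial_{r^{*}}\psi)^{2}\big)\lesssim\int_{\Sigma_{0}}J^{T}_{\mu}[\psi]n^{\mu}_{\Sigma_{0}}$, using $r^{-4}\geq R^{-4}$ on $\mathcal{F}$. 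Combining the three bounds gives the claim.
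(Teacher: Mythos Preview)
Your argument is correct but takes a different route from the paper. The paper applies Stokes' theorem to a single spacetime current $J_{\mu}^{H}[\psi]=(\nabla_{\mu}H)\psi^{2}-2H\psi\nabla_{\mu}\psi$ with $H=(r-M)^{2}$, whose divergence is $(\Box_{g}H)\psi^{2}-\tfrac{2H}{D}\big((\partial_{r^{*}}\psi)^{2}-(\partial_{t}\psi)^{2}\big)$; the key observations are that $\Box_{g}H\sim D$ on $\mathcal{F}$ and that $H/D=r^{2}$ is bounded there, so the bulk controls $D\psi^{2}$ modulo first-order terms handled by Proposition~\ref{1l0}, while the boundary terms are dispatched by Propositions~\ref{boun2} and~\ref{fl0}. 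You instead perform a slice-wise Hardy inequality with weight $h(\rho)=\int_{M}^{\rho}D\tilde\rho^{2}\,d\tilde\rho$ vanishing cubically at $\mathcal{H}^{+}$, then integrate in $\tilde\tau$. Your approach is somewhat more elementary and makes the extreme mechanism very explicit---the cubic vanishing of $h$ exactly balances the $D^{-3}$ coming from $(\partial_{\rho}\psi)^{2}\lesssim D^{-2}\big((\partial_{t}\psi)^{2}+(\partial_{r^{*}}\psi)^{2}\big)$ and the $D^{-1}$ in the Hardy remainder---whereas the paper's spacetime identity is more compact and avoids slicing altogether. Both routes rest on the same two inputs, Propositions~\ref{1l0} and~\ref{fl0}, and both genuinely use that $D$ vanishes to second order at $r=M$; as you note, the subextreme rate $D\sim(r-r_{+})$ would break your weight-matching, and in the paper's version it would make $H/D$ unbounded unless $H$ were chosen differently.
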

\begin{proof}
We apply Stokes' theorem for the current
\begin{equation*}
J_{\mu}^{H}[\psi]=(\nabla_{\mu}H)\psi^{2}-2H\psi\nabla_{\mu}\psi,
\end{equation*}
where $H=(r-M)^{2}$. All the boundary integrals can be estimated using Propositions \ref{boun2} and \ref{fl0}. Note also that 
\begin{equation*}
\nabla^{\mu}J_{\mu}^{H}=(\Box_{g}H)\psi^{2}-\frac{2H}{D}((\partial_{r^{*}}\psi)^{2}-(\partial_{t}\psi)^{2}).
\end{equation*}
Since $\Box_{g}H\sim D$ for $M\leq r \leq R$ and $\frac{H}{D}$ is bounded, the result follows in view of the spatial compactness of $\mathcal{F}$ and Proposition \ref{1l0}.
\end{proof}
We finally have the following zeroth order Morawetz  estimate:
\begin{theorem}
There exists a constant $C$ that depends  on $M$ and $\Sigma_{0}$ such that for all solutions $\psi$ of the wave equation we have
\begin{equation}
\int_{\mathcal{R}}{\frac{D}{r^{4}}\psi^{2}}\leq C\int_{\Sigma_{0}}{J^{T}_{\mu}[\psi]n^{\mu}_{\Sigma_{0}}}.
\label{moraw}
\end{equation}
\label{morawetz}
\end{theorem}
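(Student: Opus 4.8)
The plan is to decompose $\psi=\psi_{0}+\psi_{\geq 1}$ and treat the two pieces by completely different mechanisms, then recombine. Because the weight $D/r^{4}$ is spherically symmetric, orthogonality of the spherical harmonics on each sphere $\mathbb{S}^{2}(r)$ kills all cross terms, so
\begin{equation*}
\int_{\mathcal{R}}{\frac{D}{r^{4}}\psi^{2}}=\int_{\mathcal{R}}{\frac{D}{r^{4}}\psi_{0}^{2}}+\int_{\mathcal{R}}{\frac{D}{r^{4}}\psi_{\geq 1}^{2}},
\end{equation*}
and the same orthogonality splits the $T$-flux, so that $\int_{\Sigma_{0}}{J^{T}_{\mu}[\psi_{0}]n^{\mu}_{\Sigma_{0}}}$ and $\int_{\Sigma_{0}}{J^{T}_{\mu}[\psi_{\geq 1}]n^{\mu}_{\Sigma_{0}}}$ are each bounded by $\int_{\Sigma_{0}}{J^{T}_{\mu}[\psi]n^{\mu}_{\Sigma_{0}}}$. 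It therefore suffices to establish the estimate separately for $\psi_{0}$ and for $\psi_{\geq 1}$, with constants uniform in $\tau$, after which the statement on $\mathcal{R}=\mathcal{R}(0,\infty)$ follows by monotone convergence as $\tau\to\infty$.

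For $\psi_{\geq 1}$ I would invoke the spherical estimate of Proposition \ref{xa2prop}. The crucial point is that the weight $\frac{1}{D}\frac{1}{((r^{*})^{2}+1)^{2}}$ appearing there dominates the target weight $D/r^{4}$ uniformly on $[M,+\infty)$; equivalently $D^{2}((r^{*})^{2}+1)^{2}/r^{4}$ is bounded. Granting this, Proposition \ref{xa2prop} yields $\frac{D}{r^{4}}\psi_{\geq 1}^{2}\leq C\,K^{X^{\a},2}[\psi_{\geq 1}]$ upon integration over each sphere. I then integrate over $\mathcal{R}(0,\tau)$, recognise the right-hand side as $\nabla^{\mu}J^{X^{\a},2}_{\mu}[\psi_{\geq 1}]$, and apply Stokes' theorem \eqref{div}. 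The boundary terms on $\Sigma_{0}$, $\Sigma_{\tau}$ and $\mathcal{H}^{+}$ are controlled by the respective $T$-fluxes through Proposition \ref{boun2}, and these are in turn bounded by $\int_{\Sigma_{0}}{J^{T}_{\mu}[\psi_{\geq 1}]n^{\mu}_{\Sigma_{0}}}$ via the conservation law \eqref{t1} (Proposition \ref{ubdenergy}). This is exactly the packaging already used in Theorem \ref{degXprop}.

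For $\psi_{0}$ I would simply patch the two regional estimates already proved. Split $\mathcal{R}=\mathcal{F}\cup\mathcal{G}$ along $\{r=R\}$ with $R>2M$ fixed. On $\mathcal{G}=\mathcal{R}\cap\{r\geq R\}$ the function $D$ is bounded between positive constants, so $D/r^{4}\lesssim 1/r^{4}$ and the contribution is controlled by Proposition \ref{fl0}; on $\mathcal{F}=\mathcal{R}\cap\{M\leq r\leq R\}$ the radius is bounded, so $D/r^{4}\lesssim D$ and the contribution is controlled by Proposition \ref{el0}. Adding the two gives $\int_{\mathcal{R}}{\frac{D}{r^{4}}\psi_{0}^{2}}\leq C\int_{\Sigma_{0}}{J^{T}_{\mu}[\psi]n^{\mu}_{\Sigma_{0}}}$, and summing with the $\psi_{\geq 1}$ bound finishes the argument.

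I expect the only genuinely delicate step to be the weight comparison in the $\psi_{\geq 1}$ part near $\mathcal{H}^{+}$: one must see that the smallness of $D$ in the numerator of $D/r^{4}$ is compensated by the smallness of $D$ in the denominator of $\frac{1}{D((r^{*})^{2}+1)^{2}}$ once the blow-up of $(r^{*})^{2}$ is taken into account. Using the extreme-case asymptotics $r^{*}\sim -M^{2}/(r-M)$ and $D\sim (r-M)^{2}/M^{2}$, the two weights are in fact comparable at the horizon (both vanishing to second order), and they are comparable at infinity as well, while at the photon sphere everything is finite and positive. This cancellation is special to the degenerate geometry and fails in the subextreme range where $r^{*}$ is only logarithmic; it is precisely the phenomenon recorded in Lemma \ref{rnrstar}. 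Everything else is a routine assembly of Propositions \ref{xa2prop}, \ref{fl0} and \ref{el0} with the boundary estimate of Proposition \ref{boun2} and the $T$-conservation law.
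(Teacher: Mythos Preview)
Your proposal is correct and follows essentially the same route as the paper: decompose $\psi=\psi_{0}+\psi_{\geq 1}$, handle $\psi_{\geq 1}$ via Proposition \ref{xa2prop} together with Stokes' theorem and the boundary control of Proposition \ref{boun2}, and handle $\psi_{0}$ by patching Propositions \ref{fl0} and \ref{el0} across $\{r=R\}$. The paper's own proof is a one-line citation of precisely these three propositions; you have supplied the details the paper leaves implicit, in particular the weight comparison $D/r^{4}\lesssim \frac{1}{D((r^{*})^{2}+1)^{2}}$, which indeed reduces to the boundedness of $D((r^{*})^{2}+1)/r^{2}$ and is verified by the same asymptotics underlying Lemma \ref{rnrstar}.
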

\begin{proof}
Write 
$\psi=\psi_{0}+\psi_{\geq 1}$
and use Propositions \ref{xa2prop}, \ref{el0} and \ref{fl0}.
\end{proof}
Clearly, the lemma used for Proposition \ref{fl0} holds stricly for the case $l=0$. In general, one could have argued by averaging  the $X$ estimate \eqref{degX} in $R$, which would imply that there exists a value $R_{0}$ of $r$ such that all the derivatives are controlled on the hypersurface of  constant radius $R_{0}$. This makes our argument work for all $\psi$ without recourse to the spherical decomposition.

\begin{remark}
The key property used in Proposition \ref{fl0} is that the d' Alembertian of $\frac{1}{r}$ is always negative, something  not true for larger powers of $\frac{1}{r}$. Note that this unstable behaviour of $\frac{1}{r}$ is expected since it is the static solution of the wave equation in Minkowski.
\end{remark}

\subsection{Discussion}
\label{sec:Discussion}

The current of Section  \ref{sec:TheCurrentJX2MuAndEstimatesForKX2} was first introduced in  \cite{dr3} and subsequently in \cite{dr5},  where a non-degenerate $X$ estimate is established for  Schwarzschild. In \cite{dr3}, the authors show that for each fixed spherical number $l$ there is a corresponding ``effective photon sphere'' centred at $r=-\gamma_{l}$. It is also shown that $\lim_{l\rightarrow+\infty}{-\gamma_{l}}\rightarrow 3M$. In our case, our computation \eqref{icomp} shows that all ``effective photon spheres'' coincide with the photon sphere, i.e. for all $l$ we have $-\gamma_{l}=2M$. Note also that in \cite{dr5}, one needs to commute with the generators of the Lie algebra so(3). Moreover, the boundary terms  could not be controlled by the flux of $T$  but one needed a  small portion of the redshift estimate. In our case the above geometric symmetry of the trapping and the asymptotic behavior of $r^{*}$ allowed us to completely decouple the dispersion from the redshift effect. For a nice exposition of previous work on $X$ estimates see \cite{md}.

\section{The Vector Field $N$}
\label{sec:TheVectorFieldN}

It is clear that in order to obtain an estimate for the non-degenerate energy of a local observer we need to use timelike multipliers at the horizon. Then uniform boundedness of  energy would  follow provided we can control the  spacetime terms which arise. For a suitable class of non-degenerate black hole spacetimes, not only have the bulk terms the right sign close to $\mathcal{H}^{+}$ but they in fact control the non-degenerate energy. Indeed, in  \cite{md} the following  is proved 
\begin{proposition}
Let $\mathcal{H}^{+}$ be a Killing horizon  with positive surface gravity and let $V$ be the Killing vector field  tangent to $\mathcal{H}^{+}$. Then there exist constants $b,B>0$ and a timelike vector field $N$ which is $\phi^{V}_{\tau}$-invariant (i.e. $\mathcal{L}_{V}N=0$)  such that for all functions $\psi$ we have
\begin{equation}
bJ^{N}_{\mu}[\psi]n^{\mu}_{\Sigma_{\tau}}\leq K^{N}[\psi]\leq BJ^{N}_{\mu}[\psi]n^{\mu}_{\Sigma_{\tau}} 
\label{knm}
\end{equation}
on $\mathcal{H}^{+}$.
\label{redshiftmd}
\end{proposition}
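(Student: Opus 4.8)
The plan is to construct $N$ explicitly in a neighbourhood of $\mathcal{H}^{+}$ and to verify \eqref{knm} pointwise on the horizon by computing $K^{N}$ in a null frame adapted to $\mathcal{H}^{+}$, the positivity being a direct consequence of $\kappa>0$. First I would fix a null frame $\{V,Y,e_{1},e_{2}\}$ along $\mathcal{H}^{+}$, where $V$ is the Killing generator (so that $\nabla_{V}V=\kappa V$ by \eqref{sg}), $Y$ is a future directed null vector transversal to $\mathcal{H}^{+}$ with $g(V,Y)$ normalised, and $e_{1},e_{2}$ are orthonormal and tangent to the symmetry spheres. I would extend $Y$, and hence the whole frame, by Lie transport along the flow of $V$, so that $\mathcal{L}_{V}Y=0$, and then set $N=V+\sigma Y$ with $\sigma>0$ a parameter to be chosen. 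By construction $\mathcal{L}_{V}N=0$ and, for the chosen normalisation, $N$ is timelike and future directed.

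Next I would reduce the computation of $K^{N}$. Since $V$ is Killing we have $K^{V}=0$, and therefore $K^{N}=\sigma K^{Y}$ with $K^{Y}=\mathbf{T}_{\mu\nu}[\psi]\nabla^{\mu}Y^{\nu}$, an expression in the $1$-jet of $\psi$ defined for every function $\psi$. Expanding $\nabla Y$ in the frame and contracting against $\mathbf{T}$, one obtains $K^{Y}$ as a quadratic form in the four derivatives $(V\psi,\,Y\psi,\,e_{1}\psi,\,e_{2}\psi)$ whose coefficients are the connection coefficients of the frame on $\mathcal{H}^{+}$.

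The crucial step, and the one where the surface gravity enters, is to isolate the coefficient of $\mathbf{T}_{YY}=(Y\psi)^{2}$. Differentiating $g(V,Y)=\text{const}$ along $V$ and using $\nabla_{V}V=\kappa V$ pins down $g(\nabla_{V}Y,V)$ to be a fixed positive multiple of $\kappa$, and it is exactly this connection coefficient that multiplies $(Y\psi)^{2}$ in $K^{Y}$. This is the analytic manifestation of the redshift: because $\kappa>0$, the transversal derivative appears with a good sign. The remaining contributions to $K^{Y}$ are diagonal terms in $(V\psi)^{2}$ and $\left|\nabb\psi\right|^{2}$, whose coefficients are geometric and bounded, together with cross terms such as $(V\psi)(Y\psi)$ and $(e_{A}\psi)(Y\psi)$. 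Whatever the signs of these, once I multiply by $\sigma$ the diagonal term $\sigma\kappa\,(Y\psi)^{2}$ dominates all cross terms by Cauchy--Schwarz with a small weight, provided $\sigma$ is taken large enough, while the diagonal coefficients of $(V\psi)^{2}$ and $\left|\nabb\psi\right|^{2}$ remain strictly positive. Hence $K^{N}[\psi]$ is a positive definite quadratic form in $d\psi$ on $\mathcal{H}^{+}$.

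Finally I would pass from positivity to comparability. Since $N$ is timelike and $n_{\Sigma_{\tau}}$ is timelike, $J^{N}_{\mu}[\psi]n^{\mu}_{\Sigma_{\tau}}$ is also positive definite in $d\psi$ by Proposition \ref{hyperb}. On $\mathcal{H}^{+}$ both $K^{N}$ and $J^{N}_{\mu}n^{\mu}_{\Sigma_{\tau}}$ are then positive definite quadratic forms in the finitely many components of $d\psi$, and any two such forms are comparable; this produces the constants $b,B>0$. The $\phi^{V}_{\tau}$-invariance of $N$, together with the constancy of $\kappa$ along the generators of $\mathcal{H}^{+}$ noted after \eqref{sg}, guarantees that $b$ and $B$ may be chosen uniformly along the horizon. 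The hard part will be the third step: the careful bookkeeping of the frame connection coefficients needed to show that the coefficient of $(Y\psi)^{2}$ is precisely a positive multiple of $\kappa$. Once this redshift positivity is secured, the absorption of the cross terms and the comparison of the two positive definite forms are routine.
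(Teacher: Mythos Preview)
The paper does not actually prove this proposition; it is quoted from \cite{md} and then used as a contrast to the extreme case. So there is no in-paper argument to compare against, and I assess your proposal on its own terms and against the construction in \cite{md}.

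Your overall plan---build a null frame $\{V,Y,e_{A}\}$ along $\mathcal{H}^{+}$, take $N$ as a timelike combination of $V$ and a transversal null $Y$, and identify $\kappa$ as the source of the $(Y\psi)^{2}$ coefficient---is exactly right in spirit, and your third step correctly isolates where the redshift enters. The gap is in how you deploy the large parameter $\sigma$. With $N=V+\sigma Y$ and $V$ Killing you get $K^{N}=\sigma K^{Y}$, as you say, but then \emph{every} coefficient in $K^{N}$ carries the same factor $\sigma$: the good $\sigma\kappa(Y\psi)^{2}$, the diagonal terms in $(V\psi)^{2}$ and $\left|\nabb\psi\right|^{2}$, and all cross terms. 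Scaling a fixed quadratic form by $\sigma>0$ does not change its signature, so ``taking $\sigma$ large'' cannot make $K^{Y}$ positive definite if it is not already; a Cauchy--Schwarz split such as $|\sigma a(V\psi)(Y\psi)|\leq \tfrac{\sigma\kappa}{2}(Y\psi)^{2}+\tfrac{\sigma a^{2}}{2\kappa}(V\psi)^{2}$ produces a $(V\psi)^{2}$ cost that also scales with $\sigma$. Your assertion that the diagonal $(V\psi)^{2}$ and $\left|\nabb\psi\right|^{2}$ coefficients in $K^{Y}$ are strictly positive is not justified either: these come from components of $\nabla_{Y}Y$ and $\nabla_{e_{A}}Y$, which encode how $Y$ extends \emph{off} $\mathcal{H}^{+}$, and Lie transport along $V$ does not determine them.

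The construction in \cite{md} inserts the large parameter at a different place. One takes $N=V+Y$ (no scaling) but uses precisely the freedom in the transversal extension of $Y$ to impose $\nabla_{Y}Y=-\sigma(V+Y)$ on $\mathcal{H}^{+}$; this is first-order data for $Y$ off the horizon and is entirely free. With that choice, the contribution $-\textbf{T}(V,\nabla_{Y}Y)$ to $K^{N}$ yields $+\sigma(V\psi)^{2}+\tfrac{\sigma}{2}\left|\nabb\psi\right|^{2}$, while the $(Y\psi)^{2}$ coefficient remains the fixed positive multiple of $\kappa$ you already identified from $\nabla_{V}V=\kappa V$. Now $\sigma$ and $\kappa$ sit on \emph{different} diagonal entries, the bounded cross terms are absorbed for $\sigma$ large, and $K^{N}$ becomes positive definite on $\mathcal{H}^{+}$. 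Your final step (comparing two positive definite quadratic forms and invoking $\phi^{V}_{\tau}$-invariance together with the constancy of $\kappa$ along the generators) is fine once the construction of $N$ is corrected in this way.
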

The construction of the above vector field does not require the global existence of a causal Killing vector field and  the positivity of the surface gravity suffices. Under suitable circumstances, one can prove that the $N$ flux is uniformly bounded  without  understanding the structure of the trapping (i.e.~no $X$ or Morawetz estimate is required). 

However, in our case, in view of the lack of redshift along $\mathcal{H}^{+}$ the situation is completely different. Indeed, we will see  that in  extreme Reissner-Nordstr\"{o}m   a vector field satisfying the properties of Proposition~\ref{redshiftmd} does not exist. Not only  will we show that there is no $\phi_{\tau}^{T}$-invariant vector field $N$ satisfying \eqref{knm} on $\mathcal{H}^{+}$ but we will in fact prove that there is no $\phi_{\tau}^{T}$-invariant timelike vector field $N$ such that 
\begin{equation*}
K^{N}[\psi]\geq 0
\end{equation*}
on  $\mathcal{H}^{+}$.  Our resolution to this problem uses an appropriate modification of $J_{\mu}^{N}$ and Hardy inequalities and thus is still robust. 

\subsection{The Effect of Vanishing Redshift on Linear Waves }
\label{sec:TheSpacetimeTermKN}
Let $N=N^{v}\left(r\right)\partial_{v}+N^{r}\left(r\right)\partial_{r}$ be a  future directed timelike $\phi_{\tau}^{T}$-invariant vector field. Then
\begin{equation*}
\begin{split}
K^{N}[\psi]=&F_{vv}\left(\partial_{v}\psi\right)^{2}+F_{rr}\left(\partial_{r}\psi\right)^{2}+F_{\scriptsize\nabb}\left|\nabb\psi\right|^{2}+F_{vr}\left(\partial_{v}\psi\right)\left(\partial_{r}\psi\right),
\end{split}
\end{equation*}
where the coef{}ficients are given by
\begin{equation}
\begin{split}
F_{vv}=\left(\partial_{r}N^{v}\right),  F_{rr}=D\left[\frac{\left(\partial_{r}N^{r}\right)}{2}-\frac{N^{r}}{r}\right]-\frac{N^{r}D'}{2}, F_{\scriptsize\nabb}=-\frac{1}{2}\left(\partial_{r}N^{r}\right), F_{vr}=D\left(\partial_{r}N^{v}\right)-\frac{2N^{r}}{r},
\label{list}
\end{split}
\end{equation}
where $D'=\frac{dD}{dr}$. Note that since $g\left(N,N\right)=-D\left(N^{v}\right)^{2}+2N^{v}N^{r},\  g\left(N,T\right)=-DN^{v}+N^{r},$
and so $N^{r}\left(r=M\right)$ can not be zero (otherwise the vector field $N$ would not be timelike). Therefore, looking back at the list \eqref{list} we see that the coefficient of $\left(\partial_{r}\psi\right)^{2}$ vanishes on the horizon $\mathcal{H}^{+}$ whereas the coefficient of $\partial_{v}\psi\partial_{r}\psi$ is equal to $-\frac{2N^{r}\left(M\right)}{M}$ which is not zero. Therefore, $K^{N}[\psi]$ is linear with respect to $\partial_{r}\psi$ on the horizon $\mathcal{H}^{+}$ and thus it necessarily fails to be non-negative definite. This linearity is a characteristic feature of the geometry of the event horizon of  extreme Reissner-Nordstr\"{o}m and degenerate black hole spacetimes more generally. This proves that  a vector field satisfying the properties  of Proposition \ref{redshiftmd} does not exist.

\subsection{A Locally Non-Negative Spacetime Current}
\label{sec:TheCurrentJMuNDeltaFrac12AndEstimatesForItsDivergence}

In view of the above discussion, we need to modify the bulk term by introducing new terms that will counteract the presence of $\partial_{v}\psi\partial_{r}\psi$.  We define 
\begin{equation}
J_{\mu}\equiv J_{\mu}^{N,h}=J_{\mu}^{N}+h\left(r\right)\psi\nabla _{\mu}\psi
\label{Jmodified}
\end{equation}
where $h$ is a function on $\mathcal{M}$. Then we have
\begin{equation*}
\begin{split}
K\equiv K^{N,h}=\nabla^{\mu}J_{\mu}=K^{N}+\left(\nabla^{\mu}h\right)\psi\nabla _{\mu}\psi+h\left(\nabla^{a}\psi\nabla_{a}\psi\right),
\end{split}
\end{equation*}
provided $\psi$ is a solution of the wave equation. Let us suppose that $h\left(r\right)=\frac{N^{r}\left(M\right)}{M}$. Then
\begin{equation}
\begin{split}
K^{N,h}[\psi]&=K^{N}[\psi]+h\left(\nabla^{a}\psi\nabla_{a}\psi\right)=K^{N}[\psi]+h\left(2\partial _{v}\psi\partial_{r}\psi+D\left(\partial _{r}\psi\right)^{2}+\left|\nabb\psi\right|^{2}\right)\\
&=F_{vv}\left(\partial_{v}\psi\right)^{2}+\left[F_{rr}+hD\right]\left(\partial_{r}\psi\right)^{2}+\left[-\frac{\partial_{r}N^{r}}{2}+h\right]\left|\nabb\psi\right|^{2}+\left[F_{vr}+2h\right]\left(\partial_{v}\psi\partial_{r}\psi\right).\\
\end{split}
\label{knn}
\end{equation}
Note that by taking $h$ to be constant  we managed to have no zeroth order terms in the  current $K$. Let us denote the above coef{}ficients of $\left(\partial_{a}\psi\partial_{b}\psi\right)$ by $G_{ab}$ where $a,b\in\left\{v,r,\nabb\right\}$ and define the vector field $N$ in the region $M\leq r\leq \frac{9M}{8}$ to be such that 
\begin{equation}
\begin{split}
N^{v}(r)=16r,\ N^{r}(r)=-\frac{3}{2}r+M
\end{split}
\label{n}
\end{equation}
and, therefore,
$h=-\frac{1}{2}.$
Clearly, $N$ is timelike future directed vector field. We have the following
\begin{proposition}
For all functions $\psi$,  the current $K^{N, -\frac{1}{2}}[\psi]$ defined by \eqref{knn} is non-negative definite in the region $\mathcal{A}_{N}=\left\{M\leq r\leq \frac{9M}{8}\right\}$ and, in particular, there is a positive constant $C$ that depends only on $M$ such that
\begin{equation*}
K^{N,-\frac{1}{2}}[\psi]\geq C\left(\left(\partial_{v}\psi\right)^{2}+\sqrt{D}\left(\partial_{r}\psi\right)^{2}+\left|\nabb\psi\right|^{2}\right).
\end{equation*}
\label{knprop}
\end{proposition}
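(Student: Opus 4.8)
The plan is to insert the explicit data \eqref{n}, namely $N^{v}=16r$, $N^{r}=-\frac{3}{2}r+M$ together with the attendant $h=-\frac12$, into the coefficient list \eqref{list} and the divergence expression \eqref{knn}, and then to reduce the claimed positivity to a handful of one-variable inequalities. A direct substitution gives the two constant coefficients at once: $G_{vv}=\partial_{r}N^{v}=16$ and $G_{\scriptsize\nabb}=-\frac{\partial_{r}N^{r}}{2}+h=\frac34-\frac12=\frac14$. Thus the angular term already contributes $\frac14\left|\nabb\psi\right|^{2}$, and the whole difficulty is concentrated in the $\left(\partial_{v}\psi,\partial_{r}\psi\right)$ block. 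It is convenient to pass to the variable $w=1-\frac{M}{r}=\sqrt{D}$, which ranges over $\left[0,\frac19\right]$ on $\mathcal{A}_{N}=\left\{M\leq r\leq\frac{9M}{8}\right\}$; a short computation using $D'=\frac{2M(r-M)}{r^{3}}$ then collapses the remaining coefficients to the clean forms $G_{vr}=2w(1+8w)$ and $G_{rr}=\frac{w(2-w)}{4}$.

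With these forms in hand the structure of the estimate is transparent. Both $G_{vr}$ and $G_{rr}$ vanish at the horizon $w=0$, reflecting the degeneracy of the redshift, but they do so at different rates: $G_{rr}\sim\frac{w}{2}$ while $G_{vr}\sim 2w$. The key point is that $G_{vr}$ carries a \emph{full} factor $w=\sqrt{D}$, so the cross term $G_{vr}\,\partial_{v}\psi\,\partial_{r}\psi$ is of size $\sqrt{D}$ near $\mathcal{H}^{+}$ and can be absorbed. Concretely, I would reduce the desired bound
\[
G_{vv}(\partial_v\psi)^2+G_{vr}(\partial_v\psi)(\partial_r\psi)+G_{rr}(\partial_r\psi)^2\geq C\left((\partial_v\psi)^2+\sqrt{D}(\partial_r\psi)^2\right)
\]
to the three pointwise conditions $G_{vv}\geq C$, $G_{rr}\geq C\sqrt{D}$, and the discriminant inequality $G_{vr}^{2}\leq 4\,(G_{vv}-C)\,(G_{rr}-C\sqrt{D})$, each to be verified for $w\in\left[0,\frac19\right]$. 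The first is trivial; the second follows from $G_{rr}=\frac{w(2-w)}{4}\geq\frac{17}{36}\,w$ on the interval; and the third, after cancelling the common factor $w$, reduces at $C=0$ to the polynomial inequality $w(1+8w)^{2}\leq 4(2-w)$, which holds with ample room on $\left[0,\frac19\right]$ (the left side is at most $\frac{289}{729}$ there, the right side at least $\frac{68}{9}$), so that it persists for $C$ a sufficiently small positive constant. Taking such a $C$ and combining with the angular term yields the asserted lower bound.

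I expect the only genuine work to be the cross-term control, i.e. verifying the discriminant inequality uniformly on $\mathcal{A}_{N}$; everything else is bookkeeping. This is precisely the step that uses both the specific normalisation in \eqref{n} (the coefficient $16$ in $N^{v}$, which makes $G_{vv}$ large enough to dominate the cross term) and the restriction to the slab $r\leq\frac{9M}{8}$ (equivalently $w\leq\frac19$), outside of which the comparison of the rates of $G_{vr}^{2}$ and $G_{vv}G_{rr}$ could degrade. Since all functions involved are explicit and the interval is compact, the verification is a finite computation, and the weight $\sqrt{D}$ on $(\partial_{r}\psi)^{2}$ in the conclusion is forced by the first-order vanishing of $G_{rr}$ at the horizon.
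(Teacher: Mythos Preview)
Your proposal is correct and follows essentially the same route as the paper: both arguments identify $G_{vv}=16$, $G_{\scriptsize\nabb}=\tfrac14$ immediately and then reduce to positivity of the $2\times 2$ block in $(\partial_v\psi,\partial_r\psi)$, with the $\sqrt{D}$ weight on $(\partial_r\psi)^2$ forced by the first-order vanishing of $G_{rr}$ at the horizon. The only cosmetic difference is that the paper organises the $2\times 2$ check by factoring the cross coefficient as $G_{vr}=\epsilon_1\epsilon_2$ with $\epsilon_1=\sqrt{D}$, $\epsilon_2=16\sqrt{D}+2$, and verifies $\epsilon_1^2\le G_{rr}$ and $\epsilon_2^2<G_{vv}$ separately (using $a^2+ab+b^2\ge 0$), whereas you go straight to the discriminant; your parametrisation $w=\sqrt{D}$ makes the numerics slightly cleaner.
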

\begin{proof}
We first observe that the coef{}ficient $G_{\scriptsize\nabb}$ of $\partial_{v}\psi\partial_{r}\psi$  is equal to
$G_{\scriptsize\nabb}=\frac{1}{4}$.
Clearly, $G_{vv}=16$ and $G_{rr}$ is non-negative since the factor of the dominant term $D'$ (which vanishes to first order on $\mathcal{H}^{+}$) is positive. As regards the coef{}ficient of the mixed term we have
$G_{vr}=16D+2\sqrt{D}=\epsilon_{1}\cdot\epsilon_{2},$
where
$\epsilon_{1}=\sqrt{D}, \epsilon_{2}=16\sqrt{D}+2.$ We will show that in region $\mathcal{A}_{N}$ we have
$\epsilon_{1}^{2}\leq G_{rr},\ \epsilon_{2}^{2}<G_{vv}.$ Indeed, if  we set $\lambda =\frac{M}{r}$ we have
\begin{equation*}
\begin{split}
\epsilon_{1}^{2}\leq G_{rr}\Leftrightarrow\left(1-\lambda\right)\leq\left(1-\lambda\right)\left(\frac{1}{4}-\lambda\right)-\lambda\left(-\frac{3}{2}+\lambda\right)\Leftrightarrow\lambda \leq \frac{3}{5},
\end{split}
\end{equation*}
which holds. Note also that $\epsilon_{1}^{2}$ vanishes to second order on $\mathcal{H}^{+}$ whereas $G_{rr}$ vanishes to first order. Therefore, in region $\mathcal{A}_{N}$ we have $G_{rr}-\epsilon_{1}^{2}\sim \sqrt{D}$. Similarly, 
\begin{equation*}
\begin{split}
&\epsilon_{2}^{2}< G_{vv}\Leftrightarrow 16\left[\left(1-\frac{M}{r}\right)+2\right]^{2}< 16\Leftrightarrow \lambda >\frac{7}{8},
\end{split}
\end{equation*}
which again holds. The proposition follows from the spatial compactness of $\mathcal{A}_{N}$ and  that $a^{2}+ab+b^{2}\geq 0$ for all $a,b\in\mathbb{R}$. 
\end{proof}

\subsection{The Cut-off $\delta$ and the Current $J_{\mu}^{N,\delta,-\frac{1}{2}}$}
\label{sec:TheCutOffDeltaAndTheCurrentJMuNDeltaFrac12}

Clearly, the current $K^{N,-\frac{1}{2}}$ will not be non-negative far away from $\mathcal{H}^{+}$ and thus is not useful there. For this reason we extend $N^{v},N^{r}$ such that 
\begin{equation*}
\begin{split}
&N^{v}\left(r\right)>0 \ \text{for all}\  r\geq M \ \text{and}\  N^{v}\left(r\right)=1\  \text{for all}\  r\geq \frac{8M}{7}, \\
&N^{r}\left(r\right)\leq 0\  \text{for all}\  r\geq M\  \text{and}\  N^{r}\left(r\right)=0\  \text{for all}\  r\geq \frac{8M}{7}.
\end{split}
\end{equation*}
$N$ as defined is a future directed timelike $\phi_{\tau}^{T}$-invariant vector field. Similarly, the modification term in the current $J^{N,-\frac{1}{2}}_{\mu}$ is not useful for consideration far away from  $\mathcal{H}^{+}$. We thus introduce a smooth cut-off function $\delta:[\left.M,+\infty\right.)\rightarrow\mathbb{R}$ such that $\delta\left(r\right)=1,r\in\left[M,\frac{9M}{8}\right]$ and $\delta\left(r\right)=0,r\in\left[\left.\frac{8M}{7},
+\infty\right.\right)$ and  consider the currents
\begin{equation}
\begin{split}
&J_{\mu}^{N,\delta,-\frac{1}{2}}\overset{.}{=}J_{\mu}^{N}-\frac{1}{2}\delta\psi\nabla_{\mu}\psi,\ \ K^{N,\delta,-\frac{1}{2}}\overset{.}{=}\nabla^{\mu}J_{\mu}^{N,\delta,-\frac{1}{2}}.\\
\end{split}
\end{equation}
We now consider the three regions $\mathcal{A}_{N},\mathcal{B}_{N},\mathcal{C}_{N}$
 \begin{figure}[H]
	\centering
		\includegraphics[scale=0.10]{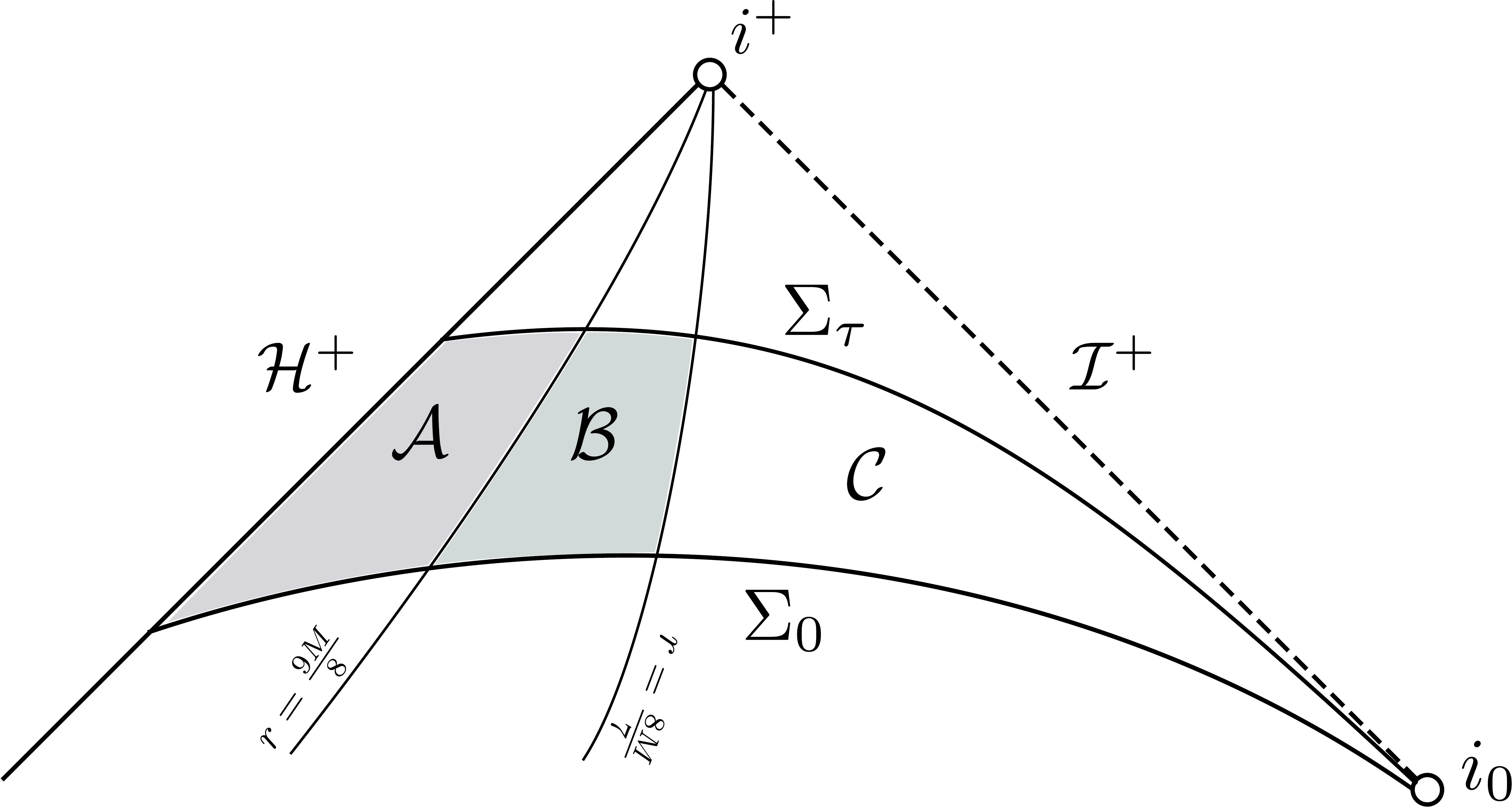}
	\label{fig:ernt6}
\end{figure}
In  region $\mathcal{C}_{N}=\left\{r\geq\frac{8M}{7}\right\}$ where $\delta=0$ and $ N=T$, we have
$K^{N,\delta,-\frac{1}{2}}= 0.$
However, this spacetime current, which depends on the 1-jet of $\psi$,  will generally be  negative in  region $\mathcal{B}_{N}$ and thus   will be controlled by the $X$ and Morawetz estimates (which, of course, are non-degenerate in $\mathcal{B}_{N}$).

We next control the Sobolev norm $\left\|\psi\right\|^{2}_{\dot{H}^{1}\left(\Sigma_{\tau}\right)}\lesssim\, \int_{\Sigma_{\tau}}{\left(\partial_{v}\psi\right)^{2}+\left(\partial_{r}\psi\right)^{2}+\left|\nabb\psi\right|^{2}}.$
Since $N$ and $n^{\mu}_{\Sigma_{\tau}}$ are timelike everywhere in $\mathcal{R}$ and since $\xi_{N},\xi_{n}$ are positive and uniformly bounded, \eqref{GENERALT} of Appendix \ref{sec:TheHyperbolicityOfTheWaveEquation1} implies $J_{\mu}^{N}[\psi]n^{\mu}_{\Sigma_{\tau}}\sim \, \left(\partial_{v}\psi\right)^{2}+\left(\partial_{r}\psi\right)^{2}+\left|\nabb\psi\right|^{2},$
where, in view of the $\phi^{T}_{\tau}$-invariance of $\Sigma_{\tau}$ and $N$, the constants in  $\sim$ depend only on $M$ and $\Sigma_{0}$. Therefore, it suffices to estimate the flux of $N$ through $\Sigma_{\tau}$. We have:
\begin{proposition}
There exists a  constant $C>0$ which depends  on $M$ and $\Sigma_{0}$ such that for all functions $\psi$ 
\begin{equation}
\int_{\Sigma_{\tau}}{J_{\mu}^{N}[\psi]n^{\mu}}\leq 2\int_{\Sigma_{\tau}}{J_{\mu}^{N,\delta,-\frac{1}{2}}[\psi]n^{\mu}}+C\int_{\Sigma_{\tau}}{J_{\mu}^{T}[\psi]n^{\mu}}.
\label{nboundary}
\end{equation}
\label{nb}
\end{proposition}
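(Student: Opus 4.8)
The plan is to exploit the fact that the two currents differ only by the first-order modification term, so that their fluxes differ by an explicit boundary integral which is then controlled. First I would record the pointwise identity
\[
J_{\mu}^{N}[\psi]n^{\mu}-J_{\mu}^{N,\delta,-\frac{1}{2}}[\psi]n^{\mu}=\tfrac{1}{2}\,\delta\,\psi\,(n\psi),
\]
which follows immediately from the definition $J_{\mu}^{N,\delta,-\frac{1}{2}}=J_{\mu}^{N}-\tfrac{1}{2}\delta\psi\nabla_{\mu}\psi$ upon contracting with $n^{\mu}$. Integrating over $\Sigma_{\tau}$ reduces the proposition to bounding the single cross term $\tfrac{1}{2}\int_{\Sigma_{\tau}}\delta\,\psi\,(n\psi)$.

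For the second step I would apply a weighted Cauchy--Schwarz (Young) inequality with a small parameter $\eta>0$,
\[
\tfrac{1}{2}\,\delta\,|\psi|\,|n\psi|\leq\tfrac{\eta}{4}\,\delta\,(n\psi)^{2}+\tfrac{1}{4\eta}\,\delta\,\psi^{2},
\]
splitting the cross term into a derivative part and a zeroth-order part. Since $N$ and $n_{\Sigma_{\tau}}$ are both future directed and timelike, Proposition~\ref{hyperb} gives $(n\psi)^{2}\lesssim J_{\mu}^{N}[\psi]n^{\mu}$ with constants depending only on $M$ and $\Sigma_{0}$, and $\delta\leq 1$, so the derivative part is bounded by a fixed multiple of $\int_{\Sigma_{\tau}}J_{\mu}^{N}[\psi]n^{\mu}$. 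Choosing $\eta$ small enough that this multiple is at most $\tfrac{1}{2}$, I would absorb it into the left-hand side; this absorption is precisely what produces the factor $2$ in front of the modified flux.

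The remaining, and genuinely substantive, step is to control the zeroth-order term $\int_{\Sigma_{\tau}}\delta\,\psi^{2}$ by the \emph{degenerate} $T$-flux. Here I would invoke the First Hardy Inequality (Proposition~\ref{firsthardy}),
\[
\int_{\Sigma_{\tau}}\frac{1}{r^{2}}\psi^{2}\leq C\int_{\Sigma_{\tau}}D\bigl[(\partial_{v}\psi)^{2}+(\partial_{r}\psi)^{2}\bigr].
\]
Because $\delta$ is supported in $\{M\leq r\leq\frac{8M}{7}\}$, the weight $r^{-2}$ is bounded below on the support of $\delta$, so $\int_{\Sigma_{\tau}}\delta\,\psi^{2}\lesssim\int_{\Sigma_{\tau}}r^{-2}\psi^{2}$; and since $0\leq D\leq 1$, together with $J_{\mu}^{T}[\psi]n^{\mu}\sim(\partial_{v}\psi)^{2}+D(\partial_{r}\psi)^{2}+|\nabb\psi|^{2}$, the right-hand side of Hardy is dominated by $\int_{\Sigma_{\tau}}J_{\mu}^{T}[\psi]n^{\mu}$. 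Assembling the three steps yields the stated inequality with $C=C(M,\Sigma_{0})$.

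I expect the Hardy step to be the crux. A naive bound of $\int_{\Sigma_{\tau}}\delta\,\psi^{2}$ by the local non-degenerate energy would be circular, since that energy is exactly the quantity $\int_{\Sigma_{\tau}}J_{\mu}^{N}[\psi]n^{\mu}$ we are trying to estimate; the entire point is that the First Hardy Inequality trades the undifferentiated $\psi^{2}$ for the degenerate (and hence conserved, by Proposition~\ref{ubdenergy}) $T$-energy. The mechanism making this possible is the second-order vanishing of the weight $D$ on $\mathcal{H}^{+}$ in the First Hardy Inequality, which compensates exactly for the degeneration of $J_{\mu}^{T}n^{\mu}$ in the transversal derivative at the horizon.
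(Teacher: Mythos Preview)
Your proof is correct and follows essentially the same route as the paper: write the pointwise identity for the difference of the two fluxes, apply Young's inequality to the cross term, absorb the derivative part into $\tfrac{1}{2}J_{\mu}^{N}n^{\mu}$, and control the remaining $\delta\psi^{2}$ via the First Hardy Inequality. The only cosmetic difference is that the paper expands $n\psi$ into its $\partial_{v}$ and $\partial_{r}$ components before applying Cauchy--Schwarz, whereas you keep $n\psi$ bundled and invoke $(n\psi)^{2}\lesssim J_{\mu}^{N}n^{\mu}$ directly.
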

\begin{proof}
We have
\begin{equation*}
\begin{split}
J_{\mu}^{N,\delta,-\frac{1}{2}}n^{\mu}&=J_{\mu}^{N}n^{\mu}-\frac{1}{2}\delta\psi\partial_{\mu}\psi n^{\mu}=J_{\mu}^{N}n^{\mu}-\frac{1}{2}\delta\psi\partial_{v}\psi n^{v}-\frac{1}{2}\delta\psi\partial_{r}\psi n^{r}\\
&\geq J_{\mu}^{N}n^{\mu}-\epsilon\delta(\partial_{v}\psi)^{2}-\epsilon\delta(\partial_{r}\psi)^{2}-\frac{\delta}{\epsilon}\psi^{2}\geq  \frac{1}{2}J_{\mu}^{N}n^{\mu}-\frac{\delta}{\epsilon}\psi^{2}
\end{split}
\end{equation*}
for a sufficiently small $\epsilon$. The result follows from the first Hardy inequality.  
\end{proof}
\begin{corollary}
There exists a  constant $C>0$ which depends  on $M$ and $\Sigma_{0}$ such that for all functions $\psi$ 
\begin{equation}
\int_{\Sigma_{\tau}}{J_{\mu}^{N,\delta ,-\frac{1}{2}}[\psi]n^{\mu}}\leq C\int_{\Sigma_{\tau}}{J_{\mu}^{N}[\psi]n^{\mu}}.
\label{nboundary1}
\end{equation}
\label{nb1}
\end{corollary}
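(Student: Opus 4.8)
The plan is to bound the integrand $J_{\mu}^{N,\delta,-\frac{1}{2}}[\psi]n^{\mu}$ pointwise by a multiple of $J_{\mu}^{N}[\psi]n^{\mu}$ plus a zeroth-order term $\psi^{2}$, and then to dispose of that term using the first Hardy inequality. First I would write out the definition of the modified current and split off the cross terms,
\[
J_{\mu}^{N,\delta,-\frac{1}{2}}n^{\mu}=J_{\mu}^{N}n^{\mu}-\tfrac{1}{2}\delta\,\psi\,\partial_{v}\psi\,n^{v}-\tfrac{1}{2}\delta\,\psi\,\partial_{r}\psi\,n^{r},
\]
recalling that our assumptions on $\Sigma_{0}$ (hence on each $\Sigma_{\tau}$) guarantee that the components $n^{v},n^{r}$ of the future unit normal in the $(v,r)$ system are uniformly bounded.

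Next I would estimate the two cross terms by Cauchy--Schwarz exactly as in the proof of Proposition \ref{nb}, but now seeking an \emph{upper} bound: for any $\epsilon>0$,
\[
-\tfrac{1}{2}\delta\,\psi\,\partial_{v}\psi\,n^{v}-\tfrac{1}{2}\delta\,\psi\,\partial_{r}\psi\,n^{r}\ \le\ \epsilon\,\delta\left[(\partial_{v}\psi)^{2}+(\partial_{r}\psi)^{2}\right]+\tfrac{C}{\epsilon}\,\delta\,\psi^{2},
\]
with $C$ depending only on $\sup|n^{v}|,\sup|n^{r}|$ and hence only on $M$ and $\Sigma_{0}$. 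Since $\delta$ is supported in $\left\{M\le r\le\frac{8M}{7}\right\}$, on its support $r$ is comparable to $M$, so $\delta\,\psi^{2}\lesssim\frac{1}{r^{2}}\psi^{2}$; moreover, because $N$ and $n$ are everywhere timelike with $\xi_{N},\xi_{n}$ uniformly bounded, we have $J_{\mu}^{N}n^{\mu}\sim(\partial_{v}\psi)^{2}+(\partial_{r}\psi)^{2}+\left|\nabb\psi\right|^{2}$, so the derivative error $\epsilon\,\delta[(\partial_{v}\psi)^{2}+(\partial_{r}\psi)^{2}]$ is already controlled by $C\epsilon\,J_{\mu}^{N}n^{\mu}$.

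It then remains only to absorb the zeroth-order term. Integrating over $\Sigma_{\tau}$ and applying the first Hardy inequality (Proposition \ref{firsthardy}) gives
\[
\int_{\Sigma_{\tau}}\tfrac{1}{r^{2}}\psi^{2}\ \le\ C\int_{\Sigma_{\tau}}D\left[(\partial_{v}\psi)^{2}+(\partial_{r}\psi)^{2}\right]\ \le\ C\int_{\Sigma_{\tau}}\left[(\partial_{v}\psi)^{2}+(\partial_{r}\psi)^{2}\right]\ \le\ C\int_{\Sigma_{\tau}}J_{\mu}^{N}[\psi]n^{\mu},
\]
where in the middle step I used $D\le 1$ and in the last the equivalence above. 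Collecting the three contributions and fixing $\epsilon$ (say $\epsilon=1$, since no smallness is needed here) yields $\int_{\Sigma_{\tau}}J_{\mu}^{N,\delta,-\frac{1}{2}}n^{\mu}\le C\int_{\Sigma_{\tau}}J_{\mu}^{N}n^{\mu}$, as claimed.

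I would stress that there is no genuine obstacle: unlike Proposition \ref{nb}, where one wants to recover the \emph{full} $N$-flux and is therefore forced to pay a $T$-flux term, the present corollary only needs to \emph{dominate} the modified flux, so the sign of the modification is irrelevant and crude Cauchy--Schwarz suffices. The one point deserving care is that the $\psi$-weight in the first Hardy inequality degenerates at infinity; this is harmless precisely because the cut-off $\delta$ is compactly supported near $\mathcal{H}^{+}$, so the $\psi^{2}$ error never reaches the region where the Hardy weight is weak.
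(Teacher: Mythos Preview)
Your argument is correct and follows essentially the same route as the paper: expand $J_{\mu}^{N,\delta,-\frac{1}{2}}n^{\mu}$, bound the cross terms $-\tfrac{1}{2}\delta\,\psi\,\nabla_{\mu}\psi\,n^{\mu}$ by Cauchy--Schwarz, absorb the derivative errors into $J_{\mu}^{N}n^{\mu}$, and dispatch the remaining $\delta\psi^{2}$ term with the first Hardy inequality. The paper's proof is simply the compressed version of what you wrote (it fixes $\epsilon=1$ implicitly and omits the remark about the support of $\delta$).
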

\begin{proof}
Note that
\begin{equation*}
\begin{split}
J_{\mu}^{N,\delta,-\frac{1}{2}}n^{\mu}&=J_{\mu}^{N}n^{\mu}-\frac{1}{2}\delta\psi\partial_{\mu}\psi n^{\mu}\leq J_{\mu}^{N}n^{\mu}+\delta(\partial_{v}\psi)^{2}+\delta(\partial_{r}\psi)^{2}+C\delta\psi^{2}\\
\end{split}
\end{equation*}
and use the first Hardy inequality.
\end{proof}

\subsection{Lower Estimate for an Integral over $\mathcal{H}^{+}$}
\label{sec:EstimatesForAnIntegralOverMathcalH}
As regards the integral over $\mathcal{H}^{+}$, we have
\begin{proposition}
For all functions $\psi$ and any $\epsilon >0$ we have 
\begin{equation*}
\int_{\mathcal{H}^{+}}{J_{\mu}^{N,\delta,-\frac{1}{2}}[\psi]n^{\mu}_{\mathcal{H}^{+}}}\geq \int_{\mathcal{H}^{+}}{J_{\mu}^{N}[\psi]n^{\mu}_{\mathcal{H}^{+}}}\ -C_{\epsilon}\int_{\Sigma_{\tau}}{J_{\mu}^{T}[\psi]n^{\mu}_{\Sigma_{\tau}}}\ -\epsilon\int_{\Sigma_{\tau}}{J_{\mu}^{N}[\psi]n^{\mu}_{\Sigma_{\tau}}},
\end{equation*}
\label{nhb}
where $C_{\epsilon}$ depends on $M$, $\Sigma_{0}$ and $\epsilon$.
\end{proposition}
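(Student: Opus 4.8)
The plan is to reduce the inequality to a single controllable boundary term on the horizon and then invoke the two Hardy inequalities of Section \ref{sec:HardyInequalities}. First I would compute the difference of the two fluxes on $\mathcal{H}^{+}$. Since $n_{\mathcal{H}^{+}}=\partial_{v}=T$ and, by construction of the cut-off, $\delta\equiv 1$ on $\left[M,\frac{9M}{8}\right]$ and in particular $\delta=1$ on $\mathcal{H}^{+}$, the modification term $-\frac{1}{2}\delta\psi\nabla_{\mu}\psi$ contributes only through its contraction with $\partial_{v}$, so that on $\mathcal{H}^{+}$
\begin{equation*}
J_{\mu}^{N,\delta,-\frac{1}{2}}[\psi]n^{\mu}_{\mathcal{H}^{+}}-J_{\mu}^{N}[\psi]n^{\mu}_{\mathcal{H}^{+}}=-\frac{1}{2}\psi\,\partial_{v}\psi.
\end{equation*}
Thus it suffices to bound $-\frac{1}{2}\int_{\mathcal{H}^{+}}\psi\,\partial_{v}\psi$ from below by the stated right-hand side.

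Next I would integrate by parts along the null generators of $\mathcal{H}^{+}$, which are precisely the integral curves of $T=\partial_{v}$. Writing $\psi\,\partial_{v}\psi=\frac{1}{2}\partial_{v}(\psi^{2})$ and applying the fundamental theorem of calculus in $v$ on each generator (the induced area element on the symmetry spheres being constant along $\mathcal{H}^{+}$ since $r=M$ there), the integral collapses to boundary contributions on the spheres $\Sigma_{\tau}\cap\mathcal{H}^{+}$ and $\Sigma_{0}\cap\mathcal{H}^{+}$:
\begin{equation*}
-\frac{1}{2}\int_{\mathcal{H}^{+}}\psi\,\partial_{v}\psi=-\frac{1}{4}\int_{\Sigma_{\tau}\cap\mathcal{H}^{+}}\psi^{2}+\frac{1}{4}\int_{\Sigma_{0}\cap\mathcal{H}^{+}}\psi^{2}\ \geq\ -\frac{1}{4}\int_{\Sigma_{\tau}\cap\mathcal{H}^{+}}\psi^{2},
\end{equation*}
where in the last step I discard the manifestly non-negative initial contribution. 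It then remains only to bound $\int_{\Sigma_{\tau}\cap\mathcal{H}^{+}}\psi^{2}$ from above.

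This is where the Hardy inequalities enter. I would apply the Second Hardy Inequality (Proposition \ref{secondhardy}) with a small parameter $\epsilon'$ to estimate $\int_{\mathcal{H}^{+}\cap\Sigma_{\tau}}\psi^{2}$ by $\epsilon'\int_{\Sigma_{\tau}\cap\{r\leq r_{0}\}}\big((\partial_{v}\psi)^{2}+(\partial_{r}\psi)^{2}\big)+C_{\epsilon'}\int_{\Sigma_{\tau}\cap\{r\leq r_{0}\}}\psi^{2}$. The derivative integral is controlled by $\int_{\Sigma_{\tau}}J^{N}_{\mu}[\psi]n^{\mu}$, since $J^{N}_{\mu}n^{\mu}\sim(\partial_{v}\psi)^{2}+(\partial_{r}\psi)^{2}+|\nabb\psi|^{2}$ as $N$ and $n_{\Sigma_{\tau}}$ are both timelike with $\xi_{N},\xi_{n}$ uniformly bounded; choosing $\epsilon'$ proportional to $\epsilon$ (so that $C_{\epsilon'}$ becomes a constant $C_{\epsilon}$) produces the term $-\epsilon\int_{\Sigma_{\tau}}J^{N}_{\mu}n^{\mu}$. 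For the zeroth-order integral I would invoke the First Hardy Inequality (Proposition \ref{firsthardy}): since $\frac{1}{r^{2}}$ is bounded below on $\{r\leq r_{0}\}$, it bounds $\int_{\Sigma_{\tau}\cap\{r\leq r_{0}\}}\psi^{2}$ by $C\int_{\Sigma_{\tau}}D\big((\partial_{v}\psi)^{2}+(\partial_{r}\psi)^{2}\big)$, which, as $D$ is bounded, is $\lesssim\int_{\Sigma_{\tau}}J^{T}_{\mu}[\psi]n^{\mu}$, yielding the term $-C_{\epsilon}\int_{\Sigma_{\tau}}J^{T}_{\mu}n^{\mu}$. Combining these gives the claim.

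The computation is essentially routine once this structure is in place; the main points to get right are the integration by parts along $\mathcal{H}^{+}$ and the bookkeeping that the boundary sphere at $\Sigma_{0}$ carries a favourable sign and may simply be dropped, so that only the single term on $\Sigma_{\tau}\cap\mathcal{H}^{+}$ must be absorbed. The one genuine subtlety is matching the two error types on the right-hand side to the two Hardy inequalities: the $\epsilon$-term must come from the derivative part of the Second Hardy (controlled by the full, non-degenerate $N$-flux), while the $C_{\epsilon}$-term must come from its zeroth-order part (controlled, after the First Hardy, only by the weaker degenerate $T$-flux). It is exactly here that the degeneracy of $J^{T}$ on $\mathcal{H}^{+}$ is harmless, because the First Hardy already supplies the compensating weight $D$.
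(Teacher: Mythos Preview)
Your proof is correct and follows essentially the same route as the paper: compute the difference $-\tfrac12\psi\,\partial_{v}\psi$ on $\mathcal{H}^{+}$, integrate in $v$ to obtain sphere integrals on $\Sigma_{0}\cap\mathcal{H}^{+}$ and $\Sigma_{\tau}\cap\mathcal{H}^{+}$, and then control $\int_{\mathcal{H}^{+}\cap\Sigma}\psi^{2}$ by combining the Second and First Hardy inequalities exactly as you describe. The only cosmetic difference is that you explicitly discard the $\Sigma_{0}$ contribution by sign, whereas the paper states the Hardy bound for a generic slice $\Sigma$; since the $\Sigma_{0}$ term carries the favourable sign your handling is in fact slightly cleaner and matches the stated conclusion (which involves only $\Sigma_{\tau}$) more directly.
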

\begin{proof}
Recall our convention  $n_{\scriptstyle\mathcal{H}^{+}}=T$. On $\mathcal{H}^{+}$ we have $\delta =1$. Therefore, $J_{\mu}^{N,\delta,-\frac{1}{2}}n^{\mu}_{\mathcal{H}^{+}}=J_{\mu}^{N}n^{\mu}_{\mathcal{H}^{+}}-\frac{1}{2}\psi\partial_{v}\psi$.
However,
\begin{equation*}
\begin{split}
\int_{\mathcal{H}^{+}}{-2\psi\partial_{v}\psi}&=\int_{\mathcal{H}^{+}}{-\partial_{v}\psi^{2}}=\int_{\mathcal{H}^{+}\cap\Sigma_{0}}{\psi^{2}}-\int_{\mathcal{H}^{+}\cap\Sigma_{\tau}}{\psi^{2}}.
\end{split}
\end{equation*}
From the first and second Hardy inequality  we have
\begin{equation*}
\int_{\mathcal{H}^{+}\cap\Sigma}{\psi^{2}}\leq C_{\epsilon}\int_{\Sigma}{J_{\mu}^{T}n^{\mu}_{\Sigma}}+\epsilon\int_{\Sigma}{(\partial_{v}\psi)^{2}+(\partial_{r}\psi)^{2}}\leq C_{\epsilon}\int_{\Sigma}{J_{\mu}^{T}n^{\mu}_{\Sigma}}+\epsilon\int_{\Sigma}{J_{\mu}^{N}n^{\mu}_{\Sigma}},
\end{equation*}
which completes the proof.
\end{proof}

\section{Uniform Boundedness of Local Observer's Energy}
\label{sec:UniformBoundednessOfLocalObserverSEnergy}
We have all tools in place in order to prove the following theorem
\begin{theorem}
There exists a  constant $C>0$ which depends  on $M$ and $\Sigma_{0}$ such that for all solutions $\psi$ of the wave equation 
\begin{equation}
\int_{\Sigma_{\tau}}{J_{\mu}^{N}[\psi]n^{\mu}_{\Sigma_{\tau}}}+\int_{\mathcal{H}^{+}}{J_{\mu}^{N}[\psi]n^{\mu}_{\mathcal{H}^{+}}}\leq C\int_{\Sigma_{0}}{J_{\mu}^{N}[\psi]n^{\mu}_{\Sigma_{0}}}.
\label{nener}
\end{equation}
\label{nenergy}
\end{theorem}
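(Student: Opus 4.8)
The plan is to feed the modified redshift current $J_\mu^{N,\delta,-\frac{1}{2}}$ into the divergence identity \eqref{div} on the region $\mathcal{R}(0,\tau)$ and then trade every term against a genuine $N$-flux, paying for the discrepancies with the conserved degenerate $T$-energy \eqref{boundT}. Concretely, applying \eqref{div} to $P_\mu=J_\mu^{N,\delta,-\frac{1}{2}}$ and using $\nabla^\mu P_\mu=K^{N,\delta,-\frac{1}{2}}$ gives
\begin{equation*}
\int_{\Sigma_{\tau}}{J_{\mu}^{N,\delta,-\frac{1}{2}}[\psi]n^{\mu}_{\Sigma_{\tau}}}+\int_{\mathcal{H}^{+}}{J_{\mu}^{N,\delta,-\frac{1}{2}}[\psi]n^{\mu}_{\mathcal{H}^{+}}}=\int_{\Sigma_{0}}{J_{\mu}^{N,\delta,-\frac{1}{2}}[\psi]n^{\mu}_{\Sigma_{0}}}-\int_{\mathcal{R}(0,\tau)}{K^{N,\delta,-\frac{1}{2}}[\psi]}.
\end{equation*}
I would then split the bulk integral over the three regions $\mathcal{A}_{N},\mathcal{B}_{N},\mathcal{C}_{N}$ introduced in Section \ref{sec:TheCutOffDeltaAndTheCurrentJMuNDeltaFrac12}.

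For the bulk term the three regions behave very differently. On $\mathcal{C}_{N}=\{r\geq \frac{8M}{7}\}$ we have $\delta=0$ and $N=T$, so $K^{N,\delta,-\frac{1}{2}}=0$ and the contribution drops out. On $\mathcal{A}_{N}=\{M\leq r\leq \frac{9M}{8}\}$ Proposition \ref{knprop} gives $K^{N,-\frac{1}{2}}[\psi]\geq 0$, so $-\int_{\mathcal{A}_N}K^{N,\delta,-\frac{1}{2}}\leq 0$ and this piece is discarded (it is precisely this nonnegative spacetime integral that will be retained to yield Theorem \ref{t2}, but it is not needed here). The only problematic region is the transition zone $\mathcal{B}_{N}=\{\frac{9M}{8}\leq r\leq \frac{8M}{7}\}$, where the cut-offs $\delta$ and $N$ vary and $K^{N,\delta,-\frac{1}{2}}$ can be negative; there it is a quadratic form in the $1$-jet of $\psi$, so pointwise $|K^{N,\delta,-\frac{1}{2}}[\psi]|\lesssim (\partial_v\psi)^2+(\partial_r\psi)^2+\left|\nabb\psi\right|^2+\psi^2$. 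The crucial geometric point is that $\mathcal{B}_{N}$ is spatially compact and lies strictly away from both $\mathcal{H}^{+}$ and the photon sphere $r=2M$; hence all the weights in the degenerate $X$ estimate \eqref{degX} of Theorem \ref{degXprop} are bounded below on $\mathcal{B}_{N}$ (no commutation with $T$ is needed), and the zeroth order Morawetz estimate \eqref{moraw} of Theorem \ref{morawetz} controls $\int_{\mathcal{B}_N}\psi^2$. Thus $\big|\int_{\mathcal{R}(0,\tau)}K^{N,\delta,-\frac{1}{2}}\big|\leq C\int_{\Sigma_0}{J_\mu^{T}[\psi]n^\mu_{\Sigma_0}}$.

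It remains to convert the three boundary integrals. For the initial slice I would use Corollary \ref{nb1}, namely \eqref{nboundary1}, to bound $\int_{\Sigma_0}J_\mu^{N,\delta,-\frac{1}{2}}n^\mu\leq C\int_{\Sigma_0}J_\mu^{N}n^\mu$, together with the pointwise comparison $J_\mu^{T}n^\mu\lesssim J_\mu^{N}n^\mu$ (immediate from the two explicit expressions for the fluxes, since $D\leq 1$). For the terminal slice I would use Proposition \ref{nb}, i.e.\ \eqref{nboundary}, in the form $\int_{\Sigma_\tau}J_\mu^{N,\delta,-\frac{1}{2}}n^\mu\geq \frac{1}{2}\int_{\Sigma_\tau}J_\mu^{N}n^\mu-C\int_{\Sigma_\tau}J_\mu^{T}n^\mu$, and for the horizon Proposition \ref{nhb}, which replaces the $\mathcal{H}^{+}$ integral by $\int_{\mathcal{H}^+}J_\mu^{N}n^\mu$ at the cost of $-C_\epsilon\int_{\Sigma_\tau}J_\mu^{T}n^\mu-\epsilon\int_{\Sigma_\tau}J_\mu^{N}n^\mu$; note $\int_{\mathcal{H}^+}J_\mu^{N}n^\mu\geq 0$ by Proposition \ref{hyperb}. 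Substituting these into the identity leaves, on the left, $\left(\tfrac{1}{2}-\epsilon\right)\int_{\Sigma_\tau}J_\mu^{N}n^\mu+\int_{\mathcal{H}^+}J_\mu^{N}n^\mu$ and, on the right, $C\int_{\Sigma_0}J_\mu^{N}n^\mu$ plus a bounded multiple of $\int_{\Sigma_\tau}J_\mu^{T}n^\mu$. The decisive step, and the main obstacle to closing the estimate, is that every stray $T$-flux on $\Sigma_\tau$ must be reabsorbed uniformly in $\tau$: this is exactly where the conservation law \eqref{boundT} of Proposition \ref{ubdenergy} is indispensable, giving $\int_{\Sigma_\tau}J_\mu^{T}n^\mu\leq\int_{\Sigma_0}J_\mu^{T}n^\mu\leq C\int_{\Sigma_0}J_\mu^{N}n^\mu$. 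Finally, fixing $\epsilon=\frac{1}{4}$ absorbs the small $N$-flux term back to the left and multiplying through by $4$ yields \eqref{nener}. The whole difficulty of the argument is thus concentrated not in this assembly but in the prior construction of $N$ and the localized positivity of Proposition \ref{knprop}, without which no timelike multiplier could produce a favourably signed bulk near the degenerate horizon.
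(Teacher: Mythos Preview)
Your proof is correct and follows essentially the same approach as the paper: apply the divergence identity to the modified current $J_\mu^{N,\delta,-\frac{1}{2}}$, handle the bulk via the $\mathcal{A}_N/\mathcal{B}_N/\mathcal{C}_N$ decomposition (using Proposition \ref{knprop} on $\mathcal{A}_N$ and the $X$ and Morawetz estimates on $\mathcal{B}_N$), convert the boundary terms with Propositions \ref{nb}, \ref{nhb} and Corollary \ref{nb1}, and close using the conserved $T$-flux \eqref{boundT}. Your write-up is in fact more detailed than the paper's own proof, which invokes exactly the same ingredients but more tersely.
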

\begin{proof}
Stokes' theorem for the current $J_{\mu}^{N,\delta, -\frac{1}{2}}$ in region $\mathcal{R}(0,\tau)$ gives us
\begin{equation*}
\int_{\Sigma_{\tau}}{J_{\mu}^{N,\delta, -\frac{1}{2}}n^{\mu}_{\Sigma_{\tau}}}+\int_{\mathcal{R}}{K^{N,\delta, -\frac{1}{2}}}+\int_{\mathcal{H}^{+}}{J_{\mu}^{N,\delta, -\frac{1}{2}}n^{\mu}_{\mathcal{H}^{+}}}= \int_{\Sigma_{0}}{J_{\mu}^{N,\delta, -\frac{1}{2}}n^{\mu}_{\Sigma_{0}}}.
\end{equation*}
First observe that the right hand side is controlled by the right hand side of \eqref{nboundary1}. As regards the left hand side, the boundary integrals can be estimated using Propositions \ref{nb} and \ref{nhb}. The spacetime term is non-negative (and thus has the right sign) in region $\mathcal{A}_{N}$, vanishes in region $\mathcal{C}_{N}$ and can be estimated in the spatially compact region $\mathcal{B}_{N}$ (which does not contain the photon sphere) by the X estimate \eqref{degX} and Morawetz estimate \eqref{moraw}. The result follows from the boundedness of $T$-flux through $\Sigma_{\tau}$.
\end{proof}

\begin{corollary}
There exists a  constant $C>0$ which depends  on $M$ and $\Sigma_{0}$ such that for all solutions $\psi$ of the wave equation 
\begin{equation}
\int_{\mathcal{A}_{N}}{K^{N,-\frac{1}{2}}[\psi]}\leq C\int_{\Sigma_{0}}{J_{\mu}^{N}[\psi]n^{\mu}_{\Sigma_{0}}}.
\label{nk}\end{equation}
\label{nkcor}
\end{corollary}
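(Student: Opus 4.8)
The plan is to read the estimate off the very divergence identity used to prove Theorem \ref{nenergy}, simply isolating the favourable spacetime term supported in $\mathcal{A}_N$. Applying Stokes' theorem to the current $J_\mu^{N,\delta,-\frac{1}{2}}$ in $\mathcal{R}(0,\tau)$ and solving for the bulk integral gives
\[
\int_{\mathcal{R}} K^{N,\delta,-\frac{1}{2}} = \int_{\Sigma_0} J_\mu^{N,\delta,-\frac{1}{2}} n^\mu_{\Sigma_0} - \int_{\Sigma_\tau} J_\mu^{N,\delta,-\frac{1}{2}} n^\mu_{\Sigma_\tau} - \int_{\mathcal{H}^+} J_\mu^{N,\delta,-\frac{1}{2}} n^\mu_{\mathcal{H}^+}.
\]
Since $\delta\equiv 1$ on $\mathcal{A}_N$, so that $K^{N,\delta,-\frac{1}{2}}=K^{N,-\frac{1}{2}}$ there, and $K^{N,\delta,-\frac{1}{2}}\equiv 0$ on $\mathcal{C}_N$, the left-hand side equals $\int_{\mathcal{A}_N} K^{N,-\frac{1}{2}} + \int_{\mathcal{B}_N} K^{N,\delta,-\frac{1}{2}}$. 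Hence it suffices to bound each of the four resulting integrals by $C\int_{\Sigma_0} J_\mu^N n^\mu_{\Sigma_0}$.

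I would then dispose of the terms one by one. The datum integral $\int_{\Sigma_0} J_\mu^{N,\delta,-\frac{1}{2}} n^\mu_{\Sigma_0}$ is controlled immediately by Corollary \ref{nb1}. For the two terms carrying the unfavourable minus sign I would invoke the corresponding lower bounds: Proposition \ref{nb}, together with the non-negativity of the $N$-flux through $\Sigma_\tau$, yields $-\int_{\Sigma_\tau} J_\mu^{N,\delta,-\frac{1}{2}} n^\mu \leq C\int_{\Sigma_\tau} J_\mu^T n^\mu$, while Proposition \ref{nhb}, together with the non-negativity of the $N$-flux through $\mathcal{H}^+$ (Proposition \ref{hyperb}, as $n_{\mathcal{H}^+}=T$ is null and $N$ is timelike), yields $-\int_{\mathcal{H}^+} J_\mu^{N,\delta,-\frac{1}{2}} n^\mu \leq C_\epsilon\int_{\Sigma_\tau} J_\mu^T n^\mu + \epsilon\int_{\Sigma_\tau} J_\mu^N n^\mu$. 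The $T$-flux on $\Sigma_\tau$ is bounded by its datum value through Proposition \ref{ubdenergy}, and since $J_\mu^T n^\mu \lesssim J_\mu^N n^\mu$ pointwise (the degenerate energy is dominated by the non-degenerate one), both contributions are $\leq C\int_{\Sigma_0} J_\mu^N n^\mu_{\Sigma_0}$; the leftover $\epsilon$-term is absorbed by applying Theorem \ref{nenergy}, which bounds $\int_{\Sigma_\tau} J_\mu^N n^\mu$ by $C\int_{\Sigma_0} J_\mu^N n^\mu_{\Sigma_0}$.

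Finally, the intermediate bulk term $\int_{\mathcal{B}_N} K^{N,\delta,-\frac{1}{2}}$ is handled exactly as in the proof of Theorem \ref{nenergy}: the region $\mathcal{B}_N\subset\{\tfrac{9M}{8}\leq r\leq \tfrac{8M}{7}\}$ is spatially compact and avoids both $\mathcal{H}^+$ and the photon sphere $\{r=2M\}$, so $K^{N,\delta,-\frac{1}{2}}$ is a quadratic form in the $1$-jet of $\psi$ with uniformly bounded coefficients and is thus controlled by the non-degenerate $X$ estimate \eqref{degX} and the zeroth order Morawetz estimate \eqref{moraw}, whose right-hand sides are $\leq C\int_{\Sigma_0} J_\mu^T n^\mu \leq C\int_{\Sigma_0} J_\mu^N n^\mu_{\Sigma_0}$. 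Collecting the four bounds proves the corollary. The only point needing care is the bookkeeping of signs: because $\int_{\mathcal{A}_N} K^{N,-\frac{1}{2}}$ is isolated on the left, the boundary fluxes on $\Sigma_\tau$ and $\mathcal{H}^+$ enter with the unfavourable orientation, so one genuinely needs the \emph{lower} bounds of Propositions \ref{nb} and \ref{nhb} (rather than the upper bounds used to prove Theorem \ref{nenergy}) in conjunction with the already-established uniform boundedness of the $N$-energy. This is the main, though essentially mild, obstacle, since the corollary is a direct by-product of the estimates feeding into Theorem \ref{nenergy}.
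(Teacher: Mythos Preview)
Your proposal is correct and follows essentially the same approach as the paper. The corollary is a direct by-product of the divergence identity in the proof of Theorem~\ref{nenergy}: once the lower bounds of Propositions~\ref{nb} and~\ref{nhb} are applied and the $\mathcal{B}_N$ bulk is controlled by the $X$ and Morawetz estimates, the three non-negative quantities $\int_{\Sigma_\tau}J_\mu^N n^\mu$, $\int_{\mathcal{H}^+}J_\mu^N n^\mu$ and $\int_{\mathcal{A}_N}K^{N,-\frac{1}{2}}$ all sit on the favourable side and are thus simultaneously bounded by $C\int_{\Sigma_0}J_\mu^N n^\mu$; your version simply quotes Theorem~\ref{nenergy} afterwards to close the $\epsilon$-term rather than reading off all three bounds at once, which is a harmless reordering.
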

Theorem \ref{t2} of Section \ref{sec:TheMainTheorems} is then implied by \eqref{nener}, Proposition \ref{knprop} and the above corollary. Note also that Corollary \ref{nkcor} and estimate \eqref{x} give us a spacetime integral where the only weight that locally degenerates (to first order) is that of the derivative tranversal to $\mathcal{H}^{+}$. Recall that in the  subextreme Reissner-Nordstr\"{o}m case there is no such degeneration. In the next section, this degeneracy is eliminated provided $\psi_{0}=0$. This condition is necessary as is shown in Section \ref{sec:TheSphericallySymmetricCase}.

One application of the above theorem is the following  Morawetz estimate which does not degenerate at $\mathcal{H}^{+}$.
\begin{proposition}
There exists a  constant $C>0$ which depends  on $M$ and $\Sigma_{0}$ such that for all solutions $\psi$ of the wave equation 
\begin{equation}
\int_{\mathcal{A}_{N}}{\psi^{2}}\leq C\int_{\Sigma_{0}}{J_{\mu}^{N}[\psi]n^{\mu}_{\Sigma_{0}}}.
\label{nodmoraw}
\end{equation}
\label{moranondeg}
\end{proposition}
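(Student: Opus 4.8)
The plan is to reduce everything to the three spacetime integrals produced by the Third Hardy inequality (Proposition \ref{thirdhardy}) and to control each of them separately by the $N$-flux of the data. First I would fix $r_{0}=\frac{9M}{8}$, so that the region $\mathcal{A}$ of Proposition \ref{thirdhardy} coincides with $\mathcal{A}_{N}$, and choose $r_{1}$ with $\frac{9M}{8}<r_{1}<2M$, setting $\mathcal{B}=\mathcal{R}(0,\tau)\cap\{r_{0}\leq r\leq r_{1}\}$. Crucially $r_{1}$ is taken strictly below the photon sphere so that $\mathcal{B}$ avoids both $\mathcal{H}^{+}$ and $\{r=2M\}$. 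Proposition \ref{thirdhardy} then gives
\begin{equation*}
\int_{\mathcal{A}_{N}}{\psi^{2}}\leq C\int_{\mathcal{B}}{\psi^{2}}+C\int_{\mathcal{A}_{N}\cup\mathcal{B}}{D[(\partial_{v}\psi)^{2}+(\partial_{r}\psi)^{2}]},
\end{equation*}
so it suffices to bound the right-hand side by $\int_{\Sigma_{0}}{J^{N}_{\mu}[\psi]n^{\mu}_{\Sigma_{0}}}$. Throughout I would use that $J^{T}_{\mu}n^{\mu}\lesssim J^{N}_{\mu}n^{\mu}$ on $\Sigma_{0}$, which is immediate from the pointwise expressions for the two fluxes recorded in Sections \ref{sec:TheVectorFieldTextbfM} and \ref{sec:TheVectorFieldN}.

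For the first term, since $r$ is bounded away from $M$ and from infinity on $\mathcal{B}$, the weight $\frac{D}{r^{4}}$ is bounded below there, so the zeroth order Morawetz estimate \eqref{moraw} yields $\int_{\mathcal{B}}{\psi^{2}}\lesssim\int_{\mathcal{R}}{\frac{D}{r^{4}}\psi^{2}}\lesssim\int_{\Sigma_{0}}{J^{T}_{\mu}[\psi]n^{\mu}_{\Sigma_{0}}}$. For the part of the second term supported in $\mathcal{A}_{N}$ I would invoke the redshift estimate: since $0<D\leq 1$ on $\mathcal{A}_{N}$ one has $D(\partial_{v}\psi)^{2}+D(\partial_{r}\psi)^{2}\leq(\partial_{v}\psi)^{2}+\sqrt{D}(\partial_{r}\psi)^{2}$, and by Proposition \ref{knprop} the latter is $\lesssim K^{N,-\frac{1}{2}}[\psi]$, whose spacetime integral over $\mathcal{A}_{N}$ is controlled by Corollary \ref{nkcor} (estimate \eqref{nk}).

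For the part of the second term supported in $\mathcal{B}$, I would use the degenerate $X$ estimate \eqref{degX}. Because $\mathcal{B}$ stays away from both $\mathcal{H}^{+}$ and the photon sphere, every coefficient in \eqref{degX}---including the weight $\frac{(r-M)(r-2M)^{2}}{r^{7}}$ multiplying $(\partial_{t}\psi)^{2}$---is bounded below on $\mathcal{B}$; hence \eqref{degX} controls all first-order derivatives there, and in particular $\int_{\mathcal{B}}{D[(\partial_{v}\psi)^{2}+(\partial_{r}\psi)^{2}]}\lesssim\int_{\Sigma_{0}}{J^{T}_{\mu}[\psi]n^{\mu}_{\Sigma_{0}}}$, using $\partial_{r^{*}}=\partial_{v}+D\partial_{r}$ to pass between the two derivative pairs on the spatially compact $\mathcal{B}$. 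Summing the three bounds and passing from the $T$-flux to the $N$-flux on $\Sigma_{0}$ yields the proposition.

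The one delicate point---and the reason the argument closes---is the matching of degeneracies at $\mathcal{H}^{+}$ in the middle term: the Hardy weight carries the factor $D$, which vanishes to second order on the horizon, while the redshift current $K^{N,-\frac{1}{2}}$ controls only $\sqrt{D}(\partial_{r}\psi)^{2}$, degenerate to first order. Since $D\leq\sqrt{D}$ near the horizon, the stronger Hardy degeneracy is exactly what is needed to absorb the transversal derivative into the redshift estimate, so that no loss of derivatives and no commutation with $T$ is required; this is precisely what makes the right-hand side involve only $\int_{\Sigma_{0}}{J^{N}_{\mu}[\psi]n^{\mu}_{\Sigma_{0}}}$ rather than a higher-order flux.
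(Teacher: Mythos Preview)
Your proof is correct and follows essentially the same route as the paper: apply the Third Hardy inequality to reduce to a zeroth-order term on $\mathcal{B}$ (handled by the Morawetz estimate \eqref{moraw}) and a weighted first-order term on $\mathcal{A}_{N}\cup\mathcal{B}$ (handled by the redshift bound of Proposition~\ref{knprop}/Corollary~\ref{nkcor} near $\mathcal{H}^{+}$ and by the degenerate $X$ estimate \eqref{degX} on the spatially compact $\mathcal{B}$). Your final paragraph on the matching of degeneracies ($D\leq\sqrt{D}$ near the horizon) makes explicit the point the paper leaves implicit, and your choice of an arbitrary $r_{1}\in(\tfrac{9M}{8},2M)$ is the same as the paper's use of $\mathcal{B}_{N}$.
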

\begin{proof}
The third Hardy inequality gives us
\begin{equation*}
\int_{\mathcal{A}_{N}}{\psi^{2}}\leq C\int_{\mathcal{B}_{N}}{\psi^{2}}+C\int_{\mathcal{A}_{N}\cup\mathcal{B}_{N}}{D\left((\partial_{v}\psi)^{2}+(\partial_{r}\psi)^{2},\right)}
\end{equation*}
where $C$ is a uniform positive constant that depends only on $M$. The integral over $\mathcal{B}_{N}$ of $\psi^{2}$ can be estimated using \eqref{moraw} and the last integral on the right hand side can be estimated using \eqref{degX} and Propositions  \ref{knprop} and Corollary \ref{nkcor}.
\end{proof}

The above proposition in conjunction with Remark \ref{2ndkindx} and Theorem \ref{morawetz} implies statement (1) of Theorem \ref{th1} of Section \ref{sec:TheMainTheorems}.

\begin{remark}
Note that all the above estimates hold if we replace the foliation $\Sigma_{\tau}$ with the foliation $\tilde{\Sigma}_{\tau}$ which terminates at $\mathcal{I}^{+}$ since the only difference is a boundary integral over $\mathcal{I}^{+}$ of the right sign that arises every time we apply Stokes' theorem. The remaining local estimates are exactly the same. 
\end{remark}

\section{Commuting with a Vector Field Transversal to $\mathcal{H}^{+}$}
\label{sec:CommutingWithAVectorFieldTransversalToMathcalH}

As we have noticed, the estimate \eqref{nk} degenerates with respect to the transversal derivative $\partial_{r}$ to $\mathcal{H}^{+}$. In order to remove this degeneracy we commute the wave equation $\Box_{g}\psi=0$ with  $\partial_{r}$ aiming at additionally controlling all the second derivatives of $\psi$ (on the spacelike hypersurfaces and the spacetime region up to and including the horizon $\mathcal{H}^{+}$). Such commutations first appeared in \cite{dr7}.

\subsection{The Spherically Symmetric Case}
\label{sec:TheSphericallySymmetricCase}
Let us first consider spherically symmetric waves. We have the following
\begin{proposition}
For all  spherically symmetric solutions $\psi$ to the wave equation the quantity 
\begin{equation}
\partial_{r}\psi +\frac{1}{M}\psi
\label{0l}
\end{equation}
is conserved along $\mathcal{H}^{+}$. Therefore, for generic initial data  this quantity does not decay.
\label{ndl0}
\end{proposition}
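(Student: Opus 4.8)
The plan is to write the wave equation explicitly in the $(v,r)$ coordinate system and then exploit the degeneracy $D(M)=D'(M)=0$ characteristic of the extreme horizon. First I would record that in these coordinates the $(v,r)$-block of the metric \eqref{RN1} has inverse $g^{vv}=0$, $g^{vr}=g^{rv}=1$, $g^{rr}=D$, while $\sqrt{-g}=r^{2}\sin\theta$. For a spherically symmetric $\psi$ the angular contribution to $\Box_{g}\psi=\tfrac{1}{\sqrt{-g}}\partial_{\mu}\!\left(\sqrt{-g}\,g^{\mu\nu}\partial_{\nu}\psi\right)$ drops out, and a direct computation gives
\[
\Box_{g}\psi=2\partial_{v}\partial_{r}\psi+\frac{2}{r}\partial_{v}\psi+D\,\partial_{r}^{2}\psi+\frac{1}{r^{2}}\partial_{r}\!\left(r^{2}D\right)\partial_{r}\psi .
\]

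Next I would restrict this identity to $\mathcal{H}^{+}=\{r=M\}$. The whole point is the algebraic fact, special to the extreme case, that $r^{2}D=(r-M)^{2}$, so that both $D$ and $\partial_{r}(r^{2}D)=2(r-M)$ vanish at $r=M$. Consequently the terms carrying the transversal second derivative $\partial_{r}^{2}\psi$ and the transversal first derivative $\partial_{r}\psi$ disappear on the horizon, and $\Box_{g}\psi=0$ collapses there to
\[
2\partial_{v}\partial_{r}\psi+\frac{2}{M}\partial_{v}\psi=0,\qquad\text{i.e.}\qquad \partial_{v}\!\left(\partial_{r}\psi+\frac{1}{M}\psi\right)=0 .
\]
Since $T=\partial_{v}$ generates $\mathcal{H}^{+}$, this is precisely the statement that $\partial_{r}\psi+\frac{1}{M}\psi$ is constant along each null generator of $\mathcal{H}^{+}$, which is the asserted conservation law.

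For the non-decay clause I would argue via genericity: along $\mathcal{H}^{+}$ the conserved quantity equals its value at $\mathcal{H}^{+}\cap\Sigma_{0}$, which is a nontrivial continuous linear functional of the initial data $(\psi_{0},\psi_{1})$ (it is built from $\psi$ and its first derivatives on $\Sigma_{0}$). For any data on which this functional is nonzero, $\partial_{r}\psi+\frac{1}{M}\psi$ is a nonzero constant in $v$ and hence cannot tend to $0$ as $v\to+\infty$; since the data annihilating a nonzero continuous functional form a proper closed subspace, non-decay holds for generic data.

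The computation itself is routine; the only point requiring care is the coefficient $\frac{1}{r^{2}}\partial_{r}(r^{2}D)$, and the main conceptual step is recognizing that its vanishing at $r=M$—equivalently the vanishing of the surface gravity—is exactly what removes the transversal-derivative ``redshift'' term that would otherwise obstruct conservation. Indeed, in the subextreme case $r^{2}D=(r-r_{-})(r-r_{+})$ gives $\partial_{r}(r^{2}D)|_{r_{+}}=r_{+}-r_{-}\neq 0$, producing a genuine $\kappa\,\partial_{r}\psi$ term on the horizon and destroying the conservation law; this contrast is the real content of the statement.
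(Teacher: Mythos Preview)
Your proof is correct and follows essentially the same route as the paper: restrict the wave operator (in $(v,r)$ coordinates) to $\mathcal{H}^{+}$, use the extreme-horizon degeneracy $D(M)=D'(M)=0$ together with spherical symmetry to kill all but the $\partial_{v}\partial_{r}\psi$ and $\tfrac{2}{r}\partial_{v}\psi$ terms, and read off the conservation law. Your additional remarks on the subextreme contrast and the genericity argument are sound elaborations of what the paper records in a single sentence.
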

\begin{proof}
Since $\psi$ solves $\Box_{g}\psi =0$ and since $\lapp\psi =0$ we have
$\partial_{v}\partial_{r}\psi+\frac{1}{M}\partial_{v}\psi=0$
and, since $\partial_{v}$ is tangential to $\mathcal{H}^{+}$, this implies that $\partial_{r}\psi +\frac{1}{M}\psi$ remains constant along $\mathcal{H}^{+}$ and clearly for generic initial data it is not equal to zero.
\end{proof}
By projecting on the zeroth spherical harmonic we deduce that for a generic wave $\psi$ either $\psi^{2}$ or $(\partial_{r}\psi)^{2}$ does not decay along $\mathcal{H}^{+}$ completing the proof of Theorem \ref{nondecay} of Section \ref{sec:TheMainTheorems}. The nature of the above proposition (which turns out to be of fundamental importance for extreme black hole spacetimes) will be  investigated in great detail in the companion papers \cite{aretakis2,aretakis3}, where, in particular a similar non-decay result is shown to hold even for ``generic'' waves whose zeroth spherical harmonic vanishes.

Clearly, this proposition indicates that if one is to commute with $\partial_{r}$ then one must exclude the spherically symmetric waves and thus consider  only  angular frequencies $l\geq 1$. Indeed, we next obtain the sharpest possible result.

\subsection{Commutation with the Vector Field $\partial_{r}$}
\label{sec:CommutationWithTheVectorFieldPartialR}

We  compute the commutator $\left[\Box_{g},\partial_{r}\right]$. First note that 
$\left[\lapp,\partial_{r}\right]\psi=\frac{2}{r}\lapp\psi.$
Therefore, if 
$R=D'+\frac{D}{2r},  D'=\frac{d D}{dr}$
then
\begin{equation}
\begin{split}
\left[\Box_{g},\partial_{r}\right]\psi=-D'\partial_{r}\partial_{r}\psi+\frac{2}{r^{2}}\partial_{v}\psi-R'\partial_{r}\psi+\frac{2}{r}\lapp\psi,
\end{split}
\label{comm1}
\end{equation}
where $R'=\frac{d R}{dr}$.

\subsection{The Multiplier $L$ and the Energy Identity}
\label{sec:TheMultiplierLAndTheEnergyIdentity}

For any solution $\psi$ of the wave equation we have complete control of the second order derivatives of $\psi$ away from  $\mathcal{H}^{+}$ since
\begin{equation}
\begin{split}
\left\|\partial_{a}\partial_{b}\psi\right\|^{2}_{L^{2}\left(\Sigma_{\tau}\cap\left\{M<r_{0}\leq r\right\}\right)}\leq C\int_{\Sigma_{0}}{J^{T}_{\mu}[\psi]n^{\mu}_{\Sigma_{0}}}+C\int_{\Sigma_{0}}{J^{T}_{\mu}\left[T\psi\right]n^{\mu}_{\Sigma_{0}}},
\end{split}
\label{hypersurawayH}
\end{equation}
where $C$ depends  on $M$, $r_{0}$ and $\Sigma_{0}$ and $a,b\in\left\{v,r,\nabb \right\}$. Note that we have used elliptic estimates since $T$ is timelike away from $\mathcal{H}^{+}$ (the zeroth order terms can be estimated by the first Hardy inequality). Similarly, away from $\mathcal{H}^{+}$, we  have control of the bulk integrals of the second order derivatives since \eqref{moraw} and local elliptic estimates imply
\begin{equation}
\left\|\partial_{a}\partial_{b}\psi\right\|^{2}_{L^{2}\left(\mathcal{R}\left(0,\tau\right)\cap\left\{M<r_{0}\leq r\leq r_{1}<2M\right\}\right)}\leq C\int_{\Sigma_{0}}{J^{T}_{\mu}\left[\psi\right]n^{\mu}_{\Sigma_{0}}}+C\int_{\Sigma_{0}}{J^{T}_{\mu}\left[T\psi\right]n^{\mu}_{\Sigma_{0}}}.
\label{bulkawayH}
\end{equation}
The above estimates will be required for bounding several spacetime error terms away from $\mathcal{H}^{+}$. 

In order to estimate the second order derivatives of  $\psi$ in a neighbourhood of  $\mathcal{H}^{+}$ we will construct an appropriate future directed timelike $\phi_{\tau}^{T}$-invariant vector field 
\begin{equation*}
L=L^{v}\partial_{v}+L^{r}\partial_{r},
\end{equation*} 
which will be used as our multiplier. Since we are interested in the region $M\leq r\leq r_{0}<r_{1}$,   $L$ will be such that $L=\textbf{0}$ in $r\geq r_{1}$ and $L$ timelike in the region $M\leq r < r_{1}$. Note that $r_{0},r_{1}$ are constants to be determined later on. The regions $\mathcal{A},\mathcal{B}$ are depicted below
 \begin{figure}[H]
	\centering
		\includegraphics[scale=0.10]{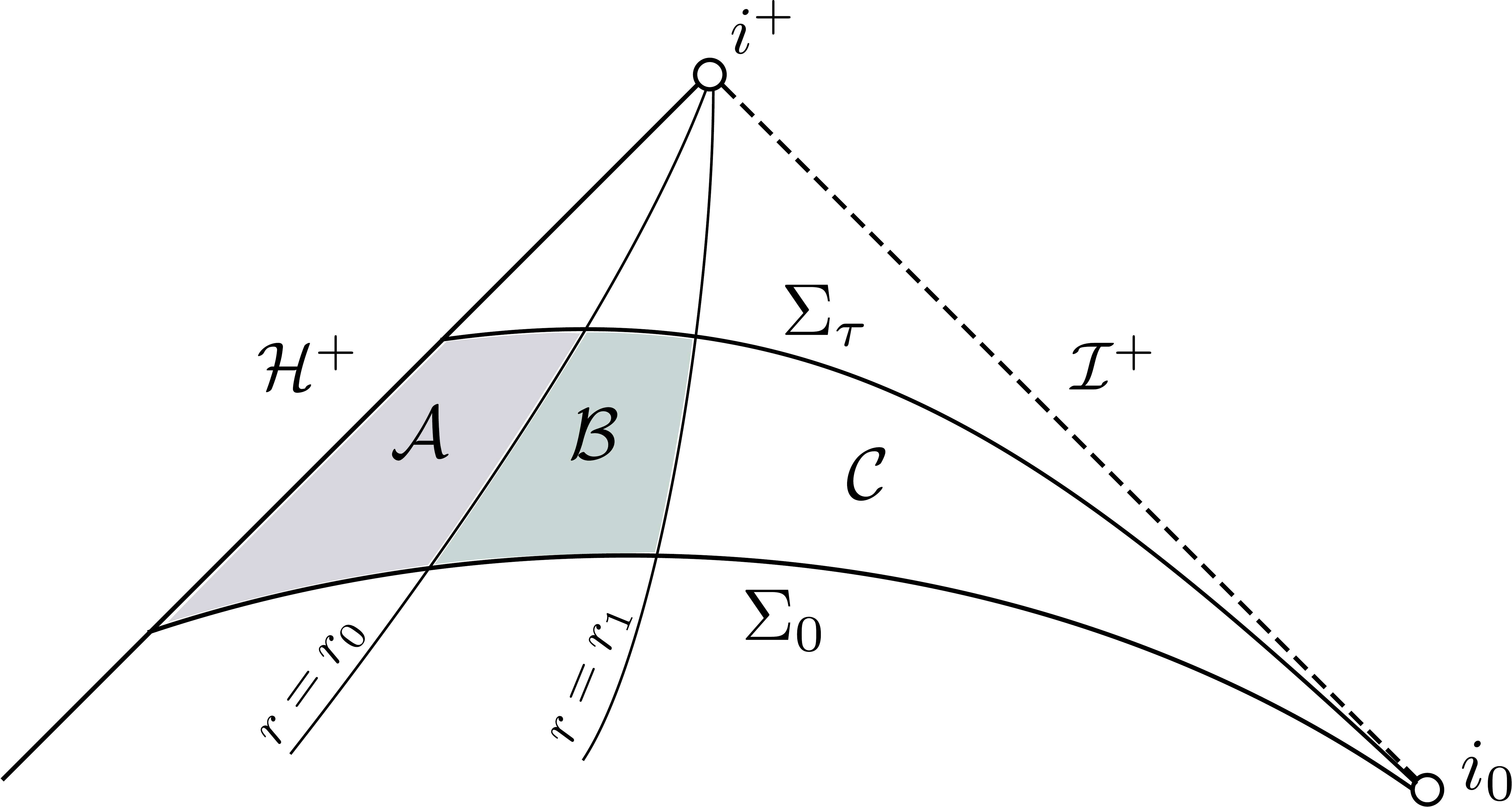}
	\label{fig:ernt7}
\end{figure}
For simplicity we will write $\mathcal{R}$ instead of $\mathcal{R}\left(0,\tau\right)$, $\mathcal{A}$ instead of $\mathcal{A}\left(0,\tau\right)$, etc. 
The ``energy'' identity for the current $J_{\mu}^{L}\left[\partial_{r}\psi\right]$ is 
\begin{equation}
\int_{\Sigma_{\tau}}{J_{\mu}^{L}\left[\partial_{r}\psi\right]n^{\mu}_{\Sigma_{\tau}}}+\int_{\mathcal{R}}{\nabla^{\mu}J_{\mu}^{L}\left[\partial_{r}\psi\right]}+\int_{\mathcal{H}^{+}}{J_{\mu}^{L}\left[\partial_{r}\psi\right]n^{\mu}_{\mathcal{H}^{+}}}=\int_{\Sigma_{0}}{J_{\mu}^{L}\left[\partial_{r}\psi\right]n^{\mu}_{\Sigma_{0}}}.
\label{eiL}
\end{equation}
The right hand side is controlled by the initial data and thus bounded. Also since $L$ is timelike in the compact region $\mathcal{A}$ we have from \eqref{GENERALT} of Appendix \ref{sec:TheHyperbolicityOfTheWaveEquation1}:
\begin{equation}
J_{\mu}^{L}\left[\partial_{r}\psi\right]n^{\mu}_{\Sigma_{\tau}}\sim \left(\partial_{v}\partial_{r}\psi\right)^{2}+\left(\partial_{r}\partial_{r}\psi\right)^{2}+\left|\nabb\partial_{r}\psi\right|^{2},
\label{Lhyper1}
\end{equation}
where the constants in $\sim$ depend on $M$, $\Sigma_{0}$ and $L$. Furthermore, on $\mathcal{H}^{+}$ we have 
\begin{equation}
J_{\mu}^{L}\left[\partial_{r}\psi\right]n^{\mu}_{\mathcal{H}^{+}}=L^{v}(M)(\partial_{v}\partial_{r}\psi)^{2}-\frac{L^{r}(M)}{2}\left|\nabb\partial_{r}\psi\right|^{2}. 
\label{horizonL1}
\end{equation}
Therefore, the term that remains to be understood is the bulk integral. Since $\partial_{r}\psi$ does not satisfy the wave equation, we have
\begin{equation*}
\begin{split}
\nabla^{\mu}J_{\mu}^{L}\left[\partial_{r}\psi\right]&=K^{L}\left[\partial_{r}\psi\right]+\mathcal{E}^{L}\left[\partial_{r}\psi\right]=K^{L}\left[\partial_{r}\psi\right]+\left(\Box_{g}\left(\partial_{r}\psi\right)\right)L\left(\partial_{r}\psi\right).
\end{split}
\end{equation*}
We know that
\begin{equation*}
\begin{split}
K^{L}\left[\partial_{r}\psi\right]=F_{vv}\left(\partial_{v}\partial_{r}\psi\right)^{2}+F_{rr}\left(\partial_{r}\partial_{r}\psi\right)^{2}+F_{\scriptsize{\nabb}}\left|\nabb\partial_{r}\psi\right|^{2}+F_{vr}\left(\partial_{v}\partial_{r}\psi\right)\left(\partial_{r}\partial_{r}\psi\right)
\end{split}
\end{equation*}
where the coefficients are as in \eqref{list}. In view of equation \eqref{comm1} we have
\begin{equation*}
\begin{split}
\mathcal{E}^{L}\left[\partial_{r}\psi\right]=&-D'
L^{r}\left(\partial_{r}\partial_{r}\psi\right)^{2}-D'L^{v}\left(\partial_{v}\partial_{r}\psi\right)\left(\partial_{r}\partial_{r}\psi\right)-R'L^{v}\left(\partial_{v}\partial_{r}\psi\right)\left(\partial_{r}\psi\right)\\
&+2\frac{L^{v}}{r^{2}}\left(\partial_{v}\partial_{r}\psi\right)\left(\partial_{v}\psi\right)+2\frac{L^{r}}{r^{2}}\left(\partial_{r}\partial_{r}\psi\right)\left(\partial_{v}\psi\right)-R'L^{r}\left(\partial_{r}\partial_{r}\psi\right)\left(\partial_{r}\psi\right)\\
&+2\frac{L^{v}}{r}(\partial_{v}\partial_{r}\psi)\lapp\psi+2\frac{L^{r}}{r}(\partial_{r}\partial_{r}\psi)\lapp\psi,
\end{split}
\end{equation*}
Therefore, we can write
\begin{equation*}
\begin{split}
\nabla^{\mu}J_{\mu}^{L}\left[\partial_{r}\psi\right]=&H_{1}\left(\partial_{v}\partial_{r}\psi\right)^{2}+H_{2}\left(\partial_{r}\partial_{r}\psi\right)^{2}+H_{3}\left|\nabb\partial_{r}\psi\right|^{2}+H_{4}\left(\partial_{v}\partial_{r}\psi\right)\left(\partial_{v}\psi\right)+H_{5}\left(\partial_{v}\partial_{r}\psi\right)\left(\partial_{r}\psi\right)\\&+H_{6}\left(\partial_{r}\partial_{r}\psi\right)\left(\partial_{v}\psi\right)+H_{7}(\partial_{v}\partial_{r}\psi)\lapp\psi+H_{8}(\partial_{r}\partial_{r}\psi)\lapp\psi+H_{9}\left(\partial_{v}\partial_{r}\psi\right)\left(\partial_{r}\partial_{r}\psi\right)\\&+H_{10}\left(\partial_{r}\partial_{r}\psi\right)\left(\partial_{r}\psi\right),
\end{split}
\end{equation*}
where the coefficients $H_{i},i=1,...,10$ are given by
\begin{equation}
\begin{split}
&H_{1}=\left(\partial_{r}L^{v}\right),\  H_{2}=D\left[\frac{\left(\partial_{r}L^{r}\right)}{2}-\frac{L^{r}}{r}\right]-\frac{3D'}{2}L^{r},\  H_{3}=-\frac{1}{2}\left(\partial_{r}L^{r}\right),\\
&H_{4}=+2\frac{L^{v}}{r^{2}}, \  H_{5}=-L^{v}R', \ H_{6}=+2\frac{L^{r}}{r^{2}},\  H_{7}=2\frac{L^{v}}{r},\  H_{8}=2\frac{L^{r}}{r},\\
&H_{9}=D\left(\partial_{r}L^{v}\right)-D'L^{v}-2\frac{L^{r}}{r} ,\  H_{10}=-L^{r}R'.
\end{split}
\label{listH}
\end{equation}
Since $L$ is a future directed timelike vector field we have $L^{v}\left(r\right)>0$ and $L^{r}\left(r\right)<0$ near $\mathcal{H}^{+}$. By taking $\partial_{r}L^{v}\left(M\right)$ sufficiently large we can make $H_{1}$ positive close to the horizon $\mathcal{H}^{+}$. Also since the term $D$ vanishes on the horizon to second order and the terms $R, D'$ to first order and since $L^{r}\left(M\right)<0$, the coefficient $H_{2}$ is positive close to $\mathcal{H}^{+}$ (and vanishes to first order on it). For the same reason we have $H_{9}D\leq\frac{H_{2}}{10}$  and $\left(H_{9}R\right)^{2}\leq\frac{H_{2}}{10}$ close to $\mathcal{H}^{+}$. Moreover, by taking  $-\partial_{r}L^{r}\left(M\right)$ sufficiently large we can also make the coefficient $H_{3}$ positive close to $\mathcal{H}^{+}$ such that $H_{9}<\frac{H_{3}}{10}$. Indeed, it suffices to consider $L^{r}$ such that $-\frac{L^{r}(M)}{M}<-\frac{\partial_{r}L^{r}(M)}{25}$ and then by continuity we have the previous inequality close to  $\mathcal{H}^{+}$. Therefore, we consider $M<r_{0}<2M$ such that in region $\mathcal{A}=\left\{M\leq r\leq r_{0}\right\}$ we have 
\begin{equation}
\begin{split}
& L^{v}> 1,\  \partial_{r}L^{v}> 1,\  -L^{r}>1,\  H_{1}>1,\  H_{2}\geq 0,\  H_{3}> 1,\\
& H_{8}<\frac{H_{3}}{10},\  H_{9}D\leq\frac{H_{2}}{10},\ \left(H_{9}R\right)^{2}\leq\frac{H_{2}}{10},\  H_{9}<\frac{H_{3}}{10}.
\end{split}
\label{listL}
\end{equation}
Clearly, $r_{0}$ depends only on $M$ and the precise choice for $L$ close to $\mathcal{H}^{+}$. In order to define $L$ globally, we just extend $L^{v}$ and $L^{r}$ such that 
\begin{equation*}
\begin{split}
&L^{v}>0 \text{ for all } r<r_{1}  \text{ and } L^{v}=0 \text{ for all } r\geq r_{1},\\
-&L^{r}>0 \text{ for all } r<r_{1} \text{ and } L^{r}=0 \text{ for all } r\geq r_{1},
\end{split}
\end{equation*}
for some $r_{1}$ such that $r_{0}<r_{1}<2M$. Again, $r_{1}$ depends only on $M$ (and the precise choice for $L$). Clearly, $L$ depends only on $M$ and thus all the functions that involve the components of $L$ depend only on $M$.

\subsection{Estimates of the Spacetime Integrals}
\label{sec:EstimatesOfTheBulkTerms}

It suffices to estimate the remaining 7 integrals with coef{}ficients $H_{i}$'s with $i=4,...,10$. Note that all these coef{}ficients do not vanish on the horizon. We will prove that each of these integrals can by estimated by the $N$ flux for $\psi$ and $T\psi$ and a small (epsilon) portion of the good terms in  $K^{L}[\partial_{r}\psi]$. First we prove the following propositions.

\begin{proposition}
For all solutions $\psi$ of the wave equation and  any positive number $\epsilon$ we have
\begin{equation*}
\begin{split}
\int_{\mathcal{A}}{(\partial_{r}\psi)^{2}}\leq \int_{\mathcal{A}}{\epsilon H_{1}(\partial_{v}\partial_{r}\psi)^{2}+\epsilon H_{2}(\partial_{r}\partial_{r}\psi)^{2}}+C_{\epsilon}\int_{\Sigma_{0}}{J_{\mu}^{N}[\psi]n^{\mu}_{\Sigma_{0}}}+C_{\epsilon}\int_{\Sigma_{0}}{J_{\mu}^{N}[T\psi]n^{\mu}_{\Sigma_{0}}},
\end{split}
\end{equation*}
where the constant $C_{\epsilon}$ depends  on $M$, $\Sigma_{0}$ and $\epsilon$.
\label{1comprop}
\end{proposition}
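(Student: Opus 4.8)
The plan is to reduce the spacetime bound on $(\partial_{r}\psi)^{2}$ to the $D$-weighted second-order derivatives supplied by the third Hardy inequality, and then to exploit that near $\mathcal{H}^{+}$ the weight $D$ vanishes faster than the coefficients $H_{1},H_{2}$ of the good bulk terms. First I would apply the third Hardy inequality (Proposition \ref{thirdhardy}) to the function $\partial_{r}\psi$, which is legitimate since for $k\geq 2$ solutions $\partial_{r}\psi$ inherits the requisite $H^{1}_{\mathrm{loc}}$ regularity and the relevant cutoff $h$ in that proof is compactly supported in $r$. Choosing $r_{1}$ with $r_{0}<r_{1}<2M$ and setting $\mathcal{B}=\mathcal{R}(0,\tau)\cap\{r_{0}\leq r\leq r_{1}\}$, this yields
\begin{equation*}
\int_{\mathcal{A}}{(\partial_{r}\psi)^{2}}\leq C\int_{\mathcal{B}}{(\partial_{r}\psi)^{2}}+C\int_{\mathcal{A}\cup\mathcal{B}}{D\left[(\partial_{v}\partial_{r}\psi)^{2}+(\partial_{r}\partial_{r}\psi)^{2}\right]},
\end{equation*}
and the three resulting terms are treated separately.

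For the $D$-weighted bulk term I would split the range $\{M\leq r\leq r_{1}\}$ at some $r_{2}\in(M,r_{0})$ to be chosen in terms of $\epsilon$. On the near-horizon piece $\{M\leq r\leq r_{2}\}\subseteq\mathcal{A}$ I would use the vanishing orders recorded in and around \eqref{listL}: since $H_{1}>1$ there, one has $CD(\partial_{v}\partial_{r}\psi)^{2}\leq \epsilon H_{1}(\partial_{v}\partial_{r}\psi)^{2}$ as soon as $CD\leq\epsilon$; and since $D\sim(r-M)^{2}$ vanishes to second order while $H_{2}\sim(r-M)$ vanishes only to first order on $\mathcal{H}^{+}$, the ratio $D/H_{2}\to 0$ as $r\to M$, so $CD(\partial_{r}\partial_{r}\psi)^{2}\leq\epsilon H_{2}(\partial_{r}\partial_{r}\psi)^{2}$ once $r_{2}$ is close enough to $M$. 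Choosing $r_{2}$ so that both hold on $\{M\leq r\leq r_{2}\}$ produces exactly the two $\epsilon$-terms on the right-hand side of the proposition, integrated over a subset of $\mathcal{A}$.

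On the complementary region $\{r_{2}\leq r\leq r_{1}\}$ we are bounded away from $\mathcal{H}^{+}$, so $D$ is bounded and the integrand is controlled by a full second-order energy; invoking the away-from-horizon bulk estimate \eqref{bulkawayH} (with its lower radius replaced by $r_{2}$, whence the constant depends on $\epsilon$) bounds this contribution by $C_{\epsilon}\int_{\Sigma_{0}}(J^{T}_{\mu}[\psi]n^{\mu}_{\Sigma_{0}}+J^{T}_{\mu}[T\psi]n^{\mu}_{\Sigma_{0}})$; this is where the $J^{N}[T\psi]$ flux in the statement enters. Finally, the term $\int_{\mathcal{B}}(\partial_{r}\psi)^{2}$, in a region containing neither $\mathcal{H}^{+}$ nor the photon sphere, is controlled by the degenerate $X$ estimate \eqref{degX} (using $\partial_{r^{*}}=\partial_{v}+D\partial_{r}$ and that $D$ is bounded below on $\mathcal{B}$) by $C\int_{\Sigma_{0}}J^{T}_{\mu}[\psi]n^{\mu}_{\Sigma_{0}}$. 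Since $T$ is causal and $N$ timelike, $J^{T}_{\mu}n^{\mu}\leq CJ^{N}_{\mu}n^{\mu}$, upgrading all these boundary terms to the $J^{N}$ fluxes of the statement and closing the estimate.

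The one delicate point — the main obstacle — is the near-horizon absorption in the second paragraph: it works precisely because the degenerate weight $D$ coming out of Hardy decays one power of $(r-M)$ faster than the coefficient $H_{2}$ of the transversal second derivative, so there is genuine room to gain the factor $\epsilon$. This sharp matching of vanishing orders is the extreme-geometry substitute for the redshift gain, and the argument would break without it.
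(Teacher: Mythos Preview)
Your proof is correct and follows essentially the same route as the paper: apply the third Hardy inequality to $\partial_{r}\psi$, split the $D$-weighted bulk at a radius $r_{\epsilon}$ chosen so that $CD\leq\epsilon H_{1}$ and $CD\leq\epsilon H_{2}$ near $\mathcal{H}^{+}$ (using that $D$ vanishes to second order while $H_{2}$ vanishes only to first), handle the far region by the elliptic estimate \eqref{bulkawayH}, and bound the $\mathcal{B}$-integral via \eqref{degX}. Your explicit remark on upgrading the $J^{T}$ fluxes to $J^{N}$ fluxes is a nice touch that the paper leaves implicit.
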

\begin{proof}
By applying the third Hardy inequality for the regions $\mathcal{A},\mathcal{B}$ we obtain
\begin{equation*}
\int_{\mathcal{A}}{(\partial_{r}\psi)^{2}}\leq C\int_{\mathcal{B}}{(\partial_{r}\psi)^{2}}+C\int_{\mathcal{A}\cup\mathcal{B}}{D\left[(\partial_{v}\partial_{r}\psi)^{2}+(\partial_{r}\partial_{r}\psi)^{2}\right]},
\end{equation*}
where the constant $C$ depends  on $M$ and $\Sigma_{0}$. Moreover, since $CD$ vanishes to second order at $\mathcal{H}^{+}$, there exists $r_{\epsilon}$ with  $M<r_{\epsilon}\leq r_{0}$ such that  in the region $\left\{M\leq r\leq r_{\epsilon}\right\}$ we have $CD<\epsilon H_{1}$ and $CD\leq \epsilon H_{2}$. Therefore,
\begin{equation*}
\begin{split}
\int_{\mathcal{A}}{\!(\partial_{r}\psi)^{2}}\leq &C\!\!\int_{\mathcal{B}}{\!(\partial_{r}\psi)^{2}}+\!\!\int_{\left\{M\leq r\leq r_{\epsilon}\right\}}{\!\!\!\!\!\epsilon H_{1}(\partial_{v}\partial_{r}\psi)^{2}+\epsilon H_{2}(\partial_{r}\partial_{r}\psi)^{2}}+\!\!\int_{\left\{r_{\epsilon}\leq r\leq r_{1}\right\}}{\!\!\!\!\!D\left[(\partial_{v}\partial_{r}\psi)^{2}+(\partial_{r}\partial_{r}\psi)^{2}\right]}.
\end{split}
\end{equation*}
The first integral on the right hand side is estimated using \eqref{degX} and the last integral using the local elliptic estimate \eqref{bulkawayH}  since  T is timelike in region $\left\{r_{\epsilon}\leq r \leq r_{1}\right\}$ since $M<r_{\epsilon}$. The result follows from the inclusion $\left\{M\leq r \leq r_{\epsilon}\right\}\subseteq\mathcal{A}$ and the non-negativity of $H_{i},i=1,2$ in $\mathcal{A}$.
\end{proof}
\begin{proposition}
For all solutions $\psi$ of the wave equation and any positive number $\epsilon$ we have 
\begin{equation*}
\left|\int_{\mathcal{H}^{+}}(\partial_{v}\psi)(\partial_{r}\psi)\right|\leq C_{\epsilon}\int_{\Sigma_{0}}{J^{N}_{\mu}[\psi]n^{\mu}_{\Sigma_{0}}}+\epsilon\int_{\Sigma_{0}\cup\Sigma_{\tau}}{J^{L}_{\mu}[\partial_{r}\psi]n^{\mu}_{\Sigma}}
\end{equation*}
where the positive constant $C_{\epsilon}$ depends  on $M$, $\Sigma_{0}$ and $\epsilon$.
\label{2comprop}
\end{proposition}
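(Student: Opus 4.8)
The plan is to integrate by parts in $v$ along $\mathcal{H}^+$ and then invoke the wave equation \emph{restricted to the horizon} in order to trade the dangerous transversal second derivative for quantities controlled by the non-degenerate $N$-energy. The main obstacle to avoid is the following: a naive Cauchy--Schwarz applied directly to $\int_{\mathcal{H}^+}(\partial_v\psi)(\partial_r\psi)$ would leave the factor $\int_{\mathcal{H}^+}(\partial_r\psi)^2$, which is precisely one of the quantities this section is trying to estimate and which is \emph{not} controlled by the degenerate flux on $\mathcal{H}^+$. Hence that route is circular and the transversal derivative must be removed by an integration by parts instead.

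To carry this out, I would first note that on $\mathcal{H}^+$ (where $r=M$ and, the horizon being degenerate, $D=D'=0$) the equation $\Box_g\psi=0$ reduces to $2\partial_v\partial_r\psi+\tfrac{2}{M}\partial_v\psi+\lapp\psi=0$. Substituting for $\partial_v\partial_r\psi$ gives the pointwise identity
\[
(\partial_v\psi)(\partial_r\psi)=\partial_v\Big(\psi\,\partial_r\psi+\tfrac{1}{2M}\,\psi^2\Big)+\tfrac12\,\psi\,\lapp\psi \qquad\text{on }\mathcal{H}^+ .
\]
Integrating over $\mathcal{H}^+(0,\tau)$ with the convention $n_{\mathcal{H}^+}=\partial_v$, the first term telescopes into boundary integrals over $\Sigma_\tau\cap\mathcal{H}^+$ and $\Sigma_0\cap\mathcal{H}^+$, while $\tfrac12\int_{\mathcal{H}^+}\psi\,\lapp\psi=-\tfrac12\int_{\mathcal{H}^+}|\nabb\psi|^2$ after integrating by parts on the spheres. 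This removes all second derivatives from the horizon integral.

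The angular term is then harmless: since $J^N_\mu[\psi]n^\mu_{\mathcal{H}^+}$ is non-negative and controls $|\nabb\psi|^2$ on $\mathcal{H}^+$, Theorem \ref{nenergy} yields $\int_{\mathcal{H}^+}|\nabb\psi|^2\leq C\int_{\Sigma_0}J^N_\mu[\psi]n^\mu_{\Sigma_0}$. For the boundary integrals I would handle the $\psi^2$ contribution by the first and second Hardy inequalities (Propositions \ref{firsthardy}, \ref{secondhardy}), exactly as in Proposition \ref{nhb}, bounding it by $\int_\Sigma J^N_\mu[\psi]n^\mu_\Sigma$; the cross term $\int_{\Sigma\cap\mathcal{H}^+}\psi\,\partial_r\psi$ is split by Cauchy--Schwarz into $\int_{\Sigma\cap\mathcal{H}^+}\psi^2$ (controlled as above) and $\int_{\Sigma\cap\mathcal{H}^+}(\partial_r\psi)^2$.

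The crucial point is this last factor: applying the second Hardy inequality (Proposition \ref{secondhardy}) to $\partial_r\psi$ in place of $\psi$ gives
\[
\int_{\Sigma\cap\mathcal{H}^+}(\partial_r\psi)^2\;\leq\;\epsilon'\int_{\Sigma\cap\mathcal{A}}\big[(\partial_v\partial_r\psi)^2+(\partial_r\partial_r\psi)^2\big]\;+\;C_{\epsilon'}\int_{\Sigma\cap\mathcal{A}}(\partial_r\psi)^2 .
\]
By \eqref{Lhyper1} the first integral is bounded by $\epsilon'\int_\Sigma J^L_\mu[\partial_r\psi]n^\mu_\Sigma$, producing exactly the desired small multiple of the $L$-flux, and the second is bounded by $\int_\Sigma J^N_\mu[\psi]n^\mu_\Sigma$ since the non-degenerate energy dominates $(\partial_r\psi)^2$. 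Finally, every $N$-flux appearing on $\Sigma_\tau$ (or on $\mathcal{H}^+$) is converted into a multiple of $\int_{\Sigma_0}J^N_\mu[\psi]n^\mu_{\Sigma_0}$ by Theorem \ref{nenergy}; the $L$-flux terms cannot be so converted, since no boundedness for the $L$-energy is available at this stage (this is precisely what the present section aims to establish), which explains why the statement retains the flux on $\Sigma_0\cup\Sigma_\tau$. Taking $\epsilon'$ a fixed multiple of $\epsilon$ then completes the estimate.
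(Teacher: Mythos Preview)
Your proof is correct and follows essentially the same approach as the paper: integrate by parts in $v$ along $\mathcal{H}^+$, use the wave equation restricted to $\mathcal{H}^+$ to eliminate $\partial_v\partial_r\psi$, control the resulting $\psi^2$ and $|\nabb\psi|^2$ terms via the Hardy inequalities and the $N$-energy boundedness (Theorem \ref{nenergy}), and finally apply the second Hardy inequality to $\partial_r\psi$ to produce the small $L$-flux term. The only difference is cosmetic---you package the integration by parts as a pointwise total-derivative identity before integrating, whereas the paper integrates by parts first and then substitutes the wave equation---but the computation is identical.
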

\begin{proof}
Integrating by parts gives us
\begin{equation*}
\begin{split}
\int_{\mathcal{H}^{+}}{(\partial_{v}\psi)(\partial_{r}\psi)}=-\int_{\mathcal{H}^{+}}{\psi(\partial_{v}\partial_{r}\psi)}+\int_{\mathcal{H}^{+}\cap\Sigma_{\tau}}{\psi(\partial_{r}\psi)}-\int_{\mathcal{H}^{+}\cap\Sigma_{0}}{\psi(\partial_{r}\psi)}.
\end{split}
\end{equation*}
Since $\psi$ solves the wave equation, it satisfies
$-\partial_{v}\partial_{r}\psi=\frac{1}{M}(\partial_{v}\psi)
+\frac{1}{2}\lapp\psi$
on $\mathcal{H}^{+}$. Therefore,
\begin{equation*}
\begin{split}
-\int_{\mathcal{H}^{+}}{\psi (\partial_{v}\partial_{r}\psi)}=&\frac{1}{M}\int_{\mathcal{H}^{+}}{\psi(\partial_{v}\psi)}+\frac{1}{2}\int_{\mathcal{H}^{+}}{\psi(\lapp\psi)}\\
=&\frac{1}{2M}\int_{\mathcal{H}^{+}\cap\Sigma_{\tau}}{\psi^{2}}-\frac{1}{2M}\int_{\mathcal{H}^{+}\cap\Sigma_{0}}{\psi^{2}}-\frac{1}{2}\int_{\mathcal{H}^{+}}{\left|\nabb\psi\right|^{2}}.
\end{split}
\end{equation*}
All the integrals on the right hand side can be estimated by $\int_{\Sigma_{0}}{J_{\mu}^{N}[\psi]n^{\mu}_{\Sigma_{0}}}$ using the second Hardy inequality  and \eqref{nenergy}. Furthermore, $\int_{\mathcal{H}^{+}\cap\Sigma}{\psi(\partial_{r}\psi)}\leq  \int_{\mathcal{H}^{+}\cap\Sigma}{\psi^{2}}+\int_{\mathcal{H}^{+}\cap\Sigma}{(\partial_{r}\psi)^{2}}.$
From the first and second Hardy inequality we have 
\begin{equation*}
\begin{split}
\int_{\mathcal{H}^{+}\cap\Sigma}{\psi^{2}}\leq C\int_{\Sigma_{0}}{J_{\mu}^{N}[\psi]n^{\mu}_{\Sigma_{0}}},
\end{split}
\end{equation*}
where $C$ depends  on $M$ and $\Sigma_{0}$. In addition, from the second Hardy inequality we have that for any positive number $\epsilon$ there exists a constant $C_{\epsilon}$ which depends on $M$, $\Sigma_{0}$ and $\epsilon$ such that 
\begin{equation*}
\begin{split}
\int_{\mathcal{H}^{+}\cap\Sigma}{(\partial_{r}\psi)^{2}}\leq C_{\epsilon}\int_{\Sigma}{J_{\mu}^{N}[\psi]n^{\mu}_{\Sigma}}+\epsilon\int_{\Sigma}{J_{\mu}^{L}[\partial_{r}\psi]n^{\mu}_{\Sigma}}.
\end{split}
\end{equation*}
\end{proof}

\begin{proposition}
For all solutions $\psi$ of the wave equation and any positive number $\epsilon$ we have 
\begin{equation*}
\left|\int_{\mathcal{H}^{+}}{(\lapp\psi)(\partial_{r}\psi)}\right|\leq  C_{\epsilon}\int_{\Sigma_{0}}{J^{N}_{\mu}[\psi]n^{\mu}_{\Sigma_{0}}}+\epsilon\int_{\Sigma_{0}\cup\Sigma_{\tau}}{J^{L}_{\mu}[\partial_{r}\psi]n^{\mu}_{\Sigma}},
\end{equation*}
where the positive constant $C_{\epsilon}$ depends  on $M$, $\Sigma_{0}$ and $\epsilon$.
\label{3comprop}
\end{proposition}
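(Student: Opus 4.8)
The plan is to avoid integrating the angular Laplacian by parts over the spheres, which would introduce $\int_{\mathcal{H}^{+}}\left|\nabb\partial_{r}\psi\right|^{2}$. This quantity is part of the \emph{favourable} horizon flux $\int_{\mathcal{H}^{+}}J^{L}_{\mu}[\partial_{r}\psi]n^{\mu}_{\mathcal{H}^{+}}$ appearing with a good sign on the left-hand side of the energy identity \eqref{eiL}, and it cannot be absorbed into the spacelike fluxes over $\Sigma_{0}\cup\Sigma_{\tau}$ that are permitted on the right-hand side of the asserted inequality. Instead, I would use the restriction of the wave equation to $\mathcal{H}^{+}$, namely $-\partial_{v}\partial_{r}\psi=\frac{1}{M}\partial_{v}\psi+\frac{1}{2}\lapp\psi$ (already exploited in the proof of Proposition \ref{2comprop}), to trade the angular Laplacian for terms built out of $\partial_{v}\psi$ and $\partial_{v}\partial_{r}\psi$.

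Concretely, solving the horizon equation for $\lapp\psi$ gives $\lapp\psi=-\frac{2}{M}\partial_{v}\psi-2\partial_{v}\partial_{r}\psi$ on $\mathcal{H}^{+}$, so that
\begin{equation*}
\int_{\mathcal{H}^{+}}{(\lapp\psi)(\partial_{r}\psi)}=-\frac{2}{M}\int_{\mathcal{H}^{+}}{(\partial_{v}\psi)(\partial_{r}\psi)}-2\int_{\mathcal{H}^{+}}{(\partial_{v}\partial_{r}\psi)(\partial_{r}\psi)}.
\end{equation*}
The first integral on the right is precisely the quantity estimated in Proposition \ref{2comprop}, so it is already controlled by $C_{\epsilon}\int_{\Sigma_{0}}J^{N}_{\mu}[\psi]n^{\mu}_{\Sigma_{0}}+\epsilon\int_{\Sigma_{0}\cup\Sigma_{\tau}}J^{L}_{\mu}[\partial_{r}\psi]n^{\mu}_{\Sigma}$. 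For the second integral I would observe that the integrand is a total $v$-derivative, $(\partial_{v}\partial_{r}\psi)(\partial_{r}\psi)=\frac{1}{2}\partial_{v}\big((\partial_{r}\psi)^{2}\big)$; since $\partial_{v}$ is tangent to $\mathcal{H}^{+}$, integrating it telescopes to the spacelike boundary terms $\frac{1}{2}\int_{\mathcal{H}^{+}\cap\Sigma_{\tau}}(\partial_{r}\psi)^{2}-\frac{1}{2}\int_{\mathcal{H}^{+}\cap\Sigma_{0}}(\partial_{r}\psi)^{2}$.

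It then remains to bound $\int_{\mathcal{H}^{+}\cap\Sigma}(\partial_{r}\psi)^{2}$ on each of $\Sigma_{0}$ and $\Sigma_{\tau}$, which is exactly the estimate already derived at the end of the proof of Proposition \ref{2comprop}: one applies the second Hardy inequality (Proposition \ref{secondhardy}) to $\partial_{r}\psi$ in place of $\psi$, controls the resulting $\int_{\Sigma\cap\{r\le r_{0}\}}\big((\partial_{v}\partial_{r}\psi)^{2}+(\partial_{r}\partial_{r}\psi)^{2}\big)$ by $\epsilon\int_{\Sigma}J^{L}_{\mu}[\partial_{r}\psi]n^{\mu}_{\Sigma}$ via \eqref{Lhyper1}, and controls the zeroth-order remainder $\int_{\Sigma\cap\{r\le r_{0}\}}(\partial_{r}\psi)^{2}$ by $\int_{\Sigma}J^{N}_{\mu}[\psi]n^{\mu}_{\Sigma}$ and then by $\int_{\Sigma_{0}}J^{N}_{\mu}[\psi]n^{\mu}_{\Sigma_{0}}$ using the uniform boundedness of the $N$-flux \eqref{nener}. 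Collecting the two contributions yields the stated inequality. The only genuine subtlety, and hence the step I would treat most carefully, is the very first one: recognizing that one must \emph{not} move $\lapp$ onto $\partial_{r}\psi$ by spherical integration by parts, but rather eliminate it through the horizon wave equation, since only then do all error terms reduce to objects already controlled by \eqref{nener}, Proposition \ref{2comprop}, and the second Hardy inequality.
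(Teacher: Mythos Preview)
Your proposal is correct and follows exactly the same approach as the paper: use the restriction of the wave equation to $\mathcal{H}^{+}$ to replace $\lapp\psi$ by $-\tfrac{2}{M}\partial_{v}\psi-2\partial_{v}\partial_{r}\psi$, invoke Proposition~\ref{2comprop} for the first term, and write the second as a total $\partial_{v}$-derivative whose boundary terms $\int_{\mathcal{H}^{+}\cap\Sigma}(\partial_{r}\psi)^{2}$ are handled via the second Hardy inequality and \eqref{nener}. Your discussion is in fact more detailed than the paper's, and your observation about why one must not integrate $\lapp$ by parts onto $\partial_{r}\psi$ is a helpful piece of motivation that the paper leaves implicit.
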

\begin{proof}
In view of the wave equation on $\mathcal{H}^{+}$ we have
\begin{equation*}
\int_{\mathcal{H}^{+}}{(\lapp\psi)(\partial_{r}\psi)}=-\frac{2}{M}\int_{\mathcal{H}^{+}}{(\partial_{v}\psi)(\partial_{r}\psi)}-2\int_{\mathcal{H}^{+}}{(\partial_{v}\partial_{r}\psi)(\partial_{r}\psi)}.
\end{equation*}
The first integral on the right hand side can be estimated using Proposition \ref{2comprop}. For the second integral we have
\begin{equation*}
\begin{split}
2\!\!\int_{\mathcal{H}^{+}}{(\partial_{v}\partial_{r}\psi)(\partial_{r}\psi)}=&\int_{\mathcal{H}^{+}}{\partial_{v}\left((\partial_{r}\psi)^{2}\right)}=\int_{\mathcal{H}^{+}\cap\Sigma_{\tau}}{(\partial_{r}\psi)^{2}}-\int_{\mathcal{H}^{+}\cap\Sigma_{0}}{(\partial_{r}\psi)^{2}}\\
\leq & C_{\epsilon}\!\int_{\Sigma_{0}}{J_{\mu}^{N}[\psi]n^{\mu}_{\Sigma_{0}}}+\epsilon\! \int_{\Sigma_{0}}{J_{\mu}^{L}[\partial_{r}\psi]n^{\mu}_{\Sigma_{0}}}+\epsilon\!\int_{\Sigma_{\tau}}{J^{L}_{\mu}[\partial_{r}\psi]n^{\mu}_{\Sigma_{\tau}}},
\end{split}
\end{equation*}
where, as above, $\epsilon$ is any positive number and $C_{\epsilon}$ depends on $M$, $\Sigma_{0}$ and $\epsilon$.

\end{proof}

\begin{center}
\large{\textbf{Estimate for} $\displaystyle\int_{\mathcal{R}}{H_{4}\left(\partial_{v}\partial_{r}\psi\right)\left(\partial_{v}\psi\right)}$}\\
\end{center}
For any $\epsilon >0$ we have
\begin{equation*}
\begin{split}
\left|\int_{\mathcal{A}}{H_{4}\left(\partial_{v}\partial_{r}\psi\right)\left(\partial_{v}\psi\right)}\right|&\leq \int_{\mathcal{A}}{\epsilon\left(\partial_{v}\partial_{r}\psi\right)^{2}+\int_{\Sigma_{0}}{\epsilon^{-1}H_{4}^{2}(\partial_{v}\psi)^{2}}}\leq \epsilon\int_{\mathcal{A}}{\left(\partial_{v}\partial_{r}\psi\right)^{2}+C_{\epsilon}\int_{\Sigma_{0}}{J^{N}_{\mu}[\psi]n^{\mu}_{\Sigma_{0}}}},
\end{split}
\end{equation*}
where the constant $C_{\epsilon}$ depends only on $M$,$\Sigma_{0}$ and  $\epsilon$.

\begin{center}
\large{\textbf{Estimate for} $\displaystyle\int_{\mathcal{R}}{H_{5}\left(\partial_{v}\partial_{r}\psi\right)\left(\partial_{r}\psi\right)}$}\\
\end{center}
As above, for any $\epsilon >0$
\begin{equation*}
\begin{split}
\left|\int_{\mathcal{A}}{H_{5}\left(\partial_{v}\partial_{r}\psi\right)\left(\partial_{r}\psi\right)}\right|&\leq\int_{\mathcal{A}}{\epsilon\left(\partial_{v}\partial_{r}\psi\right)^{2}}+\int_{\mathcal{A}}{\epsilon^{-1}H_{5}^{2} (\partial_{r}\psi)^{2}}\leq \int_{\mathcal{A}}{\epsilon\left(\partial_{v}\partial_{r}\psi\right)^{2}}+m\int_{\mathcal{A}}{ (\partial_{r}\psi)^{2}},
\end{split}
\end{equation*}
where $m=\operatorname{max}_{\mathcal{A}}\epsilon^{-1}H_{5}^{2}$. Then  from Proposition \ref{1comprop} (where $\epsilon$ is replaced with $\frac{\epsilon}{m}$) we obtain
\begin{equation}
\begin{split}
\left|\int_{\mathcal{A}}{H_{5}\left(\partial_{v}\partial_{r}\psi\right)\left(\partial_{r}\psi\right)}\right|\leq &\int_{\mathcal{A}}{\epsilon\left(\partial_{v}\partial_{r}\psi\right)^{2}}+\int_{\mathcal{A}}{\epsilon H_{1}(\partial_{v}\partial_{r}\psi)^{2}+\epsilon H_{2}(\partial_{r}\partial_{r}\psi)^{2}}\\ &+C_{\epsilon}\int_{\Sigma_{0}}{J_{\mu}^{N}[\psi]n^{\mu}_{\Sigma_{0}}}+C_{\epsilon}\int_{\Sigma_{0}}{J_{\mu}^{N}[T\psi]n^{\mu}_{\Sigma_{0}}},
\end{split}
\label{H6}
\end{equation}
where $C_{\epsilon}$ depends on $M$, $\Sigma_{0}$ and  $\epsilon$.

\begin{center}
\large{\textbf{Estimate for} $\displaystyle\int_{\mathcal{R}}{H_{6}\left(\partial_{r}\partial_{r}\psi\right)\left(\partial_{v}\psi\right)}$}\\
\end{center}
Since
$\operatorname{Div} \partial_{r}=\frac{2}{r},$
 Stokes' theorem yields
\begin{equation*}
\begin{split}
&\int_{\mathcal{R}}{H_{6}\left(\partial_{v}\psi\right)\left(\partial_{r}\partial_{r}\psi\right)}+\int_{\mathcal{R}}{\partial_{r}\left(H_{6}\partial_{v}\psi\right)\left(\partial_{r}\psi\right)}+\int_{\mathcal{R}}{H_{6}\left(\partial_{v}\psi\right)\left(\partial_{r}\psi\right)\frac{2}{r}}\\
=&\int_{\Sigma_{0}}{H_{6}\left(\partial_{v}\psi\right)\left(\partial_{r}\psi\right)\partial_{r}\cdot n_{\Sigma_{0}}}-\int_{\Sigma_{\tau}}{H_{6}\left(\partial_{v}\psi\right)\left(\partial_{r}\psi\right)\partial_{r}\cdot n_{\Sigma_{\tau}}}-\int_{\mathcal{H}^{+}}{H_{6}\left(\partial_{v}\psi\right)\left(\partial_{r}\psi\right)\partial_{r}\cdot n_{\mathcal{H}^{+}}}.
\end{split}
\end{equation*}
The boundary term over $\Sigma_{\tau}$ can be estimated by the $N$-flux whereas the boundary integral over $\mathcal{H}^{+}$ can be estimated using Proposition \ref{2comprop}. If $Q=\partial_{r}H_{6}+\frac{2}{r}H_{6}$, then it remains to estimate
\begin{equation*}
\begin{split}
\int_{\mathcal{R}}{Q\left(\partial_{v}\psi\right)\left(\partial_{r}\psi\right)}+\int_{\mathcal{R}}{H_{6}\left(\partial_{v}\partial_{r}\psi\right)\left(\partial_{r}\psi\right)}.
\end{split}
\end{equation*}
The first integral can be estimated using Cauchy-Schwarz,  \eqref{nk} and Proposition \ref{1comprop}. The second integral is estimated by \eqref{H6} (where $H_{5}$ is replaced with $H_{6}$).

\begin{center}
\large{\textbf{Estimate for} $\displaystyle\int_{\mathcal{R}}{H_{7}\left(\partial_{v}\partial_{r}\psi\right)\lapp\psi}$}\\
\end{center}

Stokes' theorem in region $\mathcal{R}$ gives us
\begin{equation*}
\begin{split}
&\int_{\mathcal{R}}{H_{7}(\partial_{v}\partial_{r}\psi)\lapp\psi}+\int_{\mathcal{R}}{H_{7}(\partial_{r}\psi)(\lapp\partial_{v}\psi)}=\!\!\int_{\Sigma_{0}}{H_{7}\left(\partial_{r}\psi\right)(\lapp\psi)\partial_{v}\cdot n_{\Sigma_{0}}}-\int_{\Sigma_{\tau}}{H_{7}\left(\partial_{r}\psi\right)\left(\lapp\psi\right)\partial_{v}\cdot n_{\Sigma_{\tau}}}\!.
\end{split}
\end{equation*}
For the boundary integrals we have the estimate
\begin{equation*}
\begin{split}
\int_{\mathcal{A}\cap\Sigma}{H_{7}(\partial_{r}\psi)\lapp\psi}=&-\int_{\mathcal{A}\cap\Sigma}{H_{7}\nabb\partial_{r}\psi\cdot\nabb\psi}\leq\epsilon\int_{\mathcal{A}\cap\Sigma}{J_{\mu}^{L}[\partial_{r}\psi]n^{\mu}_{\Sigma}}+C_{\epsilon} \int_{\Sigma}{J^{N}_{\mu}[\psi]n^{\mu}_{\Sigma}},
\end{split}
\end{equation*}
where $C_{\epsilon}$ depends on $M$, $\Sigma_{0}$ and $\epsilon$. As regards the second bulk integral, after applying Stokes' theorem on $\mathbb{S}^{2}(r)$ we obtain
\begin{equation*}
\begin{split}
\int_{\mathcal{A}}{H_{7}\nabb\partial_{r}\psi\cdot\nabb\partial_{v}\psi}\leq \epsilon\int_{\mathcal{A}}{\left|\nabb\partial_{r}\psi\right|^{2}}+C_{\epsilon}\int_{\mathcal{A}}\left|\nabb\partial_{v}\psi\right|^{2},
\end{split}
\end{equation*}
where $C_{\epsilon}$ depends on $M$, $\Sigma_{0}$ and $\epsilon$. Note that the second integral on the right hand side can be bounded by $\int_{\Sigma_{0}}{J_{\mu}^{N}[T\psi]n^{\mu}_{\Sigma_{0}}}$. Indeed, we commute with $T$ and use Proposition \ref{nkcor}. (Another way without having to commute with $T$ is by solving with respect to $\lapp\psi$ in the wave equation. As we shall see, this will be crucial in obtaining higher order estimates without losing derivatives.)

\begin{center}
\large{\textbf{Estimate for} $\displaystyle\int_{\mathcal{R}}{H_{8}\left(\partial_{r}\partial_{r}\psi\right)\lapp\psi}$}\\
\end{center}

We have
\begin{equation*}
\begin{split}
&\int_{\mathcal{R}}{H_{8}\left(\lapp\psi\right)\left(\partial_{r}\partial_{r}\psi\right)}+\int_{\mathcal{R}}{\partial_{r}\left(H_{8}\lapp\psi\right)\left(\partial_{r}\psi\right)}+\int_{\mathcal{R}}{H_{8}\left(\lapp\psi\right)\left(\partial_{r}\psi\right)\frac{2}{r}}\\
=&\int_{\Sigma_{0}}{H_{8}\left(\lapp\psi\right)\left(\partial_{r}\psi\right)\partial_{r}\cdot n_{\Sigma_{0}}}-\int_{\Sigma_{\tau}}{H_{8}\left(\lapp\psi\right)\left(\partial_{r}\psi\right)\partial_{r}\cdot n_{\Sigma_{\tau}}}-\int_{\mathcal{H}^{+}}{H_{8}\left(\lapp\psi\right)\left(\partial_{r}\psi\right)\partial_{r}\cdot n_{\mathcal{H}^{+}}}.
\end{split}
\end{equation*}
The integral over $\mathcal{H}^{+}$ is estimated in Proposition \ref{3comprop}. Furthermore, the Cauchy-Schwarz inequality implies
\begin{equation*}
\begin{split}
\int_{\Sigma_{\tau}\cap\mathcal{A}}{H_{8}\left(\lapp\psi\right)\left(\partial_{r}\psi\right)\partial_{r}\cdot n_{\Sigma_{\tau}}}=&-\int_{\Sigma_{\tau}\cap\mathcal{A}}{H_{8}\left(\nabb\psi\cdot\nabb\partial_{r}\psi\right)\partial_{r}\cdot n_{\Sigma_{\tau}}}\\
\leq & C_{\epsilon}\int_{\Sigma_{0}}{J_{\mu}^{N}[\psi]n^{\mu}_{\Sigma_{0}}}+\epsilon\int_{\Sigma_{0}}{J_{\mu}^{L}[\partial_{r}\psi]n^{\mu}_{\Sigma_{0}}},
\end{split}
\end{equation*}
where $\epsilon$ is any positive number and $C_{\epsilon}$ depends only on $M$, $\Sigma_{0}$ and $\epsilon$. For the remaining two spacetime integrals we have
\begin{equation*}
\begin{split}
&\int_{\mathcal{A}}{\partial_{r}\left(H_{8}\lapp\psi\right)\left(\partial_{r}\psi\right)}+\int_{\mathcal{A}}{\frac{2}{r}H_{8}\left(\lapp\psi\right)\left(\partial_{r}\psi\right)}=\int_{\mathcal{A}}{-H_{8}\left|\nabb\partial_{r}\psi\right|^{2}}+\int_{\mathcal{A}}{-\partial_{r}H_{8}\left(\nabb\psi\cdot\nabb\partial_{r}\psi\right)}.
\end{split}
\end{equation*}
The conditions \eqref{listL} assure us that 
$\left|H_{8}\right|<\frac{H_{3}}{10}$
and thus the first integral above can be estimated. For the second integral we apply the Cauchy-Schwarz inequality.

\begin{center}
\large{\textbf{Estimate for} $\displaystyle\int_{\mathcal{R}}{H_{9}\left(\partial_{v}\partial_{r}\psi\right)\left(\partial_{r}\partial_{r}\psi\right)}$}\\
\end{center}
In view of the wave equation we have
$\partial_{v}\partial_{r}\psi=-\frac{D}{2}\partial_{r}\partial_{r}\psi-\frac{1}{r}\partial_{v}\psi-\frac{R}{2}\partial_{r}\psi-\frac{1}{2}\lapp\psi.$
Therefore,
\begin{equation*}
\begin{split}
H_{9}\left(\partial_{v}\partial_{r}\psi\right)\left(\partial_{r}\partial_{r}\psi\right)=&-H_{9}\left[\frac{D}{2}\left(\partial_{r}\partial_{r}\psi\right)^{2}+\frac{R}{2}\left(\partial_{r}\psi\right)\left(\partial_{r}\partial_{r}\psi\right)+\frac{\left(\partial_{v}\psi\right)}{r}\left(\partial_{r}\partial_{r}\psi\right)+\frac{\left(\lapp\psi\right)}{2}\left(\partial_{r}\partial_{r}\psi\right)\right].
\end{split}
\end{equation*}
Note that in $\mathcal{A}$ we have
$H_{9}D\leq\frac{H_{2}}{10}$
and thus the first term on the right hand side poses no problem. Similarly, in $\mathcal{A}$ we have
\begin{equation*}
-H_{9}\frac{R}{2}\left(\partial_{r}\psi\right)\left(\partial_{r}\partial_{r}\psi\right)\leq \left(\partial_{r}\psi\right)^{2}+\left(H_{9}R\right)^{2}\left(\partial_{r}\partial_{r}\psi\right)^{2},\ (H_{9}R)^{2}\leq \frac{H_{2}}{10}.
\end{equation*}
According to what we have proved above, the integral 
$\int_{\mathcal{A}}{-\frac{H_{9}}{r}\left(\partial_{v}\psi\right)\left(\partial_{r}\partial_{r}\psi\right)}$
can be estimated (provided we replace $H_{6}$ with $-\frac{H_{9}}{r}$). Similarly, we have seen that  the integral
$\int_{\mathcal{A}}{-\frac{H_{9}}{2}(\partial_{r}\partial_{r}\psi)\lapp\psi}$
can be estimated provided we have $H_{9}\leq \frac{H_{3}}{10}$, which holds by the definition of $L$.

\begin{center}
\large{\textbf{Estimate for} $\displaystyle\int_{\mathcal{R}}{H_{10}\left(\partial_{r}\partial_{r}\psi\right)\left(\partial_{r}\psi\right)}$}\\
\end{center}
We have
\begin{equation*}
\begin{split}
& 2\int_{\mathcal{R}}{H_{10}\left(\partial_{r}\partial_{r}\psi\right)\left(\partial_{r}\psi\right)}+\int_{\mathcal{R}}{\left[\partial_{r}H_{10}+\frac{2}{r}H_{10}\right]\left(\partial_{r}\psi\right)^{2}}\\
=&\int_{\Sigma_{0}}{H_{10}\left(\partial_{r}\psi\right)^{2}\partial_{r}\cdot n_{\Sigma_{0}}}-\int_{\Sigma_{\tau}}{H_{10}\left(\partial_{r}\psi\right)^{2}\partial_{r}\cdot n_{\Sigma_{\tau}}}-\int_{\mathcal{H}^{+}}{H_{10}\left(\partial_{r}\psi\right)^{2}}.
\end{split}
\end{equation*}
The second spacetime integral can be estimated using Proposition \ref{1comprop} whereas the boundary integral over $\Sigma_{\tau}$ can be estimated using the $N$-flux. Finally we need to estimate the integral over $\mathcal{H}^{+}$. It turns out that this integral is the most problematic. Since
$R'=D''+\frac{2D'}{r}-\frac{2D}{r^{2}}$ we have $R'\left(M\right)=\frac{2}{M^{2}}$ and thus
\begin{equation}
H_{10}\left(M\right)=-L^{r}\left(M\right)R'\left(M\right)=-L^{r}\left(M\right)\frac{2}{M^{2}}>0.
\label{H.10}
\end{equation}
In order to estimate the integral along $\mathcal{H}^{+}$ we apply the Poincar\'{e} inequality
\begin{equation*}
\left|-\int_{\mathcal{H}^{+}}{\frac{H_{10}(M)}{2}\left(\partial_{r}\psi\right)^{2}}\right|\leq \int_{\mathcal{H}^{+}}{\frac{H_{10}(M)}{2}\frac{M^{2}}{l(l+1)}\left|\nabb\partial_{r}\psi\right|^{2}}.
\end{equation*}
Therefore, in view of  \eqref{horizonL1} it suffices to have
\begin{equation*}
\begin{split}
\frac{M^{2}}{2l(l+1)}H_{10}\left(M\right)&\leq -\frac{L^{r}\left(M\right)}{2}\overset{\eqref{H.10}}{\Leftrightarrow}
l\geq 1.
\end{split}
\end{equation*}
Note that for $l=1$ we need to use up all of the  good term over $\mathcal{H}^{+}$ which appears in identity \eqref{eiL}. This is not by accident. Indeed, in \cite{aretakis2} we will show that even for $l=1$ we have another conservation law on $\mathcal{H}^{+}$. This law is not implied only by the degeneracy of the horizon (as is the case for the conservation law for $l=0$ of Section \ref{sec:TheSphericallySymmetricCase}) but also uses an additional property of the metric tensor on $\mathcal{H}^{+}$. This law then implies that one needs to use up all of precisely this good term over $\mathcal{H}^{+}$. This is something we can do, since we have not used this term in order to estimate other integrals. That was possible via successive use of Hardy inequalities.

\subsection{$L^{2}$ Estimates for the Second Order Derivatives}
\label{sec:UniformBoundednessForTheSecondOrderDerivatives}

We can now prove the  statement (1) of Theorem \ref{theorem3} of Section \ref{sec:TheMainTheorems}. First note that $L\sim n_{\Sigma_{\tau}}$ in region $\mathcal{A}$. Therefore, the theorem  follows from the estimates \eqref{hypersurawayH} and \eqref{bulkawayH}, the estimates that we derived in Section \ref{sec:EstimatesOfTheBulkTerms} (by taking $\epsilon$ sufficiently small) and the energy identity \eqref{eiL}. 

Note that the right hand side of the estimate of statement (1) of Theorem \ref{theorem3}  is bounded by
$C\int_{\Sigma_{0}}{J_{\mu}^{N}[\psi]n^{\mu}_{\Sigma_{0}}}+C\int_{\Sigma_{0}}{J_{\mu}^{N}[N\psi]n^{\mu}_{\Sigma_{0}}}.$
This means that equivalently we can apply $N$ as multiplier but also as commutator for frequencies $l\geq 1$. As we shall see in the companion paper \cite{aretakis2}, such commutation does not yield the above estimate for $l=0$. One can also obtain similar spacetime estimates for spatially compact regions which include $\mathcal{A}$ by commuting with $T$ and $T^{2}$ and applying $X$ as a multiplier. Note that we need to commute with $T^{2}$ in view of the photon sphere. Note also that no commutation with the generators of the Lie algebra so(3) is required.

\section{Integrated Weighted Energy Decay}
\label{sec:IntegratedWeightedEnergyDecay}

We are now in position to eliminate the degeneracy of \eqref{nk}.
\begin{proposition}
There exists a uniform constant $C>0$ which depends on $M$ and $\Sigma_{0}$ such that for all solutions $\psi$ of the wave equation which are supported on the frequencies $l\geq 1$  we have
\begin{equation*}
\begin{split}
\int_{\mathcal{A}}{\left(\partial_{r}\psi\right)^{2}}\leq C\int_{\Sigma_{0}}{J_{\mu}^{N}[\psi]n^{\mu}_{\Sigma_{0}}}+C\int_{\Sigma_{0}}{J_{\mu}^{N}[T\psi]n^{\mu}_{\Sigma_{0}}}+C\int_{\Sigma_{0}\cap\mathcal{A}}{J_{\mu}^{N}[\partial_{r}\psi]n^{\mu}_{\Sigma_{0}}}.
\end{split}
\end{equation*}
\label{r1back}
\end{proposition}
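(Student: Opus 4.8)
The plan is to reduce the non-degenerate bulk estimate for $\partial_r\psi$ to a bound on the angular gradient $\nabb\partial_r\psi$, which is already controlled without degeneracy by the second order estimate of statement (1) of Theorem \ref{theorem3}, and then to invoke the Poincar\'e inequality. The crucial structural input is that the hypothesis $l\geq 1$ propagates to $\partial_r\psi$: as observed in Section \ref{sec:EllipticTheoryOnMathbbS2}, $\partial_r$ is an endomorphism of each eigenspace $E^l$, so if $\psi$ is supported on the frequencies $l\geq 1$ then so is $\partial_r\psi$. This is precisely where the restriction in the hypothesis of the proposition will be used.

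First I would fix a sphere $\mathbb{S}^2(r)$ with $M\leq r\leq r_0$ and apply the Poincar\'e inequality (Proposition \ref{poin}) to the function $\partial_r\psi$ with $L=1$, obtaining
\begin{equation*}
\int_{\mathbb{S}^2(r)}{(\partial_r\psi)^2}\leq \frac{r^2}{2}\int_{\mathbb{S}^2(r)}{\left|\nabb\partial_r\psi\right|^2}.
\end{equation*}
Integrating this pointwise-in-$(v,r)$ inequality over the spacetime region $\mathcal{A}=\mathcal{R}(0,\tau)\cap\{M\leq r\leq r_0\}$, and using that $r$ is bounded there (since $M\leq r\leq r_0<2M$) together with the boundedness of the volume form factors, yields
\begin{equation*}
\int_{\mathcal{A}}{(\partial_r\psi)^2}\leq C\int_{\mathcal{A}}{\left|\nabb\partial_r\psi\right|^2},
\end{equation*}
with $C$ depending only on $M$ and $\Sigma_0$.

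It then remains only to bound $\int_{\mathcal{A}}{\left|\nabb\partial_r\psi\right|^2}$. This quantity is exactly one of the non-degenerate bulk terms appearing on the left-hand side of statement (1) of Theorem \ref{theorem3}, which has already been established in Section \ref{sec:UniformBoundednessForTheSecondOrderDerivatives} for precisely the class of solutions supported on $l\geq 1$. Applying that estimate bounds $\int_{\mathcal{A}}{\left|\nabb\partial_r\psi\right|^2}$ by
\begin{equation*}
C\int_{\Sigma_{0}}{J_{\mu}^{N}[\psi]n^{\mu}_{\Sigma_{0}}}+C\int_{\Sigma_{0}}{J_{\mu}^{N}[T\psi]n^{\mu}_{\Sigma_{0}}}+C\int_{\Sigma_{0}\cap\mathcal{A}}{J_{\mu}^{N}[\partial_{r}\psi]n^{\mu}_{\Sigma_{0}}},
\end{equation*}
which is the claimed right-hand side, completing the argument.

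I do not expect a serious obstacle here: the proof is a direct combination of the Poincar\'e inequality with the previously established second order estimate, with $\sqrt{D}$-degeneracies playing no role since we only ever use the non-degenerate term $|\nabb\partial_r\psi|^2$. The one point requiring care is the frequency bookkeeping, namely verifying that $\partial_r\psi$ inherits the $l\geq 1$ support so that Proposition \ref{poin} applies; for $l=0$ this fails, since the spherical mean of $\partial_r\psi$ need not vanish, in agreement with the conservation law of Proposition \ref{ndl0} and the sharpness of the frequency assumption noted after Theorem \ref{theorem3}.
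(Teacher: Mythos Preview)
Your proof is correct and in fact somewhat more direct than the paper's. The paper invokes the third Hardy inequality (Proposition \ref{thirdhardy}) applied to $\partial_r\psi$, which bounds $\int_{\mathcal{A}}(\partial_r\psi)^2$ by $\int_{\mathcal{B}}(\partial_r\psi)^2$ plus $\int_{\mathcal{A}\cup\mathcal{B}}D\bigl[(\partial_v\partial_r\psi)^2+(\partial_r\partial_r\psi)^2\bigr]$; the $\mathcal{B}$ integral is then handled by Theorem \ref{xtheo} (or the elliptic estimate \eqref{bulkawayH}) and the weighted second order term by statement (1) of Theorem \ref{theorem3}. You instead bypass the Hardy step entirely by using the Poincar\'e inequality on each sphere to reduce to the non-degenerate bulk term $\int_{\mathcal{A}}|\nabb\partial_r\psi|^2$, which is already one of the controlled quantities in statement (1) of Theorem \ref{theorem3}. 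Your route uses fewer ingredients (no third Hardy, no Theorem \ref{xtheo}) and makes the role of the $l\geq 1$ hypothesis transparent at the very first step; the paper's route is closer in spirit to Proposition \ref{1comprop} and exploits the $\partial_v\partial_r$ and $\sqrt{D}\,\partial_r\partial_r$ terms of Theorem \ref{theorem3} rather than the angular one.
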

\begin{proof}
Immediate from the third Hardy inequality, the statement (1) of Theorem \ref{theorem3} and Theorem \ref{xtheo}. 
\end{proof}
Note that the above proposition shows that in order to obtain this non-degenerate estimate in a neighbourhood of $\mathcal{H}^{+}$ one needs to commute the wave equation with $\partial_{r}$ and thus require higher regularity for $\psi$. This allows us to conclude that \textbf{trapping} takes place along $\mathcal{H}^{+}$. Remark that all frequencies exhibit trapped behaviour in contrast to the photon sphere where only high frequencies are trapped. This phenomenon is absent in the non-extreme case.

The statement (2) of Theorem \ref{theorem3} of Section \ref{sec:TheMainTheorems} follows from the above proposition and the results of Section \ref{sec:TheVectorFieldTextbfX}.

\section{Uniform Pointwise Boundedness}
\label{sec:UniformPointwiseBoundedness}
It remains to show that all solutions $\psi$ of the wave equation are uniformly bounded. 
\begin{proof}[Proof of Theorem \ref{t5} of Section \ref{sec:TheMainTheorems}]
We  decompose
$\psi=\psi_{0}+\psi_{\geq 1}$
and  prove that each projection is uniformly bounded by a norm that depends only on initial data. Indeed, by applying the following Sobolev inequality on the hypersurfaces $\Sigma_{\tau}$ we have
\begin{equation}
\begin{split}
\left\|\psi_{\geq 1}\right\|_{L^{\infty}\left(\Sigma_{\tau}\right)}&\leq 
C\left(\left\|\psi_{\geq 1}\right\|_{\overset{.}{H}^{1}\left(\Sigma_{\tau}\right)}+\left\|\psi_{\geq 1}\right\|_{\overset{.}{H}^{2}\left(\Sigma_{\tau}\right)}+\lim_{x\rightarrow i^{0}}{\left|\psi_{\geq 1}\right|}\right)
\label{Sobolev}
\end{split}
\end{equation}
where $C$ depends only on $\Sigma_{0}$. Note also that we use the Sobolev inequality that does not involve the $L^{2}$-norms of zeroth order terms. We observe that the vector field $\partial_{v}-\partial_{r}$ is timelike since $g\left(\partial_{v}-\partial_{r},\partial_{v}-\partial_{r}\right)=-D-2$ and, therefore, by an elliptic estimate there exists a uniform positive constant $C$ which depends on $M$ and $\Sigma_{0}$ such that 
\begin{equation*} 
\begin{split}
\left\|\psi_{\geq 1}\right\|^{2}_{\overset{.}{H}^{1}\left(\Sigma_{\tau}\right)}+\left\|\psi_{\geq 1}\right\|^{2}_{\overset{.}{H}^{2}\left(\Sigma_{\tau}\right)}\leq & C\int_{\Sigma_{\tau}}{J_{\mu}^{N}[\psi_{\geq 1}]n^{\mu}_{\Sigma_{\tau}}}+C\int_{\Sigma_{\tau}}{J_{\mu}^{N}[T\psi_{\geq 1}]n^{\mu}_{\Sigma_{\tau}}}+C\int_{\Sigma_{\tau}}{J_{\mu}^{N}[\partial_{r}\psi_{\geq 1}]n^{\mu}_{\Sigma_{\tau}}}.
\end{split}
\end{equation*}
It remains to derive a pointwise bound for $\psi_{0}$. However, from the 1-dimensional Sobolev inequality we have 
\begin{equation*}
\begin{split}
4\pi\psi_{0}^{2}(r_{0},\omega)\leq \frac{C}{r_{0}}\int_{\Sigma_{\tau}\cap\left\{r\geq r_{0}\right\}}{J_{\mu}^{N}[\psi_{0}]n^{\mu}_{\Sigma_{\tau}}},
\end{split}
\end{equation*}
where $C$ is a constant that depends only on $M$ and $\Sigma_{0}$. This completes the proof of the uniform boundedness of $\psi$.
\end{proof}

\section{Acknowledgements}
\label{sec:Acknowledgements}

I would like to thank Mihalis Dafermos for introducing to me  the problem and for his teaching and advice. I also thank Igor Rodnianski for sharing useful insights. I am supported by a Bodossaki Grant and a grant from the European Research Council.

\appendix

\section{Useful Reissner-Nordstr\"{o}m Computations}
\label{sec:UsefulReissnerNordstrOMComputations}

\subsection{The Wave Operator}
\label{sec:TheWaveOperatorINRN}

The wave operator in $(v,r,\theta,\phi)$ coordinates is
\begin{equation*}
\Box_{g}\psi=D\partial_{r}\partial_{r}\psi+2\partial_{v}\partial_{r}\psi+\frac{2}{r}\partial_{v}\psi+R\partial_{r}\psi+\lapp\psi,
\end{equation*}
where  $R=D'+\frac{2D}{r}$ and $D'=\frac{dD}{dr}$. In $(u,v)$ coordinates
\begin{equation*}
\Box_{g}\psi=-\frac{4}{Dr}\partial_{u}\partial_{v}(r\psi)-\frac{D'}{r}\psi+\lapp\psi.
\end{equation*}
\subsection{The Non-Negativity of the Energy-Momentum Tensor \textbf{T}}
\label{sec:TheHyperbolicityOfTheWaveEquation1}

We use the coordinate system $\left(v,r\right)$ and suppose that $V=\left(V^{v},V^{r},0,0\right), \ n=\left(n^{v},n^{r},0,0\right)$. The reader can easily verify that if  $\xi_{V}=\frac{1}{2\left(V^{v}\right)^{2}}\left(-g\left(V,V\right)\right)$ and similarly for $n$ then 
\begin{equation}
\begin{split}
J^{V}_{\mu}n^{\mu}=&\left[V^{v}n^{v}\left(1-\frac{D^{2}}{D^{2}+2\xi_{V}\cdot\xi_{n}}\right)\right]\left(\partial_{v}\psi\right)^{2}+\left[V^{v}n^{v}\left(\frac{\xi_{V}\cdot\xi_{n}}{2}\right)\right]\left(\partial_{r}\psi\right)^{2}+\left[-\frac{1}{2}g\left(V,n\right)\right]\left|\nabb\psi\right|^{2}\\
&+\left(\sqrt{\frac{D^{2}}{D^{2}+2\xi_{V}\cdot\xi_{n}}}\cdot\partial_{v}\psi+\sqrt{\frac{D^{2}+2\xi_{V}\cdot\xi_{n}}{4}}\cdot\partial_{r}\psi\right)^{2}.
\label{GENERALT}
\end{split}
\end{equation}

\section{Stokes' Theorem on Lorentzian Manifolds}
\label{sec:StokesTheoremOnLorentzianManifolds}

If $\mathcal{R}$ is a pseudo-Riemannian manifold and $P$ is a vector field on it then we have the divergence identity $\int_{\mathcal{R}}{\nabla_{\mu}P^{\mu}}=\int_{\partial\mathcal{R}}{P\cdot n_{\partial\mathcal{R}}}.$ Both integrals are taken with respect to the induced volume form. Note that $n_{\partial\mathcal{R}}$ is the unit normal to $\partial\mathcal{R}$ and its direction depends on the convention of the signature of the metric. For Lorentzian metrics with signature  $\left(-,+,+,+\right)$ the vector $n_{\partial\mathcal{R}}$ is the inward directed unit normal to $\partial\mathcal{R}$ in case $\partial\mathcal{R}$ is spacelike and the outward directed unit normal in case $\partial\mathcal{R}$ is timelike. If $\partial\mathcal{R}$ (or a piece of it) is null then    we take a past  (future) directed null normal to $\partial\mathcal{R}$ if it is future (past) boundary.  The following diagram is embedded in $\mathbb{R}^{1+1}$
 \begin{figure}[H]
	\centering
		\includegraphics[scale=0.08]{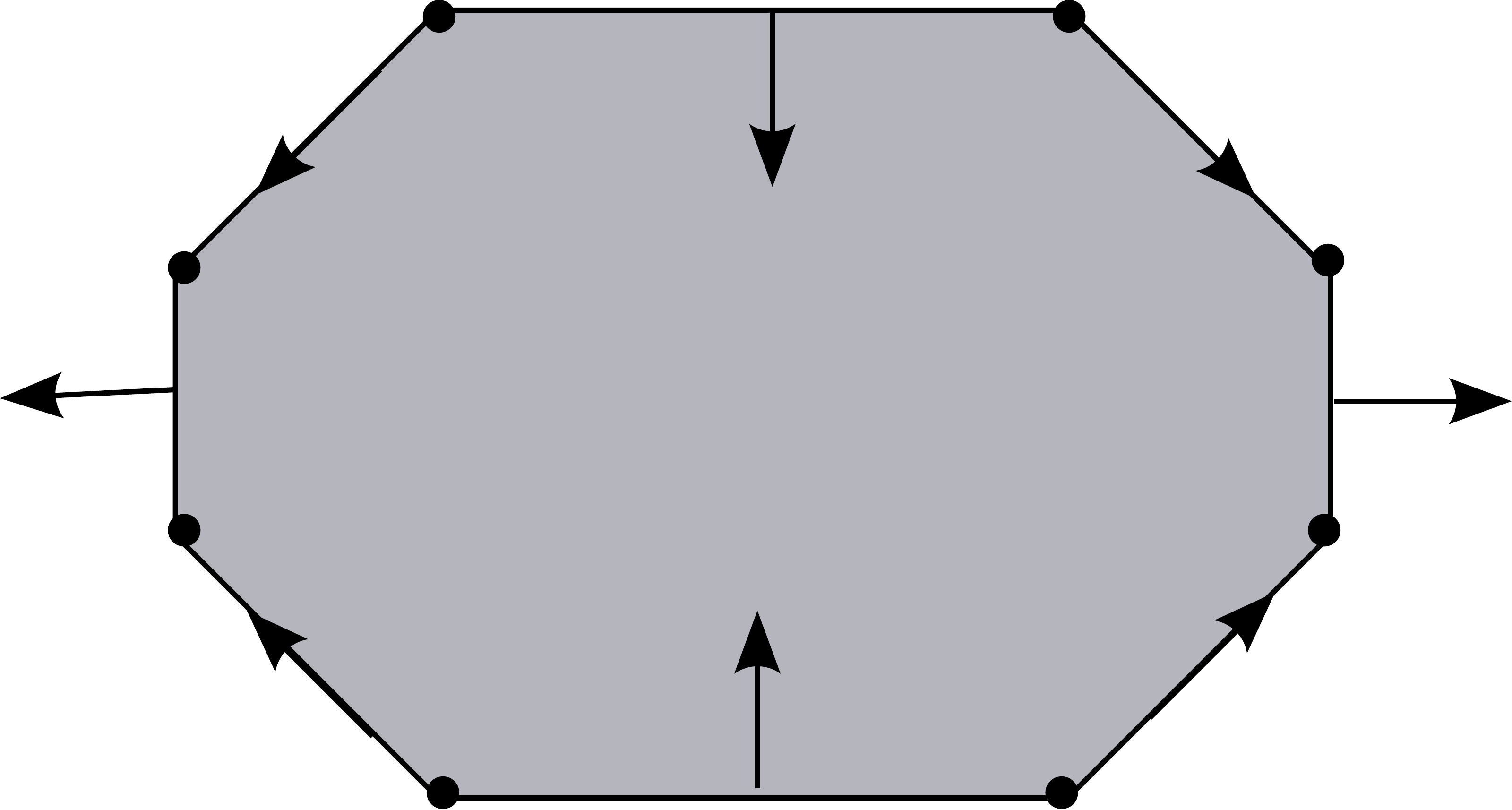}
	\label{fig:appendix}
\end{figure}

\end{document}